\newcolumntype{Y}{>{\centering\arraybackslash}X}
\newcommand{\rootpic}{\vcenter{\hbox{\includegraphics[height=2ex]{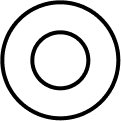}}}}
\newcommand{\smallrootpic}{\vcenter{\hbox{\includegraphics[height=1.5ex]{images/singleton.png}}}}
\newcommand{\ijrootpic}{\vcenter{\hbox{\includegraphics[height=2ex]{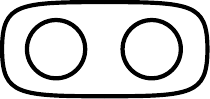}}}}
\newcommand{\smallijrootpic}{\vcenter{\hbox{\includegraphics[height=1.5ex]{images-pdf/ij-roots.pdf}}}}
\newcommand{\scalar}{\textnormal{scalar}}
\newcommand{\float}{\textnormal{float}}
\newcommand{\eqinf}{\,\overset{\infty}{=}\,}
\newcommand{\sdots}{\;\dots,\;}
\renewcommand{\toas}{\overset{\textnormal{a.s.}}{\longrightarrow}}
\renewcommand{\tod}{\overset{\textnormal{d}}{\longrightarrow}}
\title{Fourier Analysis of Iterative Algorithms}
\author{Chris Jones\thanks{Bocconi University. \texttt{chris.jones@unibocconi.it}.} \and Lucas Pesenti\thanks{Bocconi University. \texttt{lucas.pesenti@phd.unibocconi.it} }}
\begin{document}
\pagestyle{empty}

\maketitle{}

\thispagestyle{empty}

\begin{abstract}
    We study a general class of nonlinear iterative algorithms which includes power iteration, belief propagation and approximate message passing, and many forms of gradient descent. When the input is a random matrix with i.i.d. entries, we use Boolean Fourier analysis to analyze these algorithms as low-degree polynomials in the entries of the input matrix.
    Each symmetrized Fourier character represents all monomials with a certain shape as specified by a small graph, which we call a \emph{Fourier diagram}.

    We prove fundamental asymptotic properties of the Fourier diagrams: over the randomness of the input,
    all diagrams with cycles are negligible; the tree-shaped diagrams form a basis of \emph{asymptotically independent Gaussian vectors}; and, when restricted to the trees, iterative algorithms exactly follow an idealized Gaussian dynamic.
    We use this to prove a state evolution formula, giving a ``complete'' asymptotic description of the algorithm's trajectory.

    The restriction to tree-shaped monomials mirrors the assumption of the  \emph{cavity method}, a 40-year-old non-rigorous technique in statistical physics which has served as one of the most important techniques in the field.
    We demonstrate how to implement cavity method derivations by 1)~restricting the iteration to its
    tree approximation, and 2)~observing
    that heuristic cavity method-type arguments hold
    rigorously on the simplified
    iteration. Our proofs use combinatorial arguments similar to the trace method from random matrix theory.
    
    Finally, we push the diagram analysis to a number of iterations that scales with the dimension $n$ of the input matrix, proving that the tree approximation still holds for a simple variant of power iteration all the way up to $n^{\Omega(1)}$ iterations.
\end{abstract}

 \newpage

\tableofcontents
\newpage
\cleardoublepage
\pagenumbering{arabic}
\pagestyle{plain}

\section{Introduction}

We study nonlinear iterative algorithms which take as input a matrix $A\in\R^{n\times n}$, maintain a vector state $x_t\in\R^n$, and at each step
\begin{enumerate}
    \item either multiply the state by $A$, 
    \[x_{t+1} = Ax_t\,,\]
    \item or apply the same function $f_t:\R^{t+1}\to\R$ to each coordinate of the previous states,
    \[x_{t+1}=f_t(x_t, \dots, x_0)\,.\] 
\end{enumerate}
This class of algorithms has been coined \emph{general first-order methods} (GFOM)~\cite{celentano2020estimation, montanari2022statistically}.
GFOM algorithms are a simple, widespread, practically efficient, and incredibly powerful computational model.
Alternating linear and nonlinear steps can describe first-order optimization algorithms including power iteration and many types 
of gradient descent (see \cite{celentano2020estimation, gerbelot2022rigorous}).
This definition also captures belief propagation and other message-passing algorithms which play a central role not only in the design of Bayes-optimal algorithms for planted signal recovery~\cite{feng2022unifying}, but also recently in average-case complexity theory for the optimization of random polynomials~\cite{surveyOptimizingSpinGlass}.

In machine learning and artificial intelligence, deep neural networks exhibit a similar structure which alternates multiplying weight matrices and applying nonlinear functions.
Remarkably, viewed from this level of generality, the line blurs between neural networks and the gradient descent algorithms used to train them.

Despite the widespread use of GFOM and deep neural networks,
developing a mathematical theory for these algorithms continues to be a major challenge.
Thus far, it has been difficult to isolate mathematical theorems which describe key phenomena but avoid being too specific to any one setting, model, or algorithm.
That being said, one effective theory has emerged
at the interface of computer science, physics, and statistics for
studying a class of
nonlinear iterations known as Belief Propagation~(BP) and Approximate Message Passing~(AMP) algorithms.
This theory is most developed for inputs $A$ that are \emph{dense random matrices with i.i.d entries}, also known as a \emph{mean-field models} in physics, and which can be considered the simplest possible model of random data.

The analysis of BP and AMP algorithms in this setting can 
be summarized by the \emph{state evolution} formula~\cite{donoho2009message, bolthausen2014iterative}.
This is an impressive ``complete''
description of the trajectory of the iterates $x_t \in \R^n$, in the limit $n\to\infty$.
Specifically, state evolution
defines a sequence of {\em scalar} random variables $X_t$ such that for essentially \emph{any} symmetric quantity of interest
related to $x_t$,
the expectation of a corresponding expression in $X_t$
approximates the quantity with an error
that goes to $0$ as $n\to\infty$.
This yields analytic formulas for quantities such as the loss function or objective value achieved by $x_t$, the norm of $x_t$, the correlation between $x_s$ and $x_t$ across
iterations,
or the fraction of $x_t$'s coordinates which lie in the interval $[-1,+1]$.
The ability to precisely analyze the trajectory and the fixed points of message-passing algorithms (through $X_t$ with large $t$) has been key to their applications.

State evolution for BP/AMP iterations was originally predicted using a powerful and influential
technique from statistical physics known as the \emph{cavity method}.
Variants of BP have been studied in physics as ``non-linear dynamical systems'' as far back as the work of Bethe~\cite{bethe1935statistical}, although the algorithmic perspective came into prominence only later. 
The cavity method and the related replica method were devised in the 1980s~\cite{Parisi79,parisi1980sequence,cavityMethod86, mezard1987spinglasstheoryandbeyond}, initially as a tool to compute thermodynamic properties of disordered systems, and later as a tool for analyzing belief propagation algorithms.
Since their introduction, the cavity method and the replica method have served as two of the most fundamental
tools in the statistical physics toolbox.

The deployment of these techniques has undoubtedly been a huge success; there are many survey articles offering various perspectives from different fields \cite{yedidia2003understanding, MezardMontanari, koller2009probabilistic, zdeborova2016statistical, gabrie2020mean, feng2022unifying, zou2022concise, charbonneau2023spin}.
However, the reality is that the situation is not as unified as the above picture would suggest, due to a major issue:
\emph{the physical methods are not mathematically rigorous}.

At present, there exists a significant gap between how results are established in the physical and mathematical literature.
The two general types of results are: 1) simple non-rigorous arguments based on the cavity/replica method; 2) mathematically rigorous arguments that confirm the physical predictions, but with technically sophisticated proofs that can't closely follow the path of the physical reasoning.
For example, the state evolution formula  was first proven by Bolthausen~\cite{bolthausen2014iterative} using a Gaussian conditioning technique which is fairly technically involved.
Although many proofs have been found for predictions of the cavity and replica methods, none can be said to clearly explain the success of the physicists' techniques.

It has appeared that the physicists have some secret advantage yet unmatched by rigorous mathematics.
Is there a simple and rigorous mathematical framework that explains why the assumptions made by physicists always seem to work?

\subsection{Our contributions}

We introduce a new method to analyze nonlinear iterations based on Fourier analysis, when the input to the algorithm is a random matrix with i.i.d entries. Our framework
gives proofs that are able to
closely follow heuristic physics derivations.

Our strategy is to replace the original iteration $(x_t)_{t \geq 0}$
by an idealized version
\[x_t \approx \widehat x_t\,,\]
which we call the \emph{tree approximation} to $x_t$.
The analysis then follows a
two-step structure:
\begin{enumerate}
    \item The tree approximation $\widehat x_t$ tracks the original
    iteration $x_t$ up to a uniform $\widetilde{O}(n^{-\frac 1 2})$ entrywise error.
    Hence, any reasonable asymptotic
    result established on
    $(\widehat x_t)_{t\ge 0}$ (such as the joint distribution of their entries) automatically extends to $(x_t)_{t\ge 0}$.
    \item Cavity method-type reasoning can be 
    rigorously applied to the
    tree approximation. 
    In cases where $\widehat x_t$ has already been analyzed in physics,
    one can essentially copy
    the heuristic physics derivation. 
\end{enumerate}

Analyzing $\widehat x_t$ is a
significant simplification compared to the entire state $x_t$---in fact, we show 
that the former follows an explicit {\em Gaussian dynamic}.
The simplification directly yields
a \emph{state evolution} formula for GFOM algorithms,
as well as rigorous
implementations of physics-based cavity method arguments (in the algorithmic or ``replica symmetric'' setting of the method).
In other words, our new notion of tree
approximation matches implicit assumptions
of the cavity method and gives a way to
justify them.

We define the tree approximation $\widehat x_t$ essentially as follows: we expand
the entries of $x_t$ as polynomials in the entries of the
input matrix~$A \in \R^{n \times n}$.
If we represent the monomials 
(e.g. $A_{12}A_{23}A_{24}$) as graphs in the natural way, then $\widehat x_t$
consists of only the monomials appearing in $x_t$ whose graph is a tree. 
Hence, we will show that the state of an
iterative algorithm can be tightly approximated using the much smaller set of tree monomials.

\vspace{-8pt}
\paragraph{Fourier diagrams.}
We view iterates of a GFOM with polynomial
non-linearities as 
vector-valued polynomials in the
entries of $A$. These polynomials have 
a special symmetry:
they are invariant
under permutations of the row/column indices of~$A$.

The polynomial representation can be visualized using \emph{Fourier diagrams},
each of which is a small graph representing all the monomials with a given shape. For example, here are three Fourier diagrams along with
the vectors associated with them.

\begin{table}[h]
\begin{tabularx}{\textwidth}{YYY}
     \includegraphics[scale=1]{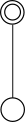} & \includegraphics[scale=1]{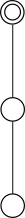} & \includegraphics[scale=1]{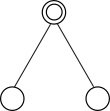}\\\vspace{3mm}
     $\displaystyle Z_i \defeq \sum_{\substack{ j = 1\\\text{$i, j$ distinct}}}^n A_{ij}$& $\displaystyle Z_i' \defeq \sum_{\substack{ j,k = 1\\\text{$i, j, k$ distinct}}}^n A_{ij}A_{jk}$ & $\displaystyle Z_i'' \defeq \sum_{\substack{ j,k = 1\\\text{$i, j, k$ distinct}}}^n A_{ij}A_{ik}$ 
\end{tabularx}
\end{table}

In general, a Fourier diagram is an undirected rooted multigraph $\alpha=(V(\alpha),E(\alpha))$ which represents the vector $Z_\al\in\R^n$ whose entries are:
\begin{align}
    Z_{\al,i} := \sum_{\substack{\textnormal{injective }\ph\colon V(\al) \to [n]\\\ph(\smallrootpic)=i}} \prod_{\{u,v\} \in E(\al)} A_{\ph(u)\ph(v)}\,,\quad\textnormal{ for all $i\in [n]$\,.}\label{eq:diagram-short}
\end{align}
We use $\rootpic \in V(\al)$ to notate the root vertex.

The symmetry of the GFOM operations ensures
that in the polynomial representation of an iterate $x_t$,
all monomials corresponding to the same Fourier diagram
come with the same coefficient.
Therefore, any iterate $x_t$ of a GFOM with 
polynomial non-linearities
can be expressed as a linear combination of Fourier diagrams,
in which case we say that it is written \emph{in the Fourier diagram basis}. 

We emphasize that these diagrams are
constructed by summing over {\em injective} embeddings $\ph\colon V(\al)\to[n]$, a crucial detail for the
results that follow. The term ``Fourier'' reflects that
this basis of polynomials is a symmetrized version of the
standard Fourier basis from Boolean function analysis (see \cref{sec:sym-Fourier-analysis}).

\vspace{-8pt}
\paragraph{Asymptotic diagram analysis.}
It turns out that something special happens to the
Fourier diagram basis in the limit $n\to\infty\,$, when
$A$ is a symmetric matrix with independent
mean-0, variance-$\frac 1 n$ entries. Informally, the entries of the diagrams become mutually independent, and the following properties hold.
\begin{itemize}
    \item The Fourier diagrams with cycles are negligible.
    \item The Fourier tree diagrams with one branch from the root are independent Gaussian vectors.
    \item The Fourier tree diagrams with several branches from the root are Hermite polynomials in the Gaussians represented by the branches.
\end{itemize}
Most importantly, \textbf{the only non-negligible contributions come from the trees}.
Based on this classification, we define the
{\em tree approximation} $\widehat x_t$ of an expression $x_t$ written
in the Fourier diagram basis to be obtained by discarding all diagrams with cycles.
That is, as polynomials in $A\,$, $\widehat x_t$ consists of the tree-shaped monomials in $x_t$.

The reason that cyclic Fourier diagrams are negligible is combinatorially intuitive:
cyclic diagrams sum over fewer terms than tree-shaped diagrams.
For example, the left diagram is a sum over ${\displaystyle\approx }\,n^4$ terms, each mean-zero with magnitude $\sim n^{-2}$, so the overall magnitude is $\Theta(1)$. The right diagram is a sum over ${\displaystyle\approx }\,n^3$ terms, again of magnitude $n^{-2}$, so the overall of order of the diagram is $\Theta(n^{-1/2})$.
\[    \vcenter{\hbox{\includegraphics[height=10ex]{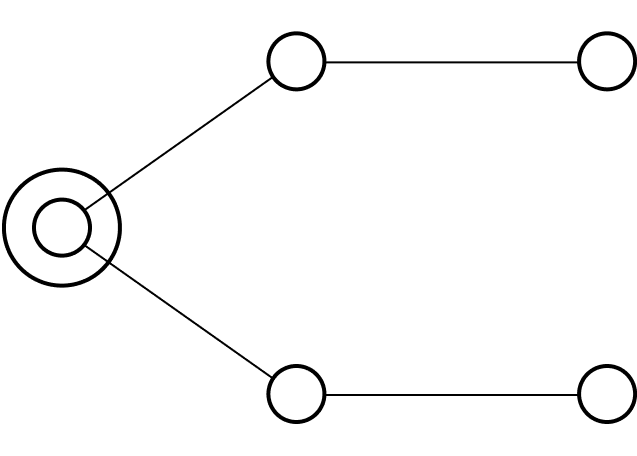}}} \qquad \qquad \vcenter{\hbox{\includegraphics[height=10ex]{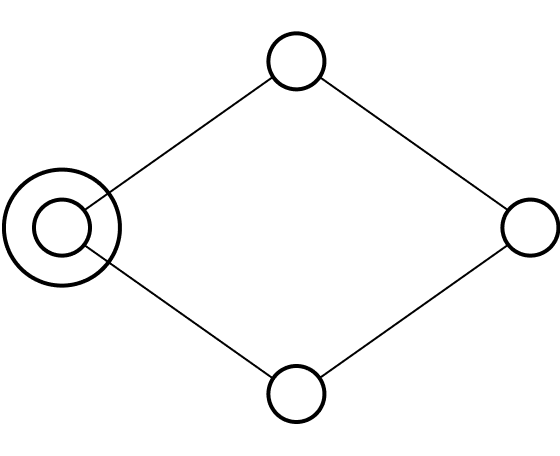}}}
\]

We now state our main theorems. In all of them, we assume that $A$ is a symmetric matrix with independent mean-0 variance-$\frac 1n$ entries (see \cref{assump:A-entries}). First, we formalize the classification above by proving that all joint moments of the Fourier diagrams converge to those of the corresponding random variables in a Gaussian space.

\begin{theorem}[Classification theorem; see \cref{thm:classification}]
\label{thm:intro-classification}
    For any $k\ge 0$ independent of $n$, for all connected Fourier diagrams $\al_1, \dots, \al_k$ and $i_1, \ldots, i_k\in [n]$ (allowing repetitions in $\al_j$ and $i_j$),
    \[
        \E_{A} \left[\prod_{j = 1}^k Z_{\al_j,i_j}\right] = \E\left[\prod_{j = 1}^k Z^{\infty}_{\al_j,i_j}\right] + O(n^{-\frac 1 2})\,,
    \]
    where for any connected Fourier diagram $\alpha$ and $i\in [n]$,
    \begin{enumerate}
        \item $Z^\infty_{\al,i} = 0$, if $\alpha$ has a cycle.
        \item $Z^\infty_{\al,i} \sim \calN(0, \left|\Aut(\al)\right|)$ independently, if $\al$ is a tree whose root has degree $1$.
        \item $Z^\infty_{\al,i} = \prod_{\tau} h_{d_\tau}(Z_{\tau,i}^\infty\;; \left|\Aut(\tau)\right|)$ if $\al$ is a tree consisting of $d_\tau$ copies of each tree $\tau$ from case 2 merged at the root, where $h_{d_\tau}$ are the Hermite polynomials (defined in \cref{sec:preliminaries}).
    \end{enumerate}
    \label{thm:main1}
\end{theorem}

Next, we prove that the tree approximation of
a GFOM closely tracks the original iteration. 
This addresses the first of the two steps
from the overview of our method.

\begin{theorem}[Tree approximation of GFOMs; see \cref{thm:state-evolution}]
\label{thm:intro-tree-approx}
    Let $t$ be a constant, $x_t\in \R^n$ be the state of a GFOM with polynomial non-linearities, and $\widehat{x}_t \in \R^n$ be the state obtained by performing the GFOM operations on only the tree diagrams. Then with high probability over $A$,
    \[
        \norm{x_t - \widehat{x}_t}_\infty = \widetilde{O}(n^{-\frac 1 2})\,.
    \]
    \label{thm:main2}
\end{theorem}

The statement of this theorem exactly isolates a key and subtle point: not only are the cyclic
diagrams negligible at time $t$, but they will never blow up to affect
the state at any future time.
The fact that ``approximation
errors do not propagate'' is what gives us the ability to pass the algorithm to an asymptotic limit.\footnote{This directly addresses a question raised in the seminal paper of Donoho, Maleki, and Montanari on approximate message passing~\cite[Section III.E]{donoho2010message}.} 

The proof of \cref{thm:intro-tree-approx} is intuitive.
According to the diagram classification theorem,
we can tease out the approximation error for $x_t$ as the monomials with cycles,
whereas the approximating quantity $\widehat{x}_t$ consists of the tree monomials.
When a GFOM operation is applied, the cycles persist in all cyclic monomials,
and hence they continue to be negligible.

As a direct consequence of these results, we deduce a very strong form of state evolution for all GFOM algorithms.
The theorem below paints a nearly complete picture of the evolution of $x_t$
as $n$ independent trajectories of a single random variable $X_t$ which is an ``idealized Gaussian dynamic''
in correspondence with $\widehat{x}_t$.

\begin{theorem}[General state evolution; see \cref{thm:general-state-evolution}]
    \label{thm:intro-general-state}
    Let $t$ be a constant, $x_t\in \R^n$ be the state of a GFOM with polynomial non-linearities, and let $X_t$ be the asymptotic state of $x_t$ (\cref{def:asymptotic-state}). Then:
    \begin{enumerate}[(i)]
        \item For each $i \in [n]$, $(x_{0,i}, \dots, x_{t,i}) \tod (X_0, \dots, X_t)$.
        Furthermore, the coordinates' trajectories $\{(x_{0,i},\dots, x_{t,i}) : i \in [n]\}$ are asymptotically independent.
        \item With high probability over $A$,
    \[
        \frac{1}{n} \sum_{i = 1}^n x_{t,i} = \E[X_t] + \widetilde{O}(n^{-\frac 1 2})\,.
    \]
    \end{enumerate}
\end{theorem}
Quantities such as the norm of $x_t$ can be computed using part (ii)
along with one additional GFOM iteration that squares $x_t$ componentwise.
Without much extra work, \cref{thm:intro-general-state} also encapsulates other key features of previous state evolution formulas including quantitative error bounds (similar to the main result of \cite{rush2018finite}) and universality (the main result of \cite{bayati2015universality}).\footnote{Similarly to \cite{bayati2015universality}, our technical analysis assumes that the nonlinearities in the GFOM are polynomial functions, but other works have been able to handle the larger class of \emph{pseudo-Lipschitz} non-linearities. We do not find this
assumption to be too restrictive since
it is known in many cases that we can approximate the non-linearities 
by polynomials~\cite[Appendix B]{ivkov2023semidefinite}.}

\vspace{-8pt}
\paragraph{The cavity method.} To explain the cavity method in one sentence, it allows you to assume
that ``loopy'' message-passing algorithms
on random dense graphs behave as if on a tree, gaining extra properties such as the independence of incoming messages.
It turns out that the assumption of being on a tree matches the restriction to tree-shaped monomials in $A$, leading to a way to rigorously implement
simple cavity method reasoning.

We formalize two types of cavity method arguments. For the first one, we introduce
a combinatorial notion of asymptotic equality $\eqinf$
which can rigorously replace heuristic approximations in the cavity method.

\begin{definition}[$\eqinf$, informal version of \cref{def:asymptotic-equality}]
    Let $x \eqinf y$ if $x-y$ is a sum of constantly many diagrams with cycles.
\end{definition}

As an application of this definition, we implement the cavity method argument
that belief propagation and approximate message passing are asymptotically equivalent for dense random matrices.

\begin{theorem}[Equivalence of BP and AMP; see \cref{thm:bp-amp}]\label{thm:bp-amp-intro}
    Let $m_{t}^{\textnormal{BP}}$ and $m_t^{\textnormal{AMP}}$ be the iterates
    of respectively the belief propagation
    and the approximate message passing iterations on the same non-linearities (see \cref{eq:bp-memory,eq:amp-output}). Then with high probability over $A$,
    \begin{align*}
        \norm{m_{t}^{\textnormal{BP}} - m_t^{\textnormal{AMP}}}_\infty = \widetilde{O}(n^{-\frac 12})\,.
    \end{align*}
\end{theorem}

We also use $\eqinf$ to prove a fundamental
assumption of the cavity method for belief
propagation iterations on dense models, 
namely that
the messages incoming at a vertex
are asymptotically independent.

\begin{theorem}[Asymptotic independence of incoming messages; see \cref{thm:cavity-formal}]
    Let $m_t^{\textnormal{BP}}$ be the iterates
    of a belief propagation iteration
    (\cref{eq:bp-memory}). For any $j\in [n]$, the incoming messages
    at $j$, $\{m^t_{i\to j}:i\in [n],i\neq j\}$, are
    asymptotically independent.
\end{theorem}

The second way that we formalize the
cavity method reasoning is through the idealized Gaussian dynamic $X_t$ in \cref{thm:intro-general-state}.
We recover the vanilla form of state evolution for approximate message passing,
since in this case, $X_t$ has a simple description.

\begin{theorem}[Asymptotic state of AMP; see \cref{thm:amp-state}]\label{thm:intro-amp-state}
    Consider the AMP iteration
    \begin{align}
        x_{t+1} = Af_t(x_t, \ldots, x_0) - \frac 1 n \sum_{s=1}^t \sum_{i=1}^n \frac{\partial f_t}{\partial x_s}(x_{t,i},\ldots,x_{0,i}) f_s(x_s, \ldots, x_0)\,.\label{eq:amp-intro}
    \end{align}
    The asymptotic state of $(x_0, x_1, \ldots)$ is a centered Gaussian vector
    $(X_0, X_1, \ldots)$ with covariances given by the recurrence, for all $s, t$,
    \[
        \E \left[X_s X_t\right]= \E \left[f_{s-1}(X_{s-1},\ldots,X_0)f_{t-1}(X_{t-1},\ldots,X_0)\right]\,.
    \]
\end{theorem}
The subtracted term in \cref{eq:amp-intro} is called the \emph{Onsager correction}
which, as we show, is carefully
designed to cancel out a backtracking
term in the asymptotic tree space 
(\cref{lem:taylor-expand}).

We emphasize that \cref{thm:bp-amp-intro,thm:intro-amp-state}
are {\em known}. They were originally predicted with the cavity
method, then later confirmed by rigorous proofs (in \cite{bayati2015universality} and \cite{bolthausen2014iterative, bayati2011dynamics, celentano2020estimation}, respectively).
The main message about our proofs is the new and quite comprehensive perspective obtained through the tree approximation,
providing a clear way in which GFOM algorithms on dense random inputs ``can be assumed to occur on a tree''. 

Finally, we provide an exposition in \cref{sec:montanari} of the breakthrough \emph{iterative AMP} algorithm devised by Montanari to compute ground states of the Sherrington--Kirkpatrick model \cite{montanari2021optimization, AM20, AMS20:pSpinGlasses}.
We explain from the diagram perspective how the algorithm is the optimal choice among algorithms which ``extract'' a Brownian motion from the input.

\vspace{-7pt}
\paragraph{Taking the tree approximation farther.}
The asymptotic theory above applies to an iterative algorithm running for a \textit{constant} number of iterations.
Although this ``short-time'' setting is used in a large majority of previous
works in this area, there is interest in extending the analysis
to, say, $O(\log n)$ iterations, which may be enough to capture planted recovery from
random initialization and distinct phases
of learning algorithms~\cite{LFW23}.

Can we use the tree-like Fourier characters to analyze the long-time behavior?
We show in \cref{sec:polyn-iterations} that some care needs to be taken.
First, we prove a positive result, that the tree approximation continues to hold for $n^{\Omega(1)}$ iterations for a simple belief propagation algorithm (debiased power iteration, or asymptotically equivalently, power iteration on the non-backtracking walk matrix).

\begin{theorem}[See \cref{thm:power-iteration}]
Generate $x_t \in \R^n$ from the debiased power iteration and let $\widehat{x}_t$ be the tree approximation to $x_t$.
Then there exist universal constants $c, \delta > 0$ such that for all $t \leq cn^\delta$,
\[\norm{x_t - \widehat{x}_t}_\infty \toas 0\,.\]
\end{theorem}

However, we also identify some problems with the technology which suggest that new ideas will be needed to completely capture the long-time setting. We observe that the asymptotic Gaussian classification theorem is no longer valid for diagrams of size $t \approx \log n$.
Finally, we identify a further threshold at $t \approx \sqrt{n}$ iterations
beyond which the tree approximation we use seems to break down.

\vspace{-7pt}
\paragraph{Conclusion.}
We demonstrate that for iterative algorithms
running on dense random inputs,
{trees are all you need}.
The tree-shaped Fourier diagrams form
an asymptotic
basis of independent Gaussian vectors
associated to an arbitrary Wigner matrix.
This basis seems extremely useful, and
we are not aware of any previous works on it.

We note that from the outset, it is not at all clear
how to find this basis. Individual monomials (i.e. Boolean Fourier characters) such as $A_{12}A_{23}A_{34}$ and $A_{12}A_{23}A_{13}$ have the same magnitude, and the asymptotic negligibility of the cyclic terms including $A_{12}A_{23}A_{13}$ only appears after summing up
the total contribution of all monomials with the same shape. Furthermore, grouping
terms in a different way does not identify our notion of tree approximation,
such as by allowing
repeated indices in \cref{eq:diagram-short}
(as done in~\cite{bayati2015universality,ivkov2023semidefinite}).
In this repeated-label representation, there
is no clear notion of tree approximation of
iterative algorithms (in fact, with this alternative definition, the iterates of
a GFOM can always
be represented {\em exactly} with trees!)
or of the simplified Gaussian dynamic on trees, which is central
to our approach.

As we show, the Fourier tree approximation
leads to streamlined proofs of several arguments
based on the cavity method. We believe that
this framework has potential to generalize well beyond the Wigner case and to address outstanding
open problems in the area---such as the long-time setting mentioned above.

\subsection{Related work}\label{sec:related_work}

\paragraph{Comparison with prior work.}
Our analysis is based on the recent ``low-degree paradigm'' in which
algorithms are analyzed as low-degree multivariate polynomial functions of the input \cite{kunisky2019notes}.
Several recent works have used a similar approach for iterative algorithms~\cite{bayati2015universality,montanari2022equivalence,ivkov2023semidefinite}, with subtle but crucial differences to our work.

Bayati, Lelarge, and Montanari~\cite{bayati2015universality} decompose the AMP iterates into certain ``nonreversing'' labeled trees. 
They also observe that the Onsager correction corresponds to a backtracking term. Montanari and Wein~\cite[Section 4.2]{montanari2022equivalence} introduce an orthogonal diagram
basis (similar to our Fourier diagram basis) 
in their proof that no low-degree polynomial
estimator
can outperform AMP for rank-1 matrix estimation.
Ivkov and Schramm~\cite{ivkov2023semidefinite} use a repeated-label representation to show that AMP can be
robustified.

Diagrammatically, the main advantage of our method is the precise choice of the Fourier diagram basis.
By summing over {injective} a.k.a. {self-avoiding} labelings $\varphi$ in \cref{eq:diagram-short}, each diagram exactly describes all monomials with a given shape.
When working with other polynomial basis, for example diagrams with repeated labels \cite{bayati2015universality,ivkov2023semidefinite} (see \cref{sec:repeated-labels}),
the key properties of the Fourier diagram basis (the family of asymptotically independent Gaussian vectors and the associated Gaussian dynamic) do not seem clearly visible.
In particular, previous work does not show 
the tree approximation.

Our results stated above which are cavity method-based reproofs of existing results are \cref{thm:bp-amp-intro},
which essentially follows from 
\cite[Proposition 3]{bayati2015universality},
and \cref{thm:intro-amp-state}, which 
was first proven by Bayati and Montanari \cite{bayati2011dynamics}.
Notably, Bayati, Lelarge, and Montanari \cite{bayati2015universality} use an approach based on the moment method as we do.
Their proof is somewhat more technical, it does not use the Fourier diagram basis,
and it is not able to clearly follow the 
simple cavity method argument that we
reproduce in \cref{sec:bp-amp-heuristic}.

    We also compare our state evolution formula for GFOM in \cref{thm:intro-general-state}
    with a state evolution formula for GFOM proven by \cite{celentano2020estimation}.
    They give a reduction from GFOM to AMP to derive a state evolution formula for GFOM.
    The corresponding description of the asymptotic state $X_0, \dots, X_t$ is inside a very compressed probability space
    generated by $t$ Gaussians with a certain covariance structure.
    
    Our description of the random variables $X_0, \dots, X_t$ (necessarily having the same distribution)
    has a simpler interpretation inside a larger probability space generated by $(Z_\sig^\infty)_{\sig \in \calS}\,$.
    Both descriptions of the
    asymptotic state $X_t$ are likely to be valuable for different purposes or explicit calculations.
    Our formulation of state evolution also includes the asymptotic independence of the trajectories of different coordinates.

\paragraph{Analyzing algorithms as low-degree polynomials.}
Our technical framework is adapted from the average-case analysis of
\emph{Sum-of-Squares} algorithms.
The Sum-of-Squares algorithm is a powerful meta-algorithm
for combinatorial optimization and statistical inference \cite{raghavendra2018high, FKP:semialgebraicproofs}.
Sum-of-Squares has been successfully analyzed on i.i.d. random inputs using \emph{graph matrices},
which are a Fourier basis for matrix-valued functions of a random matrix $A$
in the same way that our diagram basis is a basis
for vector-valued functions of $A$.

The theory appears much more pristine in the current setting, so we
hope that the current results can bring some new clarity to the technically challenging works on Sum-of-Squares.
Many key ideas on graph matrices are present in a pioneering work
by Barak et al. which analyzes the Sum-of-Squares algorithm
for the Planted Clique problem~\cite{BHKKMP16:PlantedClique} (building on earlier work \cite{deshpande2015improved, meka2015sum, hopkins2018integrality}).
Analytical ideas were subsequently isolated by Ahn, Medarametla, and Potechin~\cite{AMP20} and Potechin and Rajendran~\cite{PR20:Machinery, potechin2022sub} and developed in several
more works~\cite{GJJPR20:SherringtonKirkpatrickPlantedAffinePlanes, RajendranTulsiani20, JPRTX21:SparseIndependentSet,
JP22:InnerProductPolynomials, jones2022symmetrized, jones2023sum, xu2024ultrasparse}.
Several recent works have made explicit connections between AMP, Sum-of-Squares, and low-degree polynomials \cite{montanari2022equivalence, ivkov2023semidefinite, singh2023highentropy, singh2024sum}.
Another similar class of diagrammatic techniques are \emph{tensor networks} \cite{moitra2019spectral, kunisky2024tensor}.

\vspace{-8pt}
\paragraph{Statistical physics and the cavity method.}
The cavity and replica methods are widely used in statistical physics to compute the free energy, complexity, etc. of Gibbs distributions on large systems,
or similarly to compute the satisfiability threshold, number of solutions, etc. for many non-convex random optimization problems.
For an introduction to statistical physics methods in computer science, we recommend the surveys~\cite{martin2001statistical, zdeborova2016statistical, gabrie2020mean}, the book~\cite{MezardMontanari}, and the 40-year retrospective~\cite{charbonneau2023spin}.
The cavity method is described in \cite{mezard2003cavity} and \cite[Part V]{MezardMontanari}.

Rigorously verifying the predictions of the physical methods has
been far from easy for mathematicians.
To highlight some major landmarks in the literature over the past decades, tour-de-force proofs
of the Parisi formula for the free energy of the
SK model were developed by
Talagrand \cite{Talagrand06, talagrandBook} and Panchenko \cite{panchenko2013sherrington}.
Ding, Sly, and Sun~\cite{DSS16:MaxIndependentSets, ding2014satisfiability, ding2015proof} identified
the satisfiability threshold for several random optimization problems including $k$-SAT with large $k$.
Ding and Sun~\cite{ding2019capacity} and Huang~\cite{huang2024capacity}
rigorously analyze the storage capacity of the Ising perceptron, assuming a numerical condition.

Note that the results above are strictly outside the regime of the
current work. They require the replica method in ``replica symmetry breaking'' settings,
whereas we study the simpler but related cavity method in the replica symmetric setting.
$k$-SAT is also a sparse (a.k.a. dilute) model
whereas our results are for dense (a.k.a. mean-field) models.
Despite these differences, our results tantalizingly suggest that it may be possible
to validate the physical techniques in a more direct and generic way than taken by current approaches.

Other authors have also directly considered the cavity assumption,
albeit using a less combinatorial approach.
Both proofs of the Parisi formula implement analytic forms of the cavity calculation
(\cite[Section 1.6]{talagrandBook} and \cite[Section 3.5]{panchenko2013sherrington}).
The cavity method can also be partially justified for sparse models
in the replica symmetric regime using that the interactions
are locally treelike with high probability \cite{bayati2006rigorous, coja2017information}.

Diagrammatic methods are common in physics,
and in fact they have been used in the vicinity of belief propagation
even since a seminal 1977 paper by Thouless--Anderson--Palmer~\cite{thouless1977solution} which introduced the TAP equations of \cref{eq:tap}.
A version of the tree approximation actually appears briefly in their diagrammatic formula for the free energy of the SK model in Section 3. 
However, it has not been clear how or whether these
arguments could be made rigorous, and to date, rigorous proofs have
not directly followed these approaches.

\vspace{-8pt}
\paragraph{Belief propagation and AMP.}
Belief propagation originates
in computer science and statistics from Pearl \cite{pearl1988probabilistic}.
In the current setting, we can view the underlying graphical model as the complete graph, with correlations between the variables induced by the random matrix $A$.
State evolution was first predicted for BP algorithms in this setting
by Kabashima \cite{kabashima2003cdma} and Donoho--Maleki--Montanari \cite{donoho2009message}. Since the first rigorous proof of state evolution by Bolthausen \cite{bolthausen2014iterative}, his Gaussian conditioning technique has been extended to prove state evolution for
many variants of AMP \cite{bayati2011dynamics, javanmard2013state, ma2017analysis, takeuchi2019rigorous, berthier2020nonseparable, takeuchi2019unified, AMS20:pSpinGlasses, takeuchi2021bayes, lu2021householder, feng2022unifying, fan2022approximate, gerbelot2023graph, huang2023optimization}.

A notably different proof of state evolution by Bayati, Lelarge, and Montanari \cite{bayati2015universality} uses a moment-based approach
which is closer to ours
(see also follow-up proofs~\cite{chen2021universality, dembo2021diffusions, wang2022universality, dudeja2023universality}).
These proofs and also ours show universality statements which the Bolthausen conditioning method cannot handle.

All of the above works restrict themselves to a constant number of
iterations,
although some recent papers push the analysis of AMP in some settings to a superconstant number of iterations 
\cite{rush2018finite, cademartori2023non, wu2023lower,wu2024sharp}.
Very recently, \cite{LW22, LFW23} managed to analyze $t = \widetilde{\Om}(n)$ iterations of AMP in the spiked Wigner model.
This last line of work is especially intriguing, given that our
approach seems to break down at $t \approx \sqrt{n}$ (\cref{sec:combi-obstructions}).

The perspective that we take is slightly different from most of these papers.
Whereas previous works analyze the asymptotic \emph{distribution} of the AMP iterates over the randomness of $A$,
we give an explicit function $\widehat x_t$ which exactly satisfies a ``Gaussian dynamics'' and asymptotically approximates the iterates.
This general approach provides more information and we hope that it has increased potential
for generalization.

On first-order iterations which are not BP/AMP algorithms, a smaller number of physical analyses have
been performed using the more general techniques of \emph{dynamical mean field theory} \cite{martin1973statistical}.
We refer to the survey \cite{gabrie2020mean}.
Most analyses rely on heuristic arguments, although some more recent works \cite{celentano2021high, gerbelot2022rigorous, liang2022high} prove rigorous results.

Finally, we note that the tree approximation
bears similarities to the suppression of
noncrossing partitions in free probability~\cite{speicherBook}.
Unlike the traditional viewpoint of free
probability, the combinatorial cancellations
behind the tree approximation
occur directly on the trajectory of 
random objects (the
iterates of the algorithm), and
not only for averaged quantities associated with them.

\subsection{Organization of the paper}

After background preliminaries in \cref{sec:preliminaries},
we introduce the diagrams in \cref{sec:diagram-calculus} and describe
their key properties without proofs.

In \cref{sec:combinatorial}, we present the full diagram analysis: we 
define the key notion of asymptotic
equality $\eqinf$, and we prove three central theorems: the classification of the diagrams (\cref{thm:classification}), the tree approximation for GFOMs (\cref{thm:state-evolution}), and a general state evolution
formula (\cref{thm:general-state-evolution}).

In \cref{sec:cavity-method}, we demonstrate
the connection with the cavity method by
proving the equivalence between
belief propagation and approximate
message passing (\cref{thm:bp-amp}), the
independence of incoming messages in 
belief propagation
(\cref{thm:cavity-formal}), the state
evolution formula for AMP (\cref{thm:amp-state}),
and the analysis of the iterative AMP algorithm
of Montanari.

Finally, \cref{sec:polyn-iterations} investigates algorithms running for a large number of iterations.

\cref{app:non-asymptotic,app:non-asymptotic-analysis,sec:star} contain omitted proofs and calculations.

\vspace{-8pt}
\paragraph{Acknowledgments.}
We are pleased to thank numerous colleagues for feedback and discussions as this work developed, including Giorgi Kanchaveli, Carlo Lucibello, Enrico Malatesta, Goutham Rajendran, Alon Rosen, and Riccardo Zecchina.
Work supported in part by the European Research Council (ERC) under the European
Union’s Horizon 2020 research and innovation programme (grant agreement Nos. 834861 and 101019547).
CJ is also a member of the Bocconi Institute for Data Science and Analytics (BIDSA).

\section{Preliminaries}
\label{sec:preliminaries}

To maintain generality, we specify the input (a random matrix) and the algorithm (a first-order iteration), but we do not specify an objective/energy function, and for this reason our results are in the flavor of random matrix theory. 
While the setting of this paper is a null model without any hidden signal, 
we expect that our techniques can also be applied to planted recovery problems.
A concrete algorithmic application to 
keep in mind in the null model
is the optimization of
random degree-2 polynomials that we revisit in
\cref{sec:montanari}.

Our results will apply universally to a Wigner random matrix model (they hold regardless
of the specific choice of $\mu, \mu_0$ below).

\begin{assumption}[Assumptions on matrix entries]
\label{assump:A-entries}
    Let $\mu$ and $\mu_0$ be two subgaussian\footnote{A distribution $\mu$ on $\R$ is \emph{subgaussian} if there exists a constant $C > 0$ such that for all $q \in \N$, $\E_{X \sim \mu}[|X|^q] \leq C^q q^{q/2}$.} distributions on $\R$ such that $\E_{X \sim \mu}[X] = 0$ and $\E_{X \sim \mu}[X^2] = 1$.
    
    Let $A$ be a random $n\times n$ symmetric matrix with independent entries (up to the symmetry) which are either $\sqrt{n} A_{ii} \sim \mu_0$ on the diagonal or $\sqrt{n} A_{ij} \sim \mu$
    off the diagonal.
\end{assumption}

The subgaussian assumption on
$\mu$ and $\mu_0$ can be relaxed to require
only the existence of the
$q$-th moment of $\mu$ for some 
large enough constant
$q\in \N$ that depends only 
on the number of iterations
and the degree of the nonlinearities appearing
in the algorithm. In this case, our statements
of the form ``$\|x_n-y_n\|_\infty=\widetilde O(n^{-1/2})$ with high probability''\footnote{We say a
sequence of events $(A_n)_{n\ge 0}$ occurs with high probability if $\Pr(A_n)\ge 1-1/\text{poly}(n)$.} weaken to ``$\|x_n-y_n\|_\infty\toas 0$''.

\begin{definition}[Convergence of random vectors]
    Let $(X_n)_{n\in \N}$ and $Z$ be random vectors. 
    \begin{itemize}
        \item We write $X_n\toas Z$ if $X_n$ converges to $Z$ almost surely, i.e. $\lim_{n \to \infty}X_n$ exists and equals $Z$ with probability 1.
        \item We write $X_n\tod Z$ if $X_n$ converges to $Z$ in distribution, i.e. for every real-valued bounded continuous function $f$, $\lim_{n\to\infty} \E f (X_n)$ exists and equals $\E f(Z)$.
    \end{itemize}
\end{definition}

We can derive convergence in distribution 
of random vectors by computing their
moments.

\begin{lemma}[Method of moments {\cite[Theorems 29.4, 30.1, and 30.2]{billingsley}}]
\label{lem:method-of-moments}
    Let $X_n\in\R^d$ for $n\in \N$ and $Z\in\R^d$ be random
    vectors such that for any $q_1,\ldots,q_d\in\N$,
    \[
        \E \left[\prod_{i=1}^d X_{n,i}^{q_i}\right]\underset{n\to\infty}{\longrightarrow} \E \left[\prod_{i=1}^d Z_i^{q_i}\right].
    \]
    Suppose that for all $i\in [n]$, $Z_i$ 
    has the distribution of a polynomial in
    Gaussian random variable. Then $X_n\tod Z$.  
\end{lemma}

We will refer to the generalized (probabilist's) Hermite polynomials as $h_k(\,\cdot\,; \sigma^2)$, where $h_k$ is the degree-$k$ monic orthogonal polynomial for $\calN(0, \sigma^2)$. If $Z_i$ is an independent $\calN(0, \sigma_i^2)$ random variable for all $i\in\calI$, then
$\left(\prod_{i \in \calI} h_{k_i}(Z_i;\sigma_i^2)\right)_{k\in\N^\calI}$ is an orthogonal basis for polynomials in $(Z_i)_{i\in\calI}$ with respect to the expectation over $(Z_i)_{i\in\calI}$.

The Gaussian distribution and Hermite polynomials have combinatorial interpretations related to matchings.

\begin{lemma}
    \label{lem:gaussian-moments}
    For $Z \sim \calN(0, \sigma^2)$, 
    \[
        \E\left[Z^{q}\right] = \abs{\calP\calM(q)}\sig^{\frac q 2} = \begin{cases}
        (q-1)!!\cdot \sig^{\frac q 2} & \text{if $q$ is even}\\
        0 & \text{if $q$ is odd}
    \end{cases}\,,
    \]
    where $\calP\calM(q)$ is the set of perfect matchings on $q$ objects and $(q-1)!!=\frac {q!} {2^{q/2} (q/2)!}$.
\end{lemma}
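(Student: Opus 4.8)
The plan is to reduce to the standard Gaussian, compute its moments by a short recursion, and then separately identify the resulting constant with the number of perfect matchings. First I would write $Z = \sig g$ with $g \sim \calN(0,1)$, so $\E[Z^q] = \sig^q\,\E[g^q]$ (reading $\sig^2$ as the variance, which is the paper's normalization), reducing everything to the claim $\E[g^q] = \abs{\calP\calM(q)}$ for a standard Gaussian. For the moments of $g$, integration by parts is cleanest: for $q \ge 2$, in $\frac{1}{\sqrt{2\pi}}\int x^q e^{-x^2/2}\,dx$ we use $x\,e^{-x^2/2} = -\tfrac{d}{dx}e^{-x^2/2}$ to get the recursion $\E[g^q] = (q-1)\,\E[g^{q-2}]$, with base cases $\E[g^0] = 1$ and $\E[g^1] = 0$. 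Unrolling gives $\E[g^q] = 0$ for odd $q$ and $\E[g^q] = (q-1)(q-3)\cdots 3\cdot 1 = (q-1)!!$ for even $q$, and grouping consecutive factors into pairs rewrites $(q-1)!! = \frac{q!}{2^{q/2}(q/2)!}$, matching the stated closed form. (Equivalently one can expand the moment generating function $\E[e^{tg}] = e^{t^2/2} = \sum_{m \ge 0} \frac{t^{2m}}{2^m m!}$ and read $\E[g^q]$ off as $q!$ times the coefficient of $t^q$; since the Gaussian MGF is entire, this coefficient extraction is rigorous.)

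It then remains to show $\abs{\calP\calM(q)} = (q-1)!!$. A perfect matching on $q$ labeled objects exists only when $q$ is even, in which case the object of smallest label is paired with one of the remaining $q-1$ objects, and deleting that pair leaves a perfect matching on $q-2$ objects; hence $\abs{\calP\calM(q)} = (q-1)\,\abs{\calP\calM(q-2)}$ with $\abs{\calP\calM(0)} = 1$. This recursion has exactly the same solution as the moment recursion above, so $\abs{\calP\calM(q)} = (q-1)!!$ and the two computations combine to give the lemma.

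Honestly there is no obstacle of real depth here: the only point that needs a word of justification is the validity of differentiating/extracting coefficients from the MGF, which is resolved by the entireness of the Gaussian MGF, or sidestepped entirely by the integration-by-parts route, which is fully self-contained. I would also note in passing that a purely combinatorial proof is available via Wick's theorem — pair the $q$ copies of $Z$ in all $\abs{\calP\calM(q)}$ ways, each matched pair contributing a factor $\E[Z^2] = \sig^2$ — since this is the form of the identity that matches the matching-based diagram calculus developed later; but the analytic recursion is the shortest argument to record at this point.
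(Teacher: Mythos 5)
Your proof is correct. The paper states this lemma as a standard preliminary fact (Wick/Isserlis formula) without giving a proof, and your argument — reducing to the standard Gaussian, deriving the recursion $\E[g^q]=(q-1)\E[g^{q-2}]$ by integration by parts, and matching it against the identical recursion $\abs{\calP\calM(q)}=(q-1)\abs{\calP\calM(q-2)}$ for perfect matchings — is a complete and standard justification of exactly this fact. You are also right to read the normalization with $\sig^2$ as the variance, so the exponent in the display should be understood as $(\sig^2)^{q/2}=\sig^q$; your computation yields precisely that, consistent with the paper's usage of $\sig^2$ elsewhere (e.g.\ in the Hermite polynomial lemmas).
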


\begin{lemma}[{\cite[Theorem 3.4 and Example 3.18]{Janson:GaussianHilbertSpaces}}]
\label{fact:hermite-matchings}
    For all $k\ge 0$ and $x\in\R$,
    \[
        h_k(x; \sig^2) = \sum_{M \in \calM(k)} (-1)^{|M|} \sig^{2|M|} x^{k-2\abs{M}}\,,
    \]
    where $\calM(k)$ is the set of (partial) matchings on $k$ objects (including the empty matching and perfect matchings).
\end{lemma}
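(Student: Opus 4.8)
The plan is to show that the right-hand side, which I will call $p_k(x) := \sum_{M \in \calM(k)} (-1)^{\abs M}\, \sig^{2\abs M}\, x^{k-2\abs M}$, equals $h_k(\,\cdot\,;\sig^2)$ by invoking the defining property of the latter. Recall that $h_k(\,\cdot\,;\sig^2)$ is the unique \emph{monic} polynomial of degree $k$ orthogonal to every polynomial of degree $<k$ with respect to $\calN(0,\sig^2)$. Now $p_k$ is manifestly monic of degree $k$: the empty matching contributes $x^k$, and every edge lowers the exponent by $2$, so no term of higher degree occurs. Hence it suffices to prove
\[
    \E_{Z\sim\calN(0,\sig^2)}\bigl[\,p_k(Z)\, Z^m\,\bigr] = 0 \qquad\text{for every } 0\le m\le k-1.
\]

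First I would expand by linearity and substitute the Gaussian moments supplied by \cref{lem:gaussian-moments}:
\[
    \E\bigl[p_k(Z)\,Z^m\bigr] \;=\; \sum_{M\in\calM(k)} (-1)^{\abs M}\, \sig^{2\abs M}\, \E\bigl[Z^{\,k-2\abs M+m}\bigr] \;=\; \sum_{M\in\calM(k)} (-1)^{\abs M}\, \sig^{2\abs M}\, \abs{\calP\calM(k-2\abs M+m)}\, \sig^{\,k-2\abs M+m},
\]
with the usual convention $\abs{\calP\calM(q)}=0$ for odd $q$. The powers of $\sig$ telescope, since $2\abs M + (k-2\abs M+m) = k+m$ independently of $M$; factoring out $\sig^{k+m}$, the claim reduces to the purely combinatorial identity
\[
    \sum_{M\in\calM(k)} (-1)^{\abs M}\, \abs{\calP\calM\bigl(k-2\abs M+m\bigr)} \;=\; 0 \qquad\text{for } 0\le m< k.
\]

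I would establish this by a sign-cancellation argument. Its left-hand side is a signed enumeration of pairs $(M,N)$ in which $M$ is a partial matching on a ground set $[k]$ and $N$ is a perfect matching on the $k-2\abs M$ elements left uncovered by $M$ together with $m$ auxiliary elements, weighted by $(-1)^{\abs M}$. Such a pair is exactly the same datum as a perfect matching $P := M\cup N$ of the $(k+m)$-element set $[k]\sqcup[m]$, together with a choice of a subset $M$ of the edges of $P$ that lie wholly inside $[k]$. Summing the weight $(-1)^{\abs M}$ over all choices of this subset yields $(1+(-1))^{e(P)}$, where $e(P)$ is the number of edges of $P$ inside $[k]$; this vanishes unless $e(P)=0$. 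Therefore the whole sum equals the number of perfect matchings of $[k]\sqcup[m]$ that match every element of $[k]$ to an element of $[m]$ --- and when $m<k$ there are none, by pigeonhole. This proves the identity, hence the lemma.

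The one step that is not pure bookkeeping, and the only place cleverness is needed, is this last cancellation; it is the combinatorial shadow of the fact that the Hermite polynomials are the Gram--Schmidt orthogonalization of the monomials (equivalently, of Wick's formula). An alternative, marginally shorter route --- if one is willing to quote the three-term recurrence $h_{k+1}(x;\sig^2) = x\,h_k(x;\sig^2) - k\sig^2\,h_{k-1}(x;\sig^2)$ --- is to check by induction that $p_k$ obeys the same recurrence: partitioning $\calM(k+1)$ by whether the object $k+1$ is left unmatched (contributing $x\cdot p_k$) or is matched to one of the other $k$ objects (contributing, after deleting that edge, $-\sig^2$ times $p_{k-1}$, with $k$ choices of partner) gives $p_{k+1} = x\,p_k - k\sig^2\,p_{k-1}$, and $p_0=1$, $p_1=x$ match the base cases. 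I would present the marked-subset argument as the primary proof, since it is self-contained and uses only \cref{lem:gaussian-moments}.
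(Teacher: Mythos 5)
Your proof is correct; note, however, that the paper offers no proof of this lemma at all --- it is imported directly from Janson (Theorem 3.4 and Example 3.18), so there is no in-paper argument to compare against, and your write-up is a legitimate self-contained substitute. Your route is the standard Wick-calculus one: $p_k$ is visibly monic of degree $k$, orthogonality to $Z^m$ for $m<k$ reduces via the Gaussian moment formula (\cref{lem:gaussian-moments}) to a signed matching identity, and that identity dies under the ``choose a subset of the edges of $P$ internal to $[k]$'' cancellation, $(1+(-1))^{e(P)}=0$, with pigeonhole ruling out the surviving $e(P)=0$ matchings when $m<k$; this is exactly the definition the paper uses for $h_k$ (the monic orthogonal polynomial for $\calN(0,\sig^2)$), so the characterization step is legitimate, and the three-term-recurrence alternative you sketch ($p_{k+1}=x\,p_k-k\sig^2 p_{k-1}$ by casing on whether object $k+1$ is matched) is also a valid, even shorter, induction. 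Two cosmetic points only: uniqueness of the monic orthogonal polynomial needs $\sig>0$ (the case $\sig=0$ is immediate since then only the empty matching contributes and $h_k(x;0)=x^k$), and when $k+m$ is odd the identity holds vacuously via the convention $\abs{\calP\calM(\text{odd})}=0$, which you did flag. Also, your exponent $\sig^{\,k-2\abs M+m}$ is the correct moment of $\calN(0,\sig^2)$ (the paper's \cref{lem:gaussian-moments} writes $\sig^{q/2}$, implicitly reading the symbol as the variance); either convention gives a power of $\sig$ independent of $M$, so your factoring step is unaffected.
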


\begin{lemma}[{\cite[Theorem 3.15 and Example 3.18]{Janson:GaussianHilbertSpaces}}]\label{fact:hermite-product}
    For any $k_1,\ldots,k_l\ge 0$ and $x\in\R$,
        \begin{align*}
        h_{k_1}(x; \sigma^2)\cdots h_{k_\el}(x; \sigma^2) = \sum_{M \in \calM(k_1,\dots, k_\el)} h_{k - 2\abs{M}}(x; \sigma^2) \sigma^{2\abs{M}}\,,
        \end{align*}
        where $\calM(k_1, \dots, k_\el)$ is the set of (partial)
        matchings on $k=k_1 + \cdots + k_\el$ objects divided into $\el$ blocks of sizes $k_1,\ldots,k_\el$
        such that no two elements from the same block are matched.
\end{lemma}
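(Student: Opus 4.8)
The plan is to prove the product formula by generating functions, reducing it to extracting a single coefficient and then recognizing that coefficient as a matching count. Recall the standard exponential generating function identity $\sum_{k \ge 0} h_k(x;\sigma^2)\,\frac{t^k}{k!} = \exp\!\bigl(tx - \tfrac{\sigma^2 t^2}{2}\bigr)$, which can itself be read off from \cref{fact:hermite-matchings} by grouping the matchings of $[k]$ according to their number of edges. Introducing formal variables $t_1,\dots,t_\el$, this gives $h_{k_i}(x;\sigma^2) = k_i!\,[t_i^{k_i}]\exp\!\bigl(t_i x - \tfrac{\sigma^2 t_i^2}{2}\bigr)$, so that
\[
\prod_{i=1}^{\el} h_{k_i}(x;\sigma^2) \;=\; \Bigl(\prod_{i=1}^{\el} k_i!\Bigr)\,[t_1^{k_1}\cdots t_\el^{k_\el}]\;\exp\!\Bigl(x\sum_{i} t_i - \tfrac{\sigma^2}{2}\sum_{i} t_i^2\Bigr).
\]

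The key algebraic step is to split off the cross terms: since $\sum_i t_i^2 = \bigl(\sum_i t_i\bigr)^2 - 2\sum_{i<j} t_i t_j$, the exponential factors as $\exp\!\bigl(x\sum_i t_i - \tfrac{\sigma^2}{2}(\sum_i t_i)^2\bigr)\cdot\prod_{i<j}\exp(\sigma^2 t_i t_j)$. I would expand the first factor by the EGF identity applied to the single variable $\sum_i t_i$, and the second factor as $\prod_{i<j}\sum_{m_{ij}\ge 0}\frac{\sigma^{2m_{ij}}}{m_{ij}!}(t_i t_j)^{m_{ij}}$, then extract the coefficient of $t_1^{k_1}\cdots t_\el^{k_\el}$. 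A choice of nonnegative integers $(m_{ij})_{i<j}$ consumes degree $d_i := \sum_{j\ne i} m_{ij}$ in $t_i$, which forces the first factor to contribute $h_n(x;\sigma^2)$ with $n = \sum_i (k_i - d_i) = k - 2\sum_{i<j} m_{ij}$, together with the multinomial factor $1/\prod_i (k_i - d_i)!$ coming from $(\sum_i t_i)^n$. Collecting everything,
\[
\prod_{i=1}^{\el} h_{k_i}(x;\sigma^2) \;=\; \sum_{(m_{ij})_{i<j}} \frac{\prod_i k_i!}{\prod_i (k_i - d_i)!\,\prod_{i<j} m_{ij}!}\;\sigma^{2\sum_{i<j} m_{ij}}\; h_{\,k - 2\sum_{i<j} m_{ij}}(x;\sigma^2).
\]

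Finally I would recognize the integer $\frac{\prod_i k_i!}{\prod_i (k_i-d_i)!\,\prod_{i<j} m_{ij}!}$ as exactly the number of partial matchings on the $k$ objects, split into blocks of sizes $k_1,\dots,k_\el$, having no edge inside a block and exactly $m_{ij}$ edges between blocks $i$ and $j$: build such a matching by first partitioning each block $i$'s objects into groups destined for the other blocks $j$ (sizes $m_{ij}$) plus an unmatched group (size $k_i - d_i$), and then, for each unordered pair $\{i,j\}$, choosing one of the $m_{ij}!$ bijections between the two designated groups — the only subtlety being the cancellation $\prod_i\prod_{j\ne i} m_{ij}! = \prod_{i<j}(m_{ij}!)^2$ against the $\prod_{i<j} m_{ij}!$ from the bijection step. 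Summing over $(m_{ij})_{i<j}$ then regroups the terms into a sum over all $M \in \calM(k_1,\dots,k_\el)$ with $|M| = \sum_{i<j} m_{ij}$, giving $\sum_M \sigma^{2|M|} h_{k-2|M|}(x;\sigma^2)$ as claimed.

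An alternative route avoids generating functions: write $\prod_i h_{k_i} = \sum_j c_j h_j$ in the orthogonal Hermite basis, so $c_j = \E[h_{k_1}(Z)\cdots h_{k_\el}(Z)\,h_j(Z)]/\E[h_j(Z)^2]$ for $Z\sim\calN(0,\sigma^2)$; using $\E[h_j(Z)^2]=j!\,\sigma^{2j}$ and the diagram (Wick) formula for expectations of products of Hermite polynomials in jointly Gaussian variables — here all equal to $Z$, so every matching edge carries weight $\sigma^2$ and the count is over perfect matchings of the $\el+1$ blocks with no intra-block edge — a short bijection (the block of size $j$ is matched, in $j!$ ways, onto the unmatched objects of a partial matching of the first $\el$ blocks) produces the same formula. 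Either way there is no single hard step; the main thing to get right is the matching bookkeeping, in particular verifying that the multinomial coefficient from the coefficient extraction counts the block-respecting matchings with neither over- nor under-counting.
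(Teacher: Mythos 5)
Your proof is correct, but note that the paper does not prove this lemma at all --- it is quoted directly from Janson's \emph{Gaussian Hilbert Spaces} (Theorem 3.15 and Example 3.18), so any proof you give is necessarily a different route from the paper's. Your main argument via exponential generating functions is sound and self-contained: the identity $\sum_k h_k(x;\sigma^2)t^k/k! = \exp(tx-\tfrac{\sigma^2t^2}{2})$ does follow from \cref{fact:hermite-matchings} by grouping matchings of $[k]$ by their number of edges (there are $k!/(2^mm!(k-2m)!)$ with $m$ edges), the splitting $\sum_i t_i^2 = (\sum_i t_i)^2 - 2\sum_{i<j}t_it_j$ is the right trick, and the coefficient extraction is done correctly: for fixed $(m_{ij})_{i<j}$ the factor $(\sum_i t_i)^n/n!$ contributes $1/\prod_i(k_i-d_i)!$, giving the coefficient $\prod_i k_i!\big/\bigl(\prod_i(k_i-d_i)!\prod_{i<j}m_{ij}!\bigr)$, and your enumeration of block-respecting matchings with exactly $m_{ij}$ cross edges (multinomial split within each block, then $m_{ij}!$ bijections per pair, with the $\prod_{i<j}(m_{ij}!)^2$ versus $\prod_{i<j}m_{ij}!$ cancellation) reproduces exactly this number, so the regrouping into $\sum_M \sigma^{2|M|}h_{k-2|M|}$ is valid. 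One caveat on your alternative route: the ``diagram/Wick formula for expectations of products of Hermite polynomials'' that it invokes is essentially the very result of Janson being cited here (the orthogonality computation plus that formula is how Janson's Example 3.18 arises from Theorem 3.15), so while the bijection you sketch is fine, that route is closer to restating the citation than to an independent proof; the generating-function argument is the one that genuinely adds a self-contained derivation to the paper.
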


Finally, we recall:

\begin{lemma}[Gaussian integration by parts]
    \label{lem:gaussian-ipp}
    Let $(Z_1,\ldots,Z_k)$ be a centered Gaussian vector. Then for all smooth $f:\R^k\to\R$,
    \[
        \E \left[Z_1 f(Z_1,\ldots,Z_k)\right] = \sum_{i=1}^k \E \left[Z_1 Z_i\right]\E \left[\frac{\partial f}{\partial z_i}(Z_1,\ldots,Z_k)\right].
    \]
\end{lemma}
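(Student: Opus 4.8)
The plan is to reduce everything to the one-dimensional Gaussian integration-by-parts identity and then pass to the multivariate statement by a linear change of variables together with conditioning. Throughout I will use the standing convention (harmless in this paper, where the relevant $f$ is always a polynomial) that $f$ and its first-order partial derivatives have at most polynomial growth, so that all expectations below are finite and the boundary terms in the integration by parts vanish.

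First I would record the scalar case: for $Z \sim \calN(0,\sigma^2)$ and smooth $g : \R \to \R$ of at most polynomial growth together with $g'$, one has $\E[Z\,g(Z)] = \sigma^2\,\E[g'(Z)]$. This is immediate from the identity $z\,\varphi_{\sigma^2}(z) = -\sigma^2\,\varphi_{\sigma^2}'(z)$ for the density $\varphi_{\sigma^2}$ of $\calN(0,\sigma^2)$, followed by a single integration by parts over $\R$; the boundary terms disappear because $\varphi_{\sigma^2}$ decays faster than any polynomial.

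Next I would set up the decomposition. If $\E[Z_1^2] = 0$ then $Z_1 = 0$ almost surely and both sides of the claimed identity vanish, so assume $\E[Z_1^2] > 0$. For $i = 1,\ldots,k$ put $c_i = \E[Z_1 Z_i]/\E[Z_1^2]$ (so $c_1 = 1$) and $\widetilde Z_i = Z_i - c_i Z_1$. Then $\E[Z_1 \widetilde Z_i] = 0$, and since $(Z_1, \widetilde Z_2, \ldots, \widetilde Z_k)$ is jointly Gaussian, $Z_1$ is independent of $W := (\widetilde Z_2, \ldots, \widetilde Z_k)$. Define $g(z, w) := f(z,\, c_2 z + w_2, \ldots, c_k z + w_k)$, so that $f(Z_1,\ldots,Z_k) = g(Z_1, W)$. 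Conditioning on $W$ and applying the scalar identity in the variable $Z_1$ (whose conditional law given $W$ is still $\calN(0,\E[Z_1^2])$ by independence),
\[
\E\!\left[Z_1 f(Z_1,\ldots,Z_k)\,\middle|\, W\right] = \E\!\left[Z_1\, g(Z_1, W)\,\middle|\, W\right] = \E[Z_1^2]\,\E\!\left[\partial_z g(Z_1, W)\,\middle|\, W\right].
\]
By the chain rule $\partial_z g(z,w) = \sum_{i=1}^k c_i\,(\partial_i f)(z,\, c_2 z + w_2, \ldots, c_k z + w_k)$, hence $\E[\partial_z g(Z_1,W)\mid W] = \sum_{i=1}^k c_i\,\E[(\partial_i f)(Z_1,\ldots,Z_k)\mid W]$. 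Taking expectations over $W$ and using $\E[Z_1^2]\,c_i = \E[Z_1 Z_i]$ yields exactly the stated formula.

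\textbf{Main obstacle.} There is no deep difficulty here; the only point requiring care is integrability and the justification of the boundary-term cancellation in the scalar step. I would handle this by explicitly invoking the polynomial-growth assumption on $f$ and its derivatives — which is exactly the regime of interest in the rest of the paper — so that conditioning, differentiation under the expectation, and the vanishing of boundary terms are all legitimate.
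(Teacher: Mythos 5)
Your proposal is correct. Note that the paper itself gives no proof of \cref{lem:gaussian-ipp}: it is recalled in the preliminaries as a standard fact (this is the multivariate form of Stein's lemma), so there is no argument of the paper to compare against. Your derivation — reducing to the scalar identity $\E[Zg(Z)]=\sigma^2\E[g'(Z)]$ via the decomposition $Z_i = c_iZ_1+\widetilde Z_i$ with $c_i=\E[Z_1Z_i]/\E[Z_1^2]$, using that orthogonality plus joint Gaussianity makes $Z_1$ independent of $(\widetilde Z_2,\ldots,\widetilde Z_k)$, and then conditioning and applying the chain rule — is the standard proof and is complete, including the degenerate case $\E[Z_1^2]=0$. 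The one honest caveat is the one you already flag: as stated, ``smooth $f$'' alone does not guarantee the expectations exist or that the boundary terms vanish, so some growth/integrability hypothesis (e.g.\ polynomial growth of $f$ and $\nabla f$) is genuinely needed; since the lemma is only ever applied in the paper to polynomial $f$ (in the proof of \cref{lem:taylor-expand}), this added assumption is harmless and your proof covers every use made of the lemma.
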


\section{The Diagram Basis}
\label{sec:diagram-calculus}

Here we give the key properties of the Fourier diagrams on a high level, delaying formal statements and proofs to the next section.
\begin{itemize}
    \item In \cref{sec:example}, we give an example.
    \item In \cref{sec:nonasymptotic-diagram-basis}, we define the class of diagrams and describe their behavior both for fixed $n$ and in the limit $n\to\infty$. 
    \item In \cref{sec:calculus-rules}, we summarize how iterative algorithms behave asymptotically.
    \item In \cref{sec:sym-Fourier-analysis}, we explain how the diagram basis can be derived from standard discrete Fourier analysis.
\end{itemize}

\subsection{Example of using diagrams}
\label{sec:example}

We show how to compute the vector $A(A\vec{1})^{2}$ in the diagram basis, where $\vec{1} \in \R^n$ denotes the all-ones vector and the square function is applied componentwise.
Calculation with diagrams is a bit like a symbolic version of the trace method from random matrix theory \cite{bordenave2019lecture}.

For simplicity, we assume in this subsection that $A$ satisfies \cref{assump:A-entries} with $A_{ii}=0$ for all $i\in [n]$.

We will use {rooted} multigraphs to represent vectors.\footnote{Graphs with multiple
roots can be used to represent matrices and tensors, although
we will not need those here.}
Multigraphs may include multiedges and self-loops.
In our figures, the root will be drawn as a circled vertex $\rootpic$. The vector $\vec{1}$ will correspond to the singleton graph with one vertex (the root): $\rootpic$.
Edges will correspond to $A_{ij}$ terms.

The vector $A\vec{1}$ will be represented by the graph consisting of a single edge, with one of the endpoints being the root:

\newcolumntype{B}{>{\(\displaystyle}c<{\)}@{}}

\begin{center}
\begin{tabular}{BBB}
    (A\vec{1})_i = \sum_{j = 1}^n A_{ij} &{}={}& \sum_{\substack{j = 1\\i,j\text{ distinct}}}^n A_{ij}\\
    &\equiv& {\vcenter{\hbox{\includegraphics[height=3ex]{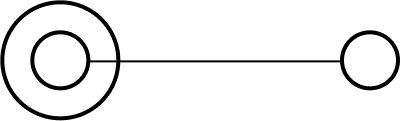}}}}
\end{tabular}
\end{center}
where the second equality uses the assumption that $A$ has zero diagonal.
Now to apply the square function componentwise, we can decompose:
\begin{center}\begin{tabular}{BBBBB}
(A\vec{1})^{2}_i &{}={}&  \sum_{\substack{j,k=1\\i,j,k\text{ distinct}}}^n A_{ij} A_{ik} &{}+{}& \sum_{\substack{j=1\\i,j\text{ distinct}}}^n A_{ij}^2\\
&\equiv& \vcenter{\hbox{\includegraphics[height=5ex]{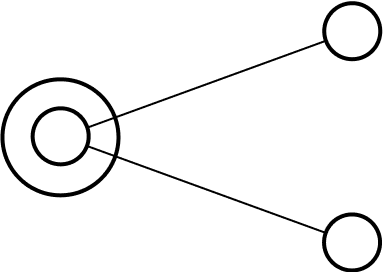}}} &+& \vcenter{\hbox{\includegraphics[height=3ex]{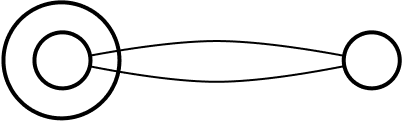}}}
\end{tabular}\end{center}

Moving on, we apply $A$ to this representation by casing on whether the new index $i$ matches one of the previous indices. We group terms together using the symmetry of $A$ and the fact that $A_{ii}=0$.

\begin{center}\begin{tabular}{BBBBBBBBB}
(A(A\vec{1})^{2})_i &{}={}& \sum_{\substack{j,k,\el=1\\i,j,k,\el\text{ distinct}}}^n A_{ij} A_{jk} A_{j\el} &{}+{}& 2\sum_{\substack{j,k=1\\i,j,k\text{ distinct}}}^n A_{ij}^2 A_{jk}&{}+{}& \sum_{\substack{j,k=1\\i,j,k\text{ distinct}}}^n A_{ij} A_{jk}^2 &{}+{}& \sum_{\substack{j=1\\i,j\text{ distinct}}}^n A_{ij}^3\\    &\equiv& \vcenter{\hbox{\includegraphics[height=5ex]{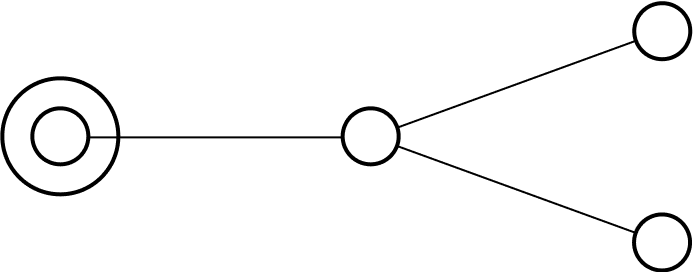}}} &+2&\; \vcenter{\hbox{\includegraphics[height=3ex]{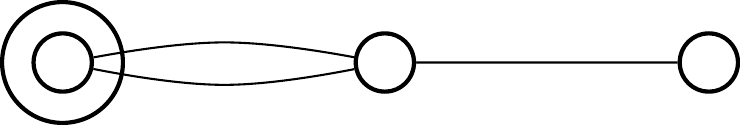}}} &+&  \vcenter{\hbox{\includegraphics[height=3ex]{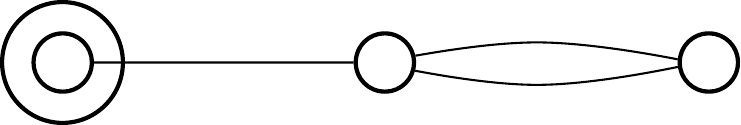}}} &+&  \vcenter{\hbox{\includegraphics[height=3.5ex]{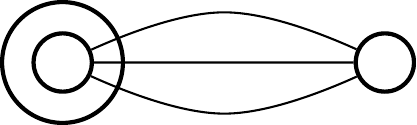}}}
\end{tabular}\end{center}

This is the non-asymptotic Fourier diagram representation of $A(A\vec{1})^2$.
In the limit $n \to \infty$, only some of these terms contribute
to the \textit{asymptotic} Fourier diagram basis representation. Asymptotically, \textit{hanging} double edges can be removed from a diagram\footnote{To be convinced of this, the reader can think of the case where the entries of $A$ are uniform $\pm \frac 1 {\sqrt n}$.},
so that the third diagram in the representation above satisfies, as $n\to\infty$,
\begin{align*}
    \vcenter{\hbox{\includegraphics[height=3ex]{images-pdf/single-edge-double-edge.pdf}}} \;\eqinf\; \vcenter{\hbox{\includegraphics[height=3ex]{images/edge-rooted.png}}}\,.
\end{align*}

The second and fourth diagrams in the representation of $A(A\vec{1})^2$ have entries on the scale $O(n^{-1/2})$ and so they will be dropped from the asymptotic diagram representation. In total,
\[
    A(A\vec{1})^{2} \;\eqinf\; \vcenter{\hbox{\includegraphics[height=5ex]{images/1,2-tree.png}}} \;+\; \vcenter{\hbox{\includegraphics[height=3ex]{images/edge-rooted.png}}}\,.
\]
We will show that as $n\to\infty$, the left diagram becomes a Gaussian vector with independent entries of variance $2$, and the right diagram becomes a Gaussian vector with independent entries of variance $1$. In fact, these $2n$ entries are asymptotically jointly independent. It can be verified numerically that approximately for large $n$, $A(A\vec{1})^{2}$ matches the sum of these two random vectors, the histogram of each vector's entries is Gaussian, and the vectors are approximately orthogonal.

\subsection{Properties of the diagram basis}
\label{sec:nonasymptotic-diagram-basis}

\begin{definition}
    \label{def:diagram}
    A \emph{Fourier diagram} is an unlabeled undirected multigraph $\al = (V(\al), E(\al))$
    with a special vertex labeled $\rootpic$ which we call the \textit{root}.
    No vertices may be isolated except for the root. We let $\calA$ be the set of all Fourier diagrams.
\end{definition}

\begin{definition}[$Z_\al$]
    \label{def:Zal}
    For a Fourier diagram $\al\in\calA$ with
    root $\rootpic$, define the vector $Z_\al \in \R^n$ by
    \[
        Z_{\al,i} = \sum_{\substack{\ph: V(\al) \to [n]\\\ph \textnormal{ injective}\\\ph(\smallrootpic) = i}} \prod_{\{u,v\} \in E(\al)} A_{\ph(u)\ph(v)}\,,\quad \text{ for all $i\in [n]$\,.}
    \]
\end{definition}

Among all Fourier diagrams, the ones corresponding to trees play a special role. They will constitute the \textit{asymptotic Fourier diagram basis}.

\begin{definition}[$\calS$ and $\calT$]
    \label{def:diagram-notations}
    Let $\calS$ be the set of unlabeled rooted trees such
    that the root has exactly one subtree (i.e. the root has degree 1).
    Let $\calT$ be the set of all unlabeled rooted trees (non-empty, but allowing the singleton).
\end{definition}

\begin{definition}[Proper Fourier diagram]
    A proper Fourier diagram is a Fourier diagram with no multiedges or self-loops (i.e. a rooted simple graph).
\end{definition}

For \emph{proper} Fourier diagrams $\al \in \calA$, the following properties of $Z_\alpha$
hold non-asymptotically i.e. for arbitrary $n$: 
\begin{enumerate}[(i)]
    \item $Z_\alpha$ is a multilinear polynomial in the entries of $A$
    with degree $|E(\alpha)|$ (or $Z_\al = 0$ when $|V(\al)| > n$).
    \item 
    $Z_\alpha$ has the symmetry that $Z_{\alpha,i}(A) = Z_{\alpha, \pi(i)}(\pi(A))$ for all permutations $\pi \in S_n$, where $\pi$ acts on $A$ by permuting the rows and columns simultaneously.
    \item 
    For each $i \in [n]$, the 
    set $\{Z_{\al, i}:\text{proper Fourier diagram }\al \in \calA\}$
    is orthogonal
    with respect to the expectation over $A$. 
    \item In fact, $Z_\alpha$ is a symmetrized
    multilinear Fourier character (see \cref{sec:sym-Fourier-analysis}). This implies the previous properties and it shows that
    the proper diagrams are an orthogonal basis for a class of
    symmetric functions of $A$. 
\end{enumerate}

We represent the algorithmic state as a Fourier diagram expression of the form $x = \sum_{\al \in \calA} c_\al Z_\al$.
To multiply together or apply algorithmic operations on a diagram expression, we case on
which indices repeat, like in the example in \cref{sec:example}.
See \cref{lem:mat-mul,lem:diagrams-product} in \cref{sec:derivation-asymptotic} for a formal derivation of these rules.

Now we turn to the asymptotic properties.
The constant-size tree diagrams $(Z_\tau)_{\tau \in \calT}$ exhibit the following key
properties in the limit $n \to \infty$ and with respect to the randomness of $A$.
\begin{enumerate}[(i)]
    \item The coordinates of $Z_\tau\in \R^n$ for any $\tau\in\calT$ are asymptotically independent and identically distributed.
    \item The random variables $Z_{\sig,1}$ for $\sigma \in \calS$ (the tree diagrams with one subtree) are asymptotically independent Gaussians
    with variance $\abs{\Aut(\sig)}$, where $\Aut(\sig)$ are
    the graph automorphisms of $\sig$ which fix the root.
    \item 
    The random variable $Z_{\tau, 1}$ for $\tau \in \calT$ (the tree diagrams with multiple subtrees) is asymptotically equal to the multivariate Hermite polynomial $\prod_{\sig \in \calS} h_{d_\sig}(Z_{\sig,1}; \abs{\Aut(\sig)})$
    where $d_\sig$ is the number of children of the root whose subtree (including the root) equals $\sig \in \calS$.
\end{enumerate}
\noindent
The remaining Fourier diagrams not in $\calT$ can be understood using the further asymptotic properties:

\begin{itemize}
    \item[(iv)] For any diagram $\al \in \calA$, if $\al$ has a \emph{hanging double edge} i.e. a double edge with one non-root endpoint of degree exactly 2,
    letting $\al_0$ be the diagram with the hanging double edge and hanging vertex removed,
    then $Z_{\al}$ is asymptotically equal to $Z_{\al_0}$.
    For example, the following diagrams are asymptotically equal:
    \begin{center}
    \begin{tabular}{ccccc}
         $\vcenter{\hbox{\includegraphics[height=3ex]{images/singleton.png}}}$ &  $\eqinf$ & $\vcenter{\hbox{\includegraphics[height=3ex]{images/double-edge.png}}}$ & $\eqinf$ & $\vcenter{\hbox{\includegraphics[height=15ex]{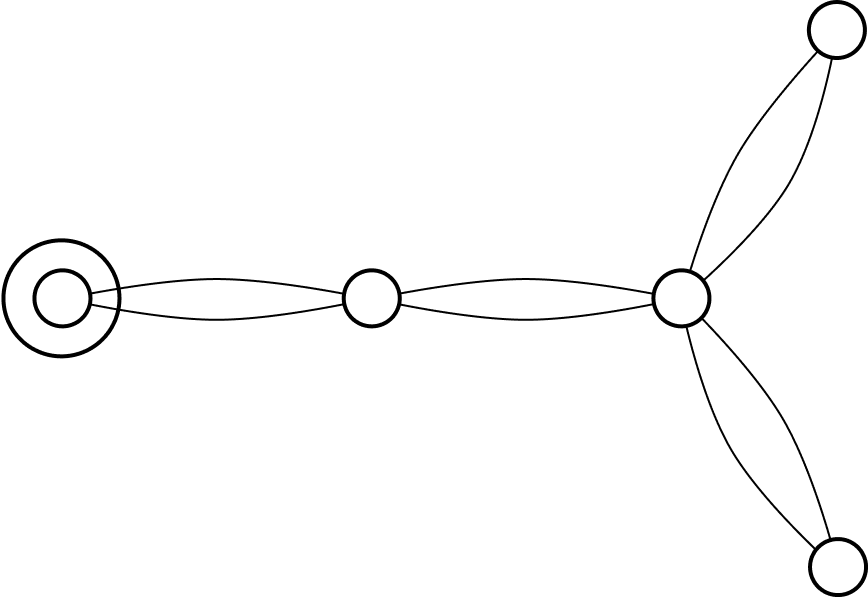}}}$\\
         1 & $\approx$ &  $\displaystyle\sum_{\substack{j=1\\i \neq j}}^n A_{ij}^2$ & $\approx$ & $\displaystyle\sum_{\substack{j,k,\el,m = 1\\i,j,k,\el,m\text{ distinct}}}^n A_{ij}^2 A_{jk}^2 A_{k\el}^2 A_{km}^2$
    \end{tabular}
    \end{center}

    \item[(v)] For any \emph{connected} $\al \in \calA$, if removing the hanging trees of double
    edges from $\al$ creates a diagram in $\calT$, then by the previous property, 
    $Z_\al$ is asymptotically equal to that diagram.
    If the result is not in $\calT$, then $Z_\al$ is asymptotically negligible.
    \item[(vi)]
    The disconnected diagrams have only a minor and negligible role in the algorithms that we consider.
    See \cref{sec:classification} for the description
    of these random variables.
\end{itemize}

To summarize the properties, given a sum $x$ of connected diagrams,
by removing the hanging double trees,
and then removing all diagrams not in $\calT$, the expression admits an \emph{asymptotic} Fourier diagram
basis representation of the form
\begin{align}
    x \eqinf \sum_{\tau \in \calT} c_\tau Z_\tau\,,\label{eq:asymptotic-expansion}
\end{align}
for some coefficients $c_\tau\in \R$ independent of $n$ and $A$.
We call this the \emph{tree approximation} to $x$.
Note that all tree diagrams have order 1 variance regardless of their size, which can be counter-intuitive.

\subsection{Asymptotic state evolution}
\label{sec:calculus-rules}

The main appeal of the tree approximation
in \cref{eq:asymptotic-expansion} is that when
restricted to the tree-shaped diagrams, the
GFOM operations have a very simple interpretation:
they implement an idealized {\em Gaussian 
dynamics} which we describe now.

The idealized GFOM moves through an ``asymptotic Gaussian probability space''
which is naturally the one corresponding to the $n \to \infty$ limit
of the diagrams.
Based on the diagram classification, this consists of an infinite family of
independent Gaussian vectors $(Z_\sig)_{\sig \in \calS}\,.$
However, due to symmetry, all of the coordinates follow the same dynamic, so we can compress the representation of the dynamic down to a one-dimensional random variable $X_t$ which is the asymptotic distribution of each coordinate $x_{t,i}\,$.
We call $X_t$ the \emph{asymptotic state} of $x_t$.

For example, Approximate Message Passing (AMP)
is a special type of GFOM whose
iterates are asymptotically Gaussian i.e. $X_t$ is a Gaussian random variable for all $t$ (in general GFOMs, although $X_t$ is defined in terms of Gaussians, it is not necessarily Gaussian).

With that prologue, the algorithmic operations restricted to the trees and the corresponding evolution of the asymptotic state $X_t$ are as follows.
Two important operations on a tree-shaped diagram are extending/contracting
the root by one edge.

\begin{definition}[$+$ and $-$ operators]
    \label{def:plus-minus-operators}
    We define $+:\calT\to\calS$ and $-:\calS\to\calT$ by:
    \begin{itemize}
        \item If $\tau\in\calT$, let $\tau^+$ be the diagram
        obtained by extending the root by one edge (i.e.
        adding one new vertex and one edge connecting
        it to the root of $\tau$, and re-rooting 
        $\tau^+$ at this new vertex).
        \item If $\tau\in \calS$, let $\tau^-$
        be the diagram obtained by contracting the 
        root by one edge (i.e. removing the root vertex
        and the unique edge from it, and re-rooting
        $\tau^-$ at the endpoint of that edge).
    \end{itemize}
\end{definition}

Recall that the possible operations of a GFOM 
are either multiplying the state by $A$ or applying
a function componentwise. 
The effect of these two operations on the tree-shaped
diagrams are:

\begin{itemize}

    \item If $\sigma\in\calS$, then
    $A Z_\sigma$ is asymptotically the sum of the diagrams
    $\sigma^+$ and $\sigma^-$ obtained by respectively 
    extending and
    contracting the root by one edge. For example,
    \[A \ \ \times\ \  \vcenter{\hbox{\includegraphics[height=5ex]{images/1,2-tree.png}}} \ \ \eqinf \ \ \vcenter{\hbox{\includegraphics[height=5ex]{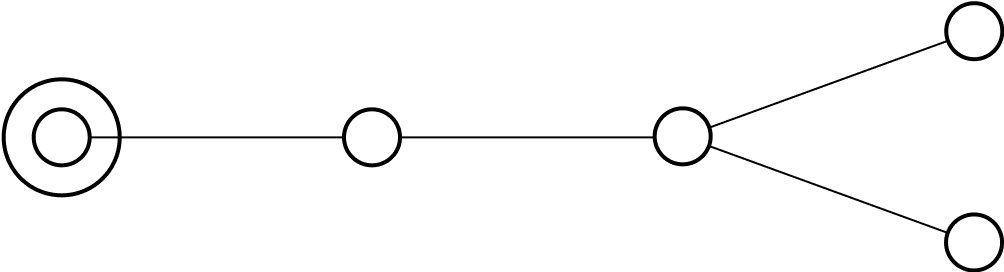}}} \ \  + \ \  \vcenter{\hbox{\includegraphics[height=5ex]{images/2tree.png}}}\]
    
    If $\tau \in \calT \setminus \calS$, then
    $AZ_\tau$ is asymptotically only the $\tau^+$ term. For example,
    \[A \ \ \times\ \  \vcenter{\hbox{\includegraphics[height=5ex]{images/2tree.png}}} \ \ \eqinf \ \ 
    \vcenter{\hbox{\includegraphics[height=5ex]{images/1,2-tree.png}}}\]

    \item From the classification of diagrams, if $\tau\in\calT$ consists of $d_\sigma$ copies of $\sigma\in\calS$, then 
    \begin{align}
        \prod_{\sigma\in\calS} h_{d_\sigma}(Z_\sigma;\left|\Aut(\sigma)\right|)\eqinf Z_\tau \,.\label{eq:hermite}
    \end{align}
    Therefore,
    to compute $f(Z_\sigma:\sigma\in \calS)$, we expand $f$ in the Hermite
    polynomial basis associated to $\calS$, and
    apply the rule \cref{eq:hermite} to all
    the terms. For example,

    \begin{center}
        \includegraphics[scale=0.6]{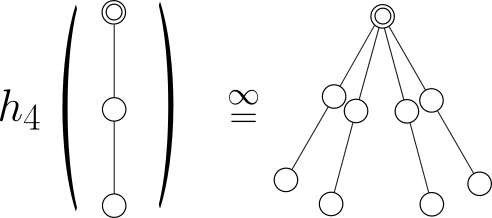}
    \end{center}
\end{itemize}

These operations correspond to the following Gaussian dynamic.

\begin{definition}[Asymptotic Gaussian space, $\Om$]
    Let $(Z_\sigma^\infty)_{\sigma\in\calS}$ be a
    set of independent centered (one-dimensional)
    Gaussian random
    variables with variances $\Var(Z_\sigma^\infty)=\left|\Aut(\sigma)\right|$.

    If $\tau\in\calT$ can be decomposed as $d_\sigma$ copies of each
    $\sigma\in\calS$ branching from the root, we define 
    \[
        Z_\tau^\infty=\prod_{\sigma\in\calS} h_{d_\sigma}(Z^\infty_\sigma;\left|\Aut(\sigma)\right|)\,.
    \]
    We call {\em asymptotic states}
    the elements in the linear span of $(Z^\infty_\tau)_{\tau\in\calT}$. We can view them both as polynomials in the formal
    variables $(Z_\sigma^\infty)_{\sigma\in\calS}$ and
    as real-valued random variables.
    The set of asymptotic states is denoted $\Om$.
\end{definition}

\begin{definition}[Asymptotic state]\label{def:asymptotic-state}
    If $x\in\R^n$ is such that $x\eqinf \sum_{\tau\in\calT} c_\tau Z_\tau\,$, we
    define the {\em asymptotic state} of $x$ by
    \[
        X = \sum_{\tau\in\calT} c_\tau Z_\tau^\infty\,.
    \]

\end{definition}

The state evolution of the algorithm can now be described concisely as:
\begin{itemize}
    \item If $x_t$ has asymptotic state $X_t$, then the asymptotic state of $Ax_t$ is $X_t^+ + X_t^-$.
    Here we extend the $+$ and $-$ operators linearly to sums of
    $Z_\tau$ or $Z^\infty_\tau$
    (let $Z_\tau^- = (Z^\infty_\tau)^- = 0$ if $\tau \in \calT \setminus \calS$).

    \item If $x_{t-1}, \ldots, x_0$ have asymptotic
    states $X_{t-1}, \ldots, X_0$ and $f$ is any polynomial, then the asymptotic state of $f(x_{t-1},\ldots,x_0)$ is $f(X_{t-1},\ldots,X_0)$.
\end{itemize}

\subsection{Perspective: equivariant Fourier analysis}
\label{sec:sym-Fourier-analysis}

The Fourier diagrams form an orthogonal basis that can be derived in a mechanical way using \emph{symmetrization}.

We can use Fourier analysis to express a function or algorithm
with respect to a natural basis.
The unsymmetrized underlying analytical space consists of functions
of the $n^2$ entries of $A$.
Since the entries of $A$ are independent,
the associated Fourier basis is the product
basis for the different entries.
When $A \in \{-1, 1\}^{n \times n}$ is a Rademacher random matrix, the Fourier characters are the multilinear
monomials in $A$.
An arbitrary function $f:\{-1,1\}^{n\times n}\to \R$ is then expressed as
\[f(A) = \sum_{\al \subseteq [n] \times [n]} c_\al \prod_{(i,j) \in \al} A_{ij}\,,\]
where $c_\al$ are the Fourier coefficients of $f$.
When $A$ is a symmetric matrix with zero diagonal, we only need Fourier characters
for the top half of $A$, and the basis simplifies to $\al \subseteq \binom{[n]}{2}$.
That is, the possible $\al$ can be interpreted combinatorially as graphs on the vertex set $[n]$.

An observation that allows us to significantly simplify
the representation is that many of the Fourier coefficients are equal
for \emph{$S_n$-equivariant} algorithms. A function $f : \R^{n \times n} \to \R$ is $S_n$-equivariant if it satisfies $f(\pi(A)) = f(A)$ or if $f : \R^{n \times n} \to \R^n$ satisfies $f(\pi(A)) = \pi(f(A))$ where $\pi$ acts on $A$ by permuting the rows and columns simultaneously.
For scalar-valued functions, considering the action of $S_n$ on the vertex
set of the Fourier characters $[n]$, any two Fourier characters $\al, \beta$ which are in the same
orbit will have the same Fourier coefficient.
Equivalently, if $\al$ and $\beta$ are isomorphic as graphs,
then their Fourier coefficients are the same.
By grouping together all isomorphic Fourier characters, we obtain the symmetry-reduced representation
defining the Fourier diagram basis,
\[f(A) = \sum_{\text{nonisomorphic }\al \subseteq \binom{[n]}{2}} c_\al \left(\sum_{\substack{\text{injective }\ph: V(\al) \to [n]}} \prod_{\{u,v\} \in \al} A_{\ph(u)\ph(v)}\right)\,.\]

Thus by construction, the diagrams are an orthogonal
basis for symmetric low-degree polynomials of $A$. We use this to derive some simple facts in \cref{sec:fourier}.
Asymptotic independence of the Gaussian diagrams
can be predicted based on the fact that the diagrams
are an \emph{orthogonal} basis, and orthogonal Gaussians are independent
(thus we expect a set of independent Gaussians to appear from other types of i.i.d. inputs as well).

The above discussion was for Boolean matrices with $A_{ij} \sim \{\pm 1\}$.
The natural generalization expresses polynomials in the basis of orthogonal polynomials for the entries $A_{ij}$
(e.g. the Hermite polynomials when the $A_{ij} \sim \calN(0,1/n)$ \cite[Section 3.2]{montanari2022equivalence}). 

Our results show that for the first-order algorithms we consider,
only the multilinear part of the basis matters
(i.e. the orthogonal polynomials which are degree 0 or 1 in each variable):
up to negligible error, we can approximate $A_{ij}^2 \approx \frac 1n$
and $A_{ij}^k \approx 0$ for $k \geq 3$. We use the monomial basis\footnote{
The monomial ``basis'' is a misnomer in the cases when $A_{ij}$ satisfies a polynomial identity
such as $A_{ij}^2 = \frac{1}{n}$.
In these cases, representation as a sum of diagrams is not unique. Our expressions should be interpreted as giving explicit sums of diagrams.} to represent higher-degree
polynomials instead of higher-degree orthogonal polynomials in order to simplify the presentation (except for the degree-2 orthogonal polynomial $A_{ij}^2 - \frac{1}{n}$ which expresses some error terms).

\section{Diagram Analysis of \texorpdfstring{$O(1)$}{O(1)} Iterations}
\label{sec:combinatorial}

In this section we develop tools for rigorously analyzing diagrams of constant size,
corresponding to first-order algorithms with constantly
many iterations. These proofs make formal the intuitive ideas developed in \cref{sec:diagram-calculus}.
Longer proofs in this section are delayed to \cref{app:non-asymptotic-analysis} for readability.

\begin{itemize}
    \item In \cref{sec:eqinf}, we give a rigorous definition of the asymptotic equality $\eqinf$.
    \item In \cref{sec:classification}, we prove the classification of
    the asymptotic behavior of the constant-size diagrams.
    \item In \cref{sec:gfom}, we prove the tree approximation
    for the class of GFOM algorithms.
    \item In \cref{sec:asymptotic-space}, we prove a general state evolution formula for GFOM algorithms.
\end{itemize}

\subsection{Equality up to combinatorially negligible diagrams}
\label{sec:eqinf}

The idea behind $\eqinf$ is to make a purely combinatorial definition
so that we can use combinatorial arguments on the diagrams.
First, we have the following key moment bound which estimates the magnitude in $n$
of a diagram $Z_\al$ based on combinatorial properties of $\al$.

\begin{definition}[$I(\al)$]\label{def:isolated}
     For a diagram $\al \in \calA$, let $I(\al)$ be the subset
     of non-root vertices such that every edge incident to that vertex has multiplicity $\geq 2$ or is a self-loop.
\end{definition}

\begin{restatable}{lemma}{diagramMagnitude}
\label{lem:improper-magnitude-simplified}
Let $q\in\N$ be a constant independent of $n$ and $\al \in \calA$ be a constant-size diagram. Then for $i\in [n]$,
        \[
        \abs{\E\left[Z_{\al,i}^q\right]} \leq O\left( n^{\frac q 2\left(|V(\al)| - 1 - |E(\al)|+|I(\al)|\right)}\right)\,.
    \]
\end{restatable}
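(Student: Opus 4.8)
The plan is to expand $\E[Z_{\al,i}^q]$ as a sum over $q$-tuples of injective maps and group the terms by their coincidence pattern, then bound each group by a counting argument in the style of the trace method. Write
\[
    Z_{\al,i}^q = \sum_{\ph_1,\dots,\ph_q} \ \prod_{k=1}^q \prod_{\{u,v\}\in E(\al)} A_{\ph_k(u)\ph_k(v)}\,,
\]
where each $\ph_k : V(\al)\to[n]$ is injective with $\ph_k(\smallrootpic)=i$. To a tuple $(\ph_1,\dots,\ph_q)$ associate the partition $P$ of the $q$ disjoint copies of $V(\al)$ (the $q$ root-slots always merged into one block) recording which vertex-slots are sent to a common label. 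Since each $\ph_k$ is injective, $P$ never merges two slots of the same copy; and since $\ph_k(\smallrootpic)=i$ forces $\ph_l(v)\neq i$ for every non-root $v$, no non-root slot ever joins the root block, so the root block is exactly the $q$ root-slots. Taking expectations, the contribution of a fixed $P$ factorizes over the distinct entries of $A$ appearing in the quotient multigraph $\al_P$: a non-loop quotient edge of multiplicity $m$ contributes $\E[X^m]\,n^{-m/2}$ and a self-loop contributes $\E[X_0^m]\,n^{-m/2}$, where $X\sim\mu$, $X_0\sim\mu_0$. Call $P$ \emph{valid} if every non-loop quotient edge has multiplicity $\ge 2$ (otherwise an $\E[X]=0$ factor kills the term). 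Since the multiplicities sum to $q|E(\al)|$ and all moments of $\mu,\mu_0$ are $O(1)$, and since the number of labelings realizing $P$ with the root fixed to $i$ is at most $n^{|P|-1}$, the contribution of each valid $P$ is $O\!\left(n^{\,|P|-1-\frac q2|E(\al)|}\right)$, and there are only $O(1)$ partitions $P$ in total. Hence it suffices to prove that every valid $P$ satisfies $|P|-1\le \tfrac q2\bigl(|V(\al)|-1+|I(\al)|\bigr)$.

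For this combinatorial bound I would run a charging argument on the non-root vertex-slots: give each slot coming from a vertex in $I(\al)$ weight $1$, and every other non-root slot weight $\tfrac12$; the total weight is then exactly $\tfrac q2(|V(\al)|-1+|I(\al)|)$, since each copy has $|I(\al)|$ slots of weight $1$ and $|V(\al)|-1-|I(\al)|$ slots of weight $\tfrac12$. Because $|P|-1$ is the number of non-root blocks, it is enough to show that in a valid $P$ every non-root block has total weight $\ge 1$. A block containing at least two slots has weight $\ge\tfrac12+\tfrac12=1$. A singleton non-root block $\{v\}$, with $v$ in copy $k$, has weight $\ge 1$ unless $v\notin I(\al)$; but in that case $v$ has an incident non-loop edge $e=\{v,v'\}$ of multiplicity exactly $1$ in $\al$, and in $\al_P$ the only edges incident to the block $\{v\}$ are those coming from copy $k$ (no other copy has a vertex in this block, and within copy $k$ the injectivity of $\ph_k$ means $v'$ lies in a single other block), so the quotient edge corresponding to $e$ has multiplicity exactly $1$ — contradicting validity. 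This gives the per-block bound, hence $|P|-1\le\tfrac q2(|V(\al)|-1+|I(\al)|)$, and combining with the previous paragraph yields $\abs{\E[Z_{\al,i}^q]}=O\!\left(n^{\frac q2(|V(\al)|-1-|E(\al)|+|I(\al)|)}\right)$, as claimed.

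\textbf{Main obstacle.} The probabilistic content is minimal — only $\E_\mu[X]=0$, $\E_\mu[X^2]=1$, and finiteness of all moments (exactly \cref{assump:A-entries}) — so the real work is the bookkeeping in the charging step: carefully justifying that a singleton quotient block picks up no extra parallel edges from other copies (this is precisely what forces multiplicity $1$ and contradicts validity), and checking that the weight accounting correctly accommodates self-loops, the fixed root block, and diagrams with multi-edges among the $I(\al)$-vertices. I expect this to be the only delicate point; everything else is routine.
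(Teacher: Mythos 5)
Your proof is correct and takes essentially the same route as the paper's: expand the $q$-th moment over $q$-tuples of injective embeddings, bound each nonvanishing term by $O(n^{-q|E(\al)|/2})$, and use $\E_{X\sim\mu}[X]=0$ to force every slot of a vertex outside $I(\al)$ to collide with a slot from another copy, so that at most $\tfrac q2\left(|V(\al)|-1+|I(\al)|\right)$ distinct non-root labels occur — your per-block charging over intersection patterns is just a more structured bookkeeping of the paper's encoding/counting step. The only (immaterial) difference is that the paper's full version of this lemma also covers the 2-labeled residual edges introduced in the appendix, which the statement as given does not require.
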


A similar norm bound for matrices is a crucial ingredient in Fourier analysis of matrix-valued functions \cite{AMP20}.
The proof of \cref{lem:improper-magnitude-simplified} is in \cref{sec:omitted-combneg}.

Based on this computation, we define a \emph{combinatorially negligible diagram} to be one whose moments decay with $n$. Since we will be working with diagram expressions that are linear combinations of different diagrams, the following definition also handles diagrams rescaled by some coefficient depending on $n$.

\begin{definition}[Combinatorially negligible and order 1]
\label{def:combinatorially-negligible}
    Let $(a_n)_{n \in \N}$ be a sequence of real-valued coefficients such that $a_n=\Theta(n^{-k})$ for some $k\ge 0$ with $2k\in\Z$. Let $\alpha\in\calA$ be a constant-size diagram.

    \begin{enumerate}
        \item We say that $a_n Z_\al$ is \emph{combinatorially negligible}
        if
    \[|V(\al)| - 1 - |E(\al)| + |I(\al)| \le 2k-1\,.\]
        For $a_n=0$, we also say that $a_n Z_\al$ is combinatorially negligible.
        \item We say that $a_n Z_\al$ has \emph{combinatorial order 1} if
    \[|V(\al)| - 1 - |E(\al)| + |I(\al)| = 2k\,.\]
    \end{enumerate}
\end{definition}

We will only consider settings where the coefficients are small enough so that all diagram expressions have combinatorial order at most 1 (that is, negligible or order 1).

\begin{definition}[$\eqinf$]\label{def:asymptotic-equality}
    We say that $x\eqinf y$ if there exists real coefficients $(c_\al)_{\al\in \calA}$ depending on $n$ and supported on diagrams of constant size such that
    \[
        x-y = \sum_{\al\in\calA} c_\al Z_\al\,,
    \]
    where $c_\al Z_\al$ is combinatorially negligible for all $\al\in\calA$.
\end{definition}

Later, we will prove results of the form $x \eqinf \widehat x$ where $x$
is the state of an algorithm and $\widehat x$ is some asymptotic approximation of $x$.
In order to interpret these results, we note that $\eqinf$ implies very strong forms of convergence of the error to $0$. The proof of the following lemma can be found in \cref{sec:omitted-combneg}.

\begin{restatable}{lemma}{eqinfAlmostSure}
\label{lem:eqinf-almostsure}
    Suppose that $A=A(n)$ is a sequence of random matrices satisfying \cref{assump:A-entries}.
    If $x$ and $y$ are diagram expressions such that $x\eqinf y$, then $\norm{x-y}_\infty = \widetilde{O}(n^{-1/2})$ with high probability.
\end{restatable}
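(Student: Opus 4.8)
\textbf{Proof plan for \cref{lem:eqinf-almostsure}.}
The plan is to reduce the statement to a single diagram term $c_\al Z_\al$ that is combinatorially negligible, and to show that $\|c_\al Z_\al\|_\infty \toas 0$; since $x - y$ is a finite sum of such terms (the coefficients are supported on constant-size diagrams, so there are finitely many), the triangle inequality then finishes the job. So fix a diagram $\al$ of constant size and a coefficient sequence $a_n = \Theta(n^{-k})$ with $|V(\al)| - 1 - |E(\al)| + |I(\al)| \le 2k - 1$, and our goal is $\|a_n Z_\al\|_\infty \toas 0$.

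First I would pass from the $\infty$-norm to a union bound over coordinates together with a high-moment Markov inequality. Concretely, for any even constant $q$ and any $\eps > 0$,
\[
    \Pr\left[\|a_n Z_\al\|_\infty > \eps\right] \;\le\; \sum_{i=1}^n \Pr\left[|a_n Z_{\al,i}| > \eps\right] \;\le\; \frac{n\, |a_n|^q}{\eps^q} \max_{i \in [n]} \E\left[Z_{\al,i}^q\right]\,.
\]
Here I use that $Z_{\al,i}$ has the same distribution for every $i$ by the permutation symmetry (property (ii) of proper diagrams, and more generally by the $S_n$-symmetry of \cref{def:Zal}), so the maximum is really just $\E[Z_{\al,1}^q]$. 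Now invoke \cref{lem:improper-magnitude-simplified}: for constant $q$,
\[
    \E\left[Z_{\al,1}^q\right] \le O\!\left(n^{\frac{q}{2}(|V(\al)| - 1 - |E(\al)| + |I(\al)|)}\right)\,.
\]
Writing $e := |V(\al)| - 1 - |E(\al)| + |I(\al)|$ and using $|a_n|^q = \Theta(n^{-kq})$, the right-hand side of the Markov bound is
\[
    O\!\left(\frac{1}{\eps^q}\, n^{1 - kq + \frac{q}{2} e}\right) \;=\; O\!\left(\frac{1}{\eps^q}\, n^{1 + \frac{q}{2}(e - 2k)}\right)\,.
\]
Since $a_n Z_\al$ is combinatorially negligible, $e - 2k \le -1$, so the exponent is at most $1 - q/2$. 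Choosing $q$ to be any fixed even integer with $q \ge 6$ makes this exponent $\le -2$, so the probability is $O(n^{-2}/\eps^q)$, which is summable in $n$. By the Borel--Cantelli lemma, $\|a_n Z_\al\|_\infty > \eps$ happens only finitely often almost surely; taking $\eps = 1/m$ over $m \in \N$ and intersecting the (countably many) almost-sure events gives $\|a_n Z_\al\|_\infty \toas 0$. Summing the finitely many diagram terms appearing in $x - y$ via the triangle inequality completes the proof.

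The main obstacle — or rather the only point needing care — is making sure the moment bound of \cref{lem:improper-magnitude-simplified} is being applied with a \emph{constant} $q$ that is nonetheless large enough to beat the factor of $n$ from the union bound; this is fine precisely because the combinatorial negligibility gives a strict inequality $e \le 2k - 1$, so we have a full power of $n^{-q/2}$ of room and any fixed $q \ge 6$ suffices. One should also note that \cref{lem:improper-magnitude-simplified} is stated for even $q$ implicitly (odd moments could be negative or vanish), but restricting to even $q$ is exactly what the Markov step with $|Z_{\al,i}|^q = Z_{\al,i}^q$ wants, so there is no friction. Everything else is bookkeeping: the sum $x - y = \sum_\al c_\al Z_\al$ has finitely many nonzero terms since it is supported on constant-size diagrams and, for each fixed size, $c_\al$ is eventually determined by a fixed finite family (and $Z_\al = 0$ once $|V(\al)| > n$), so the finite triangle inequality is legitimate.
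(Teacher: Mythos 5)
Your proposal is correct and follows essentially the same route as the paper's proof: a high even-moment Markov bound per coordinate via \cref{lem:improper-magnitude-simplified} (the paper fixes $q=6$, giving $O(n^{-3}\eps^{-6})$ per coordinate), a union bound over the $n$ coordinates and the constantly many negligible terms, and Borel--Cantelli. The only cosmetic difference is that you allow any even $q\ge 6$ and spell out the countable intersection over $\eps = 1/m$, which the paper leaves implicit.
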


Next, we prove a very important property of $\eqinf$.
The combinatorially negligible diagrams remain combinatorially negligible
after applying additional algorithmic operations.

\begin{restatable}{lemma}{eqinfProperties}
\label{lem:eqinf-properties}
    If $x,y$ are diagram expressions with $x \eqinf y$, then
    \[
        Ax\eqinf Ay\,.
    \]
    Moreover, if $x_1, \dots, x_t,y_1, \dots, y_t$ are diagram expressions with $x_i \eqinf y_i$ for all $i\in [t]$, then
    \[
        f(x_1, \dots, x_t) \eqinf f(y_1, \dots, y_t)\,,
    \]
    for any polynomial function $f:\R^t\to \R$ applied componentwise.
\end{restatable}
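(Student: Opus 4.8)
The plan is to reduce everything to the single-diagram analysis of \cref{lem:improper-magnitude-simplified}. The statement $x \eqinf y$ means $x - y = \sum_\al c_\al Z_\al$ with each $c_\al Z_\al$ combinatorially negligible, i.e. $a_n = \Theta(n^{-k_\al})$ with $|V(\al)| - 1 - |E(\al)| + |I(\al)| \le 2k_\al - 1$. I would first record two closure facts about the ``combinatorial order'' statistic $D(\al) := |V(\al)| - 1 - |E(\al)| + |I(\al)|$, proved purely by inspecting how diagram products and the $A$-multiplication rule act on vertices, edges and isolated vertices. These are essentially the combinatorial content of the example in \cref{sec:example} and the operation rules in \cref{sec:calculus-rules}, now tracked non-asymptotically: multiplying by $A$ adds one new leaf/edge (or reuses an existing vertex, which can only decrease $|V| - |E|$ and weakly controls $|I|$), and componentwise-multiplying two diagrams merges roots and possibly matches subtrees (each match deletes vertices and a matched subtree's edges become multiplicity-$\ge 2$, which is exactly balanced in $D$ up to lower order). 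In both cases the claim is that every diagram produced from a combinatorially negligible diagram, times the inherited coefficient, is again combinatorially negligible — and by linearity and the assumed ``order at most 1'' convention, the cross terms between a negligible diagram and an order-1 diagram are also negligible.

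Concretely, for the first part ($Ax \eqinf Ay$): write $x - y = \sum_\al c_\al Z_\al$, apply $A$, and expand each $A Z_\al$ in the diagram basis via the exact rule derived in \cref{sec:diagram-operations} (casing on whether the new index collides with an existing label). Each resulting diagram $\al'$ satisfies $D(\al') \le D(\al)$ up to an additive constant absorbed by the change in $k$, so $c_\al Z_{\al'}$ stays combinatorially negligible; summing over $\al$ and $\al'$ gives $Ax - Ay = \sum_{\al'} c'_{\al'} Z_{\al'}$ with every term negligible, hence $Ax \eqinf Ay$. For the second part, since $\eqinf$ is clearly closed under finite $\R$-linear combinations and under shifting, it suffices to handle a single monomial $f(z_1,\dots,z_t) = z_1^{d_1}\cdots z_t^{d_t}$, and then by a telescoping argument it suffices to show that if $x_i \eqinf y_i$ for each $i$ then $\prod_i x_i \eqinf \prod_i y_i$ — which by telescoping over one coordinate at a time reduces to: $x \eqinf y$ implies $xw \eqinf yw$ for any order-$\le 1$ diagram expression $w$. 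That last statement follows from the diagram-product rule together with the additivity of $D$ under the merge-and-match operation, exactly as above.

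The main obstacle, and the step I would spend the most care on, is verifying that the combinatorial statistic $D(\al)$ behaves subadditively (up to the bookkeeping in $k$) under the two diagram operations — in particular handling the $|I(\al)|$ term correctly. When $A$-multiplication or a subtree-matching creates new double (or higher-multiplicity) edges, vertices can migrate into $I(\al)$, and one must check this migration never increases $D$ faster than $|E|$ grows; the hanging-double-edge phenomenon (property (iv) in \cref{sec:nonasymptotic-diagram-basis}) is exactly the boundary case where $D$ is preserved rather than decreased. I would isolate this as a short combinatorial lemma: ``for any diagram operation in our calculus, $D(\text{output}) \le D(\text{inputs summed appropriately})$,'' prove it by a direct vertex/edge count in each of the finitely many cases of the operation rules, and then the rest of \cref{lem:eqinf-properties} is immediate linear-algebra bookkeeping over the (constantly many, constant-size) diagrams involved. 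Everything here is non-asymptotic and finitary, matching the paper's stated philosophy of not taking $n \to \infty$ until the end.
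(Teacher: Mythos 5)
Your proposal is correct and takes essentially the same route as the paper's proof: reduce to a single combinatorially negligible term, expand $A Z_\al$ and $Z_\al \odot Z_\beta$ exactly via \cref{lem:mat-mul} and \cref{lem:diagrams-product}, and verify that the statistic $|V(\al)|-1-|E(\al)|+|I(\al)|$ does not increase under multiplication by $A$ and is subadditive under intersection patterns, the delicate point being precisely the one you flag, namely that a vertex can newly enter $I$ only if it is intersected (there are at most $|V(\al)|+|V(\beta)|-b-2$ such non-root vertices for a pattern with $b$ blocks). The combinatorial lemma you defer is exactly the count the paper performs, and it goes through as you anticipate.
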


The proof of \cref{lem:eqinf-properties} is in \cref{sec:omitted-combneg}.
The intuitive view of this lemma is that a diagram with a cycle still has the cycle after the algorithmic operations and thus remains negligible.
The proof in \cref{sec:omitted-combneg} is a syntactic version.

We can also show combinatorially that the error of removing a hanging double edge from any diagram is negligible. The proof proceeds by extending the definition of diagrams to allow new types of residual edges that are only used in the analysis (see \cref{sec:edge-labels}).

\begin{restatable}{lemma}{removeDoubleEdge}
\label{lem:remove-double-edge}
    Let $a_n Z_\al$ be a term of combinatorial order at most $1$ such that $\al$ has a hanging double edge. Let $\al_0$ be $\al$ with the hanging double edge and hanging vertex removed. Then
    \[
        a_n Z_\al \eqinf a_n Z_{\al_0}\,.
    \]
\end{restatable}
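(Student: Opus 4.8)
The statement I want to prove is \cref{lem:remove-double-edge}: if $a_n Z_\al$ has combinatorial order at most $1$ and $\al$ has a hanging double edge (a double edge whose non-root endpoint $v$ has degree exactly $2$), then $a_n Z_\al \eqinf a_n Z_{\al_0}$, where $\al_0$ deletes that double edge and the vertex $v$. The core identity driving everything is that for a fixed injective placement of $V(\al_0)$ with the neighbor $u$ of $v$ sent to some index $\ph(u) = p$, summing over the remaining choice of $\ph(v) = r$ (with $r$ distinct from all previously used labels) gives $\sum_{r} A_{pr}^2$. So $Z_{\al,i} = \sum_{\ph_0} \bigl(\prod_{e \in E(\al_0)} A_{\ph_0(e)}\bigr) \cdot \bigl(\sum_{r \notin \operatorname{im}\ph_0} A_{\ph(u)r}^2\bigr)$. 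The plan is to write $\sum_{r \notin \operatorname{im}\ph_0} A_{pr}^2 = \sum_{r=1}^n A_{pr}^2 - \sum_{r \in \operatorname{im}\ph_0} A_{pr}^2$, and then to replace $A_{pr}^2$ by its mean $\tfrac1n$ plus the centered fluctuation $A_{pr}^2 - \tfrac1n$. Summing the mean over all $r \in [n]$ gives exactly $1$, which reproduces $Z_{\al_0,i}$; everything else should be a combinatorially negligible diagram expression.

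\textbf{Key steps in order.} First I would set up the bookkeeping: extend the diagram formalism to allow an auxiliary ``residual'' edge type carrying the centered value $A_{pr}^2 - \tfrac1n$ (this is the extension alluded to in the statement and in \cref{sec:edge-labels}); a single such residual edge between $\ph(u)$ and a fresh degree-$2$ vertex $v$ is, by \cref{lem:improper-magnitude-simplified}-style bookkeeping, a diagram with $|V| - 1 - |E| + |I|$ \emph{decreased} by $1$ relative to the double edge (the residual edge is centered, so its squared contribution is $O(1/n)$ rather than $O(1)$ — concretely it behaves like a factor of $n^{-1/2}$ in the magnitude rather than $n^{0}$). Second, expand: $Z_{\al,i} = \sum_r^{\text{all }[n]} (\cdots) - \sum_{r \in \operatorname{im}\ph_0}(\cdots)$, and in the first term substitute $A_{pr}^2 = \tfrac1n + (A_{pr}^2 - \tfrac1n)$. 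The $\tfrac1n$ part, summed over $r \in [n]$, yields $\tfrac1n \cdot n = 1$ times the sum over $\ph_0$ of the remaining edge product, i.e. exactly $Z_{\al_0,i}$. Third, control the three leftover pieces: (a) the centered-fluctuation sum over all $r$, which is a residual-edge diagram of magnitude lower by $n^{-1/2}$; (b) the boundary correction $-\sum_{r \in \operatorname{im}\ph_0} A_{pr}^2$, which forces $r$ to coincide with one of the $|V(\al_0)|$ already-placed vertices — each such coincidence is a separate diagram where the former degree-$2$ vertex $v$ is merged into an existing vertex, producing a self-loop or higher-multiplicity edge, and one checks via \cref{lem:improper-magnitude-simplified} that the resulting $|V| - 1 - |E| + |I|$ drops (we lose a vertex but the incident edges still have multiplicity $\ge 2$, so $|I|$ does not compensate); (c) possibly the analogous $\tfrac1n$-times-boundary-correction term, which is even smaller. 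Fourth, multiply through by the coefficient $a_n = \Theta(n^{-k})$ and invoke the order-$\le 1$ hypothesis on $a_n Z_\al$ to conclude that each of the error diagrams, now carrying coefficient $a_n$ or $a_n/n$, satisfies the inequality $|V| - 1 - |E| + |I| \le 2k - 1$ defining combinatorial negligibility; hence $a_n Z_\al - a_n Z_{\al_0}$ is a sum of combinatorially negligible terms, which is the definition of $\eqinf$.

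\textbf{Main obstacle.} The routine part is the $\tfrac1n$ substitution; the delicate part is the boundary corrections in step (b) — carefully enumerating what diagram arises when the fresh vertex $r$ is identified with each previously-placed vertex, and verifying in every case that the magnitude exponent strictly decreases. The subtlety is that identifying $r = \ph_0(w)$ for some vertex $w$ can in principle create a vertex all of whose incident edges now have multiplicity $\ge 2$, which would \emph{add} to $I(\al)$ and partially offset the loss of a vertex; one has to check this offset is never enough to reach combinatorial order $1$ again (losing one vertex costs $1$, gaining it into $I$ refunds at most... the bound needs the net to stay $\le -1$, and this holds because the double edge at $v$ contributes $+1$ to $|I|$ already in $\al$, so collapsing $v$ into $w$ removes that $+1$ from $|I|$ while also removing the vertex, a net change of $-2$ before any new membership in $I$, and $w$ joining $I$ refunds at most $+1$). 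Making this case analysis airtight — including self-loops formed when $r$ is identified with $u$ itself, which \cref{def:isolated} explicitly accounts for — is where the real work lies, but it is all finite bookkeeping with \cref{lem:improper-magnitude-simplified}, so I expect no conceptual difficulty, only care. A secondary subtlety worth a remark is that the residual-edge diagrams need their own magnitude bound; this is a one-line extension of \cref{lem:improper-magnitude-simplified} since a centered square contributes like a degree-$1$ (mean-zero, variance-$O(1/n)$) Fourier character for counting purposes.
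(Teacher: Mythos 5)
Your overall strategy is the paper's: substitute $A^2 = \tfrac1n + (A^2 - \tfrac1n)$, identify the main term $Z_{\al_0}$, and argue the centered ``residual''/2-labeled-edge piece is negligible (in the paper's bookkeeping, because the hanging vertex leaves $I(\al)$ once its double edge is replaced by a 2-labeled edge, while $|V|,|E|$ are unchanged). But your specific organization --- extending the inner sum to all $r\in[n]$ and then bounding the boundary correction $-\sum_{r\in\operatorname{im}\ph_0}A_{pr}^2$ as a separate family of coincidence diagrams --- has a genuine gap. Your accounting for a coincidence $r=\ph_0(w)$ only credits a refund of $+1$ to $|I|$ from $w$ joining, but \emph{both} endpoints of the boosted $u$--$w$ multiedge can join $I$: if the single edge $u$--$w$ is the unique multiplicity-1 edge at $u$ and at $w$ (and $w$ is not the root), the new double edge turns it into a multiplicity-3 edge and puts $u$ and $w$ into $I$, so the net change in $|V|-1-|E|+|I|$ is $0$, not $\le -1$. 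Concretely, take $\al$ to be the root (isolated) together with a floating component consisting of a single edge $u$--$w$ and a hanging double edge $u$--$v$, with $a_n=n^{-1/2}$: then $a_nZ_\al$ has combinatorial order exactly $1$, but the coincidence diagram (a floating triple edge $u$--$w$) also has $|V|-1-|E|+|I| = 1 = 2k$, so it is \emph{not} combinatorially negligible. Worse, this is not just lossy bookkeeping: that term equals $-a_n\sum_{u\neq w\neq i}A_{uw}^3$, whose expectation is $\Theta(1)$ whenever $\E_{X\sim\mu}[X^3]\neq 0$, so no case analysis can show it is negligible on its own. It is only cancelled by the matching coincidence part hidden inside your piece (a) (which, summed over all $r$, is itself not a proper injective-embedding diagram and must be split): for $r\in\operatorname{im}\ph_0$ one has $(A_{pr}^2-\tfrac1n)-A_{pr}^2=-\tfrac1n$, and after this cancellation the two pieces collapse to $Z_{\al_2}-\tfrac{|V(\al)|-1}{n}Z_{\al_0}$.

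The fix is exactly the paper's route (\cref{lem:syntactic-removal}): never break injectivity. Summing $\tfrac1n$ over the \emph{allowed} labels $r\notin\operatorname{im}\ph_0$ directly gives the factor $\tfrac{n-|V(\al)|+1}{n}$, yielding the exact identity $Z_\al = Z_{\al_0}-\tfrac{|V(\al)|-1}{n}Z_{\al_0}+Z_{\al_2}$ with only three terms: the first has the same combinatorial order as $a_nZ_\al$, the second is negligible because of the explicit extra $\tfrac1n$ in its coefficient, and the third is negligible because $|I(\al_2)|<|I(\al)|$ while $|V|,|E|$ are unchanged (the 2-labeled edge is treated as a single edge for $I$, but still counts twice in $|E|$). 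So either restrict to $r\notin\operatorname{im}\ph_0$ from the start, or combine your pieces (a) and (b) \emph{before} bounding anything; bounding them separately, as proposed, cannot work.
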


\subsection{Classification of constant-size diagrams}
\label{sec:classification}

We classify the asymptotic limits of constant-size diagrams and prove that all of their constant-order joint moments
are within $O(n^{-1/2})$ of the asymptotic limit.
In addition to the vector Fourier diagrams from \cref{def:diagram}, we will classify \emph{scalar Fourier diagrams}, which are simply unlabeled undirected multigraphs (the only difference with vector diagrams being that they do not have a root). The notation for scalar diagrams is analogous.

\begin{definition}[Scalar Fourier diagrams]\label{def:scalar-diagrams}
    Let $\calA_{\scalar}$ be the set of all unlabeled undirected multigraphs with no isolated vertices.
    Let $\calT_{\scalar}$ be the set of non-empty unlabeled trees.

Given a scalar Fourier diagram $\alpha\in \calA_{\scalar}$, we define $Z_\alpha\in\R$ by

\[
    Z_\alpha = \sum_{\substack{\varphi:V(\alpha)\to [n]\\\varphi\textnormal{ injective}}} \prod_{\{u,v\}\in E(\alpha)} A_{\varphi(u) \varphi(v)}.
\]

We allow the empty scalar Fourier diagram which represents the constant 1.
\end{definition}

\begin{definition}[$\calF_\scalar$ and $\calF$]
    Let $\calF_{\scalar}$ be the set of unlabeled forests with no isolated vertices.
    Let $\calF$ be the set of unlabeled forests such that one vertex is the special root vertex~$\rootpic$. No vertices may be isolated except for the root.
\end{definition}

The scalar diagrams are not normalized
``correctly'' by default.
$Z_\rho$ for $\rho \in \calF_
{\scalar}$ has order $n^{c/2}$ where $c$ is the number
of connected components in $\rho$.
The proper normalization divides by $n^{c/2}$ to put all the diagrams on the same scale.
The notion of $\eqinf$ and combinatorial negligibility also extend in a natural
way to scalar diagrams. See \cref{sec:scalar-diagrams} for these definitions.

We classify the diagrams in $\calA$ and $\calA_\scalar$.
First, the next lemma identifies which of the diagrams are
non-negligible.
This lemma is for \emph{connected} vector diagrams; scalar diagrams
and disconnected
vector diagrams have a similar characterization
in \cref{lem:scalar-nonnegligible}.

\begin{lemma}
\label{lem:connected-nonnegligible}
Let $\al \in \calA$ be a connected Fourier diagram. Then $Z_\al$ is either combinatorially negligible or combinatorially order 1. Moreover, it is combinatorially order 1 if and only if the following four conditions hold simultaneously:
\begin{enumerate}[(i)]
    \item Every multiedge has multiplicity 1 or 2.
    \item There are no cycles.
    \item The subgraph of multiplicity 1 edges is connected and contains the root if it is nonempty (i.e. the multiplicity 2 edges consist of hanging trees).
    \item There are no self-loops or 2-labeled edges (\cref{sec:edge-labels}).
\end{enumerate}
\end{lemma}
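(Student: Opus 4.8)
The statement is purely combinatorial. Writing $N(\al) := |V(\al)| - 1 - |E(\al)| + |I(\al)|$, where $|E(\al)|$ counts edges with multiplicity, \cref{def:combinatorially-negligible} with coefficient $a_n = 1$ (so $k=0$) says $Z_\al$ is combinatorially order $1$ when $N(\al)=0$ and combinatorially negligible when $N(\al)\le -1$. Since $N(\al)\in\Z$, the lemma reduces to (a) $N(\al)\le 0$ for every connected $\al\in\calA$, and (b) $N(\al)=0$ exactly when (i)--(iv) hold. (\cref{lem:improper-magnitude-simplified} is what ties $N(\al)$ to the actual magnitude of $Z_\al$, but it is not needed here.) The plan is a direct double-counting argument organized around the multiplicity-$1$ ``skeleton'' of $\al$.

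First I would split the edges of $\al$ into $E_1$ (plain edges of multiplicity exactly $1$), $E_{\ge 2}$ (plain edges of multiplicity $\ge 2$), and $E_\bullet$ (self-loops and $2$-labeled edges of \cref{sec:edge-labels}), and set $V' := V(\al)\setminus I(\al)$. By the definition of $I(\al)$, the root lies in $V'$, both endpoints of every $E_1$ edge lie in $V'$, and every non-root vertex of $V'$ is an endpoint of some $E_1$ edge. Using $|V(\al)|+|I(\al)| = |V'|+2|I(\al)|$ and $|E(\al)| = |E_1|+\sum_{e\in E_{\ge 2}}m(e)+\sum_{e\in E_\bullet}m(e)$, we get $N(\al) = |V'|+2|I(\al)|-1-|E_1|-\sum_{E_{\ge 2}}m(e)-\sum_{E_\bullet}m(e)$. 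Let $G_1 := (V',E_1)$ with $c_1\ge 1$ connected components, so $|E_1|\ge |V'|-c_1$. Contract each component of $G_1$ to a point and keep the $E_{\ge 2}$ edges: this gives a multigraph $H$ on $c_1+|I(\al)|$ vertices which is connected because $\al$ is, so at least $c_1+|I(\al)|-1$ of its edges (all non-loops, each contributed by some $e\in E_{\ge 2}$) are needed to connect it, giving $|E_{\ge 2}|\ge c_1+|I(\al)|-1$. Plugging in $m(e)\ge 2$ on $E_{\ge 2}$ and $m(e)\ge 1$ on $E_\bullet$,
\[
N(\al) \;\le\; c_1 + 2|I(\al)| - 1 - 2\bigl(c_1+|I(\al)|-1\bigr) - |E_\bullet| \;=\; 1 - c_1 - |E_\bullet| \;\le\; 0,
\]
which proves (a).

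For (b) I would read off when each inequality above is tight. Equality forces $c_1=1$ and $E_\bullet=\emptyset$, the latter being exactly (iv). Then $|E_1|=|V'|-1$ makes $G_1$ a spanning tree of $V'$, hence connected and containing the root (first half of (iii)) and with no cycle among multiplicity-$1$ edges. Tightness of $\sum_{E_{\ge 2}}m(e)\ge 2|E_{\ge 2}|$ forces every plain multi-edge to have multiplicity exactly $2$, which with $E_\bullet=\emptyset$ is (i). Finally, tightness of $|E_{\ge 2}|\ge c_1+|I(\al)|-1$ together with $H$ connected forces $H$ to be a tree all of whose edges are genuine non-loop edges; unwinding this, no $E_{\ge 2}$ edge joins two vertices of $G_1$, and the multiplicity-$2$ edges together with the $I(\al)$-vertices form disjoint trees, each attached to $G_1$ at a single vertex — the ``hanging trees'' of the second half of (iii) — so that, after identifying parallel edges, $\al$ is a tree and (ii) holds. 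Conversely, any $\al$ of exactly this shape has $|E_1|=|V'|-1$, $|E_{\ge 2}|=|I(\al)|$, all multiplicities $\le 2$, and $E_\bullet=\emptyset$, whence $N(\al) = |V'|+2|I(\al)|-1-(|V'|-1)-2|I(\al)| = 0$.

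The step I expect to be the most delicate is this last equality analysis — in particular checking that (ii) genuinely follows from the tightness conditions rather than from (i) and (iii) alone: a cycle could a priori run along a stretch of $G_1$ and then close up through a loop of double edges, and it is precisely tightness of the $H$-connectivity bound (no $E_{\ge 2}$ edge returning into $G_1$, and $H$ acyclic) that excludes this. The remaining details — the exact bookkeeping for self-loops and $2$-labeled edges inside $E_\bullet$, and the fact that the hanging trees partition $I(\al)$ — are routine given the extended edge definitions of \cref{sec:edge-labels}.
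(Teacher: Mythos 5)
Your overall strategy (a double-counting bound on $|V(\al)|-1-|E(\al)|+|I(\al)|$ followed by a tightness analysis) is the same in spirit as the paper's, which does the counting in one shot via a breadth-first search that assigns to each non-root vertex the multiedge used to discover it (two edge-units for vertices of $I(\al)$, one otherwise). Your reorganization through the multiplicity-$1$ skeleton $G_1$ and the contracted multigraph $H$ is fine for ordinary multigraphs, and your equality analysis for that case is correct and, if anything, more explicit than the paper's about why (ii) is forced.

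However, there is a genuine gap in how you treat the $2$-labeled edges of \cref{sec:edge-labels}, which the lemma must cover since condition (iv) explicitly excludes them. You put them into $E_\bullet$ and then build $H$ from the $E_{\ge 2}$ edges only, asserting that $H$ ``is connected because $\al$ is'' and that the connecting edges are ``all non-loops, each contributed by some $e\in E_{\ge 2}$.'' But a $2$-labeled edge is a genuine non-loop edge of $\al$, and the connectivity of $\al$ may run through it; moreover, by the updated definition of $I(\al)$ a single $2$-labeled edge counts as a multiplicity-$1$ edge, so its endpoints lie in $V'$ (this also falsifies your preliminary claim that every non-root vertex of $V'$ meets an $E_1$ edge, though that claim is not load-bearing). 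Concretely, take $\al$ to be the root joined to one other vertex by a single $2$-labeled edge: then $c_1=2$, $|I(\al)|=0$, $|E_{\ge 2}|=0<c_1+|I(\al)|-1$, and your displayed bound $N(\al)\le 1-c_1-|E_\bullet|=-2$ is false (here $N(\al)=-1$). So part (a), and with it the tightness characterization in (b), is not established for diagrams containing $2$-labeled edges. The repair is not hard but needs a new observation: include the $2$-labeled non-loop edges among the edges of $H$ (each contributes $2$ to $|E(\al)|$ by the convention of \cref{def:labeled-diagram}), rerun the count to get $N(\al)\le 1-c_1-|E_{\textnormal{loop}}|$, and in the equality case note that a single $2$-labeled edge cannot be incident to a vertex of $I(\al)$, so when $c_1=1$ it projects to a loop of $H$ and therefore creates slack in the spanning-tree bound; this is what actually forces condition (iv). The paper's BFS encoding sidesteps this bookkeeping by treating all edge types uniformly in a single assignment.
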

\begin{proof}
    By assumption, every vertex is connected to the root. With the exception of the root,
    we can assign injectively one edge to every vertex in $V\setminus I(\al)$
    and two edges to every vertex in $I(\al)$
    as follows.
    Run a breadth-first search from the root and assign to each vertex the multiedge that was used to discover it.
    This encoding argument implies 
    \[
        (|V(\al)|-|I(\al)|-1) + 2|I(\al)| \le |E(\al)|\,.
    \]
    Hence $Z_\alpha$ is combinatorially negligible or combinatorially order 1, and it is combinatorially order 1 if and only if
    this inequality is an equality.
    This holds if and only if
    there are no cycles, multiplicity $\sgt 2$ edges, self-loops, or 2-labeled edges
    in $\al$,
    and the edges incident to $V(\al)\setminus I(\al)$ in the direction of the root
    all have multiplicity 1.
\end{proof}

As a result, the non-negligible connected diagrams in $\calA$ are asymptotically equal to trees in $\calT$ after using
\cref{lem:remove-double-edge} to remove the hanging
double edges (disconnected diagrams $\al \in \calA$ and scalar diagrams $\al \in \calA_\scalar$ are likewise asymptotically equal to a forest in $\calF$ or $\calF_\scalar$).

The next \cref{thm:classification} completes the classification by showing that the non-negligible diagrams in $\calT, \calF,$ and $\calF_\scalar$
are asymptotically Gaussians and Hermite polynomials.
The proof is in \cref{sec:diagram-classification-proof}.
Also see \cref{all-moments} for a version of the theorem in terms of moments.

\begin{theorem}[Classification]\label{thm:classification}
    Suppose that $A=A(n)$ is a sequence of random matrices satisfying \cref{assump:A-entries}.
    
    The non-negligible scalar diagrams can be classified as follows:
    \begin{itemize}
        \item If $\tau\in\calT_{\scalar}$, then $n^{-\frac 1 2} Z_\tau \overset{d}{\longrightarrow} \calN(0, \abs{\Aut(\tau)})$.
        \item If $\rho \in \calF_{\scalar}$ has $c$ connected components, then 
        \[n^{-\frac c 2} Z_\rho \eqinf \prod_{\tau \in \calT_{\scalar}} h_{d_\tau}(n^{-\frac 1 2}Z_{\tau}; \abs{\Aut(\tau)})\,,\]
        where $d_\tau$ is the number of copies of $\tau$ in $\rho$.
    \end{itemize}
    The non-negligible vector diagrams can be classified as follows:
    \begin{itemize}
        \item If $\sig \in \calS$ and $i\in [n]$, then $Z_{\sig, i} \overset{d}{\longrightarrow} \calN(0, \abs{\Aut(\sig)})$.
        \item If $\tau \in \calT$, then $Z_{\tau} \eqinf \prod_{\sig \in \calS} h_{d_\sig}(Z_{\sig}; \abs{\Aut(\sig)})$
        where $d_\sig$ is the number of isomorphic copies of $\sigma$ starting from the root of $\tau$,
        and the Hermite polynomial is applied componentwise.
        \item If $\al\in \calF$ has $c$ 
        \textit{floating components}
        (connected components which are not the component of the root), letting $\al_{\smallrootpic}$ be the component of the root (a vector diagram) and $\al_{\text{float}}$ be the floating part (a scalar diagram), then $n^{-\frac c 2}Z_\al \eqinf n^{-\frac c 2}Z_{\al_{\float}} Z_{\al_{\smallrootpic}}$.
    \end{itemize}
    Moreover, the random variables
    \[
        \left\{Z_{\sig, i} : \sig \in \calS, i \in [n]\right\} \cup \left\{n^{-\frac 1 2} Z_{\tau} : \tau \in \calT_{\scalar}\right\}
    \]
    are asymptotically independent (\cref{def:asymptotic-indpt}).
\end{theorem}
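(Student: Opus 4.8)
The plan is to reduce everything to moment computations. By \cref{lem:method-of-moments}, to establish the stated convergences in distribution it suffices to show that all constant-degree joint moments of the relevant random variables converge to the corresponding Gaussian/Hermite moments. Concretely, I would compute $\E\left[\prod_j Z_{\sig_j, i_j}^{q_j} \cdot \prod_k (n^{-1/2} Z_{\tau_k})^{r_k}\right]$ for arbitrary trees $\sig_j \in \calS$, vertices $i_j$, trees $\tau_k \in \calT_\scalar$, and constant exponents, and compare it with the moment of the proposed limiting object (a product of independent Gaussians and their Hermite polynomials, whose moments are given combinatorially by \cref{lem:gaussian-moments}, \cref{fact:hermite-matchings}, \cref{fact:hermite-product}). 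The $\eqinf$ statements (the Hermite-polynomial identities for $Z_\tau$ with $\tau \in \calT \setminus \calS$, for $Z_\rho$, and for the factorization of $\calF$-diagrams) should be proven directly at the diagram level: expand the product of diagrams using the componentwise-multiplication rule from \cref{sec:calculus-rules}, and check term by term, via \cref{lem:improper-magnitude-simplified} and \cref{lem:remove-double-edge}, that the difference between the two sides is a combination of combinatorially negligible diagrams.

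The main computation is the moment bound. Writing out $\E\left[Z_{\al,i}^q\right]$ as a sum over $q$ injective maps $\ph_1, \dots, \ph_q : V(\al) \to [n]$ with $\ph_s(\smallrootpic) = i$, the expectation factors over the distinct edge-slots of the union multigraph on $[n]$; a slot contributes $0$ unless it has multiplicity $\ge 2$ (using $\E[A_{ij}] = 0$), contributes $\frac{1}{n}$ when it has multiplicity exactly $2$, and is $O(n^{-k/2})$ when it has multiplicity $k$ (using finiteness of the moments of $\mu, \mu_0$, \cref{assump:A-entries}). The number of ways to choose the maps so that the union multigraph has a prescribed isomorphism type is $\Theta(n^{\text{(number of free labels)}})$. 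A counting/encoding argument — identical in spirit to the proof of \cref{lem:connected-nonnegligible} — shows that the dominant contributions come exactly from configurations where the $q$ copies of $\al$ overlap so that every edge of the union has multiplicity exactly $2$; equivalently, the copies pair up into a perfect matching, and within each matched pair the two copies of $\al$ coincide on their labels. Each such pairing contributes $\abs{\Aut(\al)}^{q/2}$ (the automorphism factor accounting for the number of ways two copies of a tree can be identified) times $\Theta(1)$, and there are $(q-1)!!$ pairings; this reproduces exactly the moment $\abs{\calP\calM(q)}\abs{\Aut(\sig)}^{q/2}$ of $\calN(0, \abs{\Aut(\sig)})$. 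Cross terms between different $\sig_j$, or between a $\sig_j$ and a $\tau_k$, vanish asymptotically unless the diagrams are isomorphic, which gives the asymptotic independence; the scalar normalization $n^{-1/2}$ is exactly what makes a connected tree $\tau \in \calT_\scalar$ have a $\Theta(1)$ second moment.

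For the Hermite identities, the key point is that the componentwise product of the Gaussian diagrams $\prod_\sig Z_{\sig}^{d_\sig}$ expands, by the product rule, into a sum over partial matchings of the subtrees — and this is literally the combinatorial content of \cref{fact:hermite-product} once one checks the $\abs{\Aut}$ bookkeeping matches. The leading term of that expansion (the empty matching) is precisely $Z_\tau$ for the tree $\tau$ whose root has children-subtrees given by the multiset $(d_\sig)$, and every non-leading term either is a smaller tree diagram (contributing to lower-order Hermite terms exactly as the formula predicts) or carries a hanging double edge that gets absorbed by \cref{lem:remove-double-edge}. So $Z_\tau \eqinf \prod_\sig h_{d_\sig}(Z_\sig; \abs{\Aut(\sig)})$ follows by inverting this triangular relation; the scalar and $\calF$-factorization cases are the same argument with the root removed and with the $n^{-c/2}$ normalizations tracked. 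I expect the main obstacle to be the bookkeeping of automorphism factors: one must be careful that the "number of ways to glue two copies of a tree so the union has all multiplicity-2 edges" equals $\abs{\Aut(\sig)}$ and not some other quantity, and that when several non-isomorphic subtrees hang off a common root the matching constraint "no two subtrees from the same diagram" is correctly enforced — this is exactly the subtlety illustrated in \cref{fig:product}, and getting the constants right there is what makes the variances come out as $\abs{\Aut(\sig)}$ rather than $1$.
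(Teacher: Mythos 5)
Your proposal is correct and follows essentially the same route as the paper's proof: the paper likewise proves the Gaussian claims by the method of moments with an encoding argument showing the dominant terms are perfect pairings of copies contributing $\abs{\Aut(\cdot)}^{q/2}$ (\cref{lem:Ztau-normal}, \cref{lem:Ztau-indpt}), obtains the Hermite identities by expanding products into subtree matchings and inverting via the signed-matching form of \cref{fact:hermite-matchings} (\cref{lem:Ztau-hermite}), and handles floating components by checking that every intersection term loses a component and becomes negligible (\cref{lem:scalar-to-vector}). The only caveat, which your joint-moment computation over the $(\sig_j,i_j)$ and $\tau_k$ would surface anyway, is that for the independence claim the pairing must match isomorphic vector diagrams \emph{rooted at the same index} and must exclude vector--scalar pairings, exactly as in the paper's \cref{lem:Ztau-indpt}.
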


Finally, we formalize what we mean by \textit{asymptotic independence} of vectors whose dimension can grow with $n$.
\begin{definition}[Asymptotic independence]
\label{def:asymptotic-indpt}
    A family of real-valued random variables $(X_{n,i})_{n\in\N,i\in \calI_n}$ is \textit{asymptotically independent} if:
    \[ \forall q \in \N.\; \exists \eps=\eps(q) \underset{n\to\infty}{\longrightarrow} 0. \; \forall k  \in \N^{\calI_n} : \sum_{i \in \calI_n} k_i = q.\;
    \abs{\E\left[ \prod_{i \in \calI_n} X_{n,i}^{k_i}\right] - \prod_{i \in \calI_n}\E\left[X_{n,i}^{k_i}\right]} \le \eps(q)\,.\]
    Note that $\calI_n$ may be infinite.
\end{definition}

\subsection{Tree approximation of GFOMs}
\label{sec:gfom}

The class of general first-order methods is defined as follows.
\begin{definition}[General first-order method]
\label{def:gfom}
The input is a matrix $A \in \R^{n \times n}$.
The state of the algorithm at time $t$
is a vector $x_t \in \R^n$. Initially, $x_0=\vec{1}$.
At each time $t$, we can execute one of the following two operations:
\begin{enumerate}
    \item Multiply by $A$ ($x_{t+1} = Ax_t$).
    \item Apply coordinatewise a polynomial\footnote{Restriction to polynomial functions is a technical assumption which is not present in the full definition.} function independent of $n$, $f_t : \R^{t+1} \to \R$ to $(x_t, x_{t-1}, \dots, x_0)$ (for all $i \in [n]$, $x_{t+1,i}=f_t(x_{t,i},\ldots,x_{0,i})$).
\end{enumerate}
\end{definition}

Inductively following the rules given explicitly in \cref{sec:derivation-asymptotic},
we may represent the algorithmic state $x_t$ of a GFOM in the diagram
basis.
Define the \emph{tree approximation} $\widehat{x}_t$ to be the analogous diagram expression obtained by performing the algorithmic operations on only the tree diagrams, removing hanging double edges and removing the cyclic diagrams that appear (see \cref{def:asymptotic-state-explicit} for the formal definition).

\begin{restatable}[Tree approximation of GFOMs]{theorem}{stateEvolution}
\label{thm:state-evolution}
    Let $t\ge 0$ be a constant independent of $n$ and $A=A(n)$ be a sequence of random matrices satisfying \cref{assump:A-entries}. 
    Let $x_t\in \R^n$ be the state of a GFOM and let $\widehat{x}_t$ be its tree approximation.
    Then $x_t \eqinf \widehat{x}_t$. In particular,
    \begin{align}
         \label{eq:concl-tree-1} \qquad \qquad \norm{x_t - \widehat{x}_t}_\infty = \widetilde{O}(n^{-1/2})\text{ with high probability}\,.
    \end{align}
\end{restatable}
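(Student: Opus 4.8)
The plan is to prove $x_T \eqinf \widehat x_T$ by induction on the number of algorithmic steps $T$, using the two operations of the GFOM definition (multiply by $A$; apply a polynomial $f_t$ coordinatewise). The base case is immediate since $x_0 = \vec 1 = Z_{\smallrootpic} = \widehat x_0$. For the inductive step there are two cases, and in each case the strategy is the same: maintain a full (non-asymptotic) diagram-basis representation of $x_{t+1}$, identify which diagrams are combinatorially negligible via \cref{lem:connected-nonnegligible} and \cref{lem:improper-magnitude-simplified}, discard them using the definition of $\eqinf$, and simplify the survivors using \cref{lem:remove-double-edge} (removal of hanging double edges). The conclusion \cref{eq:concl-tree-1} then follows from \cref{lem:eqinf-almostsure}.

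First I would handle the multiply-by-$A$ step. By the inductive hypothesis $x_t \eqinf \widehat x_t$, so by \cref{lem:eqinf-properties} we get $A x_t \eqinf A \widehat x_t$, reducing the problem to computing $A Z_\tau$ up to $\eqinf$ for each tree $\tau \in \calT$ appearing in $\widehat x_t$. Here one expands $A Z_\tau$ in the diagram basis by the combinatorial rule derived in \cref{sec:diagram-operations}: multiplying by $A$ attaches a new root vertex by an edge, and the new edge either goes to a genuinely new vertex (the ``forward step'' $\tau^+$), or it collides with an existing vertex of $\tau$. When it collides with the unique neighbor of the old root and $\tau \in \calS$, one gets a double edge that is hanging off the new root, which by \cref{lem:remove-double-edge} reduces to the ``backward step'' $\tau^-$; when it collides with any other vertex, one creates a cycle, hence a combinatorially negligible diagram by \cref{lem:connected-nonnegligible}(ii), which is discarded. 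When $\tau \in \calT \setminus \calS$ the old root has degree $\ge 2$, so no valid backward step exists and only $\tau^+$ survives. This is exactly \cref{def:asymptotic-state-explicit}(2), so $A x_t \eqinf (\widehat x_t)^+ + (\widehat x_t)^- = \widehat x_{t+1}$.

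Next I would handle the componentwise polynomial step $x_{t+1} = f_t(x_t, \dots, x_0)$. By \cref{lem:eqinf-properties} again, $f_t(x_t, \dots, x_0) \eqinf f_t(\widehat x_t, \dots, \widehat x_0)$, so it suffices to compute the coordinatewise product of tree diagrams up to $\eqinf$. For a single monomial, one multiplies $Z_{\tau_1}, \dots, Z_{\tau_q}$ coordinatewise, which combinatorially means summing $\varphi$'s over all ways to identify the roots at index $i$ and then choosing how the non-root vertices of the different trees overlap. The dominant (order-1) contributions, by \cref{lem:connected-nonnegligible}, come from mergings where the overlap between trees is a union of matched isomorphic subtrees (each matched pair of copies of $\sigma$ becomes a hanging double tree, removable by \cref{lem:remove-double-edge} at the cost of a factor $\abs{\Aut(\sigma)}$, matching the $c_M$ of \cref{def:asymptotic-state-explicit}(3c)); any other overlap either creates a cycle or leaves a non-tree remnant and is thus negligible. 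This reproduces the product rule of \cref{def:asymptotic-state-explicit}(3), so $x_{t+1} \eqinf \widehat x_{t+1}$, closing the induction.

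The main obstacle is the bookkeeping in the componentwise-product step: one must show that after identifying the roots, \emph{every} way the vertex sets of $\tau_1, \dots, \tau_q$ can partially overlap either (i) yields, after removing hanging double edges, one of the trees $\tau_M$ with the stated coefficient, or (ii) is combinatorially negligible. The delicate point is that an overlap need not respect the subtree structure — two trees could share a path that is not a full subtree of either — and one must argue via the degree/edge count in \cref{lem:connected-nonnegligible} that any such ``irregular'' overlap introduces either a cycle or an edge of multiplicity $\ge 2$ that is \emph{not} a hanging double edge (so \cref{lem:remove-double-edge} does not apply), forcing the term into the negligible regime. Carefully enumerating the overlap types and checking the combinatorial-order inequality $|V| - 1 - |E| + |I| \le 2k$ in each case is where essentially all the work lies; the ``forward/backward step'' analysis for the $A$-multiplication is comparatively routine. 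One also needs to track that only finitely many diagrams of bounded size appear at each step (true since $T$, the $f_t$'s, and their degrees are all constants independent of $n$), so that the sums defining $\eqinf$ are finite and \cref{lem:eqinf-almostsure} applies.
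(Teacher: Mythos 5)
Your proposal is correct and follows essentially the same route as the paper: the paper's proof likewise proceeds operation-by-operation, deferring to the diagram-basis expansions of \cref{lem:mat-mul} and \cref{lem:diagrams-product} in \cref{sec:derivation-asymptotic}, identifying the dominant terms via \cref{lem:connected-nonnegligible} and \cref{lem:remove-double-edge}, propagating negligibility with \cref{lem:eqinf-properties}, and concluding with \cref{lem:eqinf-almostsure}. The ``delicate point'' you flag about irregular overlaps is handled in the paper exactly as you suggest, by noting that any non-negligible connected diagram must be a tree with hanging double trees, which by connectivity forces matched overlaps to be complete isomorphic subtrees at the root.
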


\begin{proof}
    We can prove $x_t \eqinf \widehat{x}_t$ inductively.
    By \cref{lem:eqinf-properties}, each of the combinatorially negligible diagrams
    in $x_t$ remains combinatorially negligible at time $t+1$.
    Meanwhile, the combinatorially non-negligible tree diagrams in $\widehat{x}_t$ get updated to $\widehat{x}_{t+1}$.
    The error bound \cref{eq:concl-tree-1} follows from \cref{lem:eqinf-almostsure}.
\end{proof}

\begin{remark}\label{rem:lower-order}
    The leading order guarantee of \cref{thm:state-evolution} 
    is best possible in general
    (up to logarithmic factors). Similar but more complicated equations can be given
    for the lower-order error terms in \cref{eq:concl-tree-1}. For example, since the other connected diagrams with $E$ edges and $V$ vertices
    have magnitude $n^{(V - 1 - E)/2}$, the first lower-order term of order $n^{-1/2}$ consists
    of connected diagrams with exactly one cycle.
    The GFOM operations on this set of diagrams describe how the error evolves at this order.
\end{remark}

\begin{remark}
    One technical caveat of our analysis is that many nonlinearities used in applications are not polynomial functions (e.g. \textnormal{ReLU}, $\tanh$). We note that existing polynomial approximation arguments in the literature (see for example \cite{montanari2022equivalence, ivkov2023semidefinite}) should apply here to prove that the tree approximation holds for GFOMs with Lipschitz denoisers $f_t$ up to arbitrarily small $\frac 1 {\sqrt n}\|\cdot\|_2$ error. This is however strictly weaker than the guarantees of \cref{thm:state-evolution}.
\end{remark}

\subsection{General state evolution}
\label{sec:asymptotic-space}

From the ideas established so far, we directly deduce \emph{state evolution} for GFOM algorithms, capturing that
the coordinates of $x_t$ are asymptotically independent trajectories
of an explicit random variable $X_t$.
Recall the definition of the asymptotic state $X_t$ from \cref{def:asymptotic-state}.

To state the theorem, asymptotic independence is extended from \cref{def:asymptotic-indpt} to $\R^{t+1}$-valued random variables
in the natural way. 
\begin{definition}[Asymptotic independence of trajectories]
    A family of random variables $(X_{n,i})_{n\in\N,i\in \calI_n}$ taking values in $\R^{t+1}$ is \textit{asymptotically independent} if:
    \begin{align*} 
    &\forall q \in \N.\; \exists \eps=\eps(q) \underset{n\to\infty}{\longrightarrow} 0. \; \forall k  \in \N^{\calI_n\times [t+1]} : \sum_{\substack{i \in \calI_n\\ j \in [t+1]}} k_{ij} = q.\;\\
    &\qquad\abs{\E\left[ \prod_{\substack{i \in \calI_n, j \in [t+1]}} X_{n,i,j}^{k_{ij}}\right] - \prod_{i \in \calI_n}\E\left[\prod_{j \in [t+1]} X_{n,i,j}^{k_{ij}}\right]} \le \eps(q)\,.
    \end{align*}
\end{definition}

\begin{theorem}[General state evolution]
    \label{thm:general-state-evolution}
    Let $t$ be a constant and $A = A(n)$ be a sequence of random matrices satisfying \cref{assump:A-entries}. Let $x_t\in \R^n$ be the state of a GFOM and let $X_t$ be the asymptotic state of $x_t$. Then: \footnote{It is natural to wonder whether (ii) can be derived as a consequence of (i) in \cref{thm:general-state-evolution}. The answer is a resounding no. Even given good control over the distribution of individual coordinates $x_{t,i}$ it is crucial to ensure that errors do not correlate adversarially when summed. This is precisely the kind of heuristic assumption made when using the cavity method.}
    \begin{enumerate}[(i)]
        \item For each $i \in [n]$, $(x_{0,i}, \dots, x_{t,i}) \tod (X_0, \dots, X_t)$.
        Furthermore, the coordinates' trajectories $\{(x_{0,i},\dots, x_{t,i}) : i \in [n]\}$ are asymptotically independent.
        \item $\frac{1}{n} \sum_{i = 1}^n x_{t,i} \eqinf \E[X_t]$ and therefore,
    \[
        \frac{1}{n} \sum_{i = 1}^n x_{t,i} = \E[X_t] + \widetilde{O}(n^{-\frac 1 2})\text{ with high probability}\,.
    \]
        \item $X_t$ satisfies the explicit recurrence defined at the end of \cref{sec:calculus-rules}.
    \end{enumerate}
\end{theorem}

\begin{proof}
    For (i), by \cref{lem:method-of-moments}, it suffices to check that
    all of the constant-order joint moments of $x_{t,i}$ converge to the joint moments of $X_t$. This follows
    from convergence of the moments of every diagram $Z_\al$ to those of $Z_\al^\infty$ in the diagram classification \cref{thm:classification}.

    Part (ii) will be proven in 
    \cref{sec:onsager-correction} as the following lemma.
    \begin{restatable}{lemma}{empiricalExpectation}\label{lem:scalar-empirical-expectation}
    Let $x$ be a vector diagram expression with asymptotic state $X \in \Om$.
    Then as scalar diagrams, $\frac{1}{n}\sum_{i = 1}^n x_i \eqinf \E\left[X\right].$
    \end{restatable}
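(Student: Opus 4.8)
The plan is to unfold both sides of the claimed identity and match them term by term. By \cref{def:asymptotic-state} we may write $x\eqinf\sum_{\tau\in\calT}c_\tau Z_\tau$, so that $X=\sum_{\tau\in\calT}c_\tau Z^\infty_\tau$. On the right-hand side, $\E[X]=\sum_\tau c_\tau\,\E[Z^\infty_\tau]$, and since the $(Z^\infty_\sig)_{\sig\in\calS}$ are independent while $h_k(\,\cdot\,;\abs{\Aut(\sig)})$ is orthogonal to the constants for every $k\ge 1$, the product $\E[Z^\infty_\tau]=\prod_{\sig\in\calS}\E[h_{d_\sig}(Z^\infty_\sig;\abs{\Aut(\sig)})]$ is $0$ unless $d_\sig=0$ for all $\sig$, i.e.\ unless $\tau$ is the singleton $\rootpic$. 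Hence $\E[X]=c_{\rootpic}$, and it suffices to show $\frac1n\sum_{i=1}^n x_i\eqinf c_{\rootpic}$ as scalar diagrams.

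On the left-hand side, the elementary observation is that summing \cref{def:Zal} over the root index $i$ merely erases the constraint $\ph(\smallrootpic)=i$, so $\sum_{i=1}^n Z_{\al,i}=Z_{\al^\circ}$, where $\al^\circ\in\calA_\scalar$ is obtained from $\al$ by demoting the root to an ordinary vertex (for $\al=\rootpic$ this reads $\sum_i Z_{\al,i}=n$). Setting $e:=x-\sum_\tau c_\tau Z_\tau$, which is combinatorially negligible by the definition of $\eqinf$, linearity gives $\frac1n\sum_i x_i=\sum_{\tau\in\calT}\frac{c_\tau}{n}Z_{\tau^\circ}+\frac1n\sum_i e_i$, and I would bound the two pieces separately.

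For the main sum: the singleton term is $\frac1n\sum_i Z_{\rootpic,i}=\frac1n\cdot n=1$, while for every other $\tau\in\calT$ the graph $\tau^\circ$ is a proper scalar tree on $v\ge 2$ vertices, so by the scalar analogue of the combinatorial magnitude bound \cref{lem:improper-magnitude-simplified} it has magnitude $O(n^{1/2})$; together with the coefficient $c_\tau/n=\Theta(n^{-1})$ this produces a scalar term of magnitude $O(n^{-1/2})$, hence combinatorially negligible. For the error: passing from $\al$ to $\al^\circ$ changes neither $\abs{V}$ nor $\abs{E}$ and can only increase $\abs{I}$ by one (the old root joins $I$ exactly when every edge at it is a self-loop or has multiplicity $\ge 2$), so a combinatorially negligible vector term $a_nZ_\al$ with $a_n=\Theta(n^{-k})$ is sent to $\frac{a_n}{n}Z_{\al^\circ}$, whose coefficient is $\Theta(n^{-(k+1)})$ and which satisfies $\abs{V(\al^\circ)}-\abs{E(\al^\circ)}+\abs{I(\al^\circ)}\le\bigl(\abs{V(\al)}-1-\abs{E(\al)}+\abs{I(\al)}\bigr)+2\le 2(k+1)-1$; this is precisely the scalar negligibility bound, so $\frac1n\sum_i e_i\eqinf 0$. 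Combining, $\frac1n\sum_i x_i\eqinf c_{\rootpic}=\E[X]$.

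This is essentially a bookkeeping argument, so I do not expect a genuine obstacle; the one point worth attention is the coincidence driving the result---the singleton $\rootpic$ is at once the unique tree whose scalar diagram survives division by $n$ and the unique tree with $\E[Z^\infty_\tau]\ne 0$---and a careful treatment of the degenerate conventions (the singleton's $\al^\circ$ is an isolated vertex, read simply as the constant $n$; the standing assumption that all coefficients are $\Theta(n^{-k})$ with $k\ge 0$, so that nothing blows up after dividing by $n$).
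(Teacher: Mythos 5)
Your proof is correct and follows essentially the same route as the paper's: decompose $x$ into its tree approximation plus a combinatorially negligible error, observe that summing over the root index unroots each diagram without changing $\abs{V}$ or $\abs{E}$ and increases $\abs{I}$ by at most one, so the extra factor of $\tfrac1n$ renders every term negligible except the singleton, whose coefficient is $c_{\rootpic}=\E[X]$ since all other trees are mean-zero. The only cosmetic difference is that you verify negligibility of the non-singleton trees by a direct count ($\abs{V}-\abs{E}+\abs{I}=1$ for a proper scalar tree) rather than invoking the general "order-1 with a multiplicity-1 root edge" statement, which is the same calculation.
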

    
    For (iii), the tree approximation $x_t = \widehat{x}_t$ holds by \cref{thm:state-evolution}.
    The asymptotic state $X_t$ corresponding to $\widehat{x}_t$ then satisfies the explicit recursion on trees presented
    in \cref{sec:calculus-rules}. 
\end{proof}

We conclude this section by working out a few lemmas which help compute asymptotic
states.
We will use them in \cref{sec:amp} to compute
the state evolution of approximate
message passing.

The set of asymptotic states $\Om$ has an inner product coming from the expectation over the Gaussians $(Z_\sig^\infty)_{\sig \in \calS}$.
Since these random variables are independent Gaussians, the multivariate Hermite polynomials $(Z^\infty_\tau)_{\tau\in\calT}$ form an orthogonal basis of $\Om$ with respect to this inner product. Recall the $+$ and $-$ operators
from \cref{def:plus-minus-operators}.

\begin{fact}\label{lem:bijection}
    $+$ and $-$ are bijections between $\calT$ and $\calS$ which are inverses of each other and
    preserve $\abs{\Aut(\tau)}$.
\end{fact}
A key observation is that $X^+$ is always a centered Gaussian random variable for any $X \in \Om$,
since every resulting tree is in $\calS$.

\begin{fact}\label{cor:pm-cancel}
    For all $X \in \Om$, $(X^+)^- = X$ and $(X^-)^+$ is the orthogonal projection
    of $X$ to the subspace spanned by $\calS$.
\end{fact}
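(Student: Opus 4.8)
The plan is to reduce both statements to Fact~\ref{lem:bijection} together with the action of $+$ and $-$ on the Hermite basis of $\Om$. First I would observe that it suffices to prove both claims on basis elements $X = Z^\infty_\tau$ for $\tau \in \calT$, since $+$ and $-$ are linear and orthogonal projection is linear. So fix $\tau \in \calT$ and let me unwind what $\tau^+$ and $\tau^-$ do at the level of the asymptotic Gaussian space. By Definition~\ref{def:asymptotic-state-explicit} (and the translation to $\Om$ via Definition~\ref{def:asymptotic-state}), $Z^\infty_{\tau^+}$ is obtained by attaching a new edge below the root and re-rooting; since $\tau^+ \in \calS$ always has the root with a single subtree (namely $\tau$), we have $Z^\infty_{\tau^+} = h_1(Z^\infty_{\tau^+}; \abs{\Aut(\tau^+)}) = Z^\infty_{\tau^+}$ as a degree-1 Hermite polynomial in the single Gaussian indeterminate indexed by $\sig = \tau^+$. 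In other words, $+$ sends the Hermite basis element $Z^\infty_\tau$ to the Hermite basis element $Z^\infty_{\tau^+}$, and by Fact~\ref{lem:bijection} this is a bijection of $\calT$ onto $\calS$ preserving $\abs{\Aut}$.

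For the first claim, $(X^+)^- = X$: starting from $\tau \in \calT$, we go to $\tau^+ \in \calS$, and then $(\tau^+)^- $ contracts the root of $\tau^+$ by one, deleting exactly the edge and vertex that were just added, which returns $\tau$. This is precisely the content of Fact~\ref{lem:bijection} that $+$ and $-$ are mutual inverses between $\calT$ and $\calS$; extending linearly gives $(X^+)^- = X$ for all $X \in \Om$.

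For the second claim, I need to compute $(X^-)^+$. Here the $-$ operator is more subtle because for $\tau \in \calT \setminus \calS$ we set $\tau^- = 0$, whereas for $\sig \in \calS$, $\sig^-$ is the tree in $\calT$ obtained by contracting the root. So on the Hermite basis: if $\tau \in \calS$, then $(Z^\infty_\tau)^- = Z^\infty_{\tau^-}$ and then applying $+$ gives $Z^\infty_{(\tau^-)^+} = Z^\infty_\tau$ (using that $+$ and $-$ are inverses on $\calS \leftrightarrow \calT$ again, now in the other order); if $\tau \in \calT \setminus \calS$, then $(Z^\infty_\tau)^- = 0$ so $(Z^\infty_\tau^-)^+ = 0$. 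Therefore $(\cdot)^-$ followed by $(\cdot)^+$ acts as the identity on the span of $\{Z^\infty_\sig : \sig \in \calS\}$ and as zero on the span of $\{Z^\infty_\tau : \tau \in \calT \setminus \calS\}$; since these two families together form the orthogonal Hermite basis of $\Om$, this composite is exactly the orthogonal projection onto $\mathrm{span}\,\calS$, as claimed.

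The only mild subtlety to be careful about — and the one place I would slow down — is the bookkeeping of which diagrams land in $\calS$ versus $\calT \setminus \calS$, and matching the combinatorial operations $\tau \mapsto \tau^\pm$ of Definition~\ref{def:plus-minus-operators} with the convention that $Z^\infty_\tau$ for $\tau \in \calS$ is a degree-1 Hermite polynomial in a \emph{single} Gaussian indeterminate (so that $+$ genuinely produces a fresh Gaussian rather than a higher-degree object). Once that identification is set up cleanly via Fact~\ref{lem:bijection}, everything else is a one-line consequence of the orthogonality of the Hermite basis, so I do not expect any real obstacle here.
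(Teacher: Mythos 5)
Your argument is correct and is exactly the reasoning the paper intends: it states Fact~\ref{cor:pm-cancel} without proof as an immediate consequence of Fact~\ref{lem:bijection}, since $+$ and $-$ act on the orthogonal Hermite basis $(Z^\infty_\tau)_{\tau\in\calT}$ by the inverse bijections $\calT\leftrightarrow\calS$ (with $\tau^-=0$ off $\calS$), so $(X^+)^-=X$ and $(X^-)^+$ is the identity on $\mathrm{span}\,\calS$ and zero on its orthogonal complement. No gaps; your bookkeeping of which basis elements lie in $\calS$ versus $\calT\setminus\calS$ matches the paper's conventions.
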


We deduce that $+$ and $-$ are adjoint operators on $\Om$:

\begin{lemma}\label{lem:adjoint}
    For all $X, Y \in \Om$,
    $\E\left[X^+Y\right] = \E\left[XY^-\right]$.
\end{lemma}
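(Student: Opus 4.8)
The plan is to prove the adjointness $\E[X^+Y] = \E[XY^-]$ by reducing to the case where $X$ and $Y$ are basis elements, using bilinearity of both sides in $X$ and $Y$. Since $(Z^\infty_\tau)_{\tau\in\calT}$ is an orthogonal basis of $\Om$, and $+$ sends $\calT$ into $\calS\subseteq\calT$ while $-$ sends $\calT$ into $\calT\cup\{0\}$, it suffices to verify $\E[Z^\infty_{\tau^+}\, Z^\infty_{\tau'}] = \E[Z^\infty_\tau\,(Z^\infty_{\tau'})^-]$ for all $\tau,\tau'\in\calT$.

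First I would set up the bookkeeping: by \cref{lem:bijection}, $+:\calT\to\calS$ is a bijection with inverse $-:\calS\to\calT$, and both preserve $\abs{\Aut(\cdot)}$; hence $\|Z^\infty_{\tau^+}\|^2 = \abs{\Aut(\tau^+)} \cdot (\text{Hermite norm factors}) = \abs{\Aut(\tau)}\cdot(\cdots)$, matching $\|Z^\infty_\tau\|^2$ whenever $\tau\in\calS$. (Here I am using that for a single-edge-type tree $\sigma\in\calS$, $Z^\infty_\sigma$ is a Gaussian of variance $\abs{\Aut(\sigma)}$, and more generally $\|Z^\infty_\tau\|^2 = \prod_{\sigma}d_\sigma!\,\abs{\Aut(\sigma)}^{d_\sigma}$, which equals $\abs{\Aut(\tau)}$ — this identity is implicit in the orthogonality of the Hermite basis and the automorphism count, and I would cite it from the earlier development.) Now split into cases. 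If $\tau'\in\calS$: then $(Z^\infty_{\tau'})^- = Z^\infty_{(\tau')^-}$, and the left side $\E[Z^\infty_{\tau^+}Z^\infty_{\tau'}]$ is nonzero iff $\tau^+=\tau'$ (orthogonality of the basis restricted to $\calS$), i.e. iff $\tau = (\tau')^-$, in which case both sides equal the common squared norm $\abs{\Aut(\tau)} = \abs{\Aut(\tau')}$; and the right side $\E[Z^\infty_\tau Z^\infty_{(\tau')^-}]$ is nonzero iff $\tau = (\tau')^-$, with the same value. If $\tau'\in\calT\setminus\calS$: then $(Z^\infty_{\tau'})^- = 0$ by convention, so the right side is $0$; and the left side $\E[Z^\infty_{\tau^+}Z^\infty_{\tau'}]$ is $0$ because $\tau^+\in\calS$ while $\tau'\notin\calS$, so they are distinct basis elements, hence orthogonal. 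In all cases the two sides agree.

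Finally I would assemble: writing $X = \sum_\tau a_\tau Z^\infty_\tau$ and $Y = \sum_{\tau'} b_{\tau'} Z^\infty_{\tau'}$ (finite sums), and using that $+$ and $-$ are defined to extend linearly (\cref{def:plus-minus-operators}), we get $\E[X^+Y] = \sum_{\tau,\tau'} a_\tau b_{\tau'}\E[Z^\infty_{\tau^+}Z^\infty_{\tau'}] = \sum_{\tau,\tau'}a_\tau b_{\tau'}\E[Z^\infty_\tau(Z^\infty_{\tau'})^-] = \E[XY^-]$. The only genuine subtlety — and the step I would be most careful about — is the norm identity $\|Z^\infty_\tau\|^2 = \abs{\Aut(\tau)}$ for $\tau\in\calS$ (and the compatibility of the Hermite normalization under $+$), since everything else is just orthogonality plus the convention $\tau^-=0$ for $\tau\notin\calS$; alternatively one could bypass this by invoking \cref{cor:pm-cancel} directly, computing $\E[X^+Y] = \E[X^+(Y^-)^+ + (Y - (Y^-)^+)]$ and using that $X^+\in\mathrm{span}(\calS)$ is orthogonal to $Y - (Y^-)^+$, then applying the "$+$ is norm-preserving on $\calS$-components" fact once, which is the cleanest route and the one I would ultimately present.
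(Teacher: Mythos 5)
Your proof is correct and follows essentially the same route as the paper: reduce to basis elements $(Z^\infty_\tau)_{\tau\in\calT}$ by bilinearity, then use orthogonality together with the fact that $+$ and $-$ are mutually inverse, $\abs{\Aut(\cdot)}$-preserving bijections between $\calT$ and $\calS$ (\cref{lem:bijection}). Your explicit verification of the norm identity $\E[(Z^\infty_\tau)^2]=\abs{\Aut(\tau)}$ for general $\tau\in\calT$ is a point the paper's proof uses implicitly, so flagging it is fine but not a departure in method.
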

\begin{proof}
    Since $(Z^\infty_\tau)_{\tau \in \calT}$ is a
    basis of the vector space $\Om$, it suffices to check this
    for each pair of basis elements $\tau, \rho \in \calT$.
    By orthogonality,
    $\E\left[Z^\infty_{\tau^+}Z^\infty_\rho\right]$ is nonzero if and only if $\tau^+ = \rho$ and in this case it takes value $\abs{\Aut(\tau^+)}$.
    By \cref{lem:bijection}, this occurs if and only if $\rho\in\calS$ and $\tau = \rho^-$. Moreover, in this case the value is also $\abs{\Aut(\tau^+)} = \abs{\Aut(\tau)}$, as needed.
\end{proof}

\begin{lemma}\label{cor:adjoint}
    For all $X, Y \in \Om$,
    $\E\left[XY\right] = \E\left[X^+Y^+\right]$ and $\E\left[(X^-)^2\right] \leq \E\left[X^2\right]$.
\end{lemma}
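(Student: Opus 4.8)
The plan is to derive both identities by composing the two facts already established, namely \cref{lem:adjoint} (the $+/-$ adjointness) and \cref{cor:pm-cancel} (the cancellation rules $(X^+)^- = X$ and $(X^-)^+ = \Pi_{\calS} X$, the orthogonal projection onto the span of $\calS$). The first identity is a one-line chain: apply \cref{lem:adjoint} with $X$ replaced by $X^+$, which gives $\E[(X^+)^+ Y] = \E[X^+ Y^-]$ — wait, that is not quite the shape I want. Instead I would write $\E[X^+ Y^+] = \E[X (Y^+)^-]$ by \cref{lem:adjoint} (pulling the $+$ off the first argument onto the second as a $-$), and then apply $(Y^+)^- = Y$ from \cref{cor:pm-cancel} to conclude $\E[X^+ Y^+] = \E[X Y]$. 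This is clean and uses exactly the two prior facts.

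For the inequality $\E[(X^-)^2] \le \E[X^2]$, I would again use \cref{lem:adjoint}: $\E[(X^-)(X^-)] = \E[X (X^-)^+]$, moving one $-$ across as a $+$. Now $(X^-)^+ = \Pi_{\calS} X$ by \cref{cor:pm-cancel}, so $\E[(X^-)^2] = \E[X \cdot \Pi_{\calS} X]$. Since $\Pi_{\calS}$ is an orthogonal projection on the inner product space $\Om$ (orthogonality of the Hermite basis $(Z^\infty_\tau)_{\tau \in \calT}$ makes $\Pi_{\calS}$ genuinely self-adjoint and idempotent), we have $\E[X \cdot \Pi_{\calS} X] = \E[(\Pi_{\calS} X)^2] = \norm{\Pi_{\calS} X}^2 \le \norm{X}^2 = \E[X^2]$. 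Alternatively, and perhaps more transparently, I would first establish $\E[(X^-)^2] = \E[(\Pi_\calS X)^2]$ directly: write $X = \Pi_\calS X + (X - \Pi_\calS X)$, note that $-$ annihilates the component of $X$ orthogonal to $\calS$ (since $\tau^- = 0$ for $\tau \in \calT\setminus\calS$ by \cref{def:plus-minus-operators}), so $X^- = (\Pi_\calS X)^-$, and then $\E[(X^-)^2] = \E[((\Pi_\calS X)^+)^{-}{}^{+} \cdots]$ — this is getting circular, so I would stick with the first route via \cref{lem:adjoint} and \cref{cor:pm-cancel}.

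I do not anticipate a real obstacle here: both statements are purely formal consequences of adjointness plus the projection identity, and the only thing to be careful about is bookkeeping which argument the $+$ or $-$ lands on when invoking \cref{lem:adjoint}, and confirming that $\Pi_\calS$ being an \emph{orthogonal} projection (hence norm-nonincreasing and self-adjoint) is justified by the orthogonality of the Hermite basis noted just before \cref{def:asymptotic-state}. If one wanted to avoid even mentioning projections, the inequality also follows from Cauchy--Schwarz: $\E[(X^-)^2] = \E[X (X^-)^+] \le \sqrt{\E[X^2]}\sqrt{\E[((X^-)^+)^2]}$ and $\E[((X^-)^+)^2] \le \E[((X^-)^+)^2 \cdots]$ — but this leads in a circle, so the projection argument is the honest one. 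Thus the proof is essentially two short displayed computations, one per identity.
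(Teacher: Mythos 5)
Your proposal is correct and follows essentially the same route as the paper: the first identity is \cref{lem:adjoint} applied to $X$ and $Y^+$ followed by $(Y^+)^- = Y$ from \cref{cor:pm-cancel}, and the inequality is \cref{lem:adjoint} applied to $X^-$ and $X$ followed by the observation that $(X^-)^+$ is the orthogonal projection of $X$ onto the span of $\calS$, which is norm-nonincreasing. Your spelled-out justification $\E[X\,\Pi_{\calS}X] = \E[(\Pi_{\calS}X)^2] \le \E[X^2]$ is just a slightly more explicit version of the paper's "projects away some terms" step.
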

\begin{proof}
    For the first statement, apply \cref{lem:adjoint} on $X$ and $Y^+$, then use \cref{cor:pm-cancel}.
    For the second statement, apply \cref{lem:adjoint} on $X^-$ and $X$ to get $\E\left[(X^-)^+ X\right] = \E\left[(X^-)^2\right]$. Since $(X^-)^+$ projects away some terms from $X$ by \cref{cor:pm-cancel},
    the left-hand side is upper bounded by $\E\left[X^2\right]$.
\end{proof}

\section{Belief Propagation, AMP, and the Cavity Method}
\label{sec:cavity-method}

With the tree approximation in hand, we describe how to use it
to implement cavity method reasoning about nonlinear iterative algorithms.
\begin{itemize}
    \item In \cref{sec:intro-cavity}, we give background on the cavity method and how it can be used
    to predict the asymptotic trajectory of
    message-passing algorithms.
    \item In \cref{sec:bp-amp}, we prove the asymptotic equivalence of BP and AMP on mean-field models (\cref{thm:bp-amp}).
    The proof precisely follows the
    structure described earlier: we 
    reproduce a folklore physics argument in
    \cref{sec:bp-amp-heuristic} and make it directly
    rigorous in \cref{sec:bp-amp-formal}. 
    \item In \cref{sec:cavity-formal}, we use the same
    technology to prove a fundamental 
    assumption of the
    cavity method: the asymptotic independence of messages
    incoming at a vertex (\cref{thm:cavity-formal}).
    
    \item In \cref{sec:amp}, we give a new proof of the state
    evolution formula for BP/AMP (\cref{thm:amp-state}). 
    
    \item In \cref{sec:montanari}, we reinterpret Montanari's
    algorithm for optimizing spin glass Hamiltonians
    through the lens of the asymptotic tree space.
\end{itemize}

\subsection{Background on the cavity method}
\label{sec:intro-cavity}

Belief Propagation (BP) and Approximate Message Passing (AMP)
are the main class of nonlinear iterative algorithms that
are studied using physical techniques.
BP is a general tool for statistical inference on graphical models which performs exact inference when the underlying graph is a tree.
The behavior of ``loopy BP''
on interaction graphs with cycles is more subtle; the \emph{cavity method} can be used to predict the asymptotic dynamics of loopy BP on mean-field models (i.e. when the underlying graphical model is dense and random).

We first explain the idea behind the cavity method on the example
of the replica-symmetric belief propagation iteration
for the Sherrington--Kirkpatrick (SK) model, which is
the original setting in which the method was conceived
by Mézard, Parisi, and Virasoro~\cite[Chapter V]{mezard1987spinglasstheoryandbeyond}. The goal here is
to estimate the marginals of the following Gibbs
distribution on $x\in \{-1,1\}^n$:
\[
    p(x) \propto \exp\left(\beta x^\top A x + h\sum_{i=1}^n x_i\right)\,,
\]
where $A$ is a random symmetric matrix with i.i.d.
$\calN(0,1/n)$ entries and $\beta,h>0$ are fixed parameters.
We will focus on a particular regime of $(\beta,h)$ known as the replica-symmetric or high temperature region of the SK model.

Let $m_i = \E_{x \sim p}[x_i]$. By isolating a single coordinate $i\in [n]$ and looking at the influence of other coordinates on it,  Mézard, Parisi, and Virasoro derive 
the {\em cavity equations}, which are fixed-point equations approximately satisfied by $m_i$,
\begin{align}
    m_{i\to j} = f\left(\sum_{\substack{k=1\\k\neq j}}^n A_{ik} m_{k\to i}\right)\,,\quad m_i \approx f\left(\sum_{k=1}^n A_{ik} m_{k\to i}\right)\,,\label{eq:cavity-equations}
\end{align}
where $f(x)=\tanh(\beta x+h)$ and $m_{i \to j}$ are new variables. Algorithmically,  we can think of an iterative \emph{belief propagation} algorithm that tries to compute a solution to these equations,
\begin{align}
    m_{i\to j}^{t+1} = f\left(\sum_{\substack{k=1\\k\neq j}}^n A_{ik} m_{k\to i}^t\right)\,,\quad m_i^{t+1} = f\left(\sum_{k=1}^n A_{ik} m_{k\to i}^t\right)\,,\label{eq:bp-sk}
\end{align}
initialized at say $m_{i \to j}^0 = 1$.
This iteration occurs on a set of {\em cavity messages} $m_{i \to j}$ for $i,j \in [n]$
which conceptually
are ``the belief of vertex $i$ about its own value, disregarding $j$''.

The physical techniques predict the asymptotic trajectory of the messages
$m^t_{i \to j}$ and the outputs $m_i^t$ in \cref{eq:bp-sk} with respect to the randomness of the matrix $A$.
They say that $m^t$ will have approximately 
independent and identically distributed entries,
\begin{align}
    m^t_i \sim {f}(Z_t)\,, \qquad \text{where } & Z_t \sim \calN(0, \sig_t^2)\,,\nonumber\\
    &\sigma_1^2 = 1\,,\quad  \sigma_{t+1}^2 = \E f(Z_t)^2\,.\label{eq:state-evolution}
\end{align}
A heuristic replica symmetric cavity approach for proving \cref{eq:state-evolution} would go as follows.
We make an \textbf{independence assumption} that the incoming terms $m_{k \to i}^t$ in the non-backtracking summation $\sum_{k = 1, k \neq j}^n A_{ik}m_{k \to i}^t$
of \cref{eq:bp-sk} are independent, as if the
messages were coming up from disjoint branches of a tree.
By symmetry, the messages are identically distributed.
Then, we appeal to the central limit theorem to deduce
\[ 
    \sum_{\substack{k = 1\\ k \neq j}}^n A_{ik}m_{k \to i}^t \sim \calN\left(0, \E\left[(m_{k \to i}^t)^2\right]\right)\,
.\]
From here, we get that the outgoing message satisfies $m_{i \to j}^t \sim f(Z_t)$ for $Z_t \sim \calN(0, \sigma_t^2)$
with $\sigma_t^2$ defined by the recurrence in \cref{eq:state-evolution}.
Using a similar argument, we get $m_i^t \sim {f}(Z_t)$. 

\cite{mezard1987spinglasstheoryandbeyond} also derived from \cref{eq:cavity-equations} a simpler form of self-consistent equations involving only the marginals themselves, known as the Thouless--Anderson--Palmer equations~\cite{thouless1977solution},
\begin{align}
    m_i \approx f\left(\sum_{\substack{k=1\\k\neq i}}^n A_{ik} m_k - \beta\left(1-\frac 1 n \sum_{k=1}^n m_k^2\right) m_i\right)\,.
    \label{eq:tap}
\end{align}
The subtracted term on the right-hand side in which $m_i$ re-occurs is the \emph{Onsager reaction term}.
In the same way that belief propagation \cref{eq:bp-sk} tries to compute solutions to the cavity equations \cref{eq:cavity-equations}, an \emph{approximate message passing} algorithm can be iterated to compute approximate solutions to \cref{eq:tap},
\begin{align}
    m^{t+1}_i = f\left(\sum_{\substack{k=1\\k \neq i}}^n A_{ik} m^t_k - \beta \left(1-\frac 1 n \sum_{k=1}^n (m_k^t)^2\right) m_i^{t-1}\right)\,.
    \label{eq:amp-sk}
\end{align}
The approximate equivalence between the BP iteration \cref{eq:bp-sk} and the AMP iteration \cref{eq:amp-sk} is a folklore cavity method argument which we elaborate
next.

\subsection{Equivalence between message-passing iterations}
\label{sec:bp-amp}

\paragraph{Belief propagation.} We consider BP iterations
on $A$ of the form
\begin{align}
    \label{eq:bp-memory}
    m^0_{i \to j} = 1\,, \qquad m^{t}_{i \to j} = f_{t}\left(\sum_{\substack{k = 1\\k \neq j}}^n A_{ik}m_{k \to i}^{t-1}, \dots, \sum_{\substack{k = 1\\k \neq j}}^nA_{ik}m_{k \to i}^0,\; m^0_{i \to j}\right)\,,\\
    \nonumber
    m^{t}_i = \widetilde{f}_{t}\left(\sum_{\substack{k = 1}}^nA_{ik}m_{k \to i}^{t-1}, \dots, \sum_{\substack{k = 1}}^nA_{ik}m_{k \to i}^0, \;m^0_{i \to j}\right)\,,
\end{align}
for a sequence of functions $f_t, \widetilde{f}_t: \R^{t+1} \to \R$. \cref{eq:bp-memory} is a generalization of \cref{eq:bp-sk} to iterations ``with memory'' i.e. that can use all the previous messages. At any timestep $t$, the $(m_{i\to j}^t)_{1\le i,j\le n}$ are \textit{cavity messages} that try to compute some information about the $i$-th variable by ignoring the edge between $i$ and $j$, while the $(m_i^t)_{1\le i\le n}$ are the output of the algorithm.

\paragraph{Approximate message passing.} On the other side, we have an \emph{approximate message passing} (AMP) algorithm of the form
\begin{align}
    w^0 = \vec{1}\,, \qquad w^{t+1} &= A f_t(w^t, \dots, w^0) - \sum_{s = 1}^t b_{s,t} f_{s-1}(w^{s-1}, \dots, w^0)\,,\label{eq:amp}\\
    m^{t} &= \widetilde{f}_t(w^t, \dots, w^0)\,,\label{eq:amp-output}
\end{align}
where $b_{s,t}$ is defined to be the scalar quantity
\[b_{s,t}=\frac 1 n \sum_{i=1}^n \frac {\partial f_t}{\partial w^s}(w^{t}_i,\ldots,w^0_{i})\,.\]
One practical advantage 
of AMP compared to BP
is that is has a smaller number of messages to track,
$O(n)$ vs $O(n^2)$.

\begin{theorem}[Equivalence of BP and AMP]\label{thm:bp-amp}
    Let $T\ge 1$ be a constant independent of $n$, $f_t, \widetilde{f}_t:\R^{t+1}\to \R$ for $t\le T$ be a sequence of polynomials independent of $n$, and $A=A(n)$ be a sequence of random matrices satisfying \cref{assump:A-entries}. Generate $m^{t, \textnormal{BP}}$ according to \cref{eq:bp-memory} and $m^{t, \textnormal{AMP}}$
    according to \cref{eq:amp-output}.
    Then 
    \[
        m^{t,\textnormal{AMP}} \eqinf m^{\textnormal{BP}}\,,
    \]
    so in particular, with high probability,
    \begin{align*}
        \norm{m^{t,\textnormal{AMP}} - m^{t, \textnormal{BP}}}_\infty = \widetilde{O}(n^{-1/2})\,.
    \end{align*}
\end{theorem}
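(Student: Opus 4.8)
\emph{Proof plan.} The plan is to show that the BP and AMP iterates have a common \emph{forward-only tree approximation} and then to invoke \cref{lem:eqinf-almostsure}. Define diagram expressions $\widehat w^{0}=Z_{\smallrootpic}$ and $\widehat w^{s}=\big(f_{s-1}(\widehat w^{s-1},\dots,\widehat w^{0})\big)^{+}$ for $s\ge 1$, where each polynomial $f_{s-1}$ is expanded in the tree basis by the Hermite-product rule of \cref{def:asymptotic-state-explicit} and the $+$ operator keeps \emph{only the forward step}; then set $\widehat m^{t}=\widetilde f_t(\widehat w^{t},\dots,\widehat w^{0})$. (Up to the routine alignment of the indexing and initialization conventions of \cref{eq:bp-memory} and \cref{eq:amp}, these are the natural forward-only versions of both the BP cavity recursion and the AMP recursion.) It then suffices to prove $m^{t,\textnormal{AMP}}\eqinf\widehat m^{t}$ and $m^{t,\textnormal{BP}}\eqinf\widehat m^{t}$; since $\eqinf$ is a congruence for the operations involved (\cref{lem:eqinf-properties}), each reduces to analyzing a single multiplication by $A$.

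\emph{BP side: non-backtracking is forward-only.} I would prove by induction on $t$ that every cavity message satisfies $m^{t}_{k\to i}\eqinf f_t(\widehat w^{t},\dots,\widehat w^{0})$ \emph{uniformly in the parameter} $i$, together with the structural invariant that no combinatorially-order-$1$ diagram in the expansion of $m^{t}_{k\to i}$ has an edge joining its root (the $k$-vertex) to the vertex labeled $i$. The base case is trivial. For the step, consider the pre-activation $\sum_{l\ne i}A_{kl}m^{t-1}_{l\to k}$. First, it differs from the full sum $\sum_{l}A_{kl}m^{t-1}_{l\to k}$ only by the summand $A_{ki}m^{t-1}_{i\to k}$; by the invariant this product has no $A_{ki}^{2}$ factor, so it is a sum of $A$-monomials in which both $k$ and $i$ appear as pinned vertices over a treelike graph, of magnitude $O(n^{(|V|-2-|E|+|I|)/2})$, hence negligible (the two-pinned-vertex version of \cref{lem:improper-magnitude-simplified}). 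Second, by the matrix-multiplication rule of \cref{sec:calculus-rules}, $\sum_{l}A_{kl}m^{t-1}_{l\to k}$ splits into a forward and a backward step, but the backward step arises precisely from the monomials in which the new vertex $k$ coincides with a vertex of $m^{t-1}_{l\to k}$ adjacent to its root $l$ — impossible under the invariant — so only the forward step survives and $\sum_{l}A_{kl}m^{t-1}_{l\to k}\eqinf\widehat w^{t}$. Feeding these pre-activations into $f_t$ coordinatewise (\cref{lem:eqinf-properties}, \cref{fact:hermite-product}) gives $m^{t}_{k\to i}\eqinf f_t(\widehat w^{t},\dots,\widehat w^{0})$, and the invariant is restored because the exclusion $l\ne i$ one level down ensures that the new root's subtrees never touch $i$. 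Running the same analysis on the full sums feeding the output equation yields $m^{t,\textnormal{BP}}_i=\widetilde f_t\big(\sum_{l}A_{il}m^{t-1}_{l\to i},\dots\big)\eqinf\widetilde f_t(\widehat w^{t},\dots,\widehat w^{0})=\widehat m^{t}$.

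\emph{AMP side: Onsager is the backtracking correction.} I would prove by induction that $w^{t}\eqinf\widehat w^{t}$, so that $m^{t,\textnormal{AMP}}=\widetilde f_t(w^{t},\dots,w^{0})\eqinf\widetilde f_t(\widehat w^{t},\dots,\widehat w^{0})=\widehat m^{t}$. Write $\widehat v=f_{t-1}(\widehat w^{t-1},\dots,\widehat w^{0})$, with asymptotic state $V_{t-1}\in\Om$. The calculus rule gives $A f_{t-1}(w^{t-1},\dots,w^{0})\eqinf\widehat v^{+}+\widehat v^{-}$, so the claim reduces to $\sum_{s}b_{s,t-1}f_{s-1}(w^{s-1},\dots,w^{0})\eqinf\widehat v^{-}$. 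Since each $\partial f_{t-1}/\partial w^{s}$ is a polynomial, \cref{lem:scalar-empirical-expectation} gives $b_{s,t-1}\eqinf\E\big[\tfrac{\partial f_{t-1}}{\partial w^{s}}(W_{t-1},\dots,W_{0})\big]$, where $W_s$ is the asymptotic state of $w^{s}$ and, by the inductive hypothesis, $W_s=V_{s-1}^{+}$ with $V_s$ the asymptotic state of $f_s(w^s,\dots,w^0)$; hence it suffices to establish, in $\Om$, the identity $V_{t-1}^{-}=\sum_{s}\E\big[\tfrac{\partial f_{t-1}}{\partial w^{s}}(W_{t-1},\dots,W_{0})\big]V_{s-1}$. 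Testing against a basis element $Z^{\infty}_{\rho}$, $\rho\in\calT$: by adjointness, $\E[V_{t-1}^{-}Z^{\infty}_{\rho}]=\E[V_{t-1}Z^{\infty}_{\rho^{+}}]$ (\cref{lem:adjoint}), and since $\rho^{+}\in\calS$ is one of the Gaussian generators, Gaussian integration by parts (\cref{lem:gaussian-ipp}), the chain rule, $W_s=V_{s-1}^{+}$, and $\E[X^{+}Y^{+}]=\E[XY]$ (\cref{cor:adjoint}) together rewrite $\E[V_{t-1}Z^{\infty}_{\rho^{+}}]$ as $\sum_{s}\E\big[\tfrac{\partial f_{t-1}}{\partial w^{s}}(W_{t-1},\dots,W_{0})\big]\E[V_{s-1}Z^{\infty}_{\rho}]$, which is the corresponding coefficient of the right-hand side. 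Matching coefficients in the orthogonal basis $(Z^{\infty}_{\tau})_{\tau\in\calT}$ proves the identity, hence $m^{t,\textnormal{AMP}}\eqinf\widehat m^{t}$; combining with the BP side and \cref{lem:eqinf-almostsure} finishes the proof.

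\emph{Main obstacle.} The delicate part is the BP induction, where one must keep apart the two roles of the non-backtracking exclusions: removing the top-level summand is harmless — it perturbs the answer only by a negligible two-pinned-vertex diagram — whereas the exclusion one level down inside $m^{t-1}_{l\to k}$ is a $\Theta(1)$ effect, since those dropped monomials are amplified by the subsequent multiplication by $A_{kl}$ (an $A_{kl}^{2}\approx\tfrac1n$ collapse) and are exactly what would otherwise produce the backward step. Making this precise calls for a careful two-rooted diagram bookkeeping for the cavity messages (one root free, one pinned) and a robust propagation of the invariant ``the root is never adjacent to the pinned vertex''. The AMP side is more routine; its only non-combinatorial ingredient is the Gaussian-integration-by-parts identity that pins the Onsager coefficients to the backward operator.
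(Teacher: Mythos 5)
Your AMP half is sound: replacing $b_{s,t}$ by its deterministic limit via \cref{lem:scalar-empirical-expectation} and identifying the Onsager term with the backward step through Gaussian integration by parts and adjointness is exactly the content of \cref{lem:taylor-expand} and the proof of \cref{thm:amp-state}. Your overall route also differs from the paper's: you push both iterations onto the forward-only tree recursion, whereas the paper never proves a forward-only description of BP; it rigorizes the heuristic Taylor-expansion derivation of \cref{sec:bp-amp-heuristic} line by line in the cavity-diagram basis, showing the BP field satisfies the AMP recursion (Onsager term included), and only afterwards derives the forward-only picture in \cref{thm:amp-state}.

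The genuine gap is in your BP induction, at precisely the point you flag as the main obstacle. Your step (1) declares the dropped summand $A_{ki}m^{t-1}_{i\to k}$ negligible on the strength of an entrywise bound $O(n^{-1/2})$ for two-pinned diagrams. But entrywise smallness is not a notion of negligibility that survives the next BP operation: the message $m^{t}_{k\to i}$ is subsequently multiplied by $A_{ik}$ and summed over $k$, and under that operation the dropped term is amplified by the $A_{ik}^2\approx\frac1n$ collapse into an order-one contribution. Concretely, in the paper's conversion-based notion of cavity negligibility (the one for which \cref{fact:eqinf-cavity} holds), converting $A_{ki}m^{t-1}_{i\to k}$ produces a hanging double edge at the root and yields exactly $f_{t-1}(\widehat w^{\,t-1},\dots,\widehat w^{\,0})=(\widehat w^{\,t})^-$ --- the backward step --- which is not negligible. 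Correspondingly, your step (2) misattributes the mechanism: in the cavity representation the coincidence of the outer vertex with a tree vertex is forbidden by injectivity, not by your edge invariant, while the real source of a backward step at the next level is precisely the root-to-pinned-vertex edge carried by the $l=i$ summand you just discarded. The two misstatements cancel, so your conclusion is true, but the argument as written does not establish it; taken literally, it would equally license replacing the cavity messages by their $i$-oblivious approximations $f_t(\widehat w^{\,t},\dots)$ before the next multiplication by $A$, which would produce the forward \emph{plus} backward dynamics of the Onsager-free iteration --- a false conclusion. The fix is to run the induction in a conversion-stable sense (as the paper's cavity-diagram definitions do), in which the dropped summand is tracked as the non-negligible term whose removal by the non-backtracking exclusion is exactly what deletes the backward step.
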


\subsubsection{Heuristic derivation of \cref{thm:bp-amp}}
\label{sec:bp-amp-heuristic}

The equivalence between BP and AMP is folklore
in the statistical physics community, thanks to
the following simple cavity-based reasoning. It 
can be found for example
in the seminal paper \cite[Appendix A]{donoho2009message}
or the survey \cite[Section IV.E]{zdeborova2016statistical}.

We start by rewriting the BP iteration, letting $w^0=\vec{1}$ and $w_i^{t+1}=\sum_{k=1}^n A_{ik} m^t_{k\to i}$. The output of BP is computed as
\begin{align*}
    m^{t+1}_i = \widetilde{f}_{t+1}\left(w_i^{t+1}, \dots, w_i^0\right)\,.
\end{align*}
Hence it suffices to show that $w^t$ asymptotically follows the AMP iteration \cref{eq:amp}. First, \cref{eq:bp-memory} can be rewritten
\[
    m^{t+1}_{i\to j} = f_{t+1}\left(w_i^{t+1} - A_{ij} m^t_{j\to i}, \ldots, w^1_{i} - A_{ij} m^0_{j\to i},w^0_i\right)\,.
\]
Given that the entries $A_{ij}$ are on the scale of $1/\sqrt n$, which we expect to be much smaller than the magnitude of the messages, we perform a first-order Taylor approximation (the partial derivatives
are with respect to the coordinates of $f_{t+1}$ and the last coordinate is ignored because $w^0_i$ is constant):
\begin{align*}
    m^{t+1}_{i\to j}&\approx f_{t+1}\left(w^{t+1}_i, \sdots w^1_i, w_i^0\right) - A_{ij}\sum_{s = 1}^{t+1} m_{j \to i}^{s-1} \pd{f_{t+1}}{w^s}\left(w^{t+1}_i, \sdots w^1_i, w_i^0\right)\label{eq:approx1}\tag{$\ast$}\,.
\end{align*}
Plugging this approximation in the definition of $w^{t+1}_i$,
\begingroup
\allowdisplaybreaks
\begin{align*}
    w^{t+1}_i &\approx \sum_{k =1}^n A_{ik} f_t(w_k^t, \dots, w_k^0) - \sum_{k=1}^n A_{ik}^2 \sum_{s=1}^{t}   m^{s-1}_{i\to k} \pd{f_{t}}{w^s}(w_k^t, \dots, w_k^0) &\\
    &\approx \sum_{k=1}^n A_{ik} f_t(w_k^t, \dots, w_k^0) - \sum_{k=1}^n \frac 1 n \sum_{s=1}^{t}   f_{s-1}(w_i^{s-1}, \dots, w_i^0) \pd{f_t}{w^s}(w_k^t, \dots, w_k^0) & \label{eq:approx2}\tag{$\ast\ast$}\\
    &= \sum_{k=1}^n A_{ik} f_t(w_k^t, \dots, w_k^0) -  \sum_{s=1}^{t} b_{s,t}f_{s-1}(w_i^{s-1}, \dots, w_i^0)\,.
\end{align*}
\endgroup
This shows that $w^{t+1}_i$ approximately satisfies the AMP recursion \cref{eq:amp}, as desired. 

The intuition behind \cref{eq:approx2} is that because we are summing over $k$,
we may expand $A_{ik}^2$ and $m_{i \to k}^{s-1}$ on the first order and replace them by averages which do not depend on $k$:
\begin{align*}
    A_{ik}^2 &\approx \E\left[A_{ik}^2\right] = \frac{1}{n}\,,\\
    m_{i\to k}^{s-1}&= f_{s-1}\left(w_i^{s-1} - A_{ik} m^t_{k\to i}, \sdots w^1_{i} - A_{ik} m^0_{k\to i},w^0_i\right)\\
    &\approx f_{s-1}\left(w_i^{s-1}, \sdots w^0_i\right)\,.
\end{align*}

\subsubsection{Diagram proof of \cref{thm:bp-amp}}
\label{sec:bp-amp-formal}

In fact, the previous heuristic argument can be
made directly rigorous by working with the tree
approximation. It suffices to justify
\cref{eq:approx1} and \cref{eq:approx2} 
in order to prove \cref{thm:bp-amp}.

The BP iteration takes place on $m^t \in \R^{n^2}$ instead of $\R^n$
which is not captured by our previous definitions.
Most of the work below is setting up definitions to fit this iteration into our framework.
We define diagrams
for vectors $x \in \R^{n(n-1)}$ whose $(i,j)$ entry is written $x_{i \to j}$
(for simplicity, we assume $A_{ii} = 0$ so that
the messages $m_{i \to i}^t$ can be ignored).

\begin{definition}[Cavity diagrams]
    A cavity diagram is an unlabeled undirected graph $\alpha=(V(\alpha),E(\alpha))$ with two distinct, ordered root vertices $\ijrootpic$. No vertices may be isolated except for the roots.

    For any cavity diagram $\alpha$, we define $Z_\alpha\in \R^{n(n-1)}$ by
    \[
        Z_{\al, i \to j} = \sum_{\substack{\varphi \colon V(\al) \to [n]\\\textnormal{$\varphi$ injective} \\ \ph(\smallijrootpic) = (i,j)}} \prod_{\{u, v\} \in E(\al)} A_{\ph(u),\ph(v)}\,,
    \]
    for any distinct $i,j\in [n]$.
\end{definition}

Below is the representation of the first iterate of \cref{eq:bp-memory} with cavity diagrams. In the pictures, we draw an arrow from the first root to the second root to indicate the order.
If a (multi)edge exists in the graph between the roots, then the arrow
is on the edge; otherwise we use a dashed line to indicate that there is no edge.
\begingroup
\allowdisplaybreaks
\begin{align*}
    m_{i \to j}^0 &= \vcenter{\hbox{\includegraphics[height=3ex]{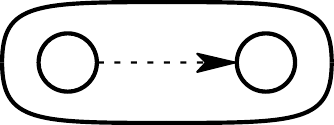}}}\\
    \sum_{\substack{k = 1\\k \neq j}}^n A_{ik} m_{k \to i}^0 &= \vcenter{\hbox{\includegraphics[height=3ex]{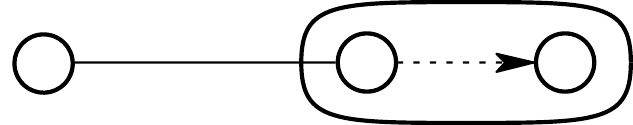}}}\\
    \sum_{k = 1}^n A_{ik}m_{k \to i}^0 &= \vcenter{\hbox{\includegraphics[height=3ex]{images-pdf/ij-fwd.pdf}}} + \vcenter{\hbox{\includegraphics[height=3ex]{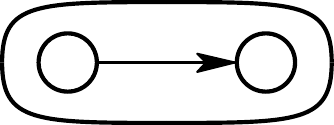}}}
\end{align*}

Multiplying $A_{ik}m_{k \to i}^t$ creates a new edge between
$k$ and $i$ in $m_{k \to i}^t\,$. Summing over $k$ ``unroots'' the first root.
A case
distinction needs to be made in the summation depending on if $k = i$ or $k = j$ or $k \not \in \{i,j\}$.
The case $k = i$ is ignored assuming that $A_{ii} = 0$.
The case $k = j$ yields the ``backward step'' while the remaining case $k \neq j$ is the ``forward step''.

To apply $f_1$, we need to multiply $i \to j$ diagrams componentwise, which is achieved by fixing/merging the roots $i,j$
and summing over the part outside the roots.
For some coefficients $c_0, c_1, c_2, \dots$ we have\footnote{The exact values of the coefficients $c_i$ are not necessary to compute.}
\begin{align*}
    m_{i \to j}^1 = f_1\left(\sum_{\substack{k = 1\\k \neq j}}^n A_{ik} m_{k \to i}^0\right) &= c_0\;\vcenter{\hbox{\includegraphics[height=3ex]{images-pdf/ij-init.pdf}}} + c_1 \vcenter{\hbox{\includegraphics[height=3ex]{images-pdf/ij-fwd.pdf}}} + c_2 \vcenter{\hbox{\includegraphics[height=8.5ex]{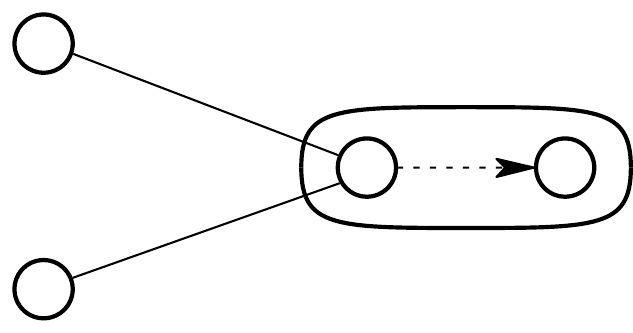}}} + \cdots
\end{align*}

The output $m^{t+1}_i$ uses the non-cavity quantities $\sum_{k=1}^n A_{ik} m_{k\to i}^t$. The cavity diagrams are converted back to the usual diagram basis as follows.

\begin{claim}[Conversion of cavity diagrams]\label{claim:cavity-to-regular}
    For any cavity diagram $\alpha$ and $i\in [n]$,
    \[
        \sum_{j=1}^n A_{ij} Z_{\alpha,j\to i} = Z_{\alpha',i}\,,
    \]
    where $\alpha'$ is the diagram (in the sense of \cref{def:diagram}) obtained from $\alpha$ by adding an edge between the two roots of $\alpha$ and unrooting the first root.
\end{claim}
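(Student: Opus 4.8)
The plan is to prove this identity by unfolding the definitions on both sides; it will reduce to a purely formal rearrangement of sums. Write $r_1$ for the first root of the cavity diagram $\alpha$ and $r_2$ for the second root. By construction $V(\alpha') = V(\alpha)$, the single root of $\alpha'$ is $r_2$, and $E(\alpha') = E(\alpha) \sqcup \{\{r_1, r_2\}\}$ as multisets (so the multiplicity of $\{r_1, r_2\}$ is incremented by one, whether or not that edge was already present). Note also that $r_1$ is non-isolated in $\alpha'$ thanks to the new edge, so $\alpha' \in \calA$ and $Z_{\alpha', i}$ is well-defined.

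First I would expand the left-hand side using the definition of $Z_\alpha$ for cavity diagrams:
\[
    \sum_{j=1}^n A_{ij}\, Z_{\alpha, j \to i} \;=\; \sum_{j=1}^n A_{ij} \sum_{\substack{\varphi : V(\alpha) \to [n] \\ \varphi \textnormal{ injective} \\ \varphi(r_1) = j,\ \varphi(r_2) = i}} \;\prod_{\{u,v\} \in E(\alpha)} A_{\varphi(u)\varphi(v)}\,.
\]
The next step is to absorb the outer sum over $j$ into the inner sum: choosing $j \in [n]$ together with an injective $\varphi$ satisfying $\varphi(r_1) = j$ is equivalent to choosing an injective $\varphi$ alone and setting $j := \varphi(r_1)$. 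The boundary case $j = i$ contributes nothing, since it would force $\varphi(r_1) = \varphi(r_2)$, which is impossible for an injective $\varphi$ as $r_1 \ne r_2$ (equivalently $Z_{\alpha, i \to i} = 0$; the standing assumption $A_{ii} = 0$ also annihilates this term). After this merge, the right-hand side is a single sum over injective $\varphi : V(\alpha) \to [n]$ with $\varphi(r_2) = i$.

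Finally, since $A$ is symmetric, $A_{ij} = A_{\varphi(r_2)\varphi(r_1)} = A_{\varphi(r_1)\varphi(r_2)}$, so the stray factor $A_{ij}$ is exactly the weight of the extra edge $\{r_1, r_2\}$ of $\alpha'$. Hence $A_{ij} \prod_{\{u,v\} \in E(\alpha)} A_{\varphi(u)\varphi(v)} = \prod_{\{u,v\} \in E(\alpha')} A_{\varphi(u)\varphi(v)}$, and the constraint $\varphi(r_2) = i$ is precisely the root constraint in the definition of $Z_{\alpha', i}$. Putting this together yields $\sum_{j=1}^n A_{ij} Z_{\alpha, j \to i} = Z_{\alpha', i}$.

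There is essentially no substantive obstacle: the statement is a bookkeeping identity, and the computation is the proof. The only two points that merit a sentence of care are that the diagonal term $j = i$ drops out of the sum, and that when $\alpha$ already contains an edge between its two roots the diagram $\alpha'$ acquires a genuine multiedge there; both are addressed above, and neither affects the argument.
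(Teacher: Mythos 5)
Your proof is correct, and it is exactly the direct unfolding of the definitions that the paper intends (the paper states this claim without proof, treating it as a bookkeeping identity). Your handling of the two minor points — the vanishing $j=i$ term and the possible multiedge between the roots in $\alpha'$ — is also right.
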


Since the final output is computed by converting all cavity diagrams back to regular diagrams using the previous claim, the definition of combinatorial negligibility and the $\eqinf$ notation can be extended to cavity diagrams.
We make the following definitions.

\begin{definition}
    A cavity diagram $\alpha$ is combinatorially negligible if the diagram $\alpha'$ obtained in \cref{claim:cavity-to-regular} is combinatorially negligible. We naturally extend the $\eqinf$ notation to cavity diagrams as in \cref{def:asymptotic-equality}.
\end{definition}

\begin{claim}\label{fact:eqinf-cavity}
    Let $x$ and $x'$ be in the span of the cavity diagrams such that $x\eqinf x'$.
    If we let
    \[
        y_{i\to j} = \sum_{\substack{k=1\\k\neq j}}^n A_{ik} x_{k\to i}\,,\quad y'_{i\to j} = \sum_{\substack{k=1\\k\neq j}}^n A_{ik} x'_{k\to i}\,,
    \]
    then $y\eqinf y'$.

    If $x_1, \ldots, x_t, x'_1, \ldots, x'_t$ are in the span of cavity diagrams, $x_i\eqinf x'_i$ for all $i\in [n]$, and $f:\R^t\to \R$ is a polynomial function applied componentwise, then
    \[
        f(x_1, \ldots, x_t)\eqinf f(x'_1, \ldots, x'_t)\,.
    \]
\end{claim}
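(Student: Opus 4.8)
Both parts are cavity-diagram versions of \cref{lem:eqinf-properties}, and I will prove them by the same combinatorial bookkeeping. First observe, via \cref{claim:cavity-to-regular}, that the combinatorial order of a cavity-diagram term $a_n Z_\alpha$ with $a_n = \Theta(n^{-k})$ is governed by comparing $|V(\alpha)| - 2 - |E(\alpha)| + |I(\alpha)|$ to $2k$: indeed $\alpha'$ has the same number of vertices as $\alpha$, one more edge, and the same value of $|I|$ (the formerly-first root of $\alpha$ acquires a multiplicity-one edge to the formerly-second root, so it belongs neither to $I(\alpha)$ nor to $I(\alpha')$), so this quantity equals $|V(\alpha')| - 1 - |E(\alpha')| + |I(\alpha')|$. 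Thus \cref{lem:improper-magnitude-simplified} extends to cavity diagrams with this exponent, the ``$-2$'' reflecting the two roots. Since $\eqinf$ is defined via linear combinations of negligible terms, it then suffices to understand the effect of each operation on a single combinatorially negligible term $a_n Z_\alpha$ (and, for componentwise products, how it interacts with other diagrams of combinatorial order at most $1$).

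Consider first the forward summation $x \mapsto y$, $y_{i\to j} = \sum_{k\neq j} A_{ik} x_{k\to i}$. Expanding $\sum_{k\neq j} A_{ik} Z_{\alpha, k\to i}$ according to the label of the new second root $j$ gives $Z_\alpha \mapsto Z_{\beta_{\mathrm{gen}}} + \sum_u Z_{\beta_u}$: in every term we add an edge between the two roots of $\alpha$ (raising multiplicity if one is already present) and unroot the first root, and then $\beta_{\mathrm{gen}}$ attaches a fresh pendant vertex as the new second root, while $\beta_u$ instead places the new second root on an internal vertex $u$ of $\alpha$ (the only collisions possible, since $j$ differs from the labels of both roots of $\alpha$). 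The crux, which I expect to be the main obstacle, is to trace these back through \cref{claim:cavity-to-regular}; a direct check gives
\[
    \beta_{\mathrm{gen}}' = (\alpha')^+ \,,
\]
so $|V|$ and $|E|$ each increase by one while $|I|$ is unchanged (the fresh vertex, and the former root of $\alpha'$ that is demoted in $(\alpha')^+$, both carry a multiplicity-one edge). Hence $a_n Z_{\beta_{\mathrm{gen}}}$ has exactly the same combinatorial order as $a_n Z_\alpha$, and in particular it is negligible. For a collision term, $\beta_u'$ is $\alpha'$ with one extra multiplicity-one edge, re-rooted at $u$: now $|V|$ is unchanged, $|E|$ increases by one, and $|I|$ cannot increase, so $a_n Z_{\beta_u}$ is strictly more negligible than $a_n Z_\alpha$. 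Summing over $\alpha$ and the finitely many $\beta$ proves $y \eqinf y'$.

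For the componentwise polynomial, telescoping reduces the statement to: if $a_n Z_{\alpha_1}$ is negligible and $Z_{\alpha_2}, \dots, Z_{\alpha_q}$ have combinatorial order at most $1$, then $a_n$ times their componentwise product is negligible. The componentwise product of cavity diagrams is formed exactly as for ordinary diagrams --- merge the roots (here both roots, consistently across all factors) and sum over all partial identifications of the remaining vertices, keeping vertices of the same factor distinct --- and the combinatorial-order estimate for each resulting diagram is the same computation as in \cref{lem:eqinf-properties}: a $q$-fold identification removes $q-1$ vertices, $|E|$ is the multiset union of the edge sets, and $|I|$ is bounded as there; the second root, merged alongside the first, only removes one more vertex per factor. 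Hence a single negligible factor forces the product to be negligible, giving $f(x_1,\dots,x_t) \eqinf f(x_1',\dots,x_t')$. The only delicate point throughout --- inherited from \cref{lem:eqinf-properties} --- is that merging vertices can turn several multiplicity-one edges into one multiplicity-$\geq 2$ edge and so enlarge $I$, and one must check that the compensating loss in $|V|$ always dominates; this is already carried out in \cref{sec:omitted-combneg}, and once the dictionary ``forward summation $=$ the $+$ operation after the conversion of \cref{claim:cavity-to-regular}, modulo strictly-more-negligible collisions'' is in place, the remainder is routine.
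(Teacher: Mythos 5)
The paper's own proof of this claim is a one-line appeal to \cref{lem:eqinf-properties}, so your attempt to actually carry out the cavity-level bookkeeping is more explicit than the paper; the forward-summation half of what you do is sound in substance, but two sub-claims in it are wrong. The dictionary $|I(\alpha')|=|I(\alpha)|$ (hence ``cavity order $=|V(\alpha)|-2-|E(\alpha)|+|I(\alpha)|$'') fails whenever $\alpha$ already has an edge between its two roots, since the demoted first root can then enter $I(\alpha')$; and in the collision terms ``$|I|$ cannot increase'' is false for the same reason. The correct bound is $|I(\beta_u')|\le|I(\alpha')|+1$, which together with $|E|$ increasing by one still shows the combinatorial order does not increase, so negligibility is preserved --- your ``strictly more negligible'' is not, but nothing essential is lost there.

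The genuine gap is in the product half, exactly at ``the combinatorial-order estimate for each resulting diagram is the same computation as in \cref{lem:eqinf-properties}.'' It is not: cavity negligibility is measured on the converted diagram of \cref{claim:cavity-to-regular}, which appends a \emph{single} root--root edge and demotes the merged first root, and in a product this appended edge and demotion are shared by all factors, whereas each factor's negligibility was budgeted for its own. The order of the converted product can exceed the sum of the factors' orders by $1$, and this is enough to break the statement. Concretely, let $\alpha_1$ be the cavity diagram with one edge between the two roots plus one pendant edge at the first root, and $\alpha_2$ the cavity diagram with a single pendant edge at the first root. Then $Z_{\alpha_1}$ is cavity-negligible (its conversion $\sum_{k,u}A_{ik}^2A_{ku}$ has $|V|-1-|E|+|I|=-1$) and $Z_{\alpha_2}$ has combinatorial order $1$, yet the intersection term of $Z_{\alpha_1}\odot Z_{\alpha_2}$ identifying the two pendant vertices equals $A_{ij}\sum_{u}A_{iu}^2$, whose conversion $\sum_{k}A_{ik}^2\sum_{u}A_{ku}^2\to 1$ has order exactly $1$. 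So ``a single negligible factor forces the product to be negligible'' fails for cavity diagrams; worse, taking $x_1=Z_{\alpha_2}+Z_{\alpha_1}$, $x_1'=Z_{\alpha_2}$ and $f(z)=z^2$, the conversion of $f(x_1)-f(x_1')$ tends to $2$ in every coordinate while any finite sum of cavity-negligible terms has conversion tending to $0$, so the product statement cannot hold for arbitrary cavity spans. Any correct argument must use a structural restriction on the diagrams in play --- e.g.\ that none carries an edge between its two roots, which is what holds for the messages along the BP iteration (\cref{lem:mij-structure}) and which eliminates the $+1$ gain because the appended root--root edge then stays at multiplicity one --- or must handle root--root edges by a separate multiplicity argument as the paper does inside its two approximation lemmas. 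Your write-up does not identify this issue, and the step as written would fail.
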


\cref{fact:eqinf-cavity} follows directly from \cref{lem:eqinf-properties}.

This completes the diagrammatic description of the belief
propagation algorithm.
We are now ready to rigorously justify the approximations made during the heuristic argument.

\begin{lemma}[\cref{eq:approx1}]
    \[
        m^{t}_{i\to j} \eqinf f_{t}\left(w^{t}_i, \ldots, w_i^0\right) - A_{ij}\sum_{s = 1}^{t} m_{j \to i}^{s-1} \pd{f_{t}}{w^s}\left(w^{t}_i, \ldots, w_i^0\right)\,.
    \]
\end{lemma}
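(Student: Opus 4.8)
The plan is to run an \emph{exact} Taylor expansion of the polynomial $f_t$ and then argue that every term beyond first order is combinatorially negligible, because it forces a high-multiplicity edge between the two roots $i$ and $j$.

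First I would record the algebraic identity $\sum_{k\neq j}A_{ik}m^{s-1}_{k\to i} = w^s_i - A_{ij}m^{s-1}_{j\to i}$, where $w^s_i := \sum_{k=1}^n A_{ik}m^{s-1}_{k\to i}$ and $w^0_i := 1$, so that the BP update reads $m^t_{i\to j} = f_t(w^t_i - A_{ij}m^{t-1}_{j\to i},\sdots w^1_i - A_{ij}m^0_{j\to i}, w^0_i)$. Since $f_t$ is a polynomial of constant degree $d$, its multivariate Taylor expansion around $(w^t_i,\ldots,w^0_i)$ terminates at order $d$ and is therefore an exact identity of diagram expressions (this is the only place the polynomial hypothesis on $f_t$ is really used — there is no remainder term to control):
\[
m^t_{i\to j} = \sum_{l=0}^{d}\frac{(-A_{ij})^l}{l!}\sum_{1\le s_1,\ldots,s_l\le t}\Bigl(\prod_{p=1}^l m^{s_p-1}_{j\to i}\Bigr)\,\frac{\partial^l f_t}{\partial w^{s_1}\cdots\partial w^{s_l}}(w^t_i,\ldots,w^0_i)\,.
\]
The $l=0$ term is $f_t(w^t_i,\ldots,w^0_i)$ and the $l=1$ term is exactly $-A_{ij}\sum_{s=1}^{t}m^{s-1}_{j\to i}\frac{\partial f_t}{\partial w^s}(w^t_i,\ldots,w^0_i)$, so everything reduces to showing that the sum of the $l\ge 2$ terms is $\eqinf 0$. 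Implicit here is that $m^s_{i\to j}$ and $w^s_i$ admit cavity/vector diagram expansions (by induction on $t$ using \cref{claim:cavity-to-regular}) and that this expansion and rearrangement respect $\eqinf$ (\cref{fact:eqinf-cavity}, itself coming from \cref{lem:eqinf-properties}).

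To kill the $l\ge 2$ terms, I would fix $l\ge2$ and take any diagram $Z_\beta$ occurring in the expansion of the corresponding term. Writing out $(-A_{ij})^l$, each cavity message $m^{s_p-1}_{j\to i}$, and the derivative $\frac{\partial^l f_t}{\partial w^{s_1}\cdots\partial w^{s_l}}(w^t_i,\ldots,w^0_i)$ (a vector-diagram expression rooted at the $i$-vertex) as diagrams and multiplying componentwise merges all of them at the two roots $i,j$; the factor $A_{ij}^l$ then deposits $l$ parallel edges on the pair $\{i,j\}$, so $\beta$ has multiplicity at least $l\ge 2$ on the root edge. Passing, per the definition of combinatorial negligibility for cavity diagrams, to the regular diagram $\beta'$ obtained by adding one more $\{i,j\}$ edge and unrooting $i$, the $\{i,j\}$ multiedge of $\beta'$ has multiplicity at least $l+1\ge 3$. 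A diagram with an edge of multiplicity $\ge 3$ is combinatorially negligible, since this violates condition (i) of \cref{lem:connected-nonnegligible} (and likewise its disconnected/scalar analogue \cref{lem:scalar-nonnegligible}); equivalently it follows directly from the magnitude bound of \cref{lem:improper-magnitude-simplified}. Hence every diagram in the $l\ge 2$ part is negligible, that part is $\eqinf 0$, and the lemma follows; reading off the $l=0,1$ terms gives exactly the claimed right-hand side, so this is the rigorous incarnation of the heuristic step \cref{eq:approx1}.

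The main obstacle is not any single estimate but the bookkeeping: one has to set up the cavity-diagram multiplication so that the componentwise products and the $i$-rooted derivative terms genuinely merge at the two roots, and then confirm that every error monomial really does carry $A_{ij}$ to a power $\ge 2$. Once that is in place the conclusion is immediate — a root edge of multiplicity $\ge 3$ after the cavity-to-regular conversion is precisely the configuration excluded by the classification of non-negligible diagrams — and in particular one never needs to compute the coefficients of the surviving terms, consistent with the high-level cavity-method style of the argument.
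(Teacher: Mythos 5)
Your proposal is correct and follows essentially the same route as the paper: an exact Taylor expansion of the polynomial $f_t$, with every term of order $\ge 2$ killed because the factor $A_{ij}^l$ forces at least two edges between the roots, which becomes a multiplicity-$\ge 3$ edge after the unrooting conversion of \cref{claim:cavity-to-regular} and is therefore negligible by \cref{lem:connected-nonnegligible}. Your write-up merely makes the Taylor formula and the bookkeeping via \cref{fact:eqinf-cavity} more explicit than the paper's short argument.
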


\begin{proof}
    Since $f_t$ is a polynomial, it has an exact Taylor expansion. The terms of degree higher than $1$ in the Taylor expansion create at least $2$ edges between the roots $i$ and $j$. All cavity diagrams with $2$ edges between the roots are combinatorially negligible because the unrooting operation of \cref{claim:cavity-to-regular} adds one more edge between $i$ and $j$, and diagrams with multiedges of multiplicity $>2$ are combinatorially negligible (\cref{lem:connected-nonnegligible}).
\end{proof}

\begin{lemma}[\cref{eq:approx2}]
    \[
        \sum_{k=1}^n A_{ik}^2 m_{i\to k}^{s-1} \pd{f_t}{w^s}(w_k^t, \dots, w_k^0) \eqinf \frac 1 n f_{s-1}(w^{s-1}_i,\ldots,w^0_i) \sum_{k=1}^n \pd{f_t}{w^s}(w_k^t, \dots, w_k^0)\,.
    \]
\end{lemma}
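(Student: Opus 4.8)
The plan is to justify the two heuristic substitutions $m_{i\to k}^{s-1}\approx f_{s-1}(w_i^{s-1},\ldots,w_i^0)$ and $A_{ik}^2\approx\tfrac1n$ one at a time, tracking that each discarded piece is combinatorially negligible. Abbreviate $g_k := \pd{f_t}{w^s}(w_k^t,\ldots,w_k^0)$, a polynomial in the diagram expressions $w_k^t,\ldots,w_k^0$ (all rooted at $k$), and write $\mathrm{LHS} = \sum_{k=1}^n A_{ik}^2\, m_{i\to k}^{s-1}\, g_k$.

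\emph{Step 1: remove the cavity correction inside $m_{i\to k}^{s-1}$.} Rewriting the BP recursion \cref{eq:bp-memory} as in \cref{sec:bp-amp-heuristic},
\[
m_{i\to k}^{s-1} = f_{s-1}\bigl(w_i^{s-1} - A_{ik}m_{k\to i}^{s-2},\,\ldots,\,w_i^1 - A_{ik}m_{k\to i}^0,\, w_i^0\bigr)\,.
\]
Since $f_{s-1}$ is a polynomial, expanding gives $m_{i\to k}^{s-1} = f_{s-1}(w_i^{s-1},\ldots,w_i^0) + R_{i\to k}$ where every monomial of $R_{i\to k}$ carries at least one explicit factor $A_{ik}$, so every monomial of $A_{ik}^2 R_{i\to k}$ carries $A_{ik}^3$. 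Consequently every diagram produced by expanding $\sum_k A_{ik}^2 R_{i\to k}\, g_k$ has an edge of multiplicity at least $3$ between $i$ and $k$, and is therefore combinatorially negligible by \cref{lem:connected-nonnegligible}. Since $f_{s-1}(w_i^{s-1},\ldots,w_i^0)$ does not involve $k$, it factors out of the sum, so
\[
\mathrm{LHS} \eqinf f_{s-1}(w_i^{s-1},\ldots,w_i^0)\sum_{k=1}^n A_{ik}^2\, g_k\,.
\]

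\emph{Step 2: replace $A_{ik}^2$ by $\tfrac1n$.} Write $A_{ik}^2 = \tfrac1n + \bigl(A_{ik}^2 - \tfrac1n\bigr)$, where $A_{ik}^2 - \tfrac1n$ is the degree-$2$ orthogonal polynomial of the entry $A_{ik}$, i.e.\ a $2$-labeled edge between $i$ and $k$ (\cref{sec:edge-labels}). In $\sum_k \bigl(A_{ik}^2 - \tfrac1n\bigr) g_k$ this $2$-labeled edge joins the trees hanging off $k$ to the root $i$; after the sum over $k$, every resulting diagram is connected and carries a $2$-labeled edge, hence is combinatorially negligible by condition (iv) of \cref{lem:connected-nonnegligible}. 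Thus $\sum_k A_{ik}^2\, g_k \eqinf \tfrac1n\sum_k g_k$, and multiplying through by the (order-$\le 1$) expression $f_{s-1}(w_i^{s-1},\ldots,w_i^0)$, which preserves $\eqinf$ by \cref{lem:eqinf-properties}, yields
\[
\mathrm{LHS} \eqinf \tfrac1n\, f_{s-1}(w_i^{s-1},\ldots,w_i^0)\sum_{k=1}^n g_k\,,
\]
which is the claimed identity.

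I expect the main obstacle to be the combinatorial bookkeeping rather than any new idea: one must check that, after the summation over $k$, the diagrams in question really are connected, so that \cref{lem:connected-nonnegligible} applies and the multiplicity-$\ge 3$ edge of Step 1 and the $2$-labeled edge of Step 2 are genuinely fatal. Both checks reduce to a one-line estimate of the magnitude functional $|V(\al)| - 1 - |E(\al)| + |I(\al)|$ from \cref{lem:improper-magnitude-simplified}, using that the roots of a diagram never lie in $I(\al)$, so inserting extra edges incident to them can only lower the magnitude. One care point: the factors $m_{k\to i}^{r-1}$ buried inside $R_{i\to k}$ (and the messages inside $g_k$) may, once fully expanded, contribute still more $\{i,k\}$-edges or a $k$–$i$ path through other vertices; but this only raises the multiplicity of the $\{i,k\}$-multiedge or creates a cycle, leaving the conclusion of negligibility intact.
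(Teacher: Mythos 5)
Your proof is correct, and while Step 1 coincides with the paper's argument (exact Taylor expansion of $f_{s-1}$ around $(w_i^{s-1},\ldots,w_i^0)$, every correction term carrying an explicit $A_{ik}$ that merges with the $A_{ik}^2$ into a negligible multiplicity-$\geq 3$ edge), your Step 2 takes a genuinely different route. The paper keeps $\sum_k A_{ik}^2\,g_k$ intact and argues structurally: the double edge survives only when it is hanging, which happens only for the singleton diagram in the expansion of $g_k$; its coefficient is the expectation $\E[g_k]$ by \cref{lem:constant}, and this expectation is then converted into the empirical average $\tfrac1n\sum_k g_k$ via the concentration statement \cref{lem:scalar-empirical-expectation}. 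You instead perform the exact split $A_{ik}^2 = \tfrac1n + \bigl(A_{ik}^2-\tfrac1n\bigr)$, so the empirical average appears identically and the entire error is carried by $2$-labeled edges, which are fatal for connected diagrams by condition (iv) of \cref{lem:connected-nonnegligible}. Your route is more direct for this particular statement (it bypasses \cref{lem:constant} and \cref{lem:scalar-empirical-expectation} altogether), whereas the paper's route makes visible the structural fact that only the constant term of $\pd{f_t}{w^s}$ interacts non-negligibly with the double edge, i.e.\ that the Onsager coefficient is an expectation — and it reuses machinery needed elsewhere anyway. Two small points you correctly flag but should make explicit if writing this up: (a) connectivity of the diagrams produced after summing over $k$ (it holds because, in this BP iteration, every diagram in the expansions of $w_k^r$ and $m^r_{k\to i}$ is attached to its root, so the $\{i,k\}$ edge makes everything connected to $i$, including the intersection terms where $i$ lands inside a diagram of $g_k$); and (b) when invoking \cref{lem:eqinf-properties} to multiply through by $f_{s-1}(w_i^{s-1},\ldots,w_i^0)$, the right-hand factor $\tfrac1n\sum_k g_k$ is a scalar diagram expression, so strictly one also uses the scalar-vector product machinery (\cref{lem:scalar-product}, \cref{lem:comb-neg-scalar}) — the same implicit step the paper takes when combining its two displayed relations.
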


\begin{proof}
    First, we argue about the replacement of $m_{i \to k}^{s-1}$. We have
    \[m_{i \to k}^{s-1} = f_{s-1}\left(\sum_{\substack{\el=1\\\el\neq k}}^n A_{i\el} m^{s-2}_{\el\to i}, \sdots \sum_{\substack{\el=1\\\el\neq k}}^n A_{i\el} m^{0}_{\el\to i},\; m_{i \to k}^0\right)\,.\]
    The difference between this and $f_{s-1}(w_i^{s-1}, \dots, w_i^0)$
    are the backtracking terms $A_{ik}m_{k \to i}^r$. All terms in the entire Taylor expansion of the polynomial on the right-hand side around $w_i^{s-1},\ldots,w_i^{0}$ will introduce at least one additional factor of $A_{ik}$, which combines with the $A_{ik}^2$ present in the summation over $k$ to become a negligible multiplicity $> 2$ edge (\cref{lem:connected-nonnegligible}). This shows that
    \begin{align}
        \sum_{k=1}^n A_{ik}^2 m_{i\to k}^{s-1} \pd{f_t}{w^s}(w_k^t, \dots, w_k^0) \eqinf f_{s-1}(w_i^{s-1},\ldots,w_i^0) \sum_{k=1}^n A_{ik}^2 \pd{f_t}{w^s}(w_k^t, \dots, w_k^0)\,.\label{eq:first-step-cavity}
    \end{align}
    
    Second, we argue about the replacement of $A_{ik}^2$.
    This double edge is only non-negligible if it is hanging (\cref{lem:connected-nonnegligible}).
    Among the diagrams in $\pd{f_t}{w^s}(w_k^t, \dots, w_0^t)$
    the only one which does not attach something to $k$ is the singleton diagram $\rootpic$.
    The coefficient of this diagram is the expected value (\cref{lem:constant}),
    \[\E\left[\pd{f_t}{w^s}(w_k^t, \dots, w_k^0) \right]\,.\]
    The expected value is equal to the empirical expectation up to
    negligible terms (\cref{lem:scalar-empirical-expectation}), 
    \[\E\left[\pd{f_t}{w^s}(w_k^t, \dots, w_k^0) \right] \eqinf \frac{1}{n} \sum_{k=1}^n \pd{f_t}{w^s}(w_k^t, \dots, w_k^0)\,.\]
    This implies
    \begin{align}
        \sum_{k=1}^n A_{ik}^2 \pd{f_t}{w^s}(w_k^t, \dots, w_k^0)\eqinf \frac 1 n \sum_{k=1}^n \pd{f_t}{w^s}(w_k^t, \dots, w_k^0)\,.\label{eq:second-step-cavity}
    \end{align}
    The desired statement follows from combining \cref{eq:first-step-cavity} and \cref{eq:second-step-cavity}.
\end{proof}

\begin{proof}[Proof of \cref{thm:bp-amp}]
    Replace the $\approx$ signs in the heuristic argument from \cref{sec:bp-amp-heuristic} by $\eqinf$ and use \cref{fact:eqinf-cavity} repeatedly.
\end{proof}

\subsection{Proving the cavity assumptions}
\label{sec:cavity-formal}

We examine the belief propagation iteration
\cref{eq:bp-memory} more closely.
The BP iterates have the following asymptotic structure.

\begin{lemma}\label{lem:mij-structure}
    $m_{i \to j}^t$ is asymptotically
    equivalent to a linear combination of cavity diagrams which 
    have a tree hanging off of $i$, no edges between the roots, and nothing attached to $j$.
    \begin{figure}[ht]
    \centering
    \includegraphics[scale=0.25]{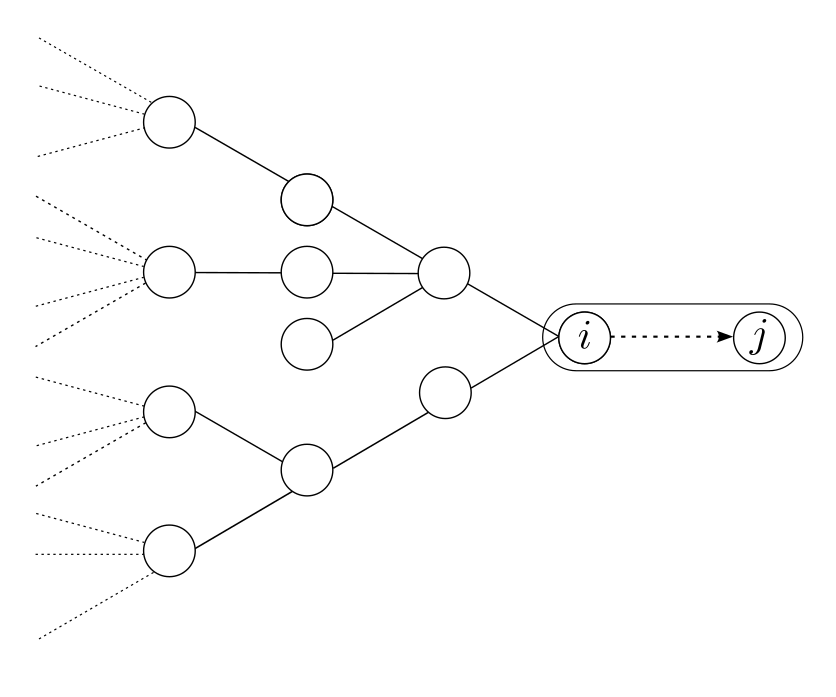} 
    \caption{
    Diagram representation of the cavity messages $m^t_{i\to j}$. Each cavity diagram in the asymptotic cavity diagram representation of $m^t_{i\to j}$ is a tree rooted at $i$.}
    \label{fig:cavity}
\end{figure}
\end{lemma}

\begin{proof}
Let $\al$ be a cavity diagram with the stated form. The vector whose $(i,j)$-th entry is $\sum_{\substack{k = 1}}^n A_{ik}Z_{\al,k \to i}$ is the sum of the diagrams which add an edge between the roots of $\al$, that can be of 3 types: (1)~the ``forward step'' diagram which puts the $j$ root as a new vertex, (2)~the ``backtracking step'' diagram which interchanges the first and second roots of $\al$, and (3)~other diagrams where $j$ intersects with a vertex from $V(\al) \setminus \{i\}$.

All diagrams of type (3) are negligible (and stay so when applying further
operations to them), because they create a cycle of length $>2$.
The backtracking step in (2) is canceled
by summing over $k \neq j$ in the belief propagation iteration. What asymptotically remains is the forward step (1) which again has the stated form. Additionally, componentwise functions preserve the stated form.
\end{proof}

\begin{theorem}
    \label{thm:cavity-formal}
    For any $j\in [n]$, the incoming messages
    at $j$, $\{m^t_{i\to j}:i\in [n]\setminus\{j\}\}$, are
    asymptotically independent (\cref{def:asymptotic-indpt}).
\end{theorem}

\begin{proof}
    When $j$ is ignored, the cavity diagrams in the asymptotic representation of $m_{i \to j}^t$ in \cref{lem:mij-structure} are equivalent to non-cavity diagrams (replacing $n$ by $n-1$).
    From the classification theorem (\cref{thm:classification}), these
    are asymptotically independent.
\end{proof}

\subsection{State evolution formula for BP/AMP}
\label{sec:amp}

We show how to simplify
the asymptotic state appearing in 
\cref{thm:general-state-evolution}
for the special case of approximate message
passing.
Recall the $+$ and $-$ operators from \cref{sec:calculus-rules}.

\begin{theorem}[Asymptotic state for AMP]
    \label{thm:amp-state}
    Under the same assumptions as \cref{thm:bp-amp}, the asymptotic state of $(w_t)_{t\le T}$ satisfies the recursion
    \begin{align}
        W_0 = 1\,,\qquad W_{t+1} = f_{t}(W_{t},\ldots,W_0)^+\,.\label{eq:amp-asymp}
    \end{align}
    In particular, $W_t$ is a centered Gaussian and for all $s,t\le T$, the covariances are
    \[
        \E \left[W_{s+1} W_{t+1}\right]= \E \left[f_s(W_s,\ldots,W_0) f_t(W_t,\ldots,W_0)\right]\,.
    \]
\end{theorem}

Combining \cref{thm:amp-state} and part (ii) of \cref{thm:general-state-evolution} recovers the typical formulation of state evolution for AMP algorithms. We note that while the formula for
computing iterates of AMP (\cref{eq:amp-output}) might look mysterious
at first sight, the AMP recursion in the
asymptotic space (\cref{eq:amp-asymp}) is much
easier to interpret.

We now prove \cref{thm:amp-state}.
Note that \cref{eq:amp} is not directly captured by the definition of a GFOM because $b_{s,t}$ requires computing an average over coordinates. This is only a technical issue: by \cref{lem:scalar-empirical-expectation}, empirical expectations are concentrated up to combinatorially negligible terms. 
Hence, the following inductive definition of a GFOM for $w_t\in\R^n$ and its corresponding asymptotic state $W_t$ is asymptotically equivalent to \cref{eq:amp}:
\begin{align}
    w_0=\vec{1}\,,\quad w_{t+1} = Af_t(w_t,\ldots,w_0) - \sum_{s=1}^t \E\left[\frac {\partial f_t}{\partial w_t}(W_t,\ldots,W_0)\right] f_{s-1}(w_{s-1},\ldots,w_0)\,.\label{eq:new-amp}
\end{align}

The Onsager correction term in \cref{eq:new-amp} will be rigorously interpreted as a backtracking term using diagrams.

\begin{lemma}\label{lem:taylor-expand}
    Let $W_1, \dots, W_t \in \Om$ be Gaussian (i.e. each $W_s$ is
    in the span of $(Z^\infty_\sig)_{\sig \in \calS}$). Then for any polynomial function $f : \R^t \to \R$,
    \[
        f(W_1, \dots, W_t)^- = \sum_{s = 1}^t  \E\left[\pd{f}{W_s}(W_1, \dots, W_t)\right] W_s^-\,.
    \]
\end{lemma}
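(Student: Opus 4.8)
The plan is to verify the identity by pairing both sides against the orthogonal basis $(Z^\infty_\tau)_{\tau\in\calT}$ of $\Om$. Both sides genuinely lie in $\Om$: the left side because $f(W_1,\dots,W_t)$ is a polynomial in the generators $Z^\infty_\sig$ and the $-$ operator maps $\Om$ to $\Om$, and the right side because each coefficient $\E[\partial f/\partial W_s(W)]$ is a scalar while each $W_s^-\in\Om$. Since the $Z^\infty_\tau$ are pairwise orthogonal and nonzero, it suffices to show that for every $\tau\in\calT$,
\[
\E\!\left[f(W_1,\dots,W_t)^-\, Z^\infty_\tau\right]=\sum_{s=1}^t \E\!\left[\tfrac{\partial f}{\partial W_s}(W_1,\dots,W_t)\right]\,\E\!\left[W_s^-\, Z^\infty_\tau\right].
\]

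The first step is to strip off the $-$ operators using the adjointness of $+$ and $-$ from \cref{lem:adjoint}, which in the form $\E[X^- Y]=\E[X Y^+]$ (obtained by applying the lemma to the pair $Y,X$) turns the left side into $\E[f(W)\,Z^\infty_{\tau^+}]$ and each summand on the right into $\E[\partial_s f(W)]\cdot\E[W_s\,Z^\infty_{\tau^+}]$. The key structural point here is that $\tau^+\in\calS$, so $Z^\infty_{\tau^+}$ is one of the independent Gaussian generators of the space (indexed by a single-subtree tree, hence a degree-one Hermite, i.e. the generator itself); this is precisely the phenomenon that "$+$ sends everything into $\calS$''.

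The second step is to evaluate $\E[f(W)\,Z^\infty_{\tau^+}]$ by Gaussian integration by parts (\cref{lem:gaussian-ipp}). Writing each $W_s=\sum_{\sig\in\calS} a_{s\sig} Z^\infty_\sig$, I would regard $F:=f(W_1,\dots,W_t)$ as a polynomial — hence smooth — function of the finitely many relevant generators $(Z^\infty_\sig)_\sig$, which form a centered Gaussian vector with independent coordinates. Applying \cref{lem:gaussian-ipp} with $Z^\infty_{\tau^+}$ as the distinguished coordinate and using that $\E[Z^\infty_{\tau^+}Z^\infty_\sig]$ vanishes unless $\sig=\tau^+$ (where it equals $\abs{\Aut(\tau^+)}$), only one term survives, giving $\E[Z^\infty_{\tau^+}F]=\abs{\Aut(\tau^+)}\,\E[\partial F/\partial Z^\infty_{\tau^+}]$, and then the chain rule $\partial F/\partial Z^\infty_{\tau^+}=\sum_s (\partial f/\partial W_s)(W)\,a_{s,\tau^+}$ rewrites this as $\abs{\Aut(\tau^+)}\sum_s a_{s,\tau^+}\,\E[\partial_s f(W)]$. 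On the other side, $\E[W_s\,Z^\infty_{\tau^+}]=a_{s,\tau^+}\abs{\Aut(\tau^+)}$ by independence of the generators, so the right-hand pairing is also $\abs{\Aut(\tau^+)}\sum_s a_{s,\tau^+}\,\E[\partial_s f(W)]$. The two expressions coincide for every $\tau\in\calT$, which proves the lemma.

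The computation is essentially routine, so I do not expect a genuine obstacle; the one place that needs care is the bookkeeping between the two coordinate systems — $f$ as a function of $W_1,\dots,W_t$ versus $F$ as a function of the underlying Gaussians $Z^\infty_\sig$ — making sure the chain-rule factors $a_{s,\tau^+}$ land in the same place on both sides. A secondary point worth spelling out explicitly is the identification $(Z^\infty_\tau)^+=Z^\infty_{\tau^+}$ together with the fact that $Z^\infty_{\tau^+}$ is a genuine Gaussian (not a higher Hermite), which is what makes the Gaussian integration-by-parts step directly applicable.
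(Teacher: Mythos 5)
Your proposal is correct and uses essentially the same ingredients as the paper's proof: orthogonality of the tree basis $(Z^\infty_\tau)_{\tau\in\calT}$, the adjointness relation of \cref{lem:adjoint}, and Gaussian integration by parts (\cref{lem:gaussian-ipp}). The only differences are organizational — you test both sides against the basis and move the $-$ across via adjointness before integrating by parts in the generator coordinates (with an explicit chain rule), whereas the paper expands $f(W_1,\dots,W_t)$ in the Hermite basis, computes the $\calS$-coefficients by integrating by parts directly in the $W_s$ coordinates, and then applies adjointness — so the arguments are the same in substance.
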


\begin{proof}
    Expand $f(W_1, \dots, W_t)$ as
    \begin{align*}
    f(W_1,\ldots,W_t)^{\phantom{-}} &= \sum_{\sig \in \calS} c_\sig Z^\infty_\sig + \sum_{\tau \in \calT \setminus \calS} c_\tau Z^\infty_\tau\,,\\
    f(W_1,\ldots,W_t)^- &= \sum_{\sig \in \calS} c_\sig Z^\infty_{\sig^-}\,,
    \end{align*}
    for some coefficients $c_\tau\in\R$. When $\sig \in \calS$, we have
    \begin{align*}
    c_\sig\abs{\Aut(\sig)} &= \E \left[Z^\infty_\sig f(W_1,\ldots,W_t)\right] & (\text{orthogonality})\\
    &= \sum_{s=1}^{t} \E \left[Z_\sig^\infty W_s \right] \E \left[\frac{\partial f}{\partial W_s}(W_1,\ldots,W_t)\right] & (\text{\cref{lem:gaussian-ipp}})\\
    &= \sum_{s=1}^{t} \E \left[Z_{\sig^-}^\infty W_s^- \right] \E \left[\frac{\partial f}{\partial W_s}(W_1,\ldots,W_t)\right]\,. & (\text{\cref{lem:adjoint}})
    \end{align*}
    The second expectation does not depend on $\sigma$.
    Summing the first expectation over $\sigma$ produces $W_s^-$ as desired.
\end{proof}

Now we complete the proof of \cref{thm:amp-state}.

\begin{proof}[Proof of \cref{thm:amp-state}.]
    We prove by induction on $t$ that $W_{t+1}=f_{t}(W_{t},\ldots,W_0)^+$. For $t=0$, we have $w_1 = Af_0(\vec{1})$ so $W_1 = f_0(W_0)^+$ and the statement holds. 
    
    Now suppose that the statement holds for $W_1,\ldots,W_t$ for some $t<T$. The asymptotic state of $A f_t(w_t,\ldots,w_0)$ is $f_t(W_t,\ldots,W_0)^++f_t(W_t,\ldots,W_0)^-$. By the induction hypothesis and \cref{cor:pm-cancel}, for any $s\le t$,
    \[
        W_s^- = f_{s-1}(W_{s-1},\ldots,W_0)\,.
    \]
    Combining this with \cref{lem:taylor-expand}, we see that the asymptotic state of the Onsager correction term equals $f_t(W_t,\ldots,W_0)^-$. This concludes the induction.

    In particular, $W_{t+1}=f_t(W_t,\ldots,W_0)^+$ has no constant term and is in the span of $\calS$, so it has a centered Gaussian distribution. The covariances are, for all $s,t\le T$,
    \[
        \E\left[W_{s+1} W_{t+1}\right] = \E \left[f_s(W_s,\ldots,W_0)^+ f_t(W_t,\ldots,W_0)^+\right] = \E \left[f_s(W_s,\ldots,W_0) f_t(W_t,\ldots,W_0)\right]\,,
    \]
    where the last equality follows from \cref{cor:adjoint}. This completes the proof.
\end{proof}

\subsection{Montanari's iterative AMP algorithm}
\label{sec:montanari}

A special type of approximate message passing
iterations, called {\em iterative AMP},
was introduced by Montanari to
optimize Ising spin glass 
Hamiltonians~\cite{montanari2021optimization, AM20,AMS20:pSpinGlasses}. Here we reinterpret
iterative AMP and its analysis in the
asymptotic space.

The problem considered in \cite{montanari2021optimization} is to
optimize a degree-2 polynomial with random
coefficients over the hypercube (an average-case variant of the Max-Cut-Gain problem), i.e. given
$A$ satisfying \cref{assump:A-entries}, find
$x\in \{-1,1\}^n$ (approximately) solving 
\begin{align}
    \label{eq:sk-hamiltonian}
    \frac 1 n \max_{x\in \{-1,1\}^n}\langle x, Ax\rangle = \frac 1 n \max_{x\in \{-1,1\}^n} \sum_{i,j=1}^n A_{ij} x_i x_j\,.
\end{align}
The value of \cref{eq:sk-hamiltonian} is known to concentrate around the constant $2P_*\approx 1.52$~\cite{Talagrand06,carmona2006universality}.
Montanari gave an algorithm running in time
$n^{O_{\varepsilon}(1)}$ that, with high
probability over $A$ (as $n\to\infty$), finds an assignment
$x\in \{-1,1\}^n$ achieving a $(1-\varepsilon)$-approximation to \cref{eq:sk-hamiltonian}. This result is
conditional on the widely believed conjecture~\cite[Assumption 2]{montanari2021optimization}
that the problem exhibits no overlap gap.

Montanari's algorithm is an AMP iteration with
non-polynomial nonlinearities, although Ivkov and
Schramm~\cite[Lemma B.4]{ivkov2023semidefinite} proved that it
can be well-approximated by AMP with
polynomial
nonlinearities.\footnote{The assignment constructed
by the latter iteration is not precisely Boolean but it can be rounded to
$\{\pm 1\}^n$ with $o(1)$ loss 
in the objective value.} Iterative AMP \cite{montanari2021optimization} uses \cref{eq:amp} with the functions
\begin{equation}
f_t(w_t, \dots, w_0) = w_t \odot u_t(w_{t-1}, \dots, w_0) \label{eq:iamp}
\end{equation}
for chosen functions $u_t : \R^{t} \to \R$ applied componentwise, where $\odot$ denotes componentwise multiplication.
The candidate output of the algorithm is $x_T = \sum_{t = 1}^T w_t \odot u_t(w_{t-1}, \dots, w_0) = \sum_{t = 1}^T f_t(w_t, \dots, w_0)$.

The special property of iterative AMP is that it sums up \emph{independent} Gaussian
vectors $w_t$ scaled componentwise by the functions $u_t$.
The independence of the Gaussian vectors $w_t$ is contained in the state evolution for AMP as follows.
By \cref{thm:amp-state}, the asymptotic states $W_t, U_t, X_t$ of $w_t, u_t, x_t$ satisfy $U_0=W_0=1$,
\begin{align*}
    U_t = u_t(W_{t-1},\ldots,W_0)\,,\quad W_{t+1} = (U_t W_t)^+\,, \quad X_t = \sum_{s = 1}^t U_sW_s\,.\label{eq:iterative-asymptotic}
\end{align*}

\begin{claim}\label{lem:wt-depths}
    $U_t$ is in the span of trees in $\calT$ with depth at most $t-1$ and $W_t$ is in the span of trees in $\calS$ with depth exactly $t$.
\end{claim}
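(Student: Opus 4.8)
The plan is to prove the two assertions simultaneously by induction on $t$, using the asymptotic recursion $U_t = u_t(W_{t-1},\ldots,W_0)$, $W_{t+1} = (U_t W_t)^+$ recorded just above (a consequence of \cref{thm:amp-state}). Throughout, ``depth'' of a rooted tree means the number of edges on the longest root-to-leaf path, so that the singleton has depth $0$, a single edge has depth $1$, and for $\tau\in\calT$ the depth of $\tau$ equals the maximum depth of a subtree of its root (in the $\calS$-sense of \cref{def:z_tau-infty}). The base cases are immediate: $W_0 = 1$ is the singleton (depth $0$); $U_1 = u_1(1)$ is a constant, hence a multiple of the singleton (depth $0 \le 1-1$); and $W_1 = (U_0 W_0)^+ = 1^+$ is the single-edge diagram, which lies in $\calS$ and has depth $1$.

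For the inductive step I would first dispatch $U_{t+1} = u_{t+1}(W_t,\ldots,W_0)$. By the inductive hypothesis each $W_s$ for $1\le s\le t$ lies in the span of the Gaussians $Z^\infty_\sig$ with $\sig\in\calS$ and $\mathrm{depth}(\sig)=s$, while $W_0=1$; hence $U_{t+1}$ is a polynomial in the Gaussians $Z^\infty_\sig$ with $\mathrm{depth}(\sig)\le t$. Re-expanding this polynomial in the Hermite/tree basis (\cref{def:z_tau-infty}, via the componentwise-product rule of \cref{def:asymptotic-state-explicit}), every monomial becomes some $Z^\infty_\tau$ all of whose root-subtrees have depth $\le t$, so $\mathrm{depth}(\tau)\le t = (t+1)-1$, which is the claim for $U_{t+1}$.

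The crux is the step for $W_{t+1} = (U_t W_t)^+$. The key observation is a ``disjointness'' fact: by the inductive hypothesis $W_t$ is a linear combination of the single Gaussians $Z^\infty_\sig$ with $\mathrm{depth}(\sig)=t$, whereas $U_t$ is a polynomial in the Gaussians of depth $\le t-1$ — these index two disjoint families of independent Gaussians. Consequently, when the componentwise-product rule of \cref{def:asymptotic-state-explicit} is applied to a tree $\sig$ occurring in $W_t$ and a tree $\tau$ occurring in $U_t$, the unique root-subtree of $\sig$ (which is $\sig$ itself, of depth $t$) is isomorphic to no root-subtree of $\tau$ (all of depth $\le t-1$), so $\calM(\sig,\tau)$ contains only the empty matching and $Z^\infty_\sig Z^\infty_\tau = Z^\infty_\kappa$, where $\kappa$ is obtained by merging the two roots. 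Such a $\kappa$ carries a depth-$t$ subtree ($\sig$) at its root and all other root-subtrees of depth $\le t-1$, so $\mathrm{depth}(\kappa)=t$ exactly; thus $U_t W_t$ lies in the span of trees in $\calT$ of depth exactly $t$, and applying $+$ (\cref{def:plus-minus-operators}, together with the bijection $\calT\leftrightarrow\calS$ of \cref{lem:bijection}) sends each $Z^\infty_\kappa$ to $Z^\infty_{\kappa^+}$ with $\kappa^+\in\calS$ and $\mathrm{depth}(\kappa^+)=t+1$, completing the induction. I expect the only genuine obstacle to be carefully tracking the product rule in this last paragraph: one must see both that the matching is forced to be empty and that merging roots cannot increase depth, and this is precisely where the hypothesis that $W_t$ sits at depth \emph{exactly} $t$ (not merely $\le t$) is essential, since it is what guarantees the depth-$t$ subtree cannot be cancelled and thereby pins down $\mathrm{depth}(U_tW_t)$. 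Everything else is bookkeeping.
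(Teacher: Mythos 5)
Your proof is correct and follows essentially the same route as the paper's: induct on $t$, note that componentwise polynomials cannot increase depth (giving the bound on $U_t$), and observe that in the product $U_tW_t$ the depth-$t$ subtrees of $W_t$ cannot be matched/cancelled against the strictly shallower subtrees coming from $U_t$, so the product sits at depth exactly $t$ and applying $+$ yields depth $t+1$. Your write-up merely spells out the forced-empty-matching and base-case bookkeeping that the paper leaves implicit.
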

\begin{proof}
     Arguing inductively, as componentwise functions do not increase the depth, $U_t$ is in the span of trees from $\calT$ of depth at most $t-1$.
     In the product $U_t W_t$, the trees of depth $t$ in $W_t$ cannot be cancelled
     by any trees of lower depth from $U_t$. Therefore all trees in
     $U_tW_t$ and $W_{t+1} = (U_tW_t)^+$ have depth exactly $t$ and $t+1$ respectively, as needed.
\end{proof}
\cref{lem:wt-depths} provides a very
clear explanation of where the independent Gaussians in iterative AMP are coming from:
the $W_t$ have different depths, and Gaussian diagrams of different depths are asymptotically independent Gaussian vectors.

\paragraph{Optimality via state evolution.} The objective value achieved by the iteration can also be computed using state evolution:
\begin{align*}
    \frac 1n \langle  x_T, A x_T\rangle &\eqinf \E\left[X_T (X_T^+ + X_T^-)\right] & (\text{\cref{lem:scalar-empirical-expectation}})\\
    &= 2\E\left[X_T X_T^+\right] & (\text{\cref{lem:adjoint}})\\
    &= 2\sum_{s,t = 1}^T \E\left[U_s W_s (U_t W_t)^+\right]\\
    &= 2\sum_{s,t = 1}^T \E\left[U_s W_s W_{t+1}\right]\\
    &= 2\sum_{t = 2}^T \E\left[U_t W_t^2\right] & (\text{Independence of the $W_t$})\\
    &= 2\sum_{t = 2}^T \E\left[U_t\right] \E\left[W_t^2\right] & (\text{\cref{lem:wt-depths} and independence of the $W_t$})
\end{align*}
This gives an asymptotic description of the iterates $x_t$ (as asymptotically independent trajectories of $X_t$) and the objective value achieved by the algorithm (as above). We can now try to optimize
the best choice of the functions $u_t$ subject
to the constraint that the output point is Boolean.
This yields the following program for the value achievable by an iteration with $T$ steps (selecting $u_t(w_{t-1}, \dots, w_0)$ is equivalent to selecting $U_t$ which is measurable with respect to $W_{t-1}, \dots, W_0$):

\begin{align*}
    \max \;\;& 2\sum_{t=1}^T \E\left[U_t\right] \E[W_t^2]\\
    \textnormal{s.t. }\;\;
    &U_t \text{ is measurable w.r.t. $W_0, \dots, W_{t-1}$}\\
    &X_T = \sum_{t = 1}^T U_t W_t \in \{-1,+1\} \text{ a.s.}
\end{align*}
Note that by \cref{lem:wt-depths}, the trajectory $X_T = \sum_{t = 1}^T U_tW_t$ is a martingale.
As written, the Boolean constraint ``$X_T \in \{-1,1\}$ a.s.'' is infeasible because each step $W_t$ is a (potentially unbounded) Gaussian.
In the actual algorithm, we make $X_T$ close to $\pm 1$ then apply a final rounding step.

The remaining key step used by \cite{montanari2021optimization} is to take $T$ large
in order to approach a continuous time stochastic process $\d X_t = U_t \d B_t$.
The limiting Brownian motion only appears if we add a constraint that $\E\left[U_t^2\right] = 1$ so that $\E\left[W_t^2\right] = \E\left[U_t^2\right]\E\left[W_{t-1}^2\right] = \E\left[W_{t-1}^2\right]$ for all $t$.

This yields a continuous optimization problem for the best achievable value:
\begin{align*}
    \max \;\;& 2\int_0^1 \E\left[U_t\right] \dt\\
    \textnormal{s.t. }\;\;
    &(U_t)_{t \in [0,1]} \text{ is progressively measurable w.r.t. a Brownian motion $(B_t)_{t \in [0,1]}$}\\
    & \E\left[U_t^2\right] = 1 \text{ for all $t \in [0,1]$}\\
    &X_1 = \int_{0}^1 U_t \d B_t \in \{-1, +1\} \text{ a.s.}
\end{align*}

This continuous optimization problem is convex in $(U_t)_{t \in [0,1]}$ and dual to an ``extended Parisi formula'' for the optimal value of the SK model \cite[Section 4]{AMS20:pSpinGlasses}.
The remaining important technical step is to show that this program is well-posed, and that the maximizer of this program, which can be written in terms of the solution to the Parisi PDE, is smooth enough that it can be discretely approximated by the limit $T \to \infty$.

\section{Analyzing \texorpdfstring{$\poly(n)$}{poly(n)} Iterations}
\label{sec:polyn-iterations}

In summary so far, we have completely described the asymptotic trajectory of first-order algorithms for a \textit{constant} number of iterations.
We now discuss extensions to a number of iterations that scales with the dimension $n$ of the matrix.

A motivation for studying longer iterations is that
for problems with a hidden planted signal, it has been observed empirically that first-order iterations initialized at random can learn the planted signal. However, the standard machinery is only able to prove that these algorithms achieve recovery from an \emph{informative} initialization which has positive correlation with the planted signal. The underlying reason appears to be that ``picking up'' the signal and escaping the random initialization takes $\omega(1)$ steps, which is beyond what most previous works can handle.

\subsection{Combinatorial phase transitions}
\label{sec:combi-obstructions}

In order to show that this is a delicate question, we compute in \cref{sec:star}
that some diagrams of $\omega(1)$ size are no longer asymptotically Gaussian,
breaking the classification \cref{thm:classification}.
Larger-degree vertices in a diagram can access high moments
of the entries of other diagrams, which will detect that these quantities are not exactly Gaussian.

However, in typical first-order algorithms, high-degree diagrams only appear in a controlled way.
Thus we expect that for a class of ``nice'' GFOMs, the Gaussian tree approximation
continues to hold for many more iterations.
To demonstrate this, we examine \textit{debiased power iteration}, which is the iterative algorithm
\begin{equation}
     x_0=\vec{1}\,,\quad x_{t+1} = Ax_t - x_{t-1}\,.\label{eq:debiased-power-iteration}
\end{equation}
\cref{eq:debiased-power-iteration}
has a very simple tree approximation (the $t$-path diagram).
Note that by \cref{thm:bp-amp}, for constantly many iterations this algorithm is asymptotically equivalent to power iteration
on the non-backtracking walk matrix, which is the algorithm
\begin{align*}
    m_{0} = \vec{1}\,, \qquad m_{t+1} &= B m_{t}\,,\\
    x_{t+1,i} &= \sum_{k = 1}^n A_{ik} m_{t,k \to i}\,,
\end{align*}
where $B \in \R^{n^2 \times n^2}$ is the weighted non-backtracking walk matrix defined by $B_{i\to j,k\to \el} = A_{k\el}$ if $j=k$ and $i\neq \el$, and $B_{i\to j,k\to\el}=0$ otherwise.

We distinguish several regimes of $T=T(n)$ depending on the obstacles that arise when trying to generalize the tree approximation for \cref{eq:debiased-power-iteration} to a larger number of iterations.
\begin{itemize}
    \item When $T\ll \frac {\log n} {\log \log n}$, we expect the proofs of \cref{thm:classification} and \cref{thm:state-evolution} to generalize with minimal changes.
    The total number of diagrams that arise can be bounded by $T^{O(T)}$ which is $n^{o(1)}$ in this regime.
    
    \item When $T\approx \frac {\log n} {\log \log n}$, there are $T^{O(T)} = \textnormal{poly}(n)$ many diagrams to keep track of. This could overpower the magnitude of some cyclic diagrams, and make the naive union bound argument fail. 
    This barrier is also the one of previous non-asymptotic analyses of AMP \cite{rush2018finite, cademartori2023non}. 

    \item When $T \ll n^{\delta}$ for some small constant $\delta>0$, we show in the next subsections that the tree approximation of debiased power iteration still holds by a more careful accounting of the error terms.
    We predict that this can be extended up to $T \ll \sqrt{n}$.
      
    \item When $T\approx \sqrt n$, the tree diagrams with $T$ vertices are exponentially small in magnitude (see \cref{lem:variance}) and the number of non-tree diagrams
    starts to become overwhelmingly large. At the conceptual level, random walks of length~$\sgt \sqrt{n}$ in an $n$-vertex graph are likely to collide.
    Therefore, it is unclear whether or not the tree diagrams
    of size~$\sgt \sqrt{n}$ are significantly different from
    diagrams with cycles. This threshold also appears in recent analyses of AMP \cite{LFW23}, although it is not a barrier for their result.
\end{itemize}

\subsection{Analyzing power iteration via combinatorial walks}

For constantly many iterations of debiased power iteration, by \cref{thm:state-evolution}, we know that $x_t$ is well-approximated by the $t$-path diagram, denoted $Z_{t\textnormal{-path}}$.
Here we prove that this approximation holds much longer.
To simplify the calculation, we assume:

\begin{assumption}\label{assume:rademacher}
    Let $A$ be a random $n\times n$ symmetric matrix with $A_{ii}=0$ and $A_{ij}$ drawn independently from the uniform distribution over $\left\{-\frac{1}{\sqrt {n-1}}, \frac 1 {\sqrt {n-1}}\right\}$ for all $i<j$.
\end{assumption}

We prove that for this iterative algorithm we can extend \cref{thm:state-evolution} to a polynomial number of iterations, hence overcoming some obstructions mentioned in \cref{sec:combi-obstructions}.
A similar argument can also show that $Z_{t\text{-path}}$ remain
approximately independent Gaussians for $t$ in the same regime.
Taken together, we see that the ``usual'' state evolution formula for constantly many iterations continues to hold much longer,
up to conjecturally $\sqrt{n}$ iterations.

\begin{restatable}{theorem}{powerIteration}\label{thm:power-iteration}
    Suppose that $A=A(n)$ satisfies \cref{assume:rademacher} and generate $x_t$ according to \cref{eq:debiased-power-iteration}. Then there exist universal constants $c,\delta>0$ such that for all $t\le c n^{\delta}$,
    \[
        \left\|x_t-Z_{t\textnormal{-path}}\right\|_\infty\overset{a.s.}{\longrightarrow} 0\,.
    \]
\end{restatable}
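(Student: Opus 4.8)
The plan is to expand $x_t$ \emph{exactly} in the diagram basis, show that the only non-negligible term is the path $Z_{t\textnormal{-path}}$, and control the sum of all remaining terms by a high-moment (``trace method'') computation combined with Borel--Cantelli. Concretely, write $S_{t,i} := \sum_w A_w$ for the sum over non-backtracking walks $w = (i = w_0, w_1, \dots, w_t)$ of length $t$ from $i$, where $A_w := \prod_{s=1}^t A_{w_{s-1}w_s}$. Using the deterministic identities $A_{ii} = 0$ and $A_{ij}^2 = \tfrac{1}{n-1}$ valid under \cref{assume:rademacher}, I would prove by induction that the debiasing term $-x_{t-1}$ in \cref{eq:debiased-power-iteration} cancels the leading backtracking contributions, so that $x_t = S_t + r_t$ with a residual obeying a recursion of the shape $r_{t+1} = A r_t - \tfrac{n-2}{n-1} r_{t-1} - \tfrac{1}{n-1} x_{t-1}$ (and $r_0 = r_1 = r_2 = 0$); in particular every term of $r_t$ carries an explicit factor $n^{-1}$. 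Grouping the walks in $S_{t,i}$ by the isomorphism type of the multigraph they trace gives an exact expansion $S_t = Z_{t\textnormal{-path}} + \sum_{\alpha} c_\alpha^{(t)} Z_\alpha$, since a self-avoiding non-backtracking walk is precisely an injective labelling of the path (coefficient exactly $1$), while a non-backtracking walk that revisits a vertex must traverse a cycle. Moreover a non-backtracking walk can never traverse a \emph{hanging} double edge nor create a self-loop or a triple edge, so every $\alpha \ne t\textnormal{-path}$ occurring in $S_t$ has a cycle and, by \cref{lem:connected-nonnegligible}, satisfies $m(\alpha) := 1 + |E(\alpha)| - |V(\alpha)| - |I(\alpha)| \ge 1$; also $|V(\alpha)| \le t+1$ and $|E(\alpha)| \le t$ since $\alpha$ is traced by a length-$t$ walk.

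The residual is easy to dispatch. Since $x_s = p_s(A)\vec{1}$ for the (rescaled) Chebyshev polynomial $p_s$ of the second kind, and $\|A\|_{\mathrm{op}} \le 2 + o(1)$ almost surely, one has $\|x_s\|_2 \le \|p_s(A)\|_{\mathrm{op}}\sqrt{n} \le \mathrm{poly}(s)\sqrt{n}$ throughout the regime $s \le c n^{\delta}$. Unrolling the residual recursion writes $r_t$ as $-\tfrac{1}{n-1}$ times a sum of $O(t)$ terms $g(A)x_{s}$ with $\|g(A)\|_{\mathrm{op}} \le \mathrm{poly}(t)$, whence $\|r_t\|_\infty \le \|r_t\|_2 \le \mathrm{poly}(t)/\sqrt{n} \overset{a.s.}{\longrightarrow} 0$. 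Thus it remains to bound $y_t := S_t - Z_{t\textnormal{-path}} = \sum_{\alpha \ne t\textnormal{-path}} c_\alpha^{(t)} Z_\alpha$, since $x_t - Z_{t\textnormal{-path}} = y_t + r_t$.

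For each diagram occurring in $y_t$, \cref{lem:improper-magnitude-simplified} gives $\abs{\E[Z_{\alpha,i}^q]} \le O(n^{-q m(\alpha)/2})$. The combinatorial heart of the proof is an encoding bound: for every $m \ge 1$,
\[
    \sum_{\alpha \,:\, m(\alpha) = m} \abs{c_\alpha^{(t)}} \;\le\; (Ct)^{O(m)}
\]
for a universal constant $C$. Informally, a self-intersecting non-backtracking walk whose trace has excess $m$ is a self-avoiding path (the backbone) decorated by $O(m)$ ``defects''---extra edges, vertex collisions, doubled pendant trees---each specified by a position along the backbone plus $O(1)$ bits of local structure, and the non-backtracking rule lets each such walk be replayed in at most $(Ct)^{O(m)}$ ways; the coefficients, which count walks per labelled copy, obey the same bound. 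Granting this, I would bound $\E[y_{t,i}^q]$ for a constant even $q$ by expanding $y_{t,i}^q$ over $q$-tuples $(\alpha_1, \dots, \alpha_q)$ and applying the trace-method estimate underlying \cref{lem:improper-magnitude-simplified} to the merged diagram (all roots identified at $i$): tracking the total excess across the $q$ copies gives $\E[y_{t,i}^q] \le (Ct)^{O(q)}\, n^{-q/2}$, the dominant contribution coming from $q$ excess-$1$ diagrams. Then by Markov's inequality and a union bound over $i \in [n]$ and over the at most $c n^{\delta}$ values of $t$,
\[
    \Pr\!\left(\exists\, t \le c n^{\delta} : \norm{S_t - Z_{t\textnormal{-path}}}_\infty \ge \eps\right) \;\le\; c n^{\delta}\cdot n\cdot \eps^{-q}\, (Ct)^{O(q)}\, n^{-q/2}\,.
\]
Choosing $\delta > 0$ small enough that $(C n^{\delta})^{O(q)} \le n^{q/8}$ and then $q$ a large enough constant makes the right side $O(n^{-2})$, hence summable in $n$; with the residual bound, Borel--Cantelli gives $\norm{x_t - Z_{t\textnormal{-path}}}_\infty \overset{a.s.}{\longrightarrow} 0$ for all $t \le c n^{\delta}$.

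The main obstacle is the encoding bound in the last paragraph: proving that the weighted number of cyclic diagrams of excess $m$ produced by the iteration is only $(Ct)^{O(m)}$, \emph{uniformly in $t$}. This is exactly where the non-backtracking structure of debiased power iteration is indispensable---for ordinary power iteration the analogous count already explodes at $t \approx \log n/\log\log n$ via proliferating backtracking walks---and it is what fixes the threshold at $t \le n^{\delta}$: the resulting series $\sum_{m \ge 1} (Ct)^{O(m)} n^{-m/2}$ is $o(1)$ precisely when $t^{O(1)} \ll \sqrt{n}$. Making the encoding tight enough to pay only $\mathrm{poly}(t)$ (rather than $n^{\Omega(1)}$) per defect, and correctly accounting for the doubled-edge ($I(\alpha)$) contributions inside the merged-diagram moment estimate, constitutes the bulk of the technical work.
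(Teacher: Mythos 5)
Your overall strategy is the same as the paper's: expand the iterate over non-backtracking walks, isolate the self-avoiding (path) contribution, bound a constant $q$-th moment of the remainder by a trace-method count in which each unit of ``excess'' $\frac{E}{2}-V+1$ costs $n^{-1}$ but is paid for by only $\mathrm{poly}(t)$ combinatorial choices, and finish with Markov, a union bound, and Borel--Cantelli. (A minor difference in your favor: you explicitly track the $O(1/n)$ mismatch between $x_t$ and the non-backtracking walk sum coming from the debiasing coefficient $1$ versus $\frac{n-2}{n-1}$, and dispatch it with operator-norm bounds; the paper's \cref{claim:walk-decomposition} works with $x_t$ directly and does not separate such a residual.)

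The genuine gap is in the step from your single-diagram encoding bound to the $q$-th moment bound. A diagram $\al$ traced by one non-backtracking walk with a back edge has odd-multiplicity edges, so $Z_{\al,i}$ is mean-zero and $\E[Z_{\al_1,i}\cdots Z_{\al_q,i}]$ is governed entirely by how the $q$ embedded walks overlap and pair their edges; controlling $\sum_{\al:\,m(\al)=m}\abs{c_\al^{(t)}}$ does not control the number of these cross-diagram intersection/pairing patterns, which for diagrams on up to $t+1$ vertices is a priori $t^{O(t)}$. Nor can you invoke \cref{lem:improper-magnitude-simplified} for the merged diagram: that lemma is stated for constant-size diagrams, and its explicit constant (see \cref{lem:improper-magnitude}) scales like $(q\abs{V})^{q\abs{V}}=t^{O(t)}$, which is fatal precisely in the regime $t=n^{\Omega(1)}$ you are targeting. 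This is why the paper deliberately refuses to pass through the diagram-with-coefficients representation (``we need to track walks instead of diagrams'') and instead counts the merged objects directly: even, non-backtracking, non-full-forward $q$-tuples of walks (\cref{def:traversal}), with the excess lower bound $\frac{E}{2}-V+1\ge\frac q2$, the per-traversal magnitude $O(n^{-(E/2-V+1)})$, and the counting bound $O_q(t)^{6(E/2-V+1)+2q}$ proved by compressing maximal paths of degree-$4$ double-edge vertices (\cref{lem:size-contracted}, \cref{lem:walk-number}). So the encoding you correctly identify as ``the combinatorial heart'' must be formulated and proved at the level of $q$-tuples of walks, not per individual walk; the per-diagram version neither suffices for nor straightforwardly implies the $q$-fold bound you assert. (Incidentally, a non-backtracking walk can traverse an edge three or more times by circling a cycle, so ``no triple edges'' is false as stated, though all you need---that every non-path trace contains a cycle and no hanging double edges---is correct.)
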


To obtain the tree approximation of algorithms with $\poly(n)$ many iterations, we need to very carefully count combinatorial factors that were neglected in \cref{sec:combinatorial}.
The total number of diagrams in the unapproximated diagram expansion is very large, and furthermore, each diagram can arise in many different ways if it has high-degree vertices.
To perform the analysis, we decompose $x_t$ in terms
of walks of length $t$; we need to track walks instead of diagrams so that we do not throw away additional information about high-degree vertices.

Our goal is to show that the walk without any back edge (the $t$-path) dominates asymptotically. We will proceed as in the proof of \cref{thm:classification} by bounding the $q$-th moment of $x_t-Z_{\text{$t$-path}}$. This moment can be represented diagrammatically using $q$-tuples of non-backtracking walks with at least one back edge.

\begin{definition}\label{def:traversal}
    A $(q,t)$-traversal $\gamma=(\gamma_1,\ldots,\gamma_q)$ is an ordered sequence of $q$ walks, each of length $t$ and starting from the same vertex:
    \[
        \gamma_i = (\{u_{i,1}=\rootpic,u_{i,2}\},\{u_{i,2},u_{i,3}\},\ldots,\{u_{i,t},u_{i,t+1}\})\,,\quad\text{for all $i\in [q]$}.
    \]
    Each traversal $\gamma$ is naturally associated to an improper diagram $(V(\gamma),E(\gamma))$ with $V(\gamma) = \{u_{i,j}:i\in [q], j\in[t]\}$ and $E(\gamma)=\{(u_{i,j},u_{i,j+1}):i\in[q],j\in[t-1]\}$ (viewed as a multiset). We use the notation $Z_\gamma=Z_{(V(\gamma),E(\gamma))}$ following \cref{def:Zal}.
    
    \begin{itemize}
        \item A traversal is even if each edge appears an even number of times in $\bigcup_{i\in [q]} \gamma_i$.
        \item A traversal is non-backtracking if every walk of the traversal is non-backtracking, i.e. $u_{i,j+1}\neq u_{i,j-1}$ for all $i\in [q]$ and $j\in \{2,\ldots,t-1\}$.
        \item A traversal is non-full-forward if every walk of the traversal has a back edge, namely for all $i\in [q]$, there exist $j_1\neq j_2$ such that $u_{i,j_1}=u_{i,j_2}$.
    \end{itemize}

    Let $\calW_t^q$ be the set of $(q,t)$-traversals that are simultaneously even, non-backtracking, non-full-forward, and have no self-loops.
\end{definition}

\cref{def:traversal} is motivated by the following decomposition:

\begin{claim}\label{claim:walk-decomposition}
    Suppose that $x_t$ is generated according to \cref{eq:debiased-power-iteration} and $A$ satisfies \cref{assume:rademacher}. Then,
    \begin{align*}
        \E \left[(x_t-Z_{t\textnormal{-path}})^q\right] &= \sum_{\gamma\in\calW_t^q} \E \left[Z_\gamma\right]\,.
    \end{align*}
\end{claim}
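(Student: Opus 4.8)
The plan is to reduce the claim to a single combinatorial identity for the iterates $x_t$ (understood at a fixed coordinate $i$, the value being the same for all $i$ by permutation symmetry) and then run a parity argument on the $q$-th moment. The first and most substantial step is the \emph{walk expansion}: I would show, by induction on $t$, that $x_{t,i} = \sum_{w} A_w$, where $w$ ranges over non-backtracking walks $(i = u_0, u_1, \dots, u_t)$ of length $t$ starting at $i$ and $A_w = \prod_{s=0}^{t-1} A_{u_s u_{s+1}}$. Rather than routing through \cref{thm:bp-amp} (which only relates \cref{eq:debiased-power-iteration} to non-backtracking walks for $O(1)$ iterations), I would argue directly: expanding $A x_t$ over walks produces the genuine length-$(t+1)$ non-backtracking walks together with the walks that backtrack at their first step, which have the form $(i, k, i, u_2, \dots, u_t)$ and contribute $A_{ik}^2$ times a non-backtracking walk of length $t-1$ from $i$; since $\sum_k A_{ik}^2 = 1$ holds \emph{deterministically} under \cref{assume:rademacher}, these backtracking contributions sum exactly to $x_{t-1,i}$, which is precisely the quantity removed by the debiasing term. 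The boundary cases $t \le 2$ are checked by hand. A technical wrinkle: the entries are not Gaussian, so $A_{ik}^3 = \frac{1}{n-1} A_{ik}$ does not vanish and produces a correction of relative order $\frac 1 n$; by \cref{lem:improper-magnitude-simplified} any such correction only affects $\E[(x_t - Z_{t\textnormal{-path}})^q]$ by combinatorially negligible terms, so it can be absorbed.

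Granting the walk expansion, note that $Z_{t\textnormal{-path},i} = \sum_w A_w$ over the \emph{self-avoiding} walks of length $t$ from $i$ (these are exactly the injective labelings of the $t$-path diagram; see \cref{def:Zal}), and a non-backtracking walk is either self-avoiding or revisits some vertex. Hence $x_{t,i} - Z_{t\textnormal{-path},i} = \sum_{w \in \mathcal R_t(i)} A_w$, where $\mathcal R_t(i)$ is the set of non-backtracking, non-self-avoiding, self-loop-free walks of length $t$ from $i$. Expanding the $q$-th power and using independence of distinct entries,
\[
  \E\bigl[(x_{t,i} - Z_{t\textnormal{-path},i})^q\bigr]
  = \sum_{(w_1, \dots, w_q) \in \mathcal R_t(i)^q} \E\!\left[\prod_{\ell=1}^q A_{w_\ell}\right]
  = \sum_{(w_1, \dots, w_q) \in \mathcal R_t(i)^q}\ \prod_e \E\bigl[A_e^{m_e}\bigr],
\]
where $m_e$ is the total multiplicity of edge $e$ in $w_1 \cup \dots \cup w_q$. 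The entries being symmetric with $\E[A_e^m] = 0$ for odd $m$, a summand survives only when every $m_e$ is even, i.e. when the traversal $(w_1, \dots, w_q)$ is \emph{even}. Thus the surviving labeled traversals are precisely those that are simultaneously even, non-backtracking, non-full-forward, and self-loop-free.

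To conclude, I would group the surviving labeled traversals by abstract isomorphism type $\gamma$ (recording only which of the $u_{\ell,j}$ coincide): the types $\gamma$ that appear are exactly the elements of $\calW_t^q$ (\cref{def:traversal}), and for each such $\gamma$, summing $\prod_e \E[A_e^{m_e}]$ over all injective labelings $\ph : V(\gamma) \to [n]$ with $\ph(\rootpic) = i$ reproduces $\E[Z_{\gamma,i}]$ by \cref{def:Zal} (which is independent of $i$ by permutation symmetry). Summing over $\gamma$ gives $\sum_{\gamma \in \calW_t^q} \E[Z_\gamma]$. The crux of the whole argument is the first step --- the walk expansion of $x_t$ and the exact cancellation performed by the debiasing term; once that is in place, the parity argument and the bookkeeping between labeled and abstract traversals are routine. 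The later use of this claim inside the proof of \cref{thm:power-iteration} will in turn require a careful encoding argument to bound $\sum_{\gamma \in \calW_t^q} \E[Z_\gamma]$, which is where the real work of that theorem lies, but that is beyond the present statement.
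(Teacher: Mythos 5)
Your parity argument and the regrouping of labeled tuples of walks into abstract traversals (the second and third paragraphs) are fine, but the step you yourself identify as the crux --- the exact walk expansion $x_{t,i}=\sum_{w\,\mathrm{NB}}A_w$ --- is false for $t\ge 3$, and the way you propose to patch it does not work. When you expand $(Ax_t)_i$ using the inductive hypothesis, the backtracking terms are $\sum_{k}A_{ik}^2\sum_{w''}A_{w''}$ where $w''$ runs over non-backtracking walks of length $t-1$ from $i$ whose \emph{first step avoids $k$} (this constraint is forced by non-backtracking of the length-$t$ walk at the vertex $i$). So for each fixed $w''$ the sum over $k$ has only $n-2$ admissible values, and under \cref{assume:rademacher} the backtracking mass is $\tfrac{n-2}{n-1}\,x_{t-1,i}$, not $x_{t-1,i}$; equivalently, the non-backtracking sums obey $y_{t+1}=Ay_t-\tfrac{n-2}{n-1}y_{t-1}$ rather than \cref{eq:debiased-power-iteration}. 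The debiasing therefore over-subtracts by $\tfrac1{n-1}$ times a walk sum at every step, and these leftovers propagate through later multiplications by $A$. Concretely, $x_{3}=Z_{3\textnormal{-path}}+T-\tfrac1{n-1}A\vec 1$ where $T$ is the triangle sum, and a direct computation at $q=2$, $t=3$ gives
\[
\E\bigl[(x_{3,i}-Z_{3\textnormal{-path},i})^2\bigr]=\frac{2n-3}{(n-1)^2},\qquad \sum_{\gamma\in\calW^2_3}\E\left[Z_{\gamma,i}\right]=\frac{2(n-2)}{(n-1)^2},
\]
so the two sides of the claim differ by exactly $\tfrac1{(n-1)^2}$. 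In other words, no proof of the claim as an \emph{exact} identity can go through your route (the paper itself states the claim without proof; your computation, done carefully, actually shows the statement needs correction terms unless $x_t$ is taken to be the exact non-backtracking-walk iteration, for which your argument would be verbatim correct).

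The second problem is your proposed absorption of the discrepancy. Invoking \cref{lem:improper-magnitude-simplified} both concedes that you are no longer proving the stated equality and is unavailable in the regime where the claim is actually used: that lemma is for constant-size diagrams (constant $t$, constant $q$), while \cref{thm:power-iteration} deploys the claim at $t$ as large as $n^{\delta}$. There the corrections are not a single negligible term: one leftover of relative size $\tfrac1{n-1}$ is created at each of up to $n^{\delta}$ steps, each then evolved by further applications of $A$, and their contribution to the $q$-th moment must be controlled by the same kind of traversal-encoding bounds (\cref{lem:walk-magnitude}, \cref{lem:walk-number}) as the main term --- either by enlarging the family of traversals in \cref{def:traversal} to account for the shortened walks carrying explicit $\tfrac1{n-1}$ weights, or by proving the claim for the exact non-backtracking iteration and separately bounding $\|x_t-y_t\|$. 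As written, the proposal neither yields the exact statement nor supplies the quantitative error control needed downstream.
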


We now proceed to proving \cref{thm:power-iteration}. We will bound the magnitude of $\E\left[Z_{\gamma,i}\right]$ for $\gamma\in\calW^q_t$, then count the number of traversals in $\calW^q_t$. Both bounds will depend on $\frac E 2 - V + 1$ (where $V$ is the number of vertices of the traversal and $E$ the number of edges), which quantifies how close the traversal is to a tree of double edges. 

Our first insight is that the traversals contributing to $(x_t-Z_{t\textnormal{-path}})^q$ become further from trees as $q$ increases because each walk must have a back edge.

\begin{lemma}
    For any $\gamma\in \calW_t^q$ with $V$ vertices and $E$ edges, $\frac E 2 - V + 1\ge \frac q 2$.
\end{lemma}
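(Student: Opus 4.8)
The plan is to prove the slightly cleaner inequality $E \ge 2(V-1) + q$ for every $\gamma \in \calW_t^q$, where $E$ counts the edge-steps of $\gamma$ with multiplicity; dividing by $2$ yields $\frac E2 - V + 1 \ge \frac q2$. The $2(V-1)$ term will come from a spanning tree together with evenness, and the extra $q$ term will come from the fact that each of the $q$ walks, being non-full-forward, must traverse an edge outside that tree.

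First I would build the \emph{first-visit spanning tree} $\mathsf{T}$ of $(V(\gamma),E(\gamma))$: scan the edge-steps of $\gamma$ in traversal order (all of $\gamma_1$, then all of $\gamma_2$, and so on), and add the traversed edge to $\mathsf{T}$ whenever a step first reaches a vertex. Since the first traversal of an edge reaches at most one previously unseen endpoint, these edges are distinct and there is exactly one per non-root vertex, so $|\mathsf{T}| = V-1$; and $\mathsf{T}$ is acyclic because every added edge joins a new vertex to an old one. Now two counting bounds. (a) Each edge of $\mathsf{T}$ is traversed at least once (that is how it entered $\mathsf{T}$) and, by evenness, an even number of times, hence at least twice, so the number of edge-steps traversing a tree edge is at least $2(V-1)$. (b) For $q$ further edge-steps I use the non-full-forward condition via the observation that a non-backtracking, self-loop-free walk all of whose edges lie in a tree is a simple path (not backtracking forces the walk strictly down a branch of the tree, so it never revisits a vertex -- a short induction using uniqueness of paths in a tree). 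Since every walk $\gamma_i$ revisits a vertex, $\gamma_i$ must traverse some edge $g_i \notin \mathsf{T}$. Setting $r(e) = \#\{i : g_i = e\}$ for non-tree edges $e$, we get $\sum_{e \notin \mathsf{T}} r(e) = q$, and for each such $e$ the $r(e)$ walks with $g_i = e$ contribute $r(e)$ distinct edge-steps traversing $e$ (distinct because they lie in distinct walks), so $e$ has multiplicity at least $r(e)$; hence the number of edge-steps traversing non-tree edges is at least $\sum_{e\notin\mathsf{T}} r(e) = q$. The two families of edge-steps are disjoint (one uses only tree edges, the other only non-tree edges), so $E \ge 2(V-1) + q$, as desired.

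I do not expect a genuine obstacle; the one point requiring care is the observation in (b) -- that a non-backtracking, self-loop-free walk confined to a tree is simple -- since this is precisely where the non-full-forward (back-edge) hypothesis enters, and it uses both the "no self-loops" and "non-backtracking" conditions of $\calW_t^q$. Everything else is bookkeeping. As a sanity check, the bound is tight: take $q/2$ copies, glued only at the root, of a walk that runs out along a fresh path and then takes one back-step to close a triangle near its end, each copy traced by two of the $q$ walks (so the traversal is even); this gives exactly $\frac E2 - V + 1 = \frac q2$.
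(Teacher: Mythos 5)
Your proof is correct, and it reaches the same inequality $E \ge 2(V-1)+q$ as the paper, but by a different accounting. The paper charges edge-steps to \emph{vertices}: each edge-step is assigned to the vertex it enters, each non-root vertex receives at least two incoming steps (the step that discovers it, plus—by evenness and non-backtracking—a step that re-enters it), and the first back edge of each walk $\gamma_i$ contributes one further incoming step, with a short case analysis (it either enters the root, not yet counted, or cannot coincide with the two steps already charged to the revisited vertex). You instead charge edge-steps to \emph{edges}, split along a spanning tree: evenness forces every tree edge to be traversed at least twice, giving $2(V-1)$, and the extra $q$ comes from your key lemma that a non-backtracking, self-loop-free walk confined to a tree is a simple path, so each walk (being non-full-forward) must use a non-tree step, and these $q$ steps lie in distinct walks. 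The hypotheses enter in the same roles in both arguments, but your route isolates their use more cleanly—the "walk in a tree is simple" lemma replaces the paper's slightly delicate first-back-edge bookkeeping—at the cost of setting up the first-visit spanning tree (any spanning tree of the traversal graph would in fact do, since every edge of the diagram is traversed). Your tightness example (doubled paths closed by a single back-step, glued at the root) is also a nice check that the exponent $\frac q2$ in the lemma cannot be improved.
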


\begin{proof}
    Assign to each vertex all the edges going into it in $\gamma$.
    Each non-root vertex must have at least 2 incoming edges:
    the edge which explores it, and since $\gam$ is even and non-backtracking,
    an edge which revisits it a second time.
    Since $\gam$ is non-full-forward, each $\gam_i$ has a back edge; the first back edge in each $\gam_i$ yields an additional incoming edge for each $i$ (either it points to the root, which has not yet been counted, or by assumption that it is the \emph{first} back edge in $\gam_i$, it cannot cover both incident edges from the first visit). We have
    \[E \geq 2(V-1) + q\,,\]
    as needed.
\end{proof}

\begin{lemma}\label{lem:walk-magnitude}
    For any $i\in[n]$ and $\gamma\in\calW_t^q$ with $V$ vertices and $E$ edges,
    \[
        \left|\E\left[Z_{\gamma,i}\right]\right|\le O\left(n^{-\left(\frac E 2 - V + 1\right)}\right)\,.
    \]
\end{lemma}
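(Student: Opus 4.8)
The plan is to compute $\E[Z_{\gamma,i}]$ by expanding $Z_{\gamma,i}$ as a sum over injective maps $\varphi\colon V(\gamma)\to[n]$ fixing the root to $i$, then taking the expectation term-by-term. Since the entries of $A$ are independent, mean-zero, and (under Assumption~\ref{assume:rademacher}) satisfy $A_{k\ell}^2 = \frac{1}{n-1}$ deterministically, the expectation of $\prod_{\{u,v\}\in E(\gamma)} A_{\varphi(u)\varphi(v)}$ is nonzero only when every edge of the multigraph $(V(\gamma),E(\gamma))$ is used an even number of times; but this is already guaranteed because $\gamma\in\calW_t^q$ is even. For such a $\varphi$, pairing up the repeated edges shows the expectation is exactly $(n-1)^{-E/2}$ (no higher moments appear since all multiplicities can be reduced to pairs when evaluating, using $A_{k\ell}^2=\frac1{n-1}$; multiplicities $>2$ just contribute extra factors of $\frac1{n-1}$ per pair). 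Hence
\[
    \E[Z_{\gamma,i}] = \#\{\varphi\colon V(\gamma)\hookrightarrow[n], \varphi(\smallrootpic)=i\}\cdot (n-1)^{-E/2}\,.
\]

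The number of injective maps fixing the root is at most $n^{V-1}$ (choose images for the $V-1$ non-root vertices), so
\[
    \bigl|\E[Z_{\gamma,i}]\bigr| \le n^{V-1}\cdot(n-1)^{-E/2} = O\!\left(n^{V-1-E/2}\right) = O\!\left(n^{-(E/2-V+1)}\right)\,,
\]
which is exactly the claimed bound. This is essentially the content of \cref{lem:improper-magnitude-simplified} specialized to the traversal diagram: for these walk-diagrams there are no self-loops (by definition of $\calW_t^q$) and every non-root vertex has all incident edges of multiplicity $\ge 2$ only in the degenerate even case, so the $|I(\al)|$ term plays no favorable role here and we simply use $|V|-1-|E|+|I|\le |V|-1-|E|/2$... more precisely, since $\gamma$ is even, $|I(\gamma)|$ could be nonzero, but one checks $|V(\gamma)|-1-|E(\gamma)|+|I(\gamma)| \le -(E/2-V+1)$ by the same incidence-counting bookkeeping used in the proof of \cref{lem:connected-nonnegligible}.

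The main point to get right — and the only place any care is needed — is the passage from ``expectation of a monomial in the $A_{k\ell}$'' to the clean factor $(n-1)^{-E/2}$. Under the general \cref{assump:A-entries} this would require tracking moments $\E[X^k]$ for $k\ge 3$ and would only give an upper bound of the same order, but under \cref{assume:rademacher} the identity $A_{k\ell}^2=\frac{1}{n-1}$ makes it exact and removes all combinatorial factors coming from how the even multiset of edges is partitioned into pairs. I expect no genuine obstacle here; the lemma is a direct counting estimate, and the real work of the section lies in the companion step of \emph{counting} $|\calW_t^q|$ with a matching dependence on $E/2-V+1$.
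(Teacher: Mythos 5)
Your proposal is correct and is essentially the paper's own argument: a direct count of injective embeddings (at most $n^{V-1}$, or $(n-1)\cdots(n-V+1)$ in the paper) multiplied by the deterministic magnitude $(n-1)^{-E/2}$ of each term under \cref{assume:rademacher}, which is $O(n^{-E/2})$ since $E=O(qt)=o(n)$ in the relevant regime. The only difference is cosmetic — you evaluate the expectation exactly using evenness, whereas the paper simply bounds each summand's magnitude — and your closing digression about $I(\al)$ is unnecessary for the argument.
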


\begin{proof}
    Using \cref{assume:rademacher}, we can directly count
    \begin{align*}
        \left|\E\left[Z_{\gamma,i}\right]\right| &\le O(1)\cdot \frac {(n-1)(n-2)\cdots (n-V+1)} {n^{\frac E 2}}\\
        &= O\left(n^{V-1 - \frac{E} 2}\right)\,.\qedhere
    \end{align*}
\end{proof}

Finally, the following lemma captures the counting of traversals. Its proof is deferred to the next subsection.

\begin{lemma}\label{lem:walk-number}
    The number of $\gamma\in\calW_t^q$ with $V$ vertices and $E$ edges is at most
    \[
        O_q(t)^{6\left(\frac{E}{2} - V+1\right)+2q}\,,
    \]
    where $O_q(\cdot)$ hides a constant depending only on $q$.
\end{lemma}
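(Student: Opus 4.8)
I would prove this by an encoding (``trace method'') argument: set up an injective map from $\calW_t^q$ into strings over an alphabet of size $O_q(t)$ whose length is at most $6\left(\frac{E}{2} - V + 1\right) + 2q$, which gives the bound at once. Write $k := \frac{E}{2} - V + 1$ for the excess. The alphabet size $O_q(t)$ is because every vertex of a traversal lies among at most $q(t+1)$ slots, so $V \le q(t+1) = O_q(t)$, and each symbol I record will encode either a vertex of the partial graph, a step index in $[t]$, a walk index in $[q]$, or a bounded flag --- each of these ranging over a set of size $O_q(t)$.

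To reconstruct $\gamma = (\gamma_1, \dots, \gamma_q)$ I would read it walk by walk and, within each walk, step by step, maintaining the partial multigraph $H$ of edges seen so far, with its vertices canonically labelled $1, 2, \dots$ in order of first global appearance (the root being $1$), together with a traversal count on each edge of $H$. Most steps are \emph{forced} and cost nothing to record: a step into a globally fresh vertex goes to the next canonical label (there are exactly $V-1$ such ``tree'' steps), and a re-traversal of an edge already in $H$ is dictated by a deterministic \emph{closure rule} coming from evenness and non-backtracking (informally: leave along the unique incident edge, other than the one just used, that has currently been traversed an odd number of times). I would record a symbol only at the genuinely branching steps, which come in three families. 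First, the first step of each $\gamma_i$ that revisits a vertex of $\gamma_i$: recorded by its step index and the revisited vertex, for $2$ symbols per walk, $2q$ in all --- this is where \emph{non-full-forwardness} is used, since it guarantees each walk has such a step, and before it $\gamma_i$ is a self-avoiding path. Second, the first traversal of each edge not in a fixed spanning tree of $H$ (including an edge by which some walk first reaches a vertex already seen by an earlier walk): recorded by its other endpoint; since the simple graph underlying $H$ has at most $E/2 = V-1+k$ edges, there are at most $k$ of these. Third, the steps at which the closure rule fails to single out a unique continuation: a case analysis of the local configuration should inject each such step into a distinct ``(excess edge, side)'' pair, giving $O(k)$ of them. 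Summing, the recorded string has length at most $2q + k + O(k) \le 6k + 2q$ with a deliberately generous constant, and since $H$ and the current position are recomputed deterministically from the symbols read so far, the encoding is injective.

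The hard part is the third family of branching steps: making the closure rule precise and proving that the steps where it is not forced inject into the $O(k)$-sized excess structure of $H$ --- the kind of local case analysis on non-backtracking even walks that is routine but fiddly in trace-method estimates. This is exactly what the unoptimized constants $6$ and $2$ are there to absorb. Everything else (the bound $V \le q(t+1)$, the alphabet size, decodability given the recorded symbols, and checking that the two easy families contribute $2q$ and at most $k$ symbols) I expect to be straightforward.
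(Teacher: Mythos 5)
Your encoding strategy is genuinely different from the paper's, and it contains a real gap at exactly the point you flag as ``routine but fiddly.'' Two problems sit there. First, decodability: your decoder must know, at each step, whether the walk moves to a globally fresh vertex or re-traverses an edge of $H$ via the closure rule, and these two options can genuinely coexist (the walk may defer closing an odd edge at the current vertex and instead open a fresh excursion, returning to close it later). A per-step fresh/return flag is unaffordable here -- that would cost $2^{\Theta(E)}=2^{\Theta(qt)}$, which swamps $O_q(t)^{6(\frac{E}{2}-V+1)+2q}$ when the excess is small -- so every such deviation from the closure rule must itself be recorded, with a step index, and its count bounded. Second, the heart of the matter is precisely the claim that all non-forced steps (both genuine ambiguities among several odd edges and deviations where a fresh or even edge is taken although an odd edge is available) inject into $O(\frac{E}{2}-V+1)$ excess-indexed objects; this is asserted, not proved, and it is the entire content of the lemma. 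Note also that your symbol budget is tight: families 2 and 3 need step indices to be decodable, so each event costs about two symbols, and fitting inside $6(\frac{E}{2}-V+1)+2q$ requires the deviation count to be at most roughly $2(\frac{E}{2}-V+1)$ -- a quantitative charging claim that your ``(excess edge, side)'' injection would have to deliver exactly, not just up to an unspecified constant.

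The paper avoids this dynamic forced-step analysis altogether. It contracts every maximal chain of degree-$4$ vertices joined by double edges into a single marked edge (\cref{def:contracted-walk}), pays $O_q(t)$ for \emph{every} step of the contracted traversal rather than only for branching steps, recovers the contracted path lengths by a stars-and-bars count, and bounds the number of contracted edges by the purely static degree count of \cref{lem:size-contracted}: evenness forces all non-root degrees to be even, non-backtracking limits degree-$2$ vertices to at most $q/2$ paired walk-endpoints, and counting degrees gives $|E(\gamma_c)|\le 3E-6(V-1)+2q=6\bigl(\tfrac{E}{2}-V+1\bigr)+2q$. So no closure rule, no deviation-charging argument, and no injectivity subtleties arise. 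Your route could plausibly be completed along the lines of standard trace-method ``important steps'' arguments, but as written the key charging lemma is missing, and supplying it would be substantially more delicate than the paper's contraction argument.
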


\begin{proof}[Proof of \cref{thm:power-iteration}]
    We decompose the sum over $\gamma\in\calW_t^q$ according to the value of $r=\frac E 2 - V + 1$ using \cref{lem:walk-magnitude} and \cref{lem:walk-number}:
    \[
        \E \left[(x_{t,i}-Z_{t\textnormal{-path},i})^q\right] \le O_q(t)^{2q} \sum_{r\ge \frac q 2}         O_q(t)^{6r} n^{-r}\,.
    \]
    If $t$ satisfies $t\le cn^{\delta}$ with $0<\delta<1/6$, the sum is a geometrically decreasing series and therefore it is bounded by the first term which is $O_q(t^{5q}n^{-\frac q 2})$. Under the condition $\delta<1/10$, for $q$ being a large enough integer we obtain for some $\eps > 0$,
    \[
        \E \left[(x_{t,i}-Z_{t\textnormal{-path},i})^q\right]\le O(1/n^{2+\eps})\,.
    \]
    This is enough to imply that $\|x_{t}-Z_{t\textnormal{-path}}\|_\infty
    \overset{a.s.}{\longrightarrow} 0$ by a union bound over the $n$ coordinates, then Markov's inequality and the Borel-Cantelli lemma.
\end{proof}

\subsection{Counting combinatorial walks}

Our goal here is to prove \cref{lem:walk-number}. 

In the extreme case $V\approx \frac {E} 2$ where the moment bound \cref{lem:walk-magnitude} is the weakest, typical traversals $\gamma\in\calW^q_t$ look like trees of double edges with a constant number of back edges. In this regime, most vertices will have degree exactly $4$. Following this intuition, our encoding will proceed by compressing the long paths of degree-4 vertices connected by double edges.

\begin{definition}\label{def:contracted-walk}
    For $\gamma\in \calW_t^q$, let $\gamma_c$ be the traversal obtained by replacing all maximally long paths of degree-4 vertices in $\gamma$ by a single special marked edge between the endpoints of the paths, and removing the internal vertices of the path.
    (The paths should be broken at the root so that it is not removed.)
\end{definition}

Note that these operations can create self-loops in $\gamma_c$.

\begin{lemma}\label{lem:size-contracted}
    For any $\gamma\in\calW_t^q$,
    \[
        |E(\gamma_c)|\le 3|E(\gamma)|-6(|V(\gamma)|-1)+2q\,.
    \]
\end{lemma}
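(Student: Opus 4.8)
The plan is to pass to the ``halved'' multigraph and recognize the operation defining $\gamma_c$ as the classical suppression of degree-$2$ vertices. Since $\gamma$ is even, every edge of the multigraph $(V(\gamma),E(\gamma))$ occurs with even multiplicity, so every vertex has even degree. Let $\bar\gamma$ be the multigraph obtained from $\gamma$ by halving every edge multiplicity, so $|V(\bar\gamma)|=|V(\gamma)|$, $|E(\bar\gamma)|=\frac{1}{2}|E(\gamma)|$, and, since all walks emanate from the root, $\bar\gamma$ is connected with cyclomatic number $r:=\frac{1}{2}|E(\gamma)|-|V(\gamma)|+1$. Degree $4$ in $\gamma$ becomes degree $2$ in $\bar\gamma$, degree $2$ becomes degree $1$ (a leaf), and degree $\ge 6$ becomes degree $\ge 3$. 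Contracting the maximal degree-$4$ paths of $\gamma$ corresponds exactly to suppressing every degree-$2$ vertex of $\bar\gamma$, i.e.\ replacing each maximal path through degree-$2$ vertices by a single edge (possibly a self-loop) while leaving the root untouched; this preserves connectivity, the cyclomatic number, and the degree of every retained vertex. Writing $\bar\gamma_c$ for the result, every vertex of $\bar\gamma_c$ other than the root is therefore a leaf or has degree $\ge 3$, and the root has degree $\ge 1$.

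The key quantitative input is a bound on the number $L$ of leaves of $\bar\gamma_c$, equivalently the number of non-root degree-$2$ vertices of $\gamma$. Such a vertex $v$ has a single neighbor, joined to it by a double edge, so by the non-backtracking property no walk can pass through $v$: every visit to $v$ is as the final vertex of some walk, and since $\deg_\gamma(v)=2$ exactly two walks end at $v$. As each of the $q$ walks has a single endpoint $u_{i,t+1}$, distinct leaves consume disjoint pairs of walk-endpoints, so $L\le q/2$. A handshake in $\bar\gamma_c$ then gives
\[
  2|E(\bar\gamma_c)| = \sum_{v} \deg_{\bar\gamma_c}(v) \ \ge\ 3\left(|V(\bar\gamma_c)|-1-L\right) + L + 1 ,
\]
and substituting $|V(\bar\gamma_c)| = |E(\bar\gamma_c)| - r + 1$ rearranges to $|E(\bar\gamma_c)| \le 3r + 2L - 1 \le 3r + q - 1$.

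Finally I would transfer the bound to $\gamma_c$. For a maximal degree-$4$ path $w_0\,v_1\cdots v_k\,w_{k+1}$ in which each $v_i$ has two distinct neighbors, the multiplicities of $w_0v_1$ and $v_1v_2$ are even, positive, and sum to $\deg_\gamma(v_1)=4$, so each equals $2$; hence the path is traversed exactly twice and becomes a multiplicity-$2$ marked edge, while retained edges keep their multiplicities, giving $|E(\gamma_c)| = 2|E(\bar\gamma_c)| \le 6r + 2q - 2$. Since $6r = 3|E(\gamma)| - 6(|V(\gamma)|-1)$ this is exactly the claimed inequality. The one delicate point, and the main obstacle, is the exceptional configuration of a degree-$4$ vertex of $\gamma$ all of whose edges go to a single neighbor: it contracts to a multiplicity-$4$ marked self-loop, producing a surplus of $2$ over $2|E(\bar\gamma_c)|$ per such vertex. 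If $D$ counts these vertices then the walk-endpoint budget gives $2L + 4D \le q$, so the total surplus is $2D \le \frac{1}{2}(q - 2L)$, and feeding this into the handshake bound still yields $|E(\gamma_c)| \le 6r + 4L - 2 + \frac{1}{2}(q-2L) = 6r + 3L - 2 + \frac{q}{2} \le 6r + 2q - 2$. Apart from this bookkeeping of leaves, the root, and digon vertices, the argument is just the elementary fact that a connected multigraph of minimum degree $3$ has $O(1)$ times its cyclomatic number many edges.
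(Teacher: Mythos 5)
Your proof is correct and is essentially the paper's argument in different clothing: both rest on a handshake count together with the two key observations that evenness forces every non-root degree of $\gamma$ into $\{2,4,\ge 6\}$ and that non-backtracking forces each degree-$2$ non-root vertex to absorb two walk endpoints, so there are at most $q/2$ of them. The paper runs the degree count directly on $\gamma$ and lower-bounds the $2|V_4(\gamma)|$ edges removed by the contraction, whereas you repackage the same count via the halved multigraph and the invariance of the cyclomatic number under suppressing degree-$2$ vertices; your extra bookkeeping for quadruple-edge vertices and marked self-loop multiplicities is harmless (and in fact yields a slightly stronger constant) but is not needed under the paper's accounting.
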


\begin{proof}
    For $k\in\N$, let $V_k(\gamma)$ be the set of non-root vertices of $\gamma$ of degree exactly $k$. Since $\gamma$ is an even traversal, we get by double counting the number of edges in $\gamma$
    \[
        2|V_2(\gamma)|+4 |V_4(\gamma)| + 6\left(|V(\gamma)|-|V_2(\gamma)|-|V_4(\gamma)|-1\vphantom{\sum}\right)\le 2|E(\gamma)|\,.
    \]
    Moreover, the number of edges removed during the compression is $2|V_4(\gamma)|$. This means that
    \[
        |E(\gamma)|-|E(\gamma_c)| = 2|V_4(\gamma)|\ge  6(|V(\gamma)|-1)-4|V_2(\gamma)|-2|E(\gamma)|\,.
    \]
    Finally, since $\gam$ is non-backtracking, non-root degree-2 vertices can only be created in $\gamma$ by pairing endpoints of the walks, so that $|V_2(\gamma)|\le q/2$. The desired inequality immediately follows.
\end{proof}

We are now ready to prove \cref{lem:walk-number}.

\begin{proof}[Proof of \cref{lem:walk-number}]
    We encode a traversal $\gamma\in\calW_t^q$ as follows:
    \begin{enumerate}
        \item We first encode $\gamma_c$. 
        We write down the sequence of vertices of each walk and indicate whether each step should be the first step of a marked edge (\cref{def:contracted-walk}). 
        Every time we traverse a marked edge for the second time, instead of recording the next vertex of the walk, we record the identifier of the marked edge. We also add a single bit of information to each edge to indicate whether it is the last edge of its walk. The target space of the encoding has size $O(|E(\gamma_c)|)^{|E(\gamma_c)|}$. 
        
        \item We then expand the marked edges in $\gamma_c$ of which there are at most $|E(\gam_c)|/2$. For each marked edge, we write down the length of the path that it replaced. This can be encoded using ``stars and bars''. Initially allocating 2 edges to each marked edge, there are at most $\binom{E}{\abs{E(\gam_c)}/2}$ such objects.
    \end{enumerate}
    We claim that this encoding allows to reconstruct $\gamma$, and its length can be bounded by
    \begin{align*}
        &O(|E(\gamma_c)|)^{|E(\gamma_c)|} \binom{E}{|E(\gamma_c)|/2}
        &\le O(|E(\gamma_c)|)^{|E(\gamma_c)|} O\left(\frac{E}{|E(\gamma_c)|}\right)^{|E(\gamma_c)|/2}
        &=O_q(t)^{|E(\gamma_c)|}\,.
    \end{align*}
    The proof follows after plugging in the bound of \cref{lem:size-contracted}.
\end{proof}

\addcontentsline{toc}{section}{References}
\bibliographystyle{alpha}
{\footnotesize\bibliography{references}}

\appendix

\section{Non-asymptotic Diagram Analysis}
\label{app:non-asymptotic}

\subsection{Fourier analytic properties}
\label{sec:fourier}

In \cref{def:Zal}, for a proper $\al \in \calA$ (a graph instead of a multigraph), $Z_\al$ has entries which are homogeneous multilinear polynomials in the entries of the matrix $A$.
The next lemma shows that the proper diagrams with size at most $n$ form an orthogonal basis of
symmetric polynomials in $A$ with respect to the expectation over $A$.

\begin{lemma}\label{lem:orthogonality}
    For all $i,j \in [n]$ and distinct proper diagrams $\al, \beta \in \calA$, $\E \left[Z_{\al,i}Z_{\beta,j}\right] = 0$.
\end{lemma}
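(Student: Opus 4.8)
The plan is to expand both diagram values into their defining sums over injective vertex-labelings and to exploit that, because $\al$ and $\beta$ are proper, each diagram polynomial is \emph{multilinear} in the entries of $A$. First I would write, directly from \cref{def:Zal},
\[
\E\left[Z_{\al,i}Z_{\beta,j}\right] \;=\; \sum_{\substack{\ph\colon V(\al)\to[n]\text{ inj.},\ \ph(\smallrootpic)=i\\\psi\colon V(\beta)\to[n]\text{ inj.},\ \psi(\smallrootpic)=j}}\ \E\!\left[\ \prod_{\{u,v\}\in E(\al)}\!\!A_{\ph(u)\ph(v)}\ \prod_{\{u,v\}\in E(\beta)}\!\!A_{\psi(u)\psi(v)}\ \right].
\]
Since $\al$ has no self-loops and $\ph$ is injective, the first product is a product of distinct off-diagonal entries of $A$ (distinct edges of $\al$ map to distinct unordered pairs), and likewise for the second product; hence in the combined product every entry $A_{k\el}$ occurs with multiplicity $0$, $1$, or $2$. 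Using \cref{assump:A-entries} — the off-diagonal entries are independent with $\E[A_{k\el}]=0$ and $\E[A_{k\el}^2]=\frac1n$ — the inner expectation factorizes into a product over the distinct entries that occur and vanishes unless each such entry occurs \emph{exactly twice}, in which case it equals $n^{-|E(\al)|}$. The ``exactly twice'' condition says precisely that $\ph(E(\al))=\psi(E(\beta))$ as sets of edges on $[n]$ (so in particular $|E(\al)|=|E(\beta)|$). Thus
\[
\E\left[Z_{\al,i}Z_{\beta,j}\right] \;=\; n^{-|E(\al)|}\cdot\#\!\left\{(\ph,\psi)\ :\ \ph,\psi\ \text{inj.},\ \ph(\smallrootpic)=i,\ \psi(\smallrootpic)=j,\ \ph(E(\al))=\psi(E(\beta))\right\},
\]
understood to be $0$ when $|E(\al)|\ne|E(\beta)|$, and it remains to show this counting set is empty whenever $\al\ne\beta$ (the case where $\al$ or $\beta$ is the single-vertex diagram being immediate, since then one of the two has no edges at all).

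For the combinatorial step I would argue by contradiction. Given such a pair $(\ph,\psi)$, set $E^\star=\ph(E(\al))=\psi(E(\beta))$ and let $W\subseteq[n]$ be the set of endpoints of the edges in $E^\star$. Because the only vertex of a diagram that may be isolated is the root, $\ph$ restricts to a bijection from the non-isolated vertices of $\al$ onto $W$, and similarly $\psi$ for $\beta$; composing, $\theta=\psi^{-1}\circ\ph$ is a bijection between the non-isolated vertices of $\al$ and those of $\beta$, and a direct check from the definitions shows $\theta$ preserves adjacency, so it is a graph isomorphism between the edge-supported parts of $\al$ and $\beta$. The final task is to upgrade $\theta$ to an isomorphism of the \emph{rooted} graphs $\al$ and $\beta$: I would run a short case analysis on whether the root of each diagram is isolated (using that $\ph$ and $\psi$ pin down the roots) to conclude that $\theta$ must match the two roots. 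Since diagrams are taken up to rooted isomorphism, this contradicts $\al\ne\beta$, so the counting set is empty and $\E[Z_{\al,i}Z_{\beta,j}]=0$.

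The main obstacle I anticipate is exactly that last step — showing that an adjacency-preserving bijection $\theta$ recovered from the relation $\ph(E(\al))=\psi(E(\beta))$ is forced to be root-preserving, including the fiddly subcase in which a root happens to be an isolated vertex of its own diagram. Everything before it is routine bookkeeping: the only properties of $A$ used are $\E[A_{k\el}]=0$ and $\E[A_{k\el}^2]=\frac1n$ together with independence, which is why the conclusion holds uniformly over all models allowed by \cref{assump:A-entries}; multilinearity of $Z_\al$ (guaranteed by $\al$ having no self-loops and $\ph$ being injective) is the one structural fact about proper diagrams that makes the reduction to ``edge sets coincide'' go through.
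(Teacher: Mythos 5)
Your route is the same as the paper's: expand both sums, use multilinearity together with independence and mean-zero of the off-diagonal entries to see that a cross term survives only when the two embedded edge sets coincide, $\varphi(E(\al))=\psi(E(\beta))$, and then argue that this coincidence is impossible for distinct diagrams. The paper compresses exactly this into two lines ("two distinct diagrams sum over distinct sets of multilinear monomials"), so up to the point where you reduce to the counting set, your bookkeeping is correct and matches the paper.

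The step you defer — upgrading the adjacency-preserving bijection $\theta=\psi^{-1}\circ\varphi$ to a \emph{rooted} isomorphism — is not a fiddly technicality: it fails whenever $i\neq j$, and no case analysis can rescue it, because the set you need to be empty is genuinely nonempty there. Take $\al$ to be the $2$-path rooted at a leaf and $\beta$ the $2$-path rooted at its middle vertex (distinct proper diagrams), with $i=1$, $j=2$. The embedding $\varphi$ of $\al$ onto the path with vertices $1,2,3$ (root at $1$) and the embedding $\psi$ of $\beta$ onto the same path (root at $2$) give $\varphi(E(\al))=\psi(E(\beta))=\{\{1,2\},\{2,3\}\}$, yet $\theta$ sends the root of $\al$ to a leaf of $\beta$; summing all such collisions yields $\E\left[Z_{\al,1}Z_{\beta,2}\right]=\frac{2(n-2)}{n^2}\neq 0$. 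The edge set on $[n]$ simply does not remember which of its vertices each diagram declared to be its root unless the two root labels agree, so the statement in the stated generality "for all $i,j$" cannot be proved (the paper's own one-line justification has the same blind spot; the lemma is only ever invoked with a common root label, as in \cref{lem:variance} and \cref{lem:constant}, or against the edgeless singleton, where the claim is trivial). Restricted to $i=j$ your plan does close: if both roots are covered by edges then $\theta(\text{root of }\al)=\psi^{-1}(i)=\text{root of }\beta$, giving a rooted isomorphism and the desired contradiction; if exactly one root is isolated the monomials cannot coincide, since one edge set covers the label $i$ and the other cannot by injectivity; and if both roots are isolated the isomorphism of the floating parts already forces $\al=\beta$. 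With that restriction your argument is correct and is essentially the paper's proof written out in full.
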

\begin{proof}
    For each distinct $S, T \subseteq \binom{[n]}{2}$, the independence and centeredness of
    the off-diagonal entries of $A$ proves that
    \[
        \E \left[\prod_{\{i,j\} \in S} A_{ij} \prod_{\{k,\el\} \in T} A_{k\el}\right] = 0\,.
    \]
    Two distinct diagrams sum over distinct sets of multilinear monomials,
    so this orthogonality extends to diagrams.
\end{proof}

The diagrams are not normalized for that inner product, but their variance can be estimated as follows:

\begin{lemma}\label{lem:variance}
    For all $i \in [n]$ and proper $\al \in \calA\setminus\{\rootpic\}$ 
    we have $\E \left[Z_{\alpha,i}\right]=0$ and
    \begin{align*}
        \E\left[Z_{\al, i}^2\right] &\underset{\phantom{n\to\infty}}{=} \abs{\Aut(\al)} \cdot \frac{(n-1)(n-2)\cdots(n-|V(\al)| + 1)}{n^{|E(\al)|}} \\
        &\underset{n\to\infty}{=} \abs{\Aut(\al)} \cdot n^{\abs{V(\al)} - 1 - \abs{E(\al)}} (1 + o(1))\,,
    \end{align*}
    where the last estimate holds when $|V(\al)| = o(\sqrt{n})$.
\end{lemma}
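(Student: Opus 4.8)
The mean statement is immediate: since $\al\neq\rootpic$ and only the root may be isolated, $\al$ has at least one edge, so every monomial of $Z_{\al,i}$ contains a factor $A_{jk}$ with $j\neq k$, which is mean zero and independent of the other factors by \cref{assump:A-entries}; alternatively, apply \cref{lem:orthogonality} with $\beta=\rootpic$, for which $Z_{\beta,j}\equiv 1$. For the variance I would start from the expansion
\[
    Z_{\al,i}^2 \;=\; \sum_{\ph,\psi}\;\prod_{\{u,v\}\in E(\al)} A_{\ph(u)\ph(v)}\,A_{\psi(u)\psi(v)}\,,
\]
where $\ph,\psi$ range over injective maps $V(\al)\to[n]$ with $\ph(\rootpic)=\psi(\rootpic)=i$, and push the expectation inside the sum.

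\textbf{Which pairs survive.} Because $\al$ is proper, an injective $\ph$ is injective on edges and maps $\al$ to a simple subgraph of $K_n$, so the combined monomial indexed by $(\ph,\psi)$ is a product of off-diagonal entries of $A$ in which each entry appears with multiplicity $1$ or $2$. By independence and mean-zeroness of the entries, its expectation vanishes unless every entry appears an even number of times, i.e.\ unless the edge sets coincide, $\ph(E(\al))=\psi(E(\al))$; and in that case each of the $|E(\al)|$ edges appears exactly twice, so the expectation equals $\prod_{e}\E[A_e^2]=n^{-|E(\al)|}$ by \cref{assump:A-entries}. Hence $\E[Z_{\al,i}^2] = n^{-|E(\al)|}\cdot\#\{(\ph,\psi): \ph(E(\al))=\psi(E(\al))\}$.

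\textbf{Counting the surviving pairs via $\Aut(\al)$.} The condition $\ph(E(\al))=\psi(E(\al))$ forces $\ph(V(\al))=\psi(V(\al))$: the non-root vertices are exactly those incident to an edge, so they are spanned by the common edge set, and both maps send $\rootpic$ to $i$ (which, by injectivity, is distinct from all non-root images). Therefore $\sigma:=\psi^{-1}\circ\ph$ is a well-defined permutation of $V(\al)$ fixing $\rootpic$, and $\ph(E(\al))=\psi(E(\al))$ is equivalent to $\sigma(E(\al))=E(\al)$, i.e.\ $\sigma\in\Aut(\al)$. Conversely, for any injective $\ph$ with $\ph(\rootpic)=i$ and any $\sigma\in\Aut(\al)$, the map $\psi:=\ph\circ\sigma^{-1}$ is injective, fixes $\rootpic\mapsto i$, and satisfies $\psi(E(\al))=\ph(E(\al))$; the two constructions are mutually inverse, giving a bijection between surviving pairs and pairs $(\ph,\sigma)$ with $\sigma\in\Aut(\al)$. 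The number of injective $\ph$ with $\ph(\rootpic)=i$ is the number of injections of the $|V(\al)|-1$ non-root vertices into $[n]\setminus\{i\}$, namely $(n-1)(n-2)\cdots(n-|V(\al)|+1)$ — which is $0$ once $|V(\al)|>n$, consistently with $Z_{\al,i}=0$ there. Multiplying by $|\Aut(\al)|$ and $n^{-|E(\al)|}$ yields the exact (non-asymptotic) formula.

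\textbf{Asymptotics and the main obstacle.} For the limiting form, write $(n-1)\cdots(n-|V(\al)|+1)=n^{|V(\al)|-1}\prod_{k=1}^{|V(\al)|-1}(1-k/n)$; when $|V(\al)|=o(\sqrt n)$ we have $0\le\sum_{k=1}^{|V(\al)|-1}k/n=O(|V(\al)|^2/n)=o(1)$, so the product is $1+o(1)$, giving the second line. No step here is technically deep; the only point needing care is the counting bijection — checking that ``$\ph$ and $\psi$ induce the same edge set'' is exactly ``$\psi=\ph\circ\sigma^{-1}$ for a root-fixing automorphism $\sigma$'', and that the sole possibly-isolated vertex (the root) does not disturb the identity $\ph(V(\al))=\psi(V(\al))$.
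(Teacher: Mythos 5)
Your proposal is correct and follows essentially the same route as the paper: expand $\E[Z_{\al,i}^2]$ as a double sum over injective embeddings, observe that a pair survives exactly when the two edge images coincide (which, by connectivity of non-root vertices to edges, is parametrized by root-fixing automorphisms composed with a single embedding), giving the factor $\abs{\Aut(\al)}\cdot(n-1)\cdots(n-|V(\al)|+1)\cdot n^{-|E(\al)|}$, and then estimate the falling factorial under $|V(\al)|=o(\sqrt n)$. Your treatment merely spells out the bijection and the mean-zero claim in slightly more detail than the paper does.
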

\begin{proof}
    When $\al$ is proper, $Z_{\alpha,i}$ is a multilinear polynomial with zero constant coefficient, and so it has expectation 0. For the second moment, we have
    \begin{align*}
        \E\left[Z_{\al, i}^2\right] &= \sum_{\substack{\text{injective }\ph_1: V(\al) \to [n]\\\ph_1(\smallrootpic) = i}} \sum_{\substack{\text{injective }\ph_2: V(\al) \to [n]\\\ph_2(\smallrootpic) = i}} \E \left[\prod_{\{u,v\} \in E(\al)} A_{\ph_1(u)\ph_1(v)} A_{\ph_2(u)\ph_2(v)}\right]\,.
    \end{align*}
    Since $\E \left[A_{jk}\right] = 0$ for $j\neq k$, the only terms with nonzero expectation have each $A_{jk}$ occurring at least twice.
    As $\ph_1$ are $\ph_2$ are injective, each $A_{jk}$ can only occur at most twice.
    Therefore, if we fix $\ph_1$ the embeddings $\ph_2$ that contribute a nonzero value are exactly
    graph isomorphisms onto $\im(\ph_1)$.
    The total number of choices for $\ph_1$ and $\ph_2$ is $(n-1)\cdots(n-\abs{V(\al)}+1) \cdot \abs{\Aut(\al)}$ and the expectation of a nonzero term is
    \[
        \prod_{\{j,k\} \in E(\al)}\E\left[A_{jk}^2\right] = \frac{1}{n^{|E(\al)|}} \,.
    \]
    This completes the proof of the first part of the statement. Under the further assumption $|V(\al)|=o(\sqrt n)$, the falling factorial can then be estimated  as
    \begin{align*}
        \left|\log\left(\frac{(n-1)\ldots(n-|V(\al)|+1)}{n^{|V(\al)|-1}}\right)\right| &\le \sum_{i=1}^{|V(\al)|-1} \left|\log\left(1-\frac i n\right)\right|\\
        &\le \sum_{i=1}^{|V(\al)|-1} \frac i n\underset{n\to\infty}{\longrightarrow} 0\,.
    \end{align*}
    This implies that $(n-1)\ldots(n-|V(\al)|+1)=(1+o(1)) n^{|V(\al)|-1}$, as desired.
\end{proof}

We can already see from the previous lemma that if $\al \in \calT$ is a tree,
then the variance of $Z_{\al,i}$ is $\Theta(1)$,
whereas if $\al$ is a connected graph with a cycle, then the variance of $Z_{\al,i}$ is $o(1)$.

We will use orthogonality repeatedly in the sequel through the following direct consequence of \cref{lem:orthogonality} and \cref{lem:variance}:

\begin{corollary}
\label{lem:constant}
    Let $x = \sum_{\textnormal{proper }\al \in \calA} c_\al Z_\al$. Then for any $\tau\in\calT$,
    \[
        \E\left[x_i Z_{\tau,i}\right] = c_\tau \E\left[Z_{\tau,i}^2\right] 
        \underset{n\to\infty}{=} c_\tau \abs{\Aut(\tau)} + o(1)\,,
    \]
    where the second estimate holds for $|V(\tau)| = o(\sqrt{n})$.

    In particular, $\E \left[x\right] = c_{\smallrootpic} \vec{1}$ where $c_{\smallrootpic}$
    is the coefficient of the singleton diagram.
\end{corollary}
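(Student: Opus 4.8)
The plan is to reduce the statement to the two lemmas just proved: orthogonality (\cref{lem:orthogonality}) and the variance formula (\cref{lem:variance}). I would begin by writing $x_i Z_{\tau,i} = \sum_{\textnormal{proper }\al \in \calA} c_\al\, Z_{\al,i} Z_{\tau,i}$ and taking expectations term by term, which is legitimate since a diagram expression is supported on finitely many diagrams. Because $\tau \in \calT$ is a tree, it has neither multiedges nor self-loops, so $\tau$ is itself a proper diagram; hence \cref{lem:orthogonality} applies to every pair $(\al,\tau)$ and forces $\E[Z_{\al,i} Z_{\tau,i}] = 0$ whenever $\al \neq \tau$. Only the diagonal term survives, giving $\E[x_i Z_{\tau,i}] = c_\tau\, \E[Z_{\tau,i}^2]$, which is the first equality.

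Next I would substitute \cref{lem:variance}. For a tree we have $|E(\tau)| = |V(\tau)| - 1$, so
\[
    \E[Z_{\tau,i}^2] = \abs{\Aut(\tau)}\cdot\frac{(n-1)(n-2)\cdots(n-|V(\tau)|+1)}{n^{|V(\tau)|-1}}\,,
\]
and the asymptotic form of \cref{lem:variance} --- whose validity is exactly the hypothesis $|V(\tau)| = o(\sqrt n)$ --- says this falling-factorial ratio is $1 + o(1)$. Thus $\E[Z_{\tau,i}^2] = \abs{\Aut(\tau)} + o(1)$, which is the second estimate.

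For the ``in particular'' clause, I would specialize to $\tau = \rootpic$. By \cref{def:Zal} the only injective map from the one-vertex diagram into $[n]$ sending the root to $i$ is $\smallrootpic \mapsto i$, and there are no edges, so $Z_{\rootpic,i} = 1$ identically (here $\abs{\Aut(\rootpic)} = 1$ and the identity $\E[Z_{\rootpic,i}^2] = 1$ is exact, so no asymptotics are needed). Hence $\E[x_i] = \E[x_i Z_{\rootpic,i}] = c_{\rootpic}$ for every $i \in [n]$, i.e. $\E[x] = c_{\rootpic}\vec 1$.

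There is no substantial obstacle: the statement is a bookkeeping consequence of the two lemmas. The only points that merit an explicit line are (a) observing that a tree is proper, so that \cref{lem:orthogonality} genuinely applies to the pair $(\al,\tau)$ and isolates a single surviving term, and (b) noting that the identity $|E(\tau)| = |V(\tau)|-1$ together with the hypothesis $|V(\tau)| = o(\sqrt n)$ is precisely what licenses the asymptotic form of \cref{lem:variance}.
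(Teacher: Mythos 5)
Your proof is correct and follows exactly the route the paper intends: the corollary is stated there as a direct consequence of \cref{lem:orthogonality} and \cref{lem:variance}, and your argument (orthogonality kills all cross terms, the tree identity $|E(\tau)|=|V(\tau)|-1$ plus $|V(\tau)|=o(\sqrt n)$ gives the asymptotic variance, and $Z_{\smallrootpic,i}\equiv 1$ handles the expectation claim) is precisely that bookkeeping, carried out correctly.
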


\subsection{Operations on the diagram representation}
\label{sec:derivation-asymptotic}

We compute the diagrammatic effect of multiplying by $A$.
\begin{lemma}\label{lem:mat-mul}
    For all diagrams $\al \in \calA$,
    \[AZ_\al = Z_{\al^+} + \sum_{v \in V(\al)} Z_{\textnormal{contract $v$ and $\rootpic$ in $\al^+$}}\,.\]
\end{lemma}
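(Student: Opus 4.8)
The plan is to prove the identity one coordinate at a time by unfolding the matrix product and doing a case analysis on collisions among the summation indices. Fix $i\in[n]$. Using \cref{def:Zal}, write
\[
    (AZ_\al)_i \;=\; \sum_{j=1}^n A_{ij}Z_{\al,j} \;=\; \sum_{\substack{\ph:V(\al)\to[n]\\\ph\text{ injective}}} A_{i,\ph(\smallrootpic)}\prod_{\{u,v\}\in E(\al)} A_{\ph(u)\ph(v)}\,,
\]
where the sum over $j$ has been absorbed into the sum over embeddings by setting $j=\ph(\smallrootpic)$. Comparing with the definition of $Z_{\al^+,i}$ (recall $\al^+$ from \cref{def:plus-minus-operators}: add a new root $r'$ joined by an edge to the old root $\smallrootpic$), the only discrepancy is that an embedding of $\al^+$ must send $r'$ to $i$ \emph{injectively}, i.e.\ it must have $i\notin\im(\ph|_{V(\al)})$, whereas the display above imposes no such restriction. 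So the key move is to partition the sum according to whether $i\in\im(\ph)$.

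When $i\notin\im(\ph)$, extending $\ph$ by $r'\mapsto i$ yields an injective embedding of $\al^+$ with root sent to $i$, and $A_{i,\ph(\smallrootpic)}\prod_{\{u,v\}\in E(\al)}A_{\ph(u)\ph(v)}$ is precisely $\prod_{\{u,v\}\in E(\al^+)}A_{\ph(u)\ph(v)}$; summing over all such $\ph$ gives exactly $Z_{\al^+,i}$. When $i\in\im(\ph)$, injectivity gives a unique $v\in V(\al)$ with $\ph(v)=i$, so I would further group these terms by $v$. For fixed $v$, identifying $v$ with $r'$ (both now carrying the label $i$) turns such a $\ph$ into an injective embedding of the diagram ``contract $v$ and $\rootpic$ in $\al^+$'' with its root sent to $i$, and this correspondence is a bijection with matching monomials edge-for-edge: the edge $\{r',\smallrootpic\}$ of $\al^+$ becomes an ordinary edge of the contracted diagram, or a self-loop at the root when $v=\smallrootpic$ (which correctly reproduces the factor $A_{ii}$), and any edge of $\al$ incident to $v$ becomes an edge incident to the root. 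Hence the $i\in\im(\ph)$ part equals $\sum_{v\in V(\al)}Z_{\textnormal{contract $v$ and $\rootpic$ in $\al^+$},\,i}$, and adding the two parts gives the claimed formula.

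There is no substantial obstacle here; this is essentially a bookkeeping lemma. The two points requiring a little care are: (i) stating the injectivity correspondence cleanly — that a root-pinned embedding of $\al^+$ is the same thing as an embedding of $\al$ whose image avoids the label $i$ — and (ii) handling the multigraph structure, since contracting $v$ with $r'$ may create a self-loop (when $v$ is the old root) or increase an edge multiplicity (when $v$ is adjacent to the old root), so the product $\prod_{\{u,v\}\in E(\cdot)}$ must be read over the edge multiset. I would also dispatch the degenerate regime $|V(\al)|\ge n$ in one line: if $|V(\al)|>n$ all three quantities vanish, and if $|V(\al)|=n$ then every injective $\ph$ has $\im(\ph)=[n]\ni i$, so $Z_{\al^+,i}=0$ automatically and only the contraction terms survive, consistent with the formula.
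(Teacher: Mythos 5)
Your proposal is correct and follows essentially the same route as the paper's proof: unfold $(AZ_\al)_i$ into a sum over embeddings of $\al$ with an extra factor $A_{i,\ph(\smallrootpic)}$, then partition according to whether $i\in\im(\ph)$, with the non-colliding terms giving $Z_{\al^+}$ and the colliding terms giving the contractions. The additional remarks on self-loops, edge multiplicities, and the degenerate case $|V(\al)|\ge n$ are fine but not needed beyond what the paper's argument already covers.
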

\begin{proof}
\begin{align*}
    (AZ_\al)_i
    &= \sum_{j = 1}^n A_{ij} \sum_{\substack{\ph: V(\al) \to [n]\\\ph\text{ injective}\\\ph(\smallrootpic) = j}} \prod_{\{u,v\} \in E(\al)} A_{\ph(u)\ph(v)}\\
    &= \sum_{\substack{\ph: V(\al) \to [n]\\\ph \text{ injective}}} A_{i,\ph(\smallrootpic)}\prod_{\{u,v\} \in E(\al)} A_{\ph(u)\ph(v)}\,.
\end{align*}
The sum over $\ph$ can be partitioned based on whether $i \in \im(\ph)$.
The terms with $i \not \in \im(\ph)$ sum to $Z_{\al^+}$.
The terms with $i \in \im(\ph)$ sum to the different contractions
of $\al^+$ based
on which vertex of $\al$ is labeled $i$.
\end{proof}

Switching to componentwise operations, the combinatorics is captured by the concepts of intersection patterns and intersection diagrams.

\begin{definition}[Intersection pattern, $P \in \calP(\al_1, \dots, \al_k)$]
\label{def:intersection-pattern}
    Let $\al_1, \dots, \al_k \in \calA$. Let $\al$
    be the diagram obtained by putting all $\al_i$ at the same root.
    An intersection pattern $P$ is a partition of $V(\al)\setminus \{\rootpic\}$
    such that for all $i\in[k]$ and $v,w \in V(\al_i)\setminus\{\rootpic\}$, $v$ and $w$
    are not in the same block of the partition.

    Let $\calP(\al_1, \dots, \al_k)$ be the set of intersection patterns between $\al_1, \dots, \al_k$.
\end{definition}

\begin{definition}[Intersection diagram, $\al_P$]
\label{def:intersection-diagram}
    Let $\al \in \calA$.
    Given a partition $P$ of $V(\al)$,
    let $\al_P$ be the diagram obtained by contracting
    each block of $P$ into a single vertex. Keep all edges
    (hence there may be new multiedges or self-loops).
\end{definition}

By casing on which vertices are equal among the embeddings
of $\al_1, \dots, \al_k$ as in the proof of \cref{lem:mat-mul}, we have:
\begin{lemma}\label{lem:diagrams-product}
    For $\al_1, \dots, \al_k \in \calA$, the componentwise product of $Z_{\al_1},\ldots,Z_{\al_k}$ is
    \[Z_{\al_1} \odot \cdots \odot Z_{\al_k} = \sum_{P \in \calP(\al_1, \dots, \al_k)} Z_{\al_P}\,.\]
\end{lemma}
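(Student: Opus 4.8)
The plan is to unfold both sides using \cref{def:Zal} and set up an explicit bijection between the indexing sets, exactly parallel to the proof of \cref{lem:mat-mul}. Fix $i\in[n]$. Writing the componentwise product at coordinate $i$,
\[
  (Z_{\al_1}\odot\cdots\odot Z_{\al_k})_i=\prod_{m=1}^k Z_{\al_m,i}=\sum_{\ph_1,\dots,\ph_k}\ \prod_{m=1}^k\ \prod_{\{u,v\}\in E(\al_m)}A_{\ph_m(u)\ph_m(v)}\,,
\]
where the outer sum ranges over all tuples with each $\ph_m:V(\al_m)\to[n]$ injective and $\ph_m(\smallrootpic)=i$. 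First I would partition this sum according to the \emph{collision pattern} of the tuple: declare $v\in V(\al_m)$ and $w\in V(\al_{m'})$ equivalent when $\ph_m(v)=\ph_{m'}(w)$. Because each individual $\ph_m$ is injective, no two distinct vertices of the same $\al_m$ are ever equivalent, so the resulting partition $P$ of $V(\al)\setminus\{\smallrootpic\}$ — where $\al$ glues the $\al_m$ at a common root — is by definition an element of $\calP(\al_1,\dots,\al_k)$. This is the instantiation of \cref{principle1}.

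Next I would show that for a fixed $P\in\calP(\al_1,\dots,\al_k)$, the tuples realizing \emph{exactly} $P$ are in bijection with the injective maps $\Phi:V(\al_P)\to[n]$ satisfying $\Phi(\smallrootpic)=i$. In one direction, a tuple realizing $P$ descends to a well-defined map $\Phi$ on the blocks of $P$, equivalently on $V(\al_P)$; it is injective precisely because $P$ records \emph{all} collisions, so distinct blocks receive distinct labels, and it fixes the root to $i$. In the other direction, given such a $\Phi$, compose with the quotient map $q_m:V(\al_m)\to V(\al_P)$ to recover $\ph_m:=\Phi\circ q_m$; since $q_m$ is injective on $V(\al_m)$ (no two vertices of $\al_m$ lie in a common block, as $P$ is an intersection pattern) and $\Phi$ is injective, each $\ph_m$ is injective, and one checks the resulting tuple realizes exactly $P$. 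Under this bijection the monomials agree, $\prod_{m}\prod_{\{u,v\}\in E(\al_m)}A_{\ph_m(u)\ph_m(v)}=\prod_{\{u,v\}\in E(\al_P)}A_{\Phi(u)\Phi(v)}$, because by \cref{def:intersection-diagram} the edge multiset of $\al_P$ is exactly the image of the disjoint union $\bigsqcup_m E(\al_m)$ under the maps $q_m$ (all edges kept, multiplicities adding).

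Combining the two steps, the sum reorganizes as
\[
  (Z_{\al_1}\odot\cdots\odot Z_{\al_k})_i=\sum_{P\in\calP(\al_1,\dots,\al_k)}\ \sum_{\substack{\Phi:V(\al_P)\to[n]\ \textnormal{injective}\\\Phi(\smallrootpic)=i}}\ \prod_{\{u,v\}\in E(\al_P)}A_{\Phi(u)\Phi(v)}=\sum_{P\in\calP(\al_1,\dots,\al_k)}Z_{\al_P,i}\,,
\]
which is the claim. There is no analytic content here — the identity holds for an arbitrary fixed $A$ and every $n$ — so the only thing to be careful about is the bookkeeping: that the collision partition is always a legal intersection pattern (this is exactly where injectivity of the individual $\ph_m$ enters) and that the self-loops and multiedges created by contracting $P$ are tracked as a multiset rather than a set, so that the monomials match. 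I expect this multiedge-and-root bookkeeping to be the only mildly delicate point; everything else is a direct unfolding of the definitions.
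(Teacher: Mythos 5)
Your proof is correct and follows exactly the paper's route: the paper dispatches this lemma in one line by ``casing on which vertices are equal among the embeddings'' (as in the proof of \cref{lem:mat-mul}), which is precisely your collision-pattern partition plus the bijection with injective embeddings of $\al_P$. You have simply written out the bookkeeping (legality of the collision partition, the multiset of edges, the root handling) that the paper leaves implicit.
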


Next we consider these operations when restricted to the tree diagrams.
Suppose we start from $\tau \in \calT$ and compute $AZ_\tau\,$.
Which diagrams appearing in \cref{lem:mat-mul} are non-negligible?
Following the asymptotic classification of non-negligible diagrams (\cref{sec:classification}),
it is only $\tau^+$ and $\tau^-$ (the latter only appears if the root of $\tau$ has degree 1, in which case $\tau^-$ is the result of intersecting $\rootpic$ and the child of the root then removing a double edge).
Hence we conclude
\[AZ_\tau \eqinf \begin{cases}
    Z_{\tau^+} + Z_{\tau^-} & \text{ if }\tau \in \calS\\
    Z_{\tau^+}& \text{ if }\tau \in \calT \setminus \calS\,.
\end{cases}\]

Given tree diagrams $\tau_1, \dots, \tau_k \in \calT$, the asymptotically
non-negligible terms in the product in \cref{lem:diagrams-product} are identified as follows.
Let $\widetilde{\tau}$ be a non-negligible diagram appearing in the result,
i.e. $\widetilde{\tau}$ is a tree with hanging trees of double edges.
Since $\tau_1, \dots, \tau_k$ are connected,
the hanging double trees must hang off the root vertex of $\widetilde{\tau}$
in order to avoid cycles.
Additionally, they must arise as the overlap of two complete copies of the tree.
Thus the asymptotically non-negligible terms are the partial
matchings between isomorphic branches of the roots of the $\tau_i$.
Two copies of a branch $\sig \in \calS$ can be matched up into a tree
of double edges in $\abs{\Aut(\sig)}$ ways.

Based on these observations, the \textit{tree approximation} is formally
defined to be the result of applying the algorithmic operations and removing the non-trees at each step.

\begin{definition}[Tree approximation of a GFOM, $\widehat{x}_t$]
\label{def:asymptotic-state-explicit}
    Let $x_t\in\R^n$ be the state of a GFOM. We recursively define the tree approximation of $x_t$, denoted by $\widehat{x}_t$, to be a diagram expression in the span of $(Z_\tau)_{\tau \in \calT}$.
    \begin{enumerate}
        \item Initially, $\widehat{x}_0 = Z_{\smallrootpic}$.
        \item If $x_{t+1}=Ax_t$, define $\widehat{x}_{t+1} = (\widehat{x}_t)^+ + (\widehat{x}_t)^-$.
        \item If $x_{t+1}=f_t(x_t, \dots, x_0)$ coordinatewise for some polynomial $f_t:\R^t\to\R$, define
        $\widehat{x}_{t+1}$ by applying each monomial of $f_t$ to $\widehat{x}_t,\ldots,\widehat{x}_0$ separately and summing the results. To apply a monomial on $\widehat{x}_t, \dots, \widehat{x}_0$, expand each $\widehat{x}_s$ in the diagram basis and sum all the cross product terms. The result of multiplying $q$ tree diagrams $\tau_1, \dots, \tau_q \in \calT$ is
        \[\sum_{M \in \calM(\tau_1, \dots, \tau_q)} c_M Z_{\tau_M}\,,\]
        where:
        \begin{enumerate}[(a)]
            \item $\calM(\tau_1, \dots, \tau_q)$ is the set of (partial) matchings
            of isomorphic branches of $\tau_1, \dots, \tau_q$ such that no two branches from the same $\tau_i$ are matched.
            \item $\tau_M$ is the tree obtained by merging the roots of $\tau_1, \dots, \tau_q$ and removing all subtrees matched in $M$.
            \item $c_M = \prod_{\{\sigma, \sigma'\} \in M} \abs{\Aut(\sigma)}$.
        \end{enumerate}
    \end{enumerate}
\end{definition}

\subsection{Repeated-label diagram basis}
\label{sec:repeated-labels}

An alternative basis for the diagram space consists of diagrams
in which labels are allowed to repeat.
This representation has been defined by Ivkov and Schramm \cite[Section 3.5]{ivkov2023semidefinite}.

\begin{definition}[$\widetilde{Z}_\al$]
For a diagram $\al$ with
root $\rootpic$, define $\widetilde{Z}_\al \in \R^n$ by
\[\widetilde{Z}_{\al,i} = \sum_{\substack{\ph: V(\al) \to [n]\\\ph(\smallrootpic) = i}} \prod_{\{u,v\} \in E(\al)} A_{\ph(u)\ph(v)}\,.\]
\end{definition}

The only difference between $\widetilde{Z}_\al$ and $Z_\al$ is that the embedding
$\ph$ must be injective in $Z_\al$.
To perform the change of 
basis in one direction is as easy as replacing $\widetilde{Z}_\al$
by a sum of $Z_\al$
based on which labels are repeated.

\begin{lemma}\label{lem:change-basis}
    For $\al \in \calA$,
    \[\widetilde{Z}_\al = \sum_{P \in \calP(\al)} Z_{\al_P}\]
    where $\calP(\al)$ is the set of partitions of $V(\al)$ and $\al_P$
    contracts the blocks of $P$ (\cref{def:intersection-diagram}).
\end{lemma}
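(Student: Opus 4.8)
The plan is to prove the identity coordinate by coordinate, by sorting the (not necessarily injective) embeddings appearing in $\widetilde{Z}_\al$ according to their fiber structure. Fix $i\in[n]$. By definition,
\[
    \widetilde{Z}_{\al,i} = \sum_{\substack{\ph: V(\al) \to [n]\\ \ph(\smallrootpic) = i}} \prod_{\{u,v\} \in E(\al)} A_{\ph(u)\ph(v)}\,.
\]
To each such $\ph$ I associate the partition $P(\ph)\in\calP(\al)$ of $V(\al)$ whose blocks are the nonempty fibers of $\ph$: vertices $u,v$ lie in the same block of $P(\ph)$ exactly when $\ph(u)=\ph(v)$. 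Since every partition of $V(\al)$ arises this way and $\ph\mapsto P(\ph)$ is well defined, the collection $\{\,\ph : P(\ph)=P\,\}$ over $P\in\calP(\al)$ is a genuine partition of the index set of the sum above.

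Next I would identify each such class with the embeddings defining $Z_{\al_P}$. Write $q_P:V(\al)\to V(\al_P)$ for the map sending each vertex to its block. Then the embeddings $\ph$ with $P(\ph)=P$ are precisely the composites $\ph=\psi\circ q_P$ for $\psi:V(\al_P)\to[n]$ \emph{injective}, and the constraint $\ph(\smallrootpic)=i$ becomes $\psi(q_P(\smallrootpic))=i$, where $q_P(\smallrootpic)$ is by construction the root of $\al_P$. Because $\al_P$ is obtained from $\al$ by contracting the blocks of $P$ while keeping \emph{all} edges (\cref{def:intersection-diagram}), its edge multiset is $\{\,\{q_P(u),q_P(v)\} : \{u,v\}\in E(\al)\,\}$, possibly containing self-loops where $q_P(u)=q_P(v)$; consequently the monomial is literally unchanged under this reindexing:
\[
    \prod_{\{u,v\} \in E(\al)} A_{\ph(u)\ph(v)} \;=\; \prod_{\{u,v\} \in E(\al)} A_{\psi(q_P(u))\psi(q_P(v))} \;=\; \prod_{\{a,b\} \in E(\al_P)} A_{\psi(a)\psi(b)}\,.
\]
Summing over the class $\{\ph : P(\ph)=P\}$ therefore gives exactly $Z_{\al_P,i}$ (as in \cref{def:Zal}), and summing over all $P\in\calP(\al)$ yields $\widetilde{Z}_{\al,i}=\sum_{P\in\calP(\al)}Z_{\al_P,i}$. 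As this holds for every $i$, the vector identity follows.

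There is essentially no analytic content here, so I do not expect any real obstacle; the only points worth a sentence of care are purely bookkeeping. First, that each $\al_P$ is again a legitimate element of $\calA$: contracting a block merges internal edges into self-loops rather than deleting them, so no non-root vertex becomes isolated, and only the root block is ever isolated (which is allowed). Second, that self-loops produced when a block contains two adjacent vertices correctly contribute diagonal factors $A_{\psi(a)\psi(a)}$, consistent with \cref{def:Zal}. The crux of the write-up is simply to observe that $P\mapsto\{\ph:P(\ph)=P\}$ partitions the embedding set and that \cref{def:intersection-diagram} has been set up precisely to match this contraction operation.
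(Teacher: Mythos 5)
Your proof is correct and follows exactly the paper's argument: partition the non-injective embeddings defining $\widetilde{Z}_{\al,i}$ according to which vertices share a label, and identify the class for each partition $P$ with the injective embeddings defining $Z_{\al_P,i}$. The extra bookkeeping you supply (fiber map $q_P$, self-loops, the root block) is just a fuller write-up of the same one-line decomposition the paper gives.
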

\begin{proof}
    We have
    \[\widetilde{Z}_{\al,i} = \sum_{\substack{\ph: V(\al) \to [n]\\\ph(\smallrootpic) = i}} \prod_{\{u,v\} \in E(\al)} A_{\ph(u)\ph(v)}\,.\]
    The sum over $\ph$ can be divided based on which vertices are assigned the same label.
    The terms with a given partition $P$ of $V(\al)$ are exactly $Z_{\al_P, i}$.
\end{proof}

The algorithmic operations are simpler to compute in this basis,
although the asymptotic tree approximation does not seem to be easily visible
in this basis
(the tree diagrams do not span the same space, and a diagram which is an even cycle has entries with magnitude $\Theta(1)$ in $\widetilde{Z}_\al$ but negligible entries in $Z_\al$).

Given the current representation $x_t = \sum_{\tau \in \calT} c_\tau \widetilde{Z}_\tau$
the operations have the following effects on the $\widetilde{Z}_\tau$
(non-asymptotically i.e. without taking the limit $n \to \infty$).
\begin{enumerate}[(i)]
    \item \textbf{Multiplying by $A$ extends the root.}

    We have $A \widetilde{Z}_\al = \widetilde{Z}_{\al^+}$ where $\al^+$ is obtained by extending the root by one edge.
    \item \textbf{Componentwise products graft trees together.}

    To componentwise multiply $\widetilde{Z}_\al$ and $\widetilde{Z}_\beta$,
    we ``graft'' $\al$ and $\beta$ by merging their roots.
\end{enumerate}

\begin{example}
Consider the example,
\[x_{t+1} = (Ax_t)^2 \qquad\qquad x_0 = \vec{1}\]
where $\vec{1} \in \R^n$ is the all-ones vector and the square function
is applied componentwise.
The first few iterations are,
    \begin{center}
    \begin{tabular}{c|c|c}
        $x_0 = \vec{1}$ & $x_1 = (A\vec{1})^2$ & $x_2 = (A(A\vec{1})^2)^2$\\
        $x_{0,i} = 1$ & $\displaystyle x_{1,i} = \sum_{\substack{j_1, j_2 = 1}}^n A_{ij_1}A_{ij_2}$ & $\displaystyle  x_{2,i} = \sum_{j_1,j_2 = 1}^n \sum_{k_1, k_2 = 1}^n\sum_{\el_1,\el_2 = 1}^n A_{ij_1}A_{ij_2} A_{j_1k_1} A_{j_1\el_1} A_{j_2k_2}A_{j_2\el_2}$\\
        $\vcenter{\hbox{\includegraphics[height=3ex]{images/singleton.png}}}$& $\vcenter{\hbox{\includegraphics[height=7ex]{images/2tree.png}}}$ & $\vcenter{\hbox{\includegraphics[height=18ex]{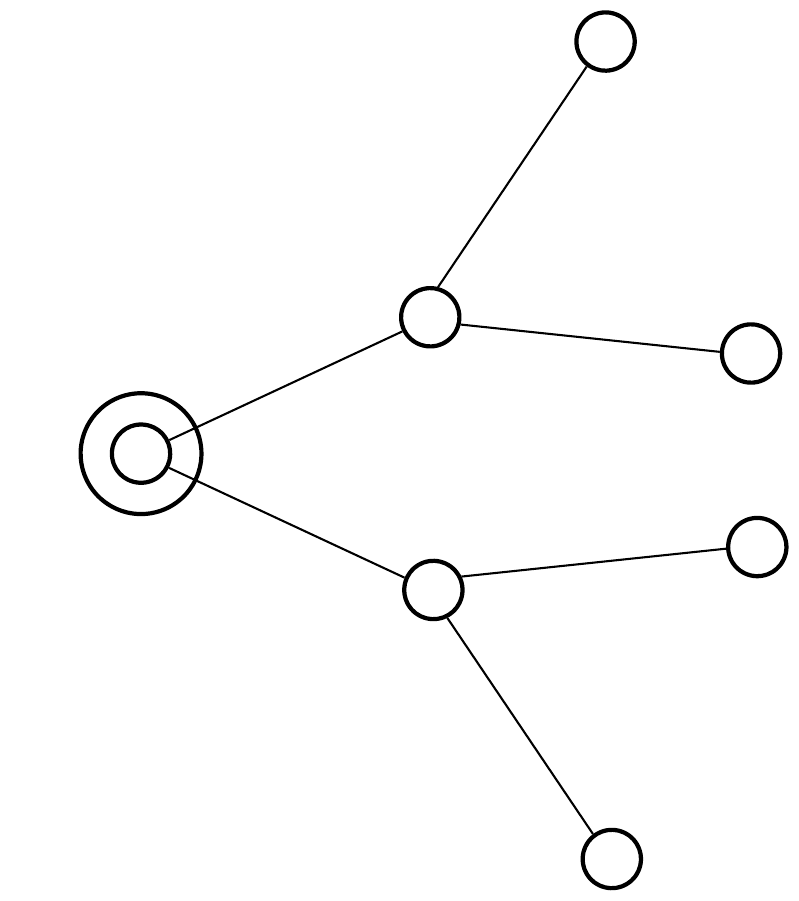}}}$
    \end{tabular}
    \end{center}
\end{example}

\section{Omitted Proofs}
\label{app:non-asymptotic-analysis}

\subsection{Removing hanging double edges}
\label{sec:edge-labels}

In order to implement the removal of hanging double edges, we introduce
an additional diagrammatic construct to track the error,
\emph{2-labeled edges}.
These terms are equal to zero when $A$ is a Rademacher matrix
and it is recommended to ignore them on a first read.

\begin{definition}[Edge-labeled diagram]
    \label{def:labeled-diagram}
    An edge-labeled diagram is a diagram in which some of the edges
    are labeled ``2''.

    We let $E(\al)$ denote the entire multiset of labeled and unlabeled edges of $\alpha$, $E_2(\al)$
    the multiset of 2-labeled edges and $E_1(\al)=E\setminus E_2(\al)$ the multiset of non-labeled edges.

    We use the convention that $\abs{E(\al)}$ counts each 2-labeled edge
twice, so that $\abs{E(\al)}$ continues to equal the degree of the polynomial $Z_{\alpha,i}$.
\end{definition}

\begin{definition}[Edge-labeled $Z_\al$]
    For an edge-labeled diagram $\al$, we define $Z_\al \in \R^n$ by
    \[Z_{\al, i} = \sum_{\substack{\textnormal{injective }\ph: V(\al) \to [n]\\\ph(\smallrootpic) = i}} \prod_{\{u,v\} \in E_1(\al)} A_{\ph(u)\ph(v)} \prod_{\{u,v\} \in E_2(\al)} \left(A_{\ph(u)\ph(v)}^2 - \frac{1}{n}\right)\,.\]
\end{definition}

The set of diagrams $\calA$ is extended to allow diagrams which may have 2-labeled edges.
The definition of $I(\al)$ from \cref{def:isolated} must also be updated to incorporate labeled edges
(because a labeled edge is mean-0, it is treated like a single edge).
\begin{definition}[Updated definition of $I(\al)$]
     For a diagram $\al \in \calA$, let $I(\al)$ be the subset
     of non-root vertices such that every edge incident to that vertex has multiplicity $\geq 2$ or is a self-loop, treating 2-labeled edges as if they were normal edges.
 \end{definition}

The following is an exact decomposition for removing hanging double edges.
\begin{lemma}\label{lem:syntactic-removal}
    Let $\al \in \calA$ be a diagram with a hanging (unlabeled) double edge. Let $\al_0$ be $\alpha$ with both the hanging double edge and corresponding hanging vertex removed, and $\al_2$ be $\alpha$ with the hanging double edge replaced by a single 2-labeled edge. Then,
        \[
            Z_\al = Z_{\al_0} - \frac{|V(\al)|-1}{n} \cdot Z_{\al_0} + Z_{\al_2}\,.
        \]
\end{lemma}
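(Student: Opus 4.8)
The plan is to verify the identity coordinatewise, by unwinding Definition~\ref{def:Zal} (and the definition of $Z_{\al_2}$ for an edge-labeled diagram) and singling out the summation over the label of the hanging vertex.

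Fix $i\in[n]$. Let $k$ be the hanging vertex of the hanging double edge and $j$ its unique neighbour; by definition of a hanging double edge, $k$ is a non-root vertex of degree exactly $2$ and both of its incident edges join $k$ to $j$. Thus $V(\al_0)=V(\al)\setminus\{k\}$ while $V(\al_2)=V(\al)$, with $E_1(\al_2)=E(\al_0)$ and $E_2(\al_2)=\{\{j,k\}\}$; in particular $|V(\al_0)|=|V(\al)|-1$. First I would split each injective $\ph:V(\al)\to[n]$ with $\ph(\smallrootpic)=i$ into its restriction $\ph_0$ to $V(\al_0)$ and the value $k'=\ph(k)$; injectivity of $\ph$ is equivalent to injectivity of $\ph_0$ together with $k'\notin\im(\ph_0)$, and the only factor of the edge product involving $k$ is $A_{\ph_0(j)k'}^2$ (the two parallel edges $\{j,k\}$). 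This rewrites
\[
  Z_{\al,i}=\sum_{\substack{\ph_0:V(\al_0)\to[n]\text{ inj.}\\\ph_0(\smallrootpic)=i}}\Bigl(\prod_{\{u,v\}\in E(\al_0)}A_{\ph_0(u)\ph_0(v)}\Bigr)\sum_{k'\notin\im(\ph_0)}A_{\ph_0(j)k'}^2 .
\]

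Next I would substitute $A_{\ph_0(j)k'}^2=\tfrac1n+\bigl(A_{\ph_0(j)k'}^2-\tfrac1n\bigr)$ in the inner sum and distribute. Since $\ph_0$ is injective we always have $k'\neq\ph_0(j)$, so every matrix entry that appears is off-diagonal and its square has expectation exactly $\tfrac1n$ by Assumption~\ref{assump:A-entries}. The constant part then contributes the factor $\tfrac1n\cdot\lvert[n]\setminus\im(\ph_0)\rvert=\tfrac{n-|V(\al)|+1}{n}=1-\tfrac{|V(\al)|-1}{n}$, which is independent of $\ph_0$ and factors out to give $\bigl(1-\tfrac{|V(\al)|-1}{n}\bigr)Z_{\al_0,i}$; the remaining part is, term by term, exactly the defining sum of $Z_{\al_2,i}$, the edge $\{j,k\}$ being $2$-labeled and hence contributing $A_{\ph_0(j)k'}^2-\tfrac1n$. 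Summing the two pieces gives $Z_{\al,i}=Z_{\al_0,i}-\tfrac{|V(\al)|-1}{n}Z_{\al_0,i}+Z_{\al_2,i}$, which is the claim since $i$ was arbitrary.

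This argument is essentially pure bookkeeping and I do not expect a real obstacle. The two places to be a little careful are: that every summed entry $A_{\ph_0(j)k'}$ is off-diagonal, so the centering constant is exactly $\tfrac1n$ and not some moment of $\mu_0$; and the bijection between injective $\ph$ and pairs $(\ph_0,k')$ with $k'\notin\im(\ph_0)$, which is what produces the combinatorial factor $n-|V(\al)|+1=n-|V(\al_0)|$. In the degenerate case where $j$ is itself a non-root vertex incident only to the two edges toward $k$, the diagram $\al_0$ has an isolated non-root vertex; one should then read $Z_{\al_0}$ via the formula of Definition~\ref{def:Zal} directly, and the identity above is unaffected.
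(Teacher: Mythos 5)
Your proposal is correct and follows essentially the same route as the paper: isolate the factor coming from the hanging double edge, split $A^2$ as $\tfrac1n+\bigl(A^2-\tfrac1n\bigr)$, identify the centered part with $Z_{\al_2}$, and count the $n-|V(\al)|+1$ choices for the hanging vertex to get the $\bigl(1-\tfrac{|V(\al)|-1}{n}\bigr)Z_{\al_0}$ term. Your version just spells out the bijection $\ph\leftrightarrow(\ph_0,k')$ more explicitly; note the identity is purely algebraic, so the remark about $\E[A_{jk'}^2]=\tfrac1n$ is not needed.
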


\begin{proof}
    We write:
    \begin{align*}
        Z_{\al,i} &= \sum_{\substack{\text{injective }\ph: V(\al) \to [n]\\\ph(\smallrootpic) = i}} A_{u,v}^2\prod_{\{x,y\} \in E(\al) \setminus \{\{u,v\},\{u,v\}\}} A_{\ph(x)\ph(y)}\\
        &= Z_{\alpha_2,i} + \frac 1 n \sum_{\substack{\text{injective }\ph: V(\al) \to [n]\\\ph(\smallrootpic) = i}} \prod_{\{x,y\} \in E(\al) \setminus \{\{u,v\},\{u,v\}\}} A_{\ph(x)\ph(y)}\\
        &= Z_{\alpha_2,i} + \frac {n-|V(\alpha)|+1} n \cdot Z_{\alpha_0,i} = Z_{\al_0, i} - \frac{|V(\al)| - 1}{n}Z_{\al_0, i} + Z_{\al_2,i}\,.
    \end{align*}
    The additional $n-|V(\alpha)|+1$ scaling factor comes from removing the hanging vertex.
\end{proof}

\subsection{Omitted proofs for \texorpdfstring{\cref{sec:eqinf}}{Section~\ref{sec:eqinf}}}
\label{sec:omitted-combneg}

We prove a more specific version of \cref{lem:improper-magnitude-simplified}.
\begin{lemma}\label{lem:improper-magnitude}
    Let $q\in\N, \al \in \calA,$ and $i \in [n]$. Then,
    \[
        \abs{\E\left[Z_{\al,i}^q\right]} \leq M_{q|E(\al)|} 2^{q|E(\al)|}(q\abs{V(\al)})^{q|V(\al)|} \cdot n^{\frac q 2\left(|V(\al)| - 1 - |E(\al)|+|I(\al)|\right)}\,,
    \]
    where $M_{k}$ is a bound on the $k$-th moment of the entries of $A$ (recall the notations of \cref{assump:A-entries}),
    \[
        M_{k} = \max\left(\E_{X\sim \mu} \left[\abs{X}^k\right], \;
        \E_{X\sim \mu_0} \left[\abs{X}^k\right]\right) \,.
    \]
    When $q$ and $|V(\al)|$ are $O(1)$, the overall bound reduces to
    \[
        \abs{\E\left[Z_{\al,i}^q\right]} \leq O\left( n^{\frac q 2\left(|V(\al)| - 1 - |E(\al)|+|I(\al)|\right)}\right)\,.
    \]
\end{lemma}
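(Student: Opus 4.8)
The plan is to bound $\abs{\E[Z_{\al,i}^q]}$ by expanding the $q$-th power as a sum over $q$-tuples of injective embeddings $\ph_1,\dots,\ph_q : V(\al)\to[n]$ with $\ph_a(\smallrootpic)=i$, and then grouping the terms according to the combinatorial ``shape'' of how the $q$ copies overlap. Writing
\[
    \E\left[Z_{\al,i}^q\right] = \sum_{\ph_1,\dots,\ph_q} \E\left[\prod_{a=1}^q \prod_{\{u,v\}\in E(\al)} A_{\ph_a(u)\ph_a(v)}\right],
\]
the key observation is that a term is nonzero only if every matrix entry $A_{jk}$ that appears does so with multiplicity at least $2$ (since $\E[A_{jk}]=0$ and the entries are independent up to symmetry). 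So I would restrict attention to tuples whose union multigraph $G$ on $[n]$ has all edge-multiplicities $\ge 2$; each such $G$ is the image of $\al$ under $q$ embeddings glued along some identification of vertices.

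First I would count how many distinct vertex labels such a configuration can use. Each of the $q$ copies of $\al$ has $|V(\al)|-1$ non-root vertices, but when we form $G$ we may identify vertices across copies; I want to upper bound the number of vertices $|V(G)|$ of the resulting multigraph. The crucial structural input is the same one used in \cref{lem:connected-nonnegligible} and \cref{lem:improper-magnitude-simplified}: run a spanning exploration of $G$ from the root and charge, within each copy $a$, one edge to each newly-discovered vertex of $V(\al)\setminus I(\al)$ and two edges to each vertex of $I(\al)$ (whose incident edges already have multiplicity $\ge 2$ inside a single copy). Because every edge of $G$ has multiplicity $\ge 2$, a counting/double-bookkeeping argument gives
\[
    |V(G)| - 1 \le \frac{q}{2}\bigl(|V(\al)| - 1 - |E(\al)| + |I(\al)|\bigr) + \frac{q}{2}|E(\al)| - \tfrac12|E(G)|,
\]
equivalently $|V(G)| - 1 + \tfrac12 |E(G)| \le \tfrac q2 (|V(\al)|-1-|E(\al)|+|I(\al)|) + \tfrac q2|E(\al)|$; rearranged so that the factor $n^{|V(G)|-1}$ (the number of label choices) times $n^{-|E(G)|}$ (from $\E[A^2]=\tfrac1n$, using $|E(G)|\ge$ half the total edge-slots $q|E(\al)|$) produces exactly the exponent $\tfrac q2(|V(\al)|-1-|E(\al)|+|I(\al)|)$. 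The number of label-assignments for a fixed shape is at most $n^{|V(G)|-1}$, and the magnitude of each nonzero expectation is at most $M_{q|E(\al)|}$ by \cref{lem:gaussian-moments}-style moment bounds together with the independence of the entries (a product of at most $q|E(\al)|$ entries, each grouped into powers, bounded via the uniform moment bound $M_k$).

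Then I would count the number of shapes. A shape is determined by (i) the partition of the $q(|V(\al)|-1)$ non-root vertices into groups that get identified — at most $(q|V(\al)|)^{q|V(\al)|}$ such partitions — and this already determines $G$ as a multigraph. Multiplying the number of shapes by the per-shape bound $n^{|V(G)|-1}\cdot n^{-\frac{q|E(\al)|}{2}}\cdot M_{q|E(\al)|}$ and absorbing the $2^{q|E(\al)|}$ from the edge-slot-to-edge bookkeeping gives the claimed
\[
    \abs{\E\left[Z_{\al,i}^q\right]} \le M_{q|E(\al)|}\, 2^{q|E(\al)|}\, (q|V(\al)|)^{q|V(\al)|}\, n^{\frac q2(|V(\al)|-1-|E(\al)|+|I(\al)|)}.
\]
The final sentence follows because when $q,|V(\al)|=O(1)$ all the prefactors are $O(1)$ and $|E(\al)|\le \binom{|V(\al)|}{2}+|V(\al)| = O(1)$ as well (for multigraphs arising here edge multiplicities stay bounded in the relevant regime, or one simply notes $M_{q|E(\al)|}$ is a finite constant by \cref{assump:A-entries}).

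The main obstacle I anticipate is getting the vertex-counting inequality exactly right: one has to be careful that when vertices from different copies are identified, the ``one edge / two edges per new vertex'' charging is done globally on $G$ rather than copy-by-copy, and that identifications can only decrease $|V(G)|$ while the edge-multiset is unchanged, so the worst case is no identification across copies plus the tightness condition from \cref{lem:connected-nonnegligible} within each copy. Handling the edge case $|V(\al)|>n$ (where $Z_{\al,i}=0$) and self-loops / 2-labeled edges (which behave like mean-zero single edges, already folded into the updated $I(\al)$) is routine but should be mentioned.
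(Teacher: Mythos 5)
Your overall plan (expand the $q$-th moment over $q$-tuples of injective embeddings, discard terms where some off-diagonal entry appears once, then trade ``number of surviving labelings'' against ``decay per term'') is the same strategy as the paper's, but the central quantitative step is wrong as written. The displayed inequality
\[
|V(G)| - 1 \le \tfrac{q}{2}\bigl(|V(\al)| - 1 - |E(\al)| + |I(\al)|\bigr) + \tfrac{q}{2}|E(\al)| - \tfrac12|E(G)|
\]
fails already for $\al$ a single edge and $q=2$: the only surviving configuration is a double edge between $i$ and one other label $j$, so $|V(G)|-1=1$ while the right-hand side is $1-\tfrac12|E(G)|<1$ whether $|E(G)|$ counts distinct edges or edge slots. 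The supporting claim ``$|E(G)|\ge$ half the total edge-slots $q|E(\al)|$'' is also backwards: since every distinct off-diagonal edge must carry multiplicity at least $2$, one has $|E(G)|\le q|E(\al)|/2$, with equality only when all multiplicities are exactly $2$. Finally, your proposed resolution of the cross-copy issue --- ``the worst case is no identification across copies plus tightness within each copy'' --- cannot be right: if $\al$ has any multiplicity-one edge, a configuration with no identifications across copies leaves that edge with multiplicity one in $G$ and the expectation is zero, so such configurations are exactly the ones excluded; taking them as the worst case would give $|V(G)|-1\le q(|V(\al)|-1)$ and the factor $|I(\al)|$ in the exponent would never appear.

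The missing observation, which is the heart of the paper's proof, is simpler and purely about vertices: every non-root vertex of $\al$ \emph{not} in $I(\al)$ is incident (within its own copy) to a multiplicity-one or $2$-labeled edge, and for the expectation to be nonzero that edge must be duplicated by an edge of \emph{another} copy, so the vertex's label must be shared with a vertex from another embedding. Counting slots, the number of distinct non-root labels is therefore at most $\tfrac q2\bigl(|V(\al)|-1+|I(\al)|\bigr)$, giving at most $n^{\frac q2(|V(\al)|-1+|I(\al)|)}$ choices of the image and at most $(q|V(\al)|)^{q|V(\al)|}$ tuples of embeddings onto a fixed image. This is combined not with a per-distinct-edge factor $n^{-|E(G)|}$ but with the uniform per-term bound $M_{q|E(\al)|}\,2^{q|E_2(\al)|}\,n^{-q|E(\al)|/2}$ (expand the $2$-labeled factors $A^2-\tfrac1n$, then apply H\"older to the resulting monomial of total degree at most $q|E(\al)|$); the product of the two bounds gives exactly the stated exponent, with no need for any inequality involving $|E(G)|$. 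Two smaller points: self-loops are \emph{not} mean-zero single edges --- diagonal entries have unrestricted mean under \cref{assump:A-entries}, and the updated definition of $I(\al)$ treats a self-loop like a doubled edge (it does not force the vertex to be repeated), while $2$-labeled edges are the ones treated as mean-zero single edges; and in the last sentence you cannot bound $|E(\al)|$ by $\binom{|V(\al)|}{2}+|V(\al)|$ since $\al$ is a multigraph --- the $O(1)$ reduction implicitly assumes the whole diagram (edges included) has constant size.
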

\begin{proof}
    We expand $\E\left[Z_{\al,i}^q\right]$ as
    \begin{align*}
         \sum_{\substack{\text{injective }\ph_1,\ldots,\ph_q : V(\al) \to [n]\\\ph_1(\smallrootpic) = \cdots=\ph_q(\smallrootpic)=i}} \E \left[\prod_{p=1}^q \left(\prod_{\{u,v\} \in E_1(\al)} A_{\ph_p(u)\ph_p(v)} \right)\left(\prod_{\{u,v\} \in E_2(\al)} \left(A_{\ph_p(u)\ph_p(v)}^2-\frac 1 n\right) \right)\right]\,.
    \end{align*}
    This is a polynomial of degree $q|E(\al)|$ in $A$ (by convention every 2-labeled edge contributes $2$ to $|E(\alpha)|$).
    We first estimate the magnitude of any summand of the sum over $\ph_1,\ldots,\ph_q$ with nonzero expectation. Each such summand can be decomposed into $2^{q|E_2(\al)|}$ terms by expanding out\footnote{The factor $2^{q|E_2(\al)|}$ may be removed with a tighter argument.} the $A_{ij}^2 - \frac{1}{n}$. This leaves monomials in the entries of $A$ of total degree at most $q|E(\al)|$. We bound the expected value of each of these monomials by $M_{q|E(\al)|} n^{-q|E(\al)|/2}$ using Hölder's inequality. This shows that any nonzero term in the summation has magnitude at most $2^{q|E_2(\al)|} M_{q|E(\al)|} n^{-q|E(\al)|/2}$.

    To bound the number of nonzero terms,
    we observe that every edge $A_{jk}$ for $j \neq k$ must occur
    zero times or at least twice in order to have nonzero expectation
    (the self-loops $A_{jj}$ can occur any number of times,
    and the 2-labeled edges $A_{jk}^2 - \frac 1 n$ must overlap at least one additional edge in order to have nonzero expectation).
    Each vertex in $V(\al)\setminus I(\al)\setminus \{\rootpic\}$ is incident to an edge of multiplicity
    1 or a 2-labeled edge, and so it must occur in at least two embeddings in order for that edge
    $A_{jk}$ to overlap and not make the expectation 0.
    This implies that the number of distinct non-root vertices among
    the embeddings is at most $q\left(|V(\al)|-1 +|I(\al)|\right)/2$ where the $-1$ is used to avoid counting the root. 

    Hence, there are at most $n^{q\left(|V(\al)|-1 +|I(\al)|\right)/2}$ ways to choose the entire image $\im(\varphi_1) \cup \ldots\cup  \im(\varphi_q)$.
    Once this is fixed, there are at most
    $(q|V(\al)|)^{q|V(\al)|}$ $q$-tuples of embeddings that map to these vertices. We conclude by combining the bound on the number of nonzero terms and the bound on the magnitude of each of these terms.
\end{proof}

\eqinfAlmostSure*

\begin{proof}
    By assumption, $x-y$ is a sum of combinatorially negligible terms. We first focus on a single one
    of them, say $a_n Z_\al$. For any $\varepsilon>0\,, q \in \N$ and $i\in [n]$, we have
    \begin{align*}
         \Pr\left(|a_n Z_{\al,i}|\ge \varepsilon\right) &\le \frac{\E|a_n Z_{\al,i}|^q} {\varepsilon^q}
         &\text{(Markov's inequality)}\\
         &\le \frac 1 {\varepsilon^q} M_{q|E(\al)|}2^{q|E(\al)|}(q|V(\al)|)^{q|V(\al)|} \cdot n^{-\frac q2}
         &\text{(\cref{lem:improper-magnitude})}\\
         &\le \frac 1 {\varepsilon^q}  (q|E(\al)|)^{O(q)} 2^{q|E(\al)|}(q|V(\al)|)^{q|V(\al)|} \cdot n^{-\frac q2} & (\text{subgaussianity of $A_{ij}$})\\
         &= \exp\left(O(q\log q) - \frac q 2 \log n + q\log (1/\eps)\right)\,.
     \end{align*}
     Picking $q=\log n$ and $\eps=q^C n^{-1/2}$
     and taking the constant $C$ large enough we
     can make the probability an arbitrarily
     small inverse polynomial in $n$. Then we
     take a union bound over all 
     $i\in [n]$ and all combinatorially negligible term appearing in $x-y$ (there are constantly many such terms by definition).
\end{proof}

\eqinfProperties*
\begin{proof} 
    It suffices to prove that for a combinatorially negligible term $n^{-k} Z_\al$:
    \begin{enumerate}[(i)]
        \item All terms in the diagram representation of $n^{-k} AZ_\al$ are combinatorially negligible.
        \item Let $n^{-\el} Z_\beta$ be any term of combinatorial order $1$ or combinatorially negligible.
        Then all terms in the diagram representation of the componentwise product $n^{-(k+\el)} Z_\al \odot Z_\beta$ are combinatorially negligible,
        where $\odot$ is the componentwise product.
    \end{enumerate}

    For (i), the diagram representation of $A Z_\al$ is given by \cref{lem:mat-mul}. In the term $\al^+$ without intersections,
    \[
        |V(\al^+)| = |V(\al)| + 1\,, \qquad |I(\al^+)| = |I(\al)|\,,  \qquad |E(\al^+)| = |E(\al)| + 1\,.
    \]
    From this we can check that $n^{-k} Z_{\alpha^+}$ is still combinatorially negligible.

    In a term $\beta$ corresponding to an intersection
    between the new root and a vertex of $\al$,
    \[
        |V(\beta)| = |V(\al)|\,, \qquad |I(\beta)| \le |I(\al)| + 1 \,, \qquad |E(\beta)| = |E(\al)| + 1\,.
    \]
   The second inequality follows from the observation that the only vertices from $\al$ whose neighborhood structure can be affected by the intersection are the root of $\al$ (which does not contribute to $|I(\al)|$) and the intersected vertex. Hence, $n^{-k} Z_\beta$ is also combinatorially negligible.

    For (ii), the diagram representation of $Z_\al\odot Z_\beta$ is given by \cref{lem:diagrams-product}. Fix an intersection pattern $P\in\calP(\al,\beta)$ that has $b$ blocks and denote by $\gamma$ the resulting diagram. Then,
    \begin{align*}
        |V(\gamma)| &= b+1 \,,\\
        |E(\gamma)| &= |E(\al)|+|E(\beta)| \,,\\
        |I(\gamma)| &\le |I(\al)|+|I(\beta)| + |V(\al)| + |V(\beta)| - b - 2 \,.
    \end{align*}
    The last inequality is proven by observing that for a non-root vertex that is neither in $I(\al)$ nor $I(\beta)$ to contribute to $I(\gamma)$, it must intersect another vertex. Moreover, there are at most $|V(\al)|+|V(\beta)|-b-2$ intersected non-root vertices in $\gamma$.
    
    Putting everything together,
    \begin{align*}
        &|V(\gamma)| - 1 - |E(\gamma)| + |I(\gamma)|\\
        \leq\;&|V(\al)| - 1 - |E(\al)| + |I(\al)| + |V(\beta)| - 1 - |E(\beta)| + |I(\beta)|\\
        <\;&2(k+l)\,,
    \end{align*}
    since $n^{-k} Z_\al$ is combinatorially negligible and $n^{-\el} Z_\beta$ is at most order 1. This concludes the proof.
\end{proof}

Using the 2-labeled edges introduced in \cref{sec:edge-labels}, we can
implement the removal of hanging double edges.
\removeDoubleEdge*
\begin{proof}
    Starting from the decomposition of \cref{lem:syntactic-removal},
    \[
        a_n Z_\al = a_n Z_{\al_0} - a_n \frac{|V(\al)| - 1}{n} Z_{\al_0} + a_n Z_{\al_2}\,,
    \]
    we claim that the first term is combinatorially order 1, and the second and third terms are combinatorially negligible.
    Comparing $\al_0$ to $\al$, two edges and one vertex in $I(\al)$ are removed.
    This does not change the combinatorial order.
    The second term scales down by $n$ and this becomes
    negligible (by assumption $|V(\al)|$ is constant).
    In the third term, $\abs{I(\al_2)} < \abs{I(\al)}$ to take into account the hanging vertex, while $|V(\al)| = |V(\al_2)|$ and $|E(\al)| = |E(\al_2)|$
    remain unchanged, making the
    term negligible.
    We remind the reader that $|E(\al)| = |E(\al_2)|$ because $|E(\al_2)|$
    counts 2-labeled edges twice.
\end{proof}

\cref{def:combinatorially-negligible}
includes the coefficient $a_n$ in the definition in order
to incorporate factors of $\frac 1 n$ on some error terms such as those in the proof above.

\subsection{Scalar diagrams}
\label{sec:scalar-diagrams}

We collect the properties of scalar diagrams (\cref{def:scalar-diagrams}) which naturally generalize those of vector diagrams.
We omit the proofs of the results in this section, as they are direct modifications of their vector analogs.

    First, the scalar diagrams are an orthogonal basis for scalar functions of $A$.
    \begin{lemma}
        For any proper $\alpha\in\calA_{\scalar}$:
        \begin{itemize}
            \item For any proper $\beta\in\calA_{\scalar}$ such that $\beta\neq \alpha$, $\E\left[Z_\al Z_\beta\right] = 0$.
            \item $\E \left[Z_\alpha\right]=0$ if $\alpha$ is not a singleton.
            \item The second moment of $Z_\al$ is
            \begin{align*}
                \E\left[Z_\al^2\right] &\underset{\phantom{n\to\infty}}{=} \abs{\Aut(\al)} \cdot \frac{n(n-1)\cdots(n-|V(\al)|+1)}{n^{|E(\al)|}} \\
                &\underset{n\to\infty}{=} \abs{\Aut(\al)} \cdot n^{|V(\al)|-|E(\al)|}(1+o(1))\,,
            \end{align*}
            where the last estimate holds whenever $|V(\al)|=o(\sqrt n)$.
        \end{itemize}
    \end{lemma}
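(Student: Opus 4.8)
The plan is to mirror the proofs of \cref{lem:orthogonality} and \cref{lem:variance} essentially verbatim, the only structural difference being that a scalar diagram carries no root, so the sum defining $Z_\alpha$ ranges over \emph{all} injective $\varphi:V(\alpha)\to[n]$ rather than those pinned at $\varphi(\rootpic)=i$. This is exactly why the falling factorial in the second moment will have $|V(\alpha)|$ factors $n(n-1)\cdots(n-|V(\alpha)|+1)$ here, as opposed to the $|V(\alpha)|-1$ factors $(n-1)\cdots(n-|V(\alpha)|+1)$ in \cref{lem:variance}.

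\textbf{Orthogonality and vanishing mean.} The first step is to note that since $\alpha$ and $\beta$ are \emph{proper}, i.e. simple graphs, they have no self-loops, so every edge of $Z_\alpha$ involves an off-diagonal entry $A_{ij}$ with $i\neq j$, and each term $\prod_{\{u,v\}\in E(\alpha)}A_{\varphi(u)\varphi(v)}$ is a squarefree monomial in the independent mean-$0$ variables $(A_{ij})_{i<j}$. Expanding $\E[Z_\alpha Z_\beta]$ as a double sum over injective $\varphi,\psi$, a term has nonzero expectation only if every variable $A_{ij}$ occurs an even number of times; since each of the two squarefree monomials contributes multiplicity at most $1$, each variable occurs $0$ or $2$ times, forcing the image edge set of $\varphi$ to equal that of $\psi$. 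As neither diagram has isolated vertices, this would force $\alpha\cong\beta$, contradicting $\alpha\neq\beta$; hence $\E[Z_\alpha Z_\beta]=0$. The same parity observation applied to a single squarefree monomial of degree $|E(\alpha)|\ge 1$ gives $\E[Z_\alpha]=0$ whenever $\alpha$ has at least one edge (the only edge-free proper scalar diagram is the empty ``singleton'' diagram representing $1$).

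\textbf{Second moment.} Next I would expand $\E[Z_\alpha^2]=\sum_{\varphi_1,\varphi_2}\E\bigl[\prod_{e\in E(\alpha)}A_{\varphi_1(e)}\cdot\prod_{e\in E(\alpha)}A_{\varphi_2(e)}\bigr]$ and apply the same reasoning: a term is nonzero exactly when $\varphi_1$ and $\varphi_2$ have the same image edge set, equivalently (no isolated vertices) the same image vertex set with $\varphi_2=\varphi_1\circ g$ for some $g\in\Aut(\alpha)$. There are $n(n-1)\cdots(n-|V(\alpha)|+1)$ choices for $\varphi_1$, each paired with $\abs{\Aut(\alpha)}$ admissible $\varphi_2$, and every such term contributes $\prod_{\{i,j\}\in E(\alpha)}\E[A_{ij}^2]=n^{-|E(\alpha)|}$, which yields the exact formula. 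The asymptotic estimate then follows exactly as in \cref{lem:variance}: when $|V(\alpha)|=o(\sqrt n)$, $\bigl|\log\prod_{i=1}^{|V(\alpha)|-1}(1-i/n)\bigr|\le\sum_{i=1}^{|V(\alpha)|-1}i/n\to 0$, so $n(n-1)\cdots(n-|V(\alpha)|+1)=(1+o(1))n^{|V(\alpha)|}$ and hence $\E[Z_\alpha^2]=\abs{\Aut(\alpha)}\,n^{|V(\alpha)|-|E(\alpha)|}(1+o(1))$.

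\textbf{Main obstacle.} There is really no substantive obstacle here: every step transfers directly from the rooted case, which is consistent with the excerpt's stated intention to omit the proof. The only points needing a moment's attention are (i) that ``proper'' rules out self-loops, so that all edge variables are genuinely off-diagonal and mean $0$ — this is what drives both the orthogonality and the vanishing-mean claim — and (ii) the small bookkeeping difference that the absence of a root contributes one extra factor of $n$ to the falling factorial relative to \cref{lem:variance}. Accordingly I would present this lemma as ``identical to \cref{lem:orthogonality} and \cref{lem:variance} \emph{mutatis mutandis}.''
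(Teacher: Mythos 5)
Your proposal is correct and matches the paper's intended argument: the paper omits this proof, stating only that it is ``analogous to \cref{lem:orthogonality} and \cref{lem:variance},'' and your write-up is exactly that transfer, including the one genuine bookkeeping change (no pinned root, hence $|V(\al)|$ factors in the falling factorial instead of $|V(\al)|-1$). No gaps.
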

    \begin{proof}
        Analogous to \cref{lem:orthogonality} and \cref{lem:variance}.
    \end{proof}

    When scalar and vector diagrams are multiplied together, the result can be expressed in terms of diagrams by extending the notion of intersection patterns $\calP(\al_1, \dots, \al_k)$ (\cref{def:intersection-pattern}) and intersection diagrams (\cref{def:intersection-diagram})
    to allow scalar and vector diagrams simultaneously. The ``unintersected'' diagram
    consists of adding all the scalar diagrams as floating components to the vector diagrams, which are put at the same root.
    Intersection patterns are partitions of this vertex set
    such that no two vertices from the same diagram are matched.
    \begin{lemma}\label{lem:scalar-product}
        Let $\al_1, \dots, \al_k$ be either scalar or vector diagrams.
        Then \[
        Z_{\al_1}\cdots Z_{\al_k}  = \sum_{P \in \calP(\al_1,\dots, \al_k)} Z_{\al_P}\,,\]
        where the product is componentwise for the vector diagrams.
    \end{lemma}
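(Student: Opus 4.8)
The plan is to run the same ``case on which vertices coincide'' argument (\cref{principle1}) that proves \cref{lem:diagrams-product} and \cref{lem:change-basis}, now over the combined scalar/vector vertex set. First I would make precise the ``unintersected'' diagram $\al$ attached to $(\al_1,\dots,\al_k)$: take the disjoint union of $V(\al_1),\dots,V(\al_k)$, identify the roots of all the vector diagrams among the $\al_i$ into one root $\rootpic$, and keep every edge; this is the vertex set on which $\calP(\al_1,\dots,\al_k)$ lives. Fixing a coordinate $v\in[n]$ in case some $\al_i$ is a vector diagram, I would unfold \cref{def:Zal} and \cref{def:scalar-diagrams} to write
\[
    \bigl(Z_{\al_1}\odot\cdots\odot Z_{\al_k}\bigr)_v = \sum_{(\ph_1,\dots,\ph_k)}\ \prod_{i=1}^{k}\ \prod_{\{u,w\}\in E(\al_i)} A_{\ph_i(u)\ph_i(w)}\,,
\]
the sum ranging over tuples in which each $\ph_i : V(\al_i)\to[n]$ is injective and $\ph_i(\rootpic)=v$ whenever $\al_i$ is a vector diagram.

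Next, each tuple $(\ph_1,\dots,\ph_k)$ glues into a single map $\ph : V(\al)\to[n]$ (well defined on the identified roots, which all go to $v$), and I would let $P(\ph)$ be the partition of $V(\al)$ into the fibers of $\ph$. Injectivity of each $\ph_i$ forbids two distinct vertices of the same $\al_i$ from sharing a block, so $P(\ph)\in\calP(\al_1,\dots,\al_k)$. Splitting the summation according to the value of $P(\ph)$, the tuples with $P(\ph)=P$ are exactly those for which the induced map $V(\al)/P\to[n]$ is injective. Since $\al_P$ (\cref{def:intersection-diagram}, read in the mixed scalar/vector setting) has vertex set $V(\al)/P$, retains every edge with its multiplicity — contraction may create new multiedges — and is rooted at the class of $\rootpic$, the inner sum over those tuples is exactly $Z_{\al_P,v}$, or $Z_{\al_P}$ when $\al_P$ is a scalar diagram. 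Summing over $P\in\calP(\al_1,\dots,\al_k)$ gives the stated identity.

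I do not expect a genuine obstacle: this is a transcription of the vector-only argument, and no orthogonality or normalization input is needed, only the bijection between ``tuples realizing the pattern $P$'' and injective embeddings of the contracted diagram $\al_P$. The one place that rewards care is the bookkeeping around the identified roots — keeping the joint vertex set $V(\al)$, the definition of $\calP(\al_1,\dots,\al_k)$, the rooting of $\al_P$, and the tracking of edge multiplicities all mutually consistent, especially in the degenerate cases where there are no vector diagrams (a pure scalar product) or no scalar diagrams (recovering \cref{lem:diagrams-product}).
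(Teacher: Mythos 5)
Your proposal is correct and follows the same route as the paper, which proves this lemma exactly as it proves \cref{lem:diagrams-product}: expand the product as a sum over tuples of injective embeddings and partition according to which vertices coincide, identifying the terms with a given intersection pattern $P$ with $Z_{\al_P}$. Your extra care about the glued vertex set, the identified roots, and possible coincidences between scalar-diagram vertices and the root is exactly the bookkeeping the paper leaves implicit when it says the proof is analogous.
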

    \begin{proof}
        Analogous to \cref{lem:diagrams-product}.
    \end{proof}

We define $I(\al)$ for scalar diagrams exactly as in \cref{def:isolated}.

\begin{lemma}\label{lem:improper-magnitude-scalar}
    Let $q\in\N, \al \in \calA_\scalar,$ and $i \in [n]$. Then,
    \[
        \abs{\E\left[Z_{\al}^q\right]} \leq M_{q|E(\al)|} 2^{q|E(\al)|}  (q\abs{V(\al)})^{q|V(\al)|} \cdot n^{\frac q 2\left(|V(\al)| - |E(\al)|+|I(\al)|\right)}\,,
    \]
    where $M_k$ is defined as in \cref{lem:improper-magnitude}.
    When $q$ and $|V(\al)|$ are $O(1)$, this reduces to
    \[
        \abs{\E\left[Z_{\al}^q\right]} \leq O\left( n^{\frac q 2\left(|V(\al)| - |E(\al)|+|I(\al)|\right)}\right)\,.
    \]
\end{lemma}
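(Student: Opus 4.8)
The plan is to imitate \emph{verbatim} the proof of \cref{lem:improper-magnitude} (the vector analog), since the only structural difference between scalar and vector diagrams is that there is no distinguished root vertex held fixed across the embeddings. First I would expand
\[
  \E\left[Z_\al^q\right] = \sum_{\substack{\ph_1,\dots,\ph_q: V(\al)\to[n]\\ \text{each }\ph_p\text{ injective}}} \E\left[\prod_{p=1}^q \left(\prod_{\{u,v\}\in E_1(\al)} A_{\ph_p(u)\ph_p(v)}\right)\left(\prod_{\{u,v\}\in E_2(\al)}\left(A_{\ph_p(u)\ph_p(v)}^2 - \tfrac1n\right)\right)\right]\,,
\]
which is a polynomial of degree $q|E(\al)|$ in the entries of $A$ (with the convention that a $2$-labeled edge counts twice toward $|E(\al)|$; if one does not allow $2$-labeled scalar diagrams then $E_2(\al)=\emptyset$ and the second product is empty).

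Next I would bound the magnitude of a single nonzero summand exactly as in the vector case: expand each factor $A_{\ph_p(u)\ph_p(v)}^2-\tfrac1n$ to split the summand into $2^{q|E_2(\al)|}$ monomials, each of total degree at most $q|E(\al)|$ in the entries of $A$, and bound the expectation of each such monomial by $M_{q|E(\al)|}\,n^{-q|E(\al)|/2}$ using H\"older's inequality and $\E[A_{jk}^2]=\tfrac1n$. Thus every nonzero term in the sum over $(\ph_1,\dots,\ph_q)$ has magnitude at most $2^{q|E_2(\al)|}M_{q|E(\al)|}\,n^{-q|E(\al)|/2}\le 2^{q|E(\al)|}M_{q|E(\al)|}\,n^{-q|E(\al)|/2}$.

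Then I would count the nonzero summands. For a term to have nonzero expectation, every off-diagonal entry $A_{jk}$ must occur zero times or at least twice among the $q$ embeddings (self-loops are unrestricted, and a $2$-labeled edge $A_{jk}^2-\tfrac1n$ must still overlap another edge). Hence every vertex of $V(\al)\setminus I(\al)$ is incident to a multiplicity-$1$ edge or a $2$-labeled edge, and so must be identified with a vertex coming from some other embedding. This is the one place that differs from the vector proof: there is no root to exclude, so the number of distinct vertices in $\bigcup_{p}\im(\ph_p)$ is at most $q(|V(\al)|+|I(\al)|)/2$, giving at most $n^{q(|V(\al)|+|I(\al)|)/2}$ choices for the union of the images, and at most $(q|V(\al)|)^{q|V(\al)|}$ $q$-tuples of injective embeddings mapping onto any fixed such image.

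Combining the magnitude bound with the count yields
\[
  \abs{\E\left[Z_\al^q\right]} \le M_{q|E(\al)|}\,2^{q|E(\al)|}\,(q|V(\al)|)^{q|V(\al)|}\cdot n^{\frac q2\left(|V(\al)|+|I(\al)|\right)-\frac q2|E(\al)|}\,,
\]
and the exponent is $\frac q2\left(|V(\al)|-|E(\al)|+|I(\al)|\right)$, exactly as claimed; the $O(\cdot)$ form is immediate for $q,|V(\al)|$ constant. I do not anticipate any genuine obstacle: the only substantive point is that the vertex-counting step loses the ``$-1$'' present in the vector bound, which is precisely the reason the scalar exponent reads $|V(\al)|-|E(\al)|+|I(\al)|$ rather than $|V(\al)|-1-|E(\al)|+|I(\al)|$, consistent with the normalization remark that $Z_\rho$ for a scalar forest with $c$ components has order $n^{c/2}$.
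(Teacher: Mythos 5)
Your proposal is correct and is essentially the paper's proof: the paper proves this lemma simply by declaring it analogous to \cref{lem:improper-magnitude}, and your write-up carries out exactly that adaptation, with the only substantive change being the vertex count losing the root's ``$-1$'' (giving at most $q(|V(\al)|+|I(\al)|)/2$ distinct labels), which correctly accounts for the difference in the exponent.
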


\begin{proof}
    Analogous to \cref{lem:improper-magnitude}.
\end{proof}

\begin{definition}[Combinatorially negligible and order 1 scalar]
\label{def:combinatorially-negligible-scalar}
    Let $(a_n)_{n\in\N}$ be a sequence of real-valued coefficients with $a_n = \Theta(n^{-k})$, where $k\ge 0$ is such that $2k \in \Z$. Let $\al \in \calA_{\scalar}$ be a scalar diagram. 
    \begin{itemize}
    \item We say that $a_n Z_\al$ is \emph{combinatorially negligible}
    if
    \[|V(\al)| - |E(\al)| + |I(\al)| \leq 2k - 1\,.\]
    \item We say that $a_n Z_\al$ has \emph{combinatorial order 1} if
    \[|V(\al)| - |E(\al)| + |I(\al)| = 2k\,.\]
    \end{itemize}

    We define $\eqinf$ for scalar diagram expressions exactly as in \cref{def:asymptotic-equality}.
\end{definition}

\begin{lemma}\label{lem:almost-sure-scalar}
    Let $x$ and $y$ be scalar diagram expressions with $x\eqinf y$. Then $|x-y| = \widetilde{O}(n^{-1/2})$ with high probability.
\end{lemma}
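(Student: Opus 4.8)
The plan is to follow the proof of \cref{lem:eqinf-almostsure} (the vector analogue) almost verbatim; the scalar case is in fact easier, since a scalar diagram $Z_\al$ is a single random variable and no union bound over coordinates is needed.

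First I would reduce to a single term. By the definition of $\eqinf$ for scalar expressions (\cref{def:combinatorially-negligible-scalar}, which imports \cref{def:asymptotic-equality}), we may write $x - y = \sum_{\al} c_\al Z_\al$ as a sum of finitely many constant-size scalar diagrams with each $c_\al Z_\al$ combinatorially negligible. Since a finite sum of sequences that tend to $0$ almost surely also tends to $0$ almost surely, it suffices to show $a_n Z_\al \toas 0$ for one combinatorially negligible term, with $a_n = \Theta(n^{-k})$ and $|V(\al)| - |E(\al)| + |I(\al)| \le 2k - 1$ (the case $a_n = 0$ being trivial).

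Next I would estimate a fixed even moment. Applying \cref{lem:improper-magnitude-scalar} with $q = 4$, and using that the coefficient scales like $n^{-k}$ together with combinatorial negligibility,
\[
    \E\bigl[(a_n Z_\al)^4\bigr] = a_n^4\,\E\bigl[Z_\al^4\bigr] \le O\!\left(n^{-4k}\right)\cdot O\!\left(n^{2(|V(\al)| - |E(\al)| + |I(\al)|)}\right) \le O\!\left(n^{2(2k-1)-4k}\right) = O\!\left(n^{-2}\right)\,.
\]
Then for any fixed $\eps > 0$, Markov's inequality gives $\Pr(|a_n Z_\al| \ge \eps) = \Pr\bigl((a_n Z_\al)^4 \ge \eps^4\bigr) \le O(n^{-2}\eps^{-4})$, which is summable over $n$. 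By the Borel--Cantelli lemma, almost surely $|a_n Z_\al| < \eps$ for all large $n$; letting $\eps$ range over a sequence decreasing to $0$ yields $a_n Z_\al \toas 0$, and summing over the finitely many negligible terms completes the proof.

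There is no substantive obstacle here: the entire argument is packaged in the scalar moment bound \cref{lem:improper-magnitude-scalar}, which is a direct transcription of \cref{lem:improper-magnitude}. The only point requiring care is the exponent bookkeeping — verifying that the $n^{-k}$ coefficient scaling combined with the negligibility inequality $|V(\al)| - |E(\al)| + |I(\al)| \le 2k-1$ produces a moment bound decaying like a fixed positive power of $n$, which is precisely what makes Borel--Cantelli applicable (and why one must take $q \ge 4$ rather than $q = 2$).
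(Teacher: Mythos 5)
Your proposal is correct and is essentially the paper's proof: the paper simply declares the scalar case ``analogous to \cref{lem:eqinf-almostsure}'', whose argument is exactly your moment bound (via \cref{lem:improper-magnitude-scalar}) plus Markov and Borel--Cantelli applied to each of the finitely many negligible terms. The only difference is cosmetic: without the union bound over $n$ coordinates, your choice $q=4$ suffices where the vector proof used $q=6$, and your exponent bookkeeping checks out.
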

\begin{proof}
    Analogous to \cref{lem:eqinf-almostsure}.
\end{proof}

\begin{lemma}\label{lem:comb-neg-scalar}
    Let $a_n Z_\al$ be a combinatorially negligible scalar term.
    Let $b_n Z_\beta$ be any scalar or vector term of combinatorial order at most $1$.
    Then all terms in the product $a_n b_n Z_\al Z_\beta$ are combinatorially negligible.
\end{lemma}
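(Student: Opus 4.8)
The statement is the scalar/mixed analogue of part (ii) of \cref{lem:eqinf-properties}, so the plan is to run the same vertex/edge/$I$-accounting, being careful about the (absent or present) root. Write $a_n = \Theta(n^{-k})$ and $b_n = \Theta(n^{-\ell})$ with $2k, 2\ell \in \Z_{\ge 0}$, so that $a_n b_n = \Theta(n^{-(k+\ell)})$. By hypothesis and \cref{def:combinatorially-negligible-scalar}/\cref{def:combinatorially-negligible} we have $|V(\al)| - |E(\al)| + |I(\al)| \le 2k - 1$ (scalar, negligible), and for the order-$\le 1$ term $b_n Z_\beta$ either $|V(\beta)| - |E(\beta)| + |I(\beta)| \le 2\ell$ if $\beta$ is scalar, or $|V(\beta)| - 1 - |E(\beta)| + |I(\beta)| \le 2\ell$ if $\beta$ is a vector diagram.

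First I would expand the product using \cref{lem:scalar-product}: $Z_\al Z_\beta = \sum_{P \in \calP(\al,\beta)} Z_{\al_P}$, and fix one intersection pattern $P$ with $b$ blocks, writing $\gamma = \al_P$. The combinatorial parameters of $\gamma$ satisfy $|E(\gamma)| = |E(\al)| + |E(\beta)|$ (no edges are lost under contraction), and $|V(\gamma)| = b$ when $\beta$ is scalar, respectively $|V(\gamma)| = b+1$ when $\beta$ is a vector diagram (the root is kept separate and the partition is of the non-root vertices, as in \cref{def:isolated}'s ambient conventions). For the key parameter $|I(\gamma)|$ I would argue, exactly as in the proof of \cref{lem:eqinf-properties}(ii), that a non-root block of $\gamma$ can lie in $I(\gamma)$ only if it already contained a vertex of $I(\al) \cup I(\beta)$ or it is a \emph{merged} block; since the number of merges is the number of (non-root) vertices minus the number of blocks, this gives $|I(\gamma)| \le |I(\al)| + |I(\beta)| + \big(|V(\al)| + |V(\beta)| - b\big)$ in the scalar case and $|I(\gamma)| \le |I(\al)| + |I(\beta)| + \big(|V(\al)| + |V(\beta)| - 1 - b\big)$ in the vector case.

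Plugging these into the quantity that governs negligibility — $|V(\gamma)| - |E(\gamma)| + |I(\gamma)|$ for scalar $\gamma$, or $|V(\gamma)| - 1 - |E(\gamma)| + |I(\gamma)|$ for vector $\gamma$ — the $b$'s cancel and one is left with
\[
\big(|V(\al)| - |E(\al)| + |I(\al)|\big) + \big(|V(\beta)| - |E(\beta)| + |I(\beta)|\big) \;-\; \varepsilon,
\]
where $\varepsilon = 0$ when $\beta$ is scalar and $\varepsilon = 1$ when $\beta$ is a vector diagram (this is exactly the bookkeeping that, combined with the extra $+1$ hidden in rewriting $|V(\beta)| - |E(\beta)| + |I(\beta)| = (|V(\beta)| - 1 - |E(\beta)| + |I(\beta)|) + 1$, makes the two cases match up). In either case this is at most $(2k-1) + 2\ell = 2(k+\ell) - 1$, which by \cref{def:combinatorially-negligible-scalar}/\cref{def:combinatorially-negligible} says precisely that $a_n b_n Z_\gamma$ is combinatorially negligible. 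Since this holds for every $P$, every term of $a_n b_n Z_\al Z_\beta$ is combinatorially negligible.

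\textbf{Main obstacle.} The only genuinely delicate point is the bound on $|I(\gamma)|$: one has to confirm that contracting blocks can create at most one new $I$-vertex per merge (new $I$-vertices arise when two multiplicity-$1$ edges are identified by the merge), and to keep the root bookkeeping straight so that the scalar and vector cases produce the same final inequality. This is, however, a routine adaptation of the argument already carried out for \cref{lem:eqinf-properties}(ii), which is why the result is stated with an "analogous" proof.
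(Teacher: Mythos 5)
Your proposal is correct and follows essentially the same route as the paper, which simply invokes the accounting argument of \cref{lem:eqinf-properties}(ii); you have carried out that adaptation explicitly, and the bookkeeping (edges preserved under contraction, at most one new $I$-vertex per merged block, the root handled separately in the mixed case) checks out and telescopes to $(2k-1)+2\ell$ exactly as the paper intends.
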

\begin{proof}
    Analogous to \cref{lem:eqinf-properties}.
\end{proof}

In \cref{lem:connected-nonnegligible}, we characterized the connected vector diagrams which are combinatorially order 1.
We now similarly characterize the order 1 scalar diagrams.
\begin{lemma}\label{lem:scalar-nonnegligible}
    Let $\al \in \calA_{\scalar}$ be a scalar diagram with $c$ connected components, $c_I$ of which contain only vertices in $I(\al)$.
Then 
$n^{-(c+c_I)/2}Z_\al$ is combinatorially negligible or combinatorially order 1, and it is combinatorially order 1 if and only if the following conditions hold simultaneously:
\begin{enumerate}[(i)]
    \item Every multiedge has multiplicity 1 or 2.
    \item There are no cycles.
    \item In each component, the subgraph of multiplicity 1 edges is empty or a connected graph
    (i.e. the multiplicity 2 edges consist of hanging trees)
    \item There are no self-loops or 2-labeled edges (\cref{sec:edge-labels}).
\end{enumerate}
\end{lemma}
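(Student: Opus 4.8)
The plan is to rerun, component by component, the breadth-first-search encoding argument from the proof of \cref{lem:connected-nonnegligible}, choosing the roots so that the loss term $c_I$ appears correctly. Concretely, in each connected component $C_j$ of $\al$ I would fix a root $r_j$: if $C_j$ is \emph{not} one of the $c_I$ special components --- i.e. it contains some vertex outside $I(\al)$ --- I pick such a vertex as $r_j$, and otherwise I pick $r_j$ arbitrarily. Running BFS inside each component produces a spanning forest, and I assign edge copies exactly as before: one copy of the discovery multiedge to each non-root vertex outside $I(\al)$, and two copies of the discovery multiedge to each non-root vertex in $I(\al)$. The latter is legitimate because the discovery edge of an $I(\al)$-vertex is not a self-loop, so (treating $2$-labeled edges as ordinary edges, per the updated definition of $I(\al)$ in \cref{sec:edge-labels}) it has multiplicity $\ge 2$; and the assignment is injective into the edge multiset since distinct vertices have distinct discovery edges. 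Because each special component contributes exactly one root and that root lies in $I(\al)$, while the root of every non-special component lies outside $I(\al)$, the number of non-root vertices in $I(\al)$ is $\abs{I(\al)} - c_I$, so the number of assigned edge copies is
\[
\bigl(\abs{V(\al)} - c - (\abs{I(\al)} - c_I)\bigr) \;+\; 2\bigl(\abs{I(\al)} - c_I\bigr) \;=\; \abs{V(\al)} + \abs{I(\al)} - c - c_I \;\le\; \abs{E(\al)}.
\]
Since $n^{-(c+c_I)/2} = \Theta(n^{-k})$ with $2k = c+c_I$, this inequality is exactly the statement that $n^{-(c+c_I)/2}Z_\al$ is combinatorially negligible or combinatorially order $1$ in the sense of \cref{def:combinatorially-negligible-scalar} (and \cref{lem:improper-magnitude-scalar} turns it into the corresponding moment bound).

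For the ``order $1$ iff'' part I would analyze when the inequality is tight, i.e. when every copy of every edge of $E(\al)$ gets assigned. An edge copy stays unassigned precisely when it is a self-loop or a $2$-labeled edge (never assigned), or a non-spanning-forest edge, or a surplus copy of a spanning-forest edge beyond the one (resp.\ two) copies used at a non-$I(\al)$ (resp.\ $I(\al)$) discovered vertex. Forbidding the first case is condition (iv); forbidding the rest forces each component to be a tree with no cycles (condition (ii)) in which every multiedge has multiplicity $1$ or $2$ (condition (i)), with multiplicity-$1$ edges pointing to discovered vertices outside $I(\al)$ and multiplicity-$2$ edges pointing to discovered vertices in $I(\al)$. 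Condition (iii) then follows from a downward-closure observation: if a discovered vertex $v$ lies in $I(\al)$ then by (i) all its incident edges have multiplicity $2$, so all BFS-children of $v$ also lie in $I(\al)$, and inductively the whole BFS-subtree below $v$ is an $I(\al)$-subtree joined by double edges. Hence in a non-special component the vertices outside $I(\al)$ --- which are exactly those incident to a multiplicity-$1$ edge, and which include the root by our choice --- form a connected subtree joined by the multiplicity-$1$ edges, with the double edges hanging off it, while in a special component there are no multiplicity-$1$ edges at all. Conversely, assuming (i)--(iv), a direct count (a non-special component with $a$ vertices outside $I(\al)$ and $b$ inside contributes $a-1$ single-edge copies and $2b$ double-edge copies, a special component with $b'$ vertices contributes $2(b'-1)$ copies) gives $\abs{E(\al)} = \abs{V(\al)} + \abs{I(\al)} - c - c_I$, i.e.\ order $1$.

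The main obstacle is this equality analysis, and within it the verification that conditions (i)--(iv) are \emph{exactly} equivalent to tightness; the delicate point is the downward-closure argument that pins down condition (iii), since it is what rules out configurations in which the multiplicity-$1$ edges are disconnected by showing the search can never re-enter the multiplicity-$1$ part after entering $I(\al)$. Everything else --- the choice of roots, the injective assignment, and the counting --- is a routine adaptation of the proof of \cref{lem:connected-nonnegligible}, supported by the scalar analogues collected in \cref{sec:scalar-diagrams}.
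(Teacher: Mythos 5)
Your proposal is correct and follows essentially the same route as the paper: a per-component breadth-first-search encoding with roots chosen outside $I(\al)$ whenever possible, yielding the same inequality $\abs{V(\al)}+\abs{I(\al)}-c-c_I\le\abs{E(\al)}$, with conditions (i)--(iv) characterizing tightness. Your equality analysis (including the downward-closure argument for condition (iii)) is in fact spelled out in more detail than the paper's, which simply asserts the equivalence.
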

\begin{proof}
    We proceed as in the proof of \cref{lem:connected-nonnegligible}.
    In each connected component $C$ containing at least one vertex $s\in V(\al)\setminus I(\al)$, we run a breadth-first search from $s$, assigning the multiedges used to explore a vertex to that vertex. This assigns at least one edge to every vertex in $C\setminus \{s\}$, and at least two edges to every vertex in $I(\al)\cap C$. This encoding argument shows that
    \begin{align}
        2\abs{I(\al)\cap C} + \abs{(V(\al)\setminus I(\al))\cap C}-1 \le \abs{E(C)}\,,\label{eq:encoding-first}
    \end{align}
    where $E(C)$ denotes the set of edges in the connected component $C$.
    
    In each connected component $C$ composed only of vertices in $I(\al)$, we run a breadth-first search from an arbitrary vertex, and obtain
    \begin{align}
        2(\abs{I(\al)\cap C}-1)=\abs{V(\al)\cap C}+\abs{I(\al)\cap C}-2\,\le \abs{E(C)}\,.\label{eq:encoding-second}
    \end{align}
    Summing \cref{eq:encoding-first} and \cref{eq:encoding-second} over all connected components, we obtain
    \[
        \abs{V(\al)}-\abs{E(\al)}+\abs{I(\al)}\le (c-c_I) + 2c_I  = c + c_I\,.
    \]
    This shows that $n^{-(c+c_I)/2} Z_\al$ is combinatorially negligible or combinatorially order 1, and it is combinatorially order 1 if and only if equality holds in the argument. This happens if and only if there is no cycle, multiplicity $\sgt 2$ edges, self-loops, or 2-labeled edges anywhere; and if the graph induced by the multiplicity 1 multiedges is connected.
\end{proof}

With this result in hand, we can now characterize the order-1 vector diagrams with several connected components:

\begin{corollary}\label{cor:classify-full}
    Let $\al \in \calA$ be a vector diagram with $c$ floating components, $c_I$
of which consist only of vertices in $I(\al)$. Then $n^{-(c+c_I)/2}Z_\al$ is combinatorially order 1 if and only if both the floating components (viewed as one scalar diagram) scaled by $n^{-(c+c_I)/2}$ and the component of the root are combinatorially order 1.
\end{corollary}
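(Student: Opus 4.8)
The plan is to derive Corollary~\ref{cor:classify-full} from the additivity over connected components of the ``excess'' quantity $|V(\al)| - 1 - |E(\al)| + |I(\al)|$, combined with the two one-sided bounds already in hand: the breadth-first-search encoding bound established inside the proof of \cref{lem:connected-nonnegligible} for connected rooted diagrams, and \cref{lem:scalar-nonnegligible} for scalar diagrams.

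First I would write $\al$ as the disjoint union of its root component $\al_{\smallrootpic}$ (a connected rooted diagram) and the union $\al_{\float}$ of its $c$ floating components, viewed as a single scalar diagram, $c_I$ of which consist only of vertices of $I(\al)$. Since $|V(\cdot)|$, $|E(\cdot)|$, and $|I(\cdot)|$ are additive across connected components, and whether a non-root vertex lies in $I$ depends only on the edges within its own component, we have $I(\al) = I(\al_{\smallrootpic}) \sqcup I(\al_{\float})$ and hence the identity
\[
|V(\al)| - 1 - |E(\al)| + |I(\al)| = \bigl(|V(\al_{\smallrootpic})| - 1 - |E(\al_{\smallrootpic})| + |I(\al_{\smallrootpic})|\bigr) + \bigl(|V(\al_{\float})| - |E(\al_{\float})| + |I(\al_{\float})|\bigr).
\]
Next I would invoke the encoding argument from the proof of \cref{lem:connected-nonnegligible}, which gives that the first parenthesized term is $\le 0$, with equality precisely when $Z_{\al_{\smallrootpic}}$ is combinatorially order $1$; and \cref{lem:scalar-nonnegligible}, which gives that the second parenthesized term is $\le c + c_I$, with equality precisely when $n^{-(c+c_I)/2} Z_{\al_{\float}}$ is combinatorially order $1$. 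Adding the two, the left-hand side is $\le c + c_I$ — so $n^{-(c+c_I)/2} Z_\al$ is combinatorially negligible or of combinatorial order $1$ — and it equals $c + c_I$, i.e. $n^{-(c+c_I)/2} Z_\al$ has combinatorial order $1$, if and only if both terms are simultaneously maximal, which is exactly the asserted condition.

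The only point that needs care — and the place where a sign slip could creep in — is the bookkeeping of the isolated-vertex set and the root: I must verify that the root of $\al_{\smallrootpic}$ is excluded on the vector side and not silently reintroduced as an $I$-eligible vertex on the scalar side (a scalar diagram $\al_{\float}$ has no distinguished root, so all of its vertices are candidates for membership in $I$), and that the constants $-1$ and $(c+c_I)/2$ are matched to the correct pieces. Once that accounting is confirmed, the corollary follows immediately from the two cited results with no further computation.
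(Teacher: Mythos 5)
Your proof is correct and follows the same route as the paper's: the paper likewise observes that the quantity in \cref{def:combinatorially-negligible} is additive across the root component and the floating components, and then applies \cref{lem:connected-nonnegligible} and \cref{lem:scalar-nonnegligible} to the two pieces. You simply spell out the additivity identity and the equality-case bookkeeping (root exclusion, placement of the $c+c_I$ scaling) more explicitly, which is accurate.
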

\begin{proof}
    \cref{def:combinatorially-negligible} sums across the root and floating components, so we may
    apply both \cref{lem:connected-nonnegligible} and \cref{lem:scalar-nonnegligible}.
\end{proof}

\subsection{Classification of diagrams}
\label{sec:diagram-classification-proof}

\begin{lemma}\label{lem:Ztau-normal}
    For all $\sig \in \calS$ and $i\in [n]$, $Z_{\sig, i} \overset{d}{\longrightarrow} \calN(0, \abs{\Aut(\sig)})$. Similarly, for all $\tau\in\calT_{\scalar}$, $n^{-\frac 1 2} Z_\tau\overset{d}{\longrightarrow} \calN(0, \abs{\Aut(\tau)})$.
\end{lemma}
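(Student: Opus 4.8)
The plan is to prove convergence in distribution by the method of moments (\cref{lem:method-of-moments}): it suffices to show that for every fixed $q\in\N$ one has $\E[Z_{\sigma,i}^q]\to (q-1)!!\,|\Aut(\sigma)|^{q/2}$ when $q$ is even and $\E[Z_{\sigma,i}^q]\to 0$ when $q$ is odd, since by \cref{lem:gaussian-moments} these are exactly the moments of $\calN(0,|\Aut(\sigma)|)$; the scalar statement is the same with $\tau\in\calT_\scalar$ and an extra $n^{-q/2}$ normalization. The case $q=2$ is already \cref{lem:variance} (for a tree $|E|=|V|-1$, so $\E[Z_{\sigma,i}^2]\to|\Aut(\sigma)|$), so the content is in the higher moments.

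First I would expand $\E[Z_{\sigma,i}^q]=\sum\E\big[\prod_{p=1}^q\prod_{e\in E(\sigma)}A_{\varphi_p(e)}\big]$ over $q$-tuples of injective embeddings $\varphi_1,\dots,\varphi_q:V(\sigma)\to[n]$ fixing $\varphi_p(\rootpic)=i$. Since the entries are independent and mean-zero, a term survives only if every distinct edge of the multigraph $H=\bigcup_p\varphi_p(\sigma)$ has multiplicity at least $2$. Counting degrees of freedom exactly as in the proof of \cref{lem:improper-magnitude} — writing $V_H,E_H$ for the numbers of distinct vertices and edges of $H$ — there are $O(n^{V_H-1})$ tuples with a given $H$, each contributing $O(n^{-q|E(\sigma)|/2})$ in magnitude; as $H$ is connected with all multiplicities $\ge 2$ we get $V_H-1\le E_H\le q|E(\sigma)|/2=q(|V(\sigma)|-1)/2$, so the sum is $O(1)$ and the only configurations contributing at the constant order $n^0$ are those in which $H$ is a \emph{tree with every edge of multiplicity exactly $2$}. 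This part is routine.

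The main obstacle is identifying these leading configurations combinatorially: I claim they are precisely the ones where the $q$ embeddings split into $q/2$ pairs, the two embeddings of a pair have identical image, and the $q/2$ images are pairwise disjoint apart from the common root. This is where the hypothesis $\sigma\in\calS$ (\cref{def:diagram-notations}: the root has degree $1$) enters. Fix $T_1=\varphi_1(\sigma)$; each of its edges is covered exactly once more, by some ``partner'' embedding. If two edges of $T_1$ adjacent at a \emph{non-root} vertex $v$ had distinct partners, the unique path in the tree $H$ from the root to $v$ would lie in three distinct copies $\varphi_p(\sigma)$, contradicting that all multiplicities are $2$. Since the root has a single child, all of $E(T_1)$ lies in one branch and hence shares a single partner $\varphi_p$; as $|E(T_1)|=|E(\varphi_p(\sigma))|$ this forces $\varphi_1(\sigma)=\varphi_p(\sigma)$. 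Removing this pair leaves a tree with all multiplicities $2$ built from $q-2$ copies, and the claim follows by induction. Configurations with a higher-multiplicity edge, with a cycle, or with paired images overlapping in more than the root satisfy $V_H-1<q(|V(\sigma)|-1)/2$ and are therefore $o(1)$.

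It then remains to count, which is straightforward. For each of the $(q-1)!!$ perfect matchings of $[q]$, choose a base embedding per pair with images pairwise disjoint apart from the root — there are $(1+o(1))\,n^{(q/2)(|V(\sigma)|-1)}$ ways — and a root-fixing automorphism of $\sigma$ relating the two embeddings of each pair, for a factor $|\Aut(\sigma)|^{q/2}$; each such tuple contributes $\prod_{\text{distinct edges}}\E[A_{\cdot}^2]=n^{-(q/2)(|V(\sigma)|-1)}$. Multiplying gives $\E[Z_{\sigma,i}^q]\to(q-1)!!\,|\Aut(\sigma)|^{q/2}$ for $q$ even, while for $q$ odd no perfect matching exists so the limit is $0$. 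The scalar case is identical except that the embeddings $\varphi_p:V(\tau)\to[n]$ are unconstrained injections, so $H$ need not be connected and the degree count becomes $V_H-q|E(\tau)|/2\le c(H)$, with $c(H)$ the number of components; since no component can be a single (hence multiplicity-$1$) copy of $\tau$ we have $c(H)\le q/2$, and equality again forces a disjoint union of $q/2$ doubled copies of $\tau$ pairing up with identical images. The analogous count, using $|V(\tau)|-|E(\tau)|=1$ and the $n^{-q/2}$ normalization, yields $n^{-q/2}\E[Z_\tau^q]\to(q-1)!!\,|\Aut(\tau)|^{q/2}$, and \cref{lem:method-of-moments} concludes both cases.
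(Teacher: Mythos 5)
Your proof is correct and follows essentially the same route as the paper: the method of moments, with the dominant contributions to $\E[Z_{\sigma,i}^q]$ identified as doubled trees arising from pairing up isomorphic copies of $\sigma$ (related by root-fixing automorphisms), which yields $(q-1)!!\,\abs{\Aut(\sigma)}^{q/2}$ for even $q$ and $0$ for odd $q$, with the scalar case handled by the analogous component count. The difference is only presentational: you perform the count directly over $q$-tuples of embeddings and spell out (via the partner-propagation argument at non-root vertices, using that the root of $\sigma\in\calS$ has degree one) the paper's one-line claim that pairing the root edges extends to an automorphism of the whole subtree, whereas the paper routes the same combinatorics through its diagram-product, classification, and hanging-double-edge lemmas.
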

\begin{proof}
    We prove that the moments $\E\left [Z_{\sig,i}^q\right]$ match the Gaussian moments and use \cref{lem:method-of-moments}.
    
    Let $q \in \N$ be a constant independent of $n$.
    First, we expand the product $Z_{\sig, i}^q$ in the diagram basis
    using \cref{lem:diagrams-product}.
    Using \cref{lem:connected-nonnegligible}, the only combinatorially order
    1 terms occur when there are no cycles, all multiedges have multiplicity 1 or 2,
    and the multiplicity 2 edges form hanging trees.
    Any term with an edge of multiplicity 1 disappears when we take the
    expectation $\E\left [Z_{\sig,i}^q\right]$,
    while the diagrams which are entirely hanging trees are equal to $\rootpic$ up to combinatorially negligible terms (\cref{lem:remove-double-edge}). Further, $\rootpic$ has expectation 1, and by \cref{lem:improper-magnitude} each of the combinatorially
    negligible terms has expectation $O(n^{-1/2})$.
    Thus, $\E\left[Z_{\sig,i}^q\right]$ equals the number of ways to create hanging
    trees of double edges, up to a term that converges to 0 as $n\to\infty$.

    For each of the $q$ copies of $\sig$, the single edge incident
    to the root must be paired with another such edge.
    This extends to an automorphism of the entire subtree.
    In conclusion, $\E \left[Z_{\sigma,i}^q\right]$ converges to $\abs{\Aut(\sigma)}^{q/2}$ times the number of perfect matchings on $q$ objects, and we conclude by \cref{lem:gaussian-moments} and \cref{lem:method-of-moments}.
    The proof for the scalar case is analogous.
\end{proof}

\begin{lemma}\label{lem:Ztau-hermite}
    If $\tau \in \calT$ consists of
    $d_\sig$ copies of the subtrees $\sig \in \calS$, then
    \[Z_{\tau} \eqinf \prod_{\sig \in \calS} h_{d_\sig}(Z_{\sig}; \abs{\Aut(\sig)})\,.\]
    For $\rho \in \calF_{\scalar}$ with $c$ components and
    consisting of $d_\tau$ copies of each tree $\tau \in \calT_{\scalar}$,
    \[n^{-\frac c 2} Z_\rho \eqinf \prod_{\tau \in \calT_{\scalar}} h_{d_\tau}\left(n^{-\frac 1 2}Z_{\tau}; \abs{\Aut(\tau)}\right)\,.\]
\end{lemma}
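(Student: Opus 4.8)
The plan is to prove the vector statement first and then observe that the scalar statement follows by an essentially identical computation. I would establish the identity $Z_\tau \eqinf \prod_{\sig\in\calS} h_{d_\sig}(Z_\sig;\abs{\Aut(\sig)})$ by expanding the right-hand side using the combinatorial description of Hermite polynomials and products of Hermite polynomials from the preliminaries (\cref{fact:hermite-matchings} and \cref{fact:hermite-product}), and matching the result term-by-term with the diagram expansion of $Z_\tau$ given by the componentwise product formula \cref{lem:diagrams-product}. Concretely: write $\tau$ as the tree whose root has $d_\sig$ children-subtrees isomorphic to each $\sig\in\calS$. The product $\prod_{\sig}(Z_\sig)^{d_\sig}$, computed componentwise via \cref{lem:diagrams-product}, is a sum over intersection patterns of the $\sum_\sig d_\sig$ single-subtree diagrams glued at a common root. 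By \cref{lem:connected-nonnegligible}, the only combinatorially order-1 intersection patterns are those that create no cycles and no multiedges of multiplicity $>2$, with all multiplicity-2 edges forming hanging trees off the root; since each factor $Z_\sig$ is already a tree hanging off the root, the only such patterns are the ones that pair up entire isomorphic subtrees from distinct factors (no two subtrees from the same factor can be paired because a single factor has only one subtree, so the "distinct blocks" constraint is automatic here — but for the scalar / grouped version one must honor the block constraint exactly as in \cref{fact:hermite-product}). Pairing two copies of $\sig$ produces a hanging tree of double edges, which by \cref{lem:remove-double-edge} can be removed at the cost of an $\eqinf$-error, contributing a factor $\abs{\Aut(\sig)}$ counting the ways to glue the two copies (this is exactly \cref{lem:bijection}-type bookkeeping: the number of isomorphisms between two copies of $\sig$ fixing the root).

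Thus, up to combinatorially negligible terms, $\prod_\sig (Z_\sig)^{d_\sig} \eqinf \sum_{M} \left(\prod_{\{\sig,\sig'\}\in M}\abs{\Aut(\sig)}\right) Z_{\tau_M}$, where $M$ ranges over partial matchings of the subtrees respecting the block structure and $\tau_M$ is the tree obtained by deleting the matched subtrees. This is precisely the right-hand side of \cref{fact:hermite-product} with the substitution $x \mapsto$ "the subtree", $\sigma^2 \mapsto \abs{\Aut(\sig)}$, and $h_{k-2|M|} \mapsto Z_{\tau_M}$ — so by \cref{fact:hermite-product} applied coordinatewise this sum equals $\prod_\sig h_{d_\sig}(Z_\sig;\abs{\Aut(\sig)})$ up to $\eqinf$-error, which after rearranging gives $Z_\tau \eqinf \prod_\sig h_{d_\sig}(Z_\sig;\abs{\Aut(\sig)})$. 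One subtlety is that I must verify the $\eqinf$-errors accumulated from \cref{lem:remove-double-edge} at each pairing stay negligible under the further operations; this is guaranteed by \cref{lem:eqinf-properties}, since multiplication by further diagram expressions preserves combinatorial negligibility.

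For the scalar statement, the same argument applies verbatim using the scalar analogues: scalar diagrams are trees (no root), $\calT_\scalar$ plays the role of $\calS$, and the $n^{-1/2}$ and $n^{-c/2}$ normalizations are the correct ones by \cref{thm:classification} (the scalar second-moment computation in \cref{sec:scalar-diagrams}). The componentwise-product machinery is replaced by \cref{lem:scalar-product}, the negligibility characterization by \cref{lem:scalar-nonnegligible}, and removal of hanging double edges proceeds identically. The only real care needed is to track the $n^{-c/2}$ factors: when two tree-components are glued by a matching of a shared subtree, the number of components drops, and one must check that the surviving $n^{-c/2}$ scaling matches the product of the Hermite normalizations $n^{-1/2}$ per surviving component, which is exactly the content of the normalization in \cref{thm:classification}.

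The main obstacle I anticipate is purely bookkeeping: carefully justifying that the combinatorial coefficients $c_M = \prod_{\{\sig,\sig'\}\in M}\abs{\Aut(\sig)}$ produced by the diagram gluing match the coefficients $\sigma^{2|M|}$ in \cref{fact:hermite-product} — in particular, confirming that when $d_\sig \ge 2$ copies of the same $\sig$ appear, the automorphism-counting for pairing them is consistent with the "blocks of sizes $k_1,\ldots,k_\ell$, no intra-block matching" convention. Once the dictionary (subtree $\leftrightarrow$ Hermite variable, $\abs{\Aut(\sig)} \leftrightarrow \sigma^2$) is pinned down precisely, the rest is a direct application of the stated facts.
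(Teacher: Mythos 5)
Your diagrammatic bookkeeping coincides with the paper's: you expand componentwise products via \cref{lem:diagrams-product}, use \cref{lem:connected-nonnegligible} to see that the only order-1 intersection patterns pair up entire isomorphic subtrees at the root, remove the resulting hanging trees of double edges via \cref{lem:remove-double-edge} with a factor $\abs{\Aut(\sig)}$ per matched pair, and propagate negligibility with \cref{lem:eqinf-properties}. This gives, as you state,
\[
\prod_{\sig\in\calS} Z_\sig^{d_\sig}\;\eqinf\;\sum_{M} c_M\, Z_{\tau_M}\,,
\]
componentwise, where $M$ ranges over partial matchings of isomorphic subtree copies and $c_M$ is the product of $\abs{\Aut(\sig)}$ over the pairs of $M$; this is exactly the paper's first display (stated there for powers of a single $\sig$). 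The gap is in the last step. Substituting ``$h_{k-2\abs{M}}\mapsto Z_{\tau_M}$'' into the right-hand side of \cref{fact:hermite-product} and then ``rearranging'' is circular: claiming that $\sum_M c_M Z_{\tau_M}$ equals $\prod_\sig h_{d_\sig}(Z_\sig;\abs{\Aut(\sig)})$ under that substitution presupposes $Z_{\tau_M}\eqinf\prod_\sig h_{d_\sig-2m_\sig}(Z_\sig;\abs{\Aut(\sig)})$ for every nonempty $M$, which is the lemma itself for the smaller trees $\tau_M$. (A side remark: since each factor $Z_\sig$ carries a single subtree, \cref{fact:hermite-product} only enters with all blocks of size one, i.e.\ what you actually need is the inverse of \cref{fact:hermite-matchings}, the monomial-to-Hermite expansion; the ``block constraint'' concern, also in the scalar case, is vacuous here.)

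Two standard repairs close the gap. Either (i) induct on $\sum_\sig d_\sig$: every $\tau_M$ with $M\neq\emptyset$ has strictly fewer subtrees, so the inductive hypothesis converts those terms to Hermite products, and comparing your display with the exact monomial-to-Hermite identity isolates the $M=\emptyset$ term and yields $Z_\tau\eqinf\prod_\sig h_{d_\sig}(Z_\sig;\abs{\Aut(\sig)})$; or (ii) run the inversion explicitly, as the paper does: expand $h_{d}(Z_\sig;\abs{\Aut(\sig)})$ into monomials via \cref{fact:hermite-matchings}, plug in the diagram expansion of each monomial $Z_\sig^{d-2\abs{N}}$, and collapse the signed double sum using $\sum_{N\subseteq M'}(-1)^{\abs{N}}$, which vanishes unless $M'=\emptyset$; then treat distinct $\sig,\sig'$ (where no pairings are possible, so the product merges without correction terms) and the scalar case just as you describe. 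With either fix your argument matches the paper's proof; one further nit is that the $n^{-c/2}$ normalization in the scalar statement should be justified directly from \cref{lem:scalar-nonnegligible} and the scalar variance computation rather than by appeal to \cref{thm:classification}, which is itself deduced from this lemma.
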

\begin{proof}
    We first expand $h_{d}(Z_{\sig}; \abs{\Aut(\sig)})$ in the diagram basis using \cref{lem:diagrams-product}
    and identify the dominant terms, i.e. those which are combinatorially order 1.
    As in the proof of \cref{lem:Ztau-normal}, the combinatorially order 1 terms in each monomial $Z_{\sig, i}^k$ consist of
    pairing up copies of the tree $\sig$:
    \[
        Z_{\sig}^k \eqinf \sum_{M \in \calM(k)} \abs{\Aut(\sig)}^{|M|} Z_{k - 2|M|\text{ copies of }\sig}\,,
    \]
    where $\calM(k)$ is the set of partial matchings on $k$ objects.
    Now we use the combinatorial interpretation of Hermite polynomials (\cref{fact:hermite-matchings}),
    \begin{align*}
        h_d(Z_\sig; \abs{\Aut(\sig)}) &= \sum_{N \in \calM(d)} (-1)^{|N|} \abs{\Aut(\sig)}^{|N|} Z_{\sig}^{d - 2|N|}\\
        &\eqinf \sum_{N \in \calM(d)} (-1)^{|N|} \abs{\Aut(\sig)}^{|N|} \sum_{M \in \calM(d - 2|N|)}  \abs{\Aut(\sig)}^{|M|} Z_{d - 2|N| - 2|M| \text{ copies of }\sig}\\
        &= \sum_{M' \in \calM(d)}  \abs{\Aut(\sig)}^{|M'|} Z_{d - 2|M'|\text{ copies of }\sig} \sum_{N \subseteq M'} (-1)^{|N|}\\
        &= Z_{d\text{ copies of }\sig}\,.
    \end{align*}
    This completes the argument when $\tau$ consists of several copies of a single $\sig \in \calS$.
    If $\sigma,\sigma'\in\calS$ are distinct, using again \cref{lem:diagrams-product} and \cref{lem:connected-nonnegligible}, we can check that
    \[
        Z_{\text{$d$ copies of $\sig$}}\odot Z_{\text{$d'$ copies of $\sig'$}} \eqinf Z_{\text{$d$ copies of $\sig$ and $d'$ copies of $\sig'$}}\,.
    \]
    The proof then follows by applying these arguments inductively, and 
extends analogously to scalar diagrams.
\end{proof}

\begin{lemma}\label{lem:scalar-to-vector}
    Let $\al \in \calF$ have $c$ floating components.
    Let $\al_{\smallrootpic}$ be the component of the root
    and $\al_{\textnormal{float}}$ be the floating components. Then
    $n^{- \frac c 2}Z_\al \eqinf n^{- \frac c 2} Z_{\al_{\textnormal{float}}}Z_{\al_{\smallrootpic}}$.
\end{lemma}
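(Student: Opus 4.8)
The plan is to expand the product $Z_{\al_{\float}}Z_{\al_{\smallrootpic}}$ into the diagram basis with \cref{lem:scalar-product}, peel off the one ``main'' term which equals $Z_\al$, and show that every remaining intersection term is combinatorially negligible after scaling by $n^{-c/2}$. Concretely, \cref{lem:scalar-product} gives $Z_{\al_{\float}}Z_{\al_{\smallrootpic}} = \sum_{P\in\calP(\al_{\float},\al_{\smallrootpic})}Z_{\al_P}$. Since $\al_{\float}$ and $\al_{\smallrootpic}$ are distinct diagrams, each block of an intersection pattern $P$ contains at most one vertex of each, so $P$ is a partial matching between the vertices of $\al_{\float}$ and the non-root vertices of $\al_{\smallrootpic}$; the trivial (empty) pattern $P_0$ produces the disjoint union $\al_{P_0}=\al$. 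Hence
\[n^{-c/2}\bigl(Z_{\al_{\float}}Z_{\al_{\smallrootpic}} - Z_\al\bigr) = \sum_{P\neq P_0} n^{-c/2} Z_{\al_P}\,,\]
which is a finite sum of constant-size diagram terms, so by \cref{def:asymptotic-equality} it suffices to show each summand on the right is combinatorially negligible.

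I would then fix a nontrivial $P$, matching $r\ge 1$ pairs, and let $s\ge 1$ be the number of floating components of $\al$ that contain a matched vertex. The key structural point is that every matched floating vertex is glued onto a non-root vertex of the connected tree $\al_{\smallrootpic}$, so its entire floating component lands inside the root's connected component of $\al_P$. Consequently the floating components of $\al_P$ are exactly the $c-s$ floating components of $\al$ left untouched by $P$; each of these is unchanged, hence still a tree on at least two vertices, hence has a leaf and so does not consist solely of vertices of $I(\al_P)$. Applying \cref{cor:classify-full} (via \cref{lem:connected-nonnegligible} and \cref{lem:scalar-nonnegligible}), $n^{-(c-s)/2}Z_{\al_P}$ is negligible or of combinatorial order $1$, which in either case gives $|V(\al_P)|-1-|E(\al_P)|+|I(\al_P)| \le c-s$. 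Since $s\ge 1$ the right-hand side is at most $c-1 = 2\cdot\frac{c}{2}-1$, and by \cref{def:combinatorially-negligible} with $a_n = n^{-c/2}$ this is precisely the statement that $n^{-c/2}Z_{\al_P}$ is combinatorially negligible. Summing over the finitely many nontrivial $P$ then yields $n^{-c/2}Z_\al \eqinf n^{-c/2}Z_{\al_{\float}}Z_{\al_{\smallrootpic}}$.

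The main obstacle is not the final inequality but the bookkeeping behind the structural point: one must check that the blocks of $P$ are genuinely matchings between the two sides (which uses that $\al_{\float}$ is treated as a single scalar diagram although it has several components, and that the root is never matched away), that a touched floating component is absorbed \emph{whole} into the root's component, and — most importantly — that no new floating component consisting only of $I$-vertices is created, so that the natural normalization of $\al_P$ is $n^{-(c-s)/2}$ with $c-s<c$. A minor point is that $\al_P$ may carry genuine multiedges, but since the classification lemmas only give an upper bound on the combinatorial order this can only help. Finally, one should note that $\eqinf$ between these two vector diagram expressions is well-defined because products of a scalar diagram with a vector diagram fall under the scalar-diagram extension of combinatorial negligibility (\cref{def:combinatorially-negligible-scalar}, \cref{lem:comb-neg-scalar}).
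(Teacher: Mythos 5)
Your proof is correct and follows essentially the same route as the paper: expand the product via \cref{lem:scalar-product}, identify the no-intersection term with $Z_\al$, and observe that any intersection absorbs at least one floating component into the root component, whose contribution to the combinatorial order is capped by \cref{lem:connected-nonnegligible}, so the ``lost'' factor of $n^{-1/2}$ makes that term negligible. Your extra bookkeeping (the count $s\ge 1$, the check that untouched floating components contribute no $I$-only components, and the remark about multiedges) just fills in details the paper's shorter proof leaves implicit.
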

\begin{proof}
    The product $n^{-\frac c 2}Z_{\al_{\textnormal{float}}}Z_{\al_{\smallrootpic}}$ can be expanded
    in the diagram basis using \cref{lem:scalar-product}.
    We claim that the only non combinatorially negligible diagram
    is the one without intersections, which equals $n^{-\frac c 2}Z_\al$.
    When an intersection occurs, it can only be between the root component and a floating component.
    The new component of the root is at most combinatorially order 1
    (this is a property of all connected vector diagrams, \cref{lem:connected-nonnegligible}),
    so there is an 
    ``extra'' factor of $\frac{1}{\sqrt{n}}$ from the lost component which makes the intersection term negligible.
\end{proof}

\begin{lemma}
\label{lem:Ztau-indpt}
    $\left\{Z_{\sig, i} : \sig \in \calS, i \in [n]\right\} \cup \left\{n^{-\frac 1 2}Z_\tau : \tau \in \calT_{\scalar}\right\}$ are asymptotically
    independent.
\end{lemma}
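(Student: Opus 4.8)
The statement to prove is \cref{lem:Ztau-indpt}: the family
\[
  \left\{Z_{\sig, i} : \sig \in \calS, i \in [n]\right\} \cup \left\{n^{-\frac 1 2}Z_\tau : \tau \in \calT_{\scalar}\right\}
\]
is asymptotically independent in the sense of \cref{def:asymptotic-indpt}. The plan is to verify the defining moment condition directly: fix $q \in \N$, pick a finite collection of the above random variables with exponents $k_i$ summing to $q$, and show that the joint moment $\E[\prod_i X_{n,i}^{k_i}]$ differs from the product of individual moments $\prod_i \E[X_{n,i}^{k_i}]$ by at most some $\eps(q) \to 0$. Because $q$ is fixed, only finitely many distinct $\sig \in \calS$ and $\tau \in \calT_\scalar$ can appear with nonzero exponent, and all the diagrams involved have constant size, so the combinatorial magnitude estimates of \cref{lem:improper-magnitude,lem:improper-magnitude-scalar} and the negligibility bookkeeping of \cref{sec:eqinf} all apply with $O(\cdot)$ constants depending only on $q$.

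\textbf{Key steps.} First, expand the product $\prod_i Z_{\sig_i, i_i}^{k_i} \cdot \prod_j (n^{-1/2} Z_{\tau_j})^{k_j}$ into the diagram basis using \cref{lem:scalar-product} (the mixed scalar/vector product rule): this yields a sum over intersection patterns $P$ of the disjoint union of all the constituent diagrams, each term being $Z_{\al_P}$ appropriately normalized. Second, apply \cref{lem:connected-nonnegligible,lem:scalar-nonnegligible,cor:classify-full} to determine which $Z_{\al_P}$ survive in the limit: a term is non-negligible only when $\al_P$ has no cycles, no multiedges of multiplicity $> 2$, no self-loops, and its multiplicity-$1$ subgraph is connected within each component and contains the root of the root-component. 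In particular, any intersection that merges two different trees $\sig \neq \sig'$ (or $\tau \neq \tau'$), or that connects a vector-rooted component to a scalar component, or that identifies vertices across copies of distinct trees, either creates a cycle/high-multiplicity edge or drops a connected component --- in every case making the term combinatorially negligible (this is the same reasoning as in \cref{lem:Ztau-hermite,lem:scalar-to-vector}). Third, conclude that the only non-negligible terms come from pairings that occur \emph{within} the block of copies of a single $\sig$ (matching up root-edges and extending to subtree automorphisms, exactly as in the proof of \cref{lem:Ztau-normal}) and \emph{within} the block of copies of a single $\tau$, with no interaction between distinct blocks or between the $\sig$-blocks and the $\tau$-blocks. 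This means that, up to $\eqinf$, the joint diagram expansion factors as a product over the distinct trees appearing, and taking $\E[\cdot]$ and using \cref{lem:improper-magnitude} to bound every negligible term by $O_q(n^{-1/2})$ gives
\[
  \E\!\left[\prod_i X_{n,i}^{k_i}\right] = \prod_{\text{distinct }\sig}\E\!\left[\prod_{i: \sig_i = \sig} Z_{\sig,i_i}^{k_i}\right] \cdot \prod_{\text{distinct }\tau}\E\!\left[\left(n^{-1/2}Z_\tau\right)^{\sum_{j:\tau_j=\tau}k_j}\right] + O_q(n^{-1/2}).
\]
Here one must additionally note that for a fixed $\sig$ and distinct indices $i \neq i'$, the variables $Z_{\sig,i}$ and $Z_{\sig,i'}$ are themselves asymptotically independent: matching a root-edge of the copy sitting at index $i$ with one sitting at index $i'$ forces the two roots to merge, which again creates a cycle or collapses the diagram (this is the ``coordinates are asymptotically i.i.d.'' half of \cref{thm:classification}, provable by the identical intersection-pattern analysis). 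Combining the within-$\sig$ factorization over coordinates with the Gaussian moment computation of \cref{lem:Ztau-normal} and \cref{lem:gaussian-moments} shows the right-hand side above is exactly $\prod_i \E[X_{n,i}^{k_i}] + O_q(n^{-1/2})$, which is the desired conclusion with $\eps(q) = O_q(n^{-1/2})$.

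\textbf{Main obstacle.} The routine parts are the diagram-product expansion and the per-term magnitude bounds, which are by now mechanical given \cref{sec:eqinf} and \cref{sec:scalar-diagrams}. The delicate point is the \emph{exhaustive case analysis of intersection patterns}: one must argue that \emph{every} cross-block intersection (vector$\leftrightarrow$vector across distinct $\sig$, vector$\leftrightarrow$scalar, scalar$\leftrightarrow$scalar across distinct $\tau$, and distinct-coordinate intersections within a single $\sig$) lands in a configuration excluded by \cref{lem:connected-nonnegligible} or \cref{lem:scalar-nonnegligible} --- typically because it introduces a cycle (when the two trees being glued already each connect their shared vertex to their respective roots by single edges) or because it turns a single edge into a double edge that is not hanging, or because it reduces the number of floating components below the normalization exponent. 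Making this airtight requires carefully tracking, for a general intersection pattern $P$, the quantities $|V(\al_P)|$, $|E(\al_P)|$, $|I(\al_P)|$ and the component count, and checking the order-$1$ inequality of \cref{cor:classify-full} is strict whenever $P$ involves a cross-block identification; this is the same style of argument already carried out in \cref{lem:eqinf-properties}, \cref{lem:Ztau-hermite}, and \cref{lem:scalar-to-vector}, so it should go through, but it is the step that demands the most care to state without gaps.
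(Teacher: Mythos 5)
Your overall strategy---verifying the moment condition of \cref{def:asymptotic-indpt} by expanding the joint moment over intersection patterns and showing that only within-block pairings of isomorphic trees survive---matches the paper's, and your treatment of the scalar diagrams and of several vector diagrams sitting at a \emph{common} coordinate would go through. The genuine gap is the cross-coordinate case, which is the main new content of this lemma. The expansion step you invoke (\cref{lem:scalar-product} together with \cref{lem:connected-nonnegligible}, \cref{lem:scalar-nonnegligible}, \cref{cor:classify-full}) applies only to componentwise products, i.e.\ to vector diagrams sharing the same root coordinate; it assigns no meaning to a product such as $Z_{\sig,i}Z_{\sig',i'}$ with $i\neq i'$, so the claimed factorization ``up to $\eqinf$'' is not produced by the machinery you cite. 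Your patch for this case is also not right as stated: for distinct pinned coordinates $i\neq i'$ the two roots cannot ``merge'', and the problematic cross-coordinate terms need not contain a cycle---for instance, two single-edge diagrams rooted at $i$ and $i'$ whose non-root vertices are identified give a two-edge path through a common vertex $v$ (no cycle), and mapping each non-root vertex onto the other root gives the term $A_{ii'}^2$. Such terms die for a different reason: either they retain multiplicity-one edges (zero expectation), or the identification with a root label costs a free summation index, so the embedding count drops by a factor of $n$. Finally, appealing to the ``coordinates are asymptotically i.i.d.'' part of \cref{thm:classification} is circular, since that independence across coordinates is precisely what \cref{lem:Ztau-indpt} supplies to \cref{thm:classification}.

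The paper closes this gap by not expanding in the single-rooted diagram basis at all: it writes the joint moment directly as a sum over tuples of injective embeddings with the roots pinned at $i_1,\dots,i_q$, bounds each nonzero summand by $O\bigl(n^{-(|E|+r)/2}\bigr)$, and counts embeddings per intersection pattern via the inequality $|I|\le q_{\textnormal{distinct}}+\tfrac{|V|-q}{2}$, where $|I|$ is the number of distinct labels hit, $|V|$ the total number of vertices, and $q_{\textnormal{distinct}}$ the number of distinct roots (every non-root label must be hit by at least two embedded non-root vertices). Equality forces every non-root vertex to pair with exactly one other non-root vertex and every edge to appear exactly twice, which by connectivity forces pairings of isomorphic trees with the same root (and of isomorphic scalar trees), in particular ruling out cross-coordinate and vector--scalar pairings. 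To repair your route you would need either this direct embedding count with pinned roots or a multi-rooted extension of the intersection-pattern calculus, redoing the order bookkeeping with the roots treated as fixed labels; as written, the cross-coordinate step would fail.
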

\begin{proof}
    Fix constants $q,r \in \N$.
    We proceed by computing the moment of a set of diagrams $\sig_1, \dots, \sig_q \in \calS$
    rooted at $i_1, \dots, i_q \in [n]$ and $\tau_1, \dots, \tau_r \in \calT_{\scalar}$:
    \begin{align}
        \E\left[\prod_{p=1}^q Z_{\sig_p, i_p} \prod_{p=1}^r n^{-\frac 1 2}Z_{\tau_p}\right]\,.\label{eq:moment-independence}   
    \end{align}
    Let $|V| = \sum_{p = 1}^q |V(\sig_p)| + \sum_{p = 1}^r |V(\tau_p)|$ and $|E| = \sum_{p = 1}^q |E(\sig_p)| + \sum_{p = 1}^r |E(\tau_p)|$.
    Let $q_{\textnormal{distinct}}$ be the number of distinct roots, i.e. the number of distinct elements in $\{i_1, \dots, i_q\}$.

    Expanding \cref{eq:moment-independence} gives a sum over embeddings of the diagrams.
    We will prove that the dominant terms factor across
    the distinct $(\sigma_p, i_p)$ and $\tau_p$; they correspond
    to pairing up isomorphic $\sig_p$ at each distinct root and isomorphic $\tau_p$.

    Each nonzero term in the expansion of \cref{eq:moment-independence} equals $n^{-(|E|+ r)/2}$
    (when every edge appears exactly twice) or $O(n^{-(|E|+ r)/2})$ (in general) by \cref{assump:A-entries}.
    We partition the summation based on the intersection pattern as in \cref{def:intersection-pattern}.
    For a given intersection pattern, letting $I$ be the union of the images of the embeddings,
    the number of terms with this pattern
    is $(1-o(1))\cdot n^{|I| - q_{\textnormal{distinct}}}$ because the $q_{\textnormal{distinct}}$ root vertices are fixed.
    In an embedding with nonzero expectation, every edge appears at least
    twice, so every non-root vertex is in at least two embeddings.
    Applying this bound to all of the non-root vertices in $I$,
    \[|I| \leq q_{\textnormal{distinct}} + \frac {|V| - q}  2\,.\]
    Multiplying the value of each term times the number of terms, the total contribution of this intersection pattern is
    \[n^{|I| - q_{\textnormal{distinct}} - \frac{|E|+r} 2} \le n^{\frac 1 2\left(|V| - q-|E|-r\right)}\,.\]
    Since the individual diagrams are connected, the exponent is nonpositive.
    The dominant terms occur exactly when $|I| = q_{\textnormal{distinct}} + (|V| - q)/2$, equivalently all of the non-root vertices intersect exactly
    one other non-root vertex. Each edge must occur at least twice,
    and this condition implies that each edge occurs exactly twice
    in the dominant terms.

    We claim that the only way that each edge and vertex can be in exactly two embeddings is if isomorphic
    $\sig_p$ and $\tau_p$ are paired.
    Indeed, by connectivity of $\sig_p$ and $\tau_p$, sharing one edge extends to an isomorphism.
    Furthermore, because non-root vertices must intersect other non-root
    vertices in the dominant terms, we have that no pairs can be made between
    $\sig_p$ and $\tau_{p'}$, or between $\sig_p$ and $\sig_{p'}$
    which have distinct roots.
\end{proof}

\cref{thm:classification} follows from \cref{lem:Ztau-normal}, \cref{lem:Ztau-hermite}, \cref{lem:scalar-to-vector}, and \cref{lem:Ztau-indpt}.
The constant-order joint moments of all the diagrams
are summarized into the next theorem which generalizes \cref{thm:intro-classification}.

\begin{theorem}\label{all-moments}
    Suppose that $A = A(n)$ is a sequence of random matrices satisfying \cref{assump:A-entries}.
    For all $\al_1, \dots, \al_k \in \calA, \; i_1, \dots, i_k \in [n]$ and $\beta_1, \dots, \beta_\el \in \calA_\scalar$ (allowing repetitions anywhere), 
    \[
        \E\left[\prod_{j = 1}^k n^{-C(\al_j)/2}Z_{\al_j, i_j} \prod_{j = 1}^\ell n^{-C(\beta_j)/2}Z_{\beta_j} \right] = \E\left[\prod_{j = 1}^k Z_{\al_j,i_j}^\infty \prod_{j = 1}^\ell Z^\infty_{\beta_j} \right] + O(n^{-\frac 12})\,,
    \]
    where $C(\al)$ is the number of floating components of $\al$,
    and where the asymptotic random variables $(Z_{\al,i}^\infty)_{\al \in \calA, i \in [n]}$ and $(Z_\beta^\infty)_{\beta \in \calA_\scalar}$ are defined as:
    \[
    \allowdisplaybreaks
    \begin{cases}
        Z_{\sig,i}^\infty \sim \calN(0, \left|\Aut(\sig)\right|) \text{ independently}& \text{if $\sig \in \calS$}\\
        Z_\tau^\infty \sim \calN(0, \left|\Aut(\tau)\right|) \text{ independently}& \text{if $\tau \in \calT_\scalar$}\\
        \displaystyle Z_{\rho,i}^\infty = \prod_{\sigma \in \calS} h_{d_\sig}(Z_{\sig,i}^\infty; \left|\Aut(\sig)\right|) \prod_{\tau \in \calT_\scalar} h_{d_{\tau}}(Z_\tau^\infty; \left|\Aut(\tau)\right|) & \text{if $\rho \in \calF$}\\
        \displaystyle Z_\rho^\infty = \prod_{\tau \in \calT_\scalar} h_{d_{\tau}}(Z_\tau^\infty; \left|\Aut(\tau)\right|) & \text{if $\rho \in \calF_\scalar$}\\
        Z_{\al,i}^\infty = Z_{\al_0, i}^\infty  \text{ and }Z_\beta^\infty = Z_{\beta_0}^\infty & \text{if removing hanging double edges}\\&\text{creates $\al_0 \in \calF$ or $\beta_0 \in \calF_\scalar$}\\
        Z_{\al,i}^\infty = Z_\beta^\infty = 0 & \text{if removing hanging double edges}\\&\text{is not in $\calF$ or $\calF_\scalar$} 
    \end{cases}
    \]
\end{theorem}

\subsection{Handling empirical expectations}
\label{sec:onsager-correction}

Empirical expectations are highly concentrated and the following lemma confirms this.
Note that the empirical expectations in the Onsager correction for AMP (\cref{sec:amp}) will create floating
components in the diagrams of the algorithmic state, but all such diagrams will be negligible.

\empiricalExpectation*

\begin{proof}
The effect of summing a vector diagram $Z_\al=(Z_{\al,i})_{i\in [n]}$ over $i$ is to unroot $\al$, converting it to a scalar diagram. We prove this operation makes every diagram combinatorially negligible,
except for the constant term.
For $k \geq 0$ and a vector diagram $\al \in \calA$:
    \begin{enumerate}[(i)]
        \item If $a_n Z_\al$ is combinatorially negligible,
        then $\frac{a_n}{n}\sum_{i=1}^n Z_{\al,i}$ is a combinatorially negligible scalar term.
        \item If $a_n Z_\al$ has combinatorial order 1,
        and the root of $\al$ is incident to at least
        one edge of multiplicity 1,
        then $\frac{a_n}{n}\sum_{i=1}^n Z_{\al,i}$ is a combinatorially negligible scalar term.
    \end{enumerate}
    Unrooting a vector diagram does not change the number of vertices nor the number of edges. During this operation, the number of vertices in $I(\al)$ stays the same if the root is adjacent to an edge of multiplicity 1; otherwise it increases by at most 1. We readily check from the definition that the extra $\frac{1}{n}$ makes the resulting scalar terms combinatorially negligible.

Now let $\widehat{x} \eqinf x$ be the tree approximation.
The difference $x - \widehat{x}$ consists of combinatorially negligible terms
which stay negligible by part (i) above.
The trees in $\calT$ become negligible by part (ii) above with the exception of
the singleton tree which becomes $1$.
The singleton has coefficient $\E[\widehat{x}_1] = \E[X]$ since the other trees
are mean-zero (\cref{lem:constant}).
\end{proof}

\section{High-degree tree diagrams are not Gaussian}
\label{sec:star}

We compute that the star-shaped diagram with $\log n$
leaves and the root at a leaf is not Gaussian
(its fourth moment is significantly larger than the square of its second moment), demonstrating that care must be taken when studying diagrams of superconstant size.\footnote{Similarly, adding an edge between two of the leaves creates a cyclic diagram with negligible variance but non-negligible fourth moment.}
This diagram appears after only $T = O(\log \log n)$ iterations in 
the recursion
\[x_1 = A \vec{1} \qquad x_{t+1} = (x_t)^2 \qquad x_{T+1} = A x_T\,.\]
However, we expect that this diagram does not contribute significantly to nicer GFOMs that strictly alternate between multiplication by $A$ and constant-degree componentwise operations.

Fixing $d$, let $\gam$ denote ($d$-star graph)$^+$.
We compute that $\E\left[Z_{\gam,1}^4\right] \gg \E\left[Z_{\gam,1}^2\right]^2$ when $d \approx \log n$.
By \cref{lem:variance}, the variance is
\[\E\left[Z_{\gam, 1}^2\right] = (1+o(1))\abs{\Aut(\gam)} = (1+o(1))d!\,.\]
When computing the fourth moment $\E\left[Z_{\gam, 1}^4\right]$ for constant $d$, the terms that are dominant
consist of (1) a perfect matching between the four edges incident to the root, (2) perfect matchings between their $d$ children.
There are $3(d!)^2$ such terms, recovering the fourth moment of a Gaussian with variance $d!$.

For $d = \log n$, another type of term becomes dominant.
These are the terms where all four edges incident to the root are equal,
then we have a perfect matching on $4d$ objects divided into four groups
of size $d$ such that no two objects from the same group are matched.
Denote the latter set of matchings by $\calM(d,d,d,d)$.
\begin{lemma}\label{lem:four-matchings}
    Up to a multiplicative $\poly(d)$ factor, $|\calM(d,d,d,d)| \gtrsim 3^d (d!)^2$.
\end{lemma}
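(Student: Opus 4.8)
The plan is to lower-bound $\abs{\calM(d,d,d,d)}$ by exhibiting an explicitly counted sub-family of matchings and then estimating its size via Stirling's formula. Call the four groups $1,2,3,4$, each of size $d$, and for a matching $M\in\calM(d,d,d,d)$ and $1\le i<j\le 4$ let $n_{ij}$ be the number of edges of $M$ joining group $i$ to group $j$. Since $M$ is a perfect matching with no edge inside a group, these satisfy $\sum_{j\neq i}n_{ij}=d$ for every $i$. Conversely, every non-negative integer tuple $(n_{ij})$ with $\sum_{j\neq i}n_{ij}=d$ is the ``type'' of exactly
\[
  \prod_{i=1}^{4}\frac{d!}{\prod_{j\neq i}n_{ij}!}\cdot\prod_{1\le i<j\le 4}n_{ij}!=\frac{(d!)^{4}}{\prod_{1\le i<j\le 4}n_{ij}!}
\]
matchings in $\calM(d,d,d,d)$: one first partitions each group into the three parts designated for the other three groups, and then for each pair $\{i,j\}$ bijects the two designated parts of common size $n_{ij}$; this correspondence is a bijection onto the matchings of that type, and the displayed product telescopes as written.

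Specializing to the balanced type $n_{ij}=d/3$ (legal since $3\cdot\tfrac d3=d$; when $3\nmid d$ I would instead take any legal tuple with all $n_{ij}\in\{\lfloor d/3\rfloor,\lceil d/3\rceil\}$, e.g.\ by placing one extra unit on a pair and on its complementary pair, which affects nothing below up to a $\poly(d)$ factor) gives
\[
  \abs{\calM(d,d,d,d)}\ge\frac{(d!)^{4}}{((d/3)!)^{6}}\,.
\]
Then I would invoke Stirling's approximation for $d!$ and $(d/3)!$. Since $\tfrac{d!}{((d/3)!)^{3}}=\Theta(3^{d}/d)$ and $(d/3)!=\Theta(\sqrt d\,(d/3e)^{d/3})$, a routine computation yields $\frac{(d!)^{4}}{((d/3)!)^{6}}=\Theta\!\left(9^{d}(d!)^{2}/d^{2}\right)$, which is at least $3^{d}(d!)^{2}$ for all sufficiently large $d$. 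This is exactly the claimed bound, the permitted $\poly(d)$ factor and the ``$\gtrsim$'' absorbing the $d^{2}$ and the Stirling constants.

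There is no genuinely hard step here: the only points requiring care are verifying the exact count of matchings of a given type (four multinomial coefficients times the $n_{ij}!$ bijections) and running the Stirling bookkeeping, plus the trivial adjustment when $3\nmid d$. I would also note in passing that the same argument gives the stronger estimate $\abs{\calM(d,d,d,d)}=\widetilde{\Omega}(9^{d}(d!)^{2})$; the weaker $3^{d}$ stated suffices, since combined with the count $\abs{\calM(d,d,d,d)}\cdot n^{2d+1}\cdot M_{4}\,n^{-(2d+2)}$ of the leaf-pairing terms of $Z_{\gam,1}^{4}$ it already forces $\E[Z_{\gam,1}^{4}]\gg\E[Z_{\gam,1}^{2}]^{2}=(1+o(1))(d!)^{2}$ once $d\approx\log n$.
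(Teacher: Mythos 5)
Your proposal is correct, and it reaches the lemma by a different (and cleaner) route than the paper. The paper lower-bounds $|\calM(d,d,d,d)|$ by a recursive peeling argument: it matches the first group against the other three, invokes an exchange argument (\cref{claim:max-matching}) to argue that the balanced split $d/3+d/3+d/3$ is the largest of the $O(d^2)$ split classes and hence at least the average, and then recurses through $\calM(2d/3,2d/3,2d/3)$ and $\calM(d/3,d/3)$, applying Stirling at each stage. You instead write down the exact count of matchings of a fixed type $(n_{ij})_{i<j}$, namely $(d!)^4/\prod_{i<j}n_{ij}!$ (your telescoping of the multinomial coefficients against the $n_{ij}!$ bijections checks out), restrict to the balanced type, and do a single Stirling computation; the max-entropy claim and the recursion are not needed at all. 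Your handling of $3\nmid d$ (perturbing the type by one unit on complementary pairs, costing only $\poly(d)$) is fine. Your closed form also makes transparent the sharper estimate $|\calM(d,d,d,d)|=\widetilde{\Theta}\bigl(9^d(d!)^2\bigr)$ — indeed the balanced type dominates among the $\poly(d)$ types, so this is the true order; the paper's recursion implicitly produces the same $9^d$ growth but only records the weaker $3^d$ bound, which (being a lower bound) is all the application in \cref{sec:star} requires. In short: same underlying idea (the balanced configuration carries the mass), but your execution via the exact type count is more direct and quantitatively sharper.
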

These terms come with a $\frac 1 n$ factor due to the multiplicity 4 edge.
When $d = \Omega(\log n)$, the extra factor of $3^d$ overpowers the $\frac 1 n$ and makes the fourth moment much larger than the
the squared variance $(d!)^2$.

\begin{proof}[Proof of \cref{lem:four-matchings}.]
    We establish a recursion.
    There are $(3d)(3d-1)\cdots(2d+1)$ ways to match up the objects in the first group, which can be partitioned
    in $O(d^2)$ ways depending on how many objects in each other group are matched.
    We will recurse on the ``maximum-entropy'' case in which the
    first group matches 
    $d/3$ elements from each other group, using the following claim.
    \begin{claim}\label{claim:max-matching}
        Let $d,k \in \N$ such that $\frac{d}{k-1}$ is an integer.
        Counting the matchings between $d$ objects and a subset of $(k-1)d$ objects
        in $k-1$ groups,
        as a function of the number of objects matched in each group,
        the number of matchings is maximized when there are $\frac{d}{k-1}$
        matched elements per group.
    \end{claim}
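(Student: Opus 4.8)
The plan is to set up the count of matchings as an explicit product of multinomial-type factors, take its logarithm, and show the logarithm is a concave function of the vector $(a_1, \dots, a_{k-1})$ recording how many of the $d$ objects are matched into each group, so that it is maximized at the symmetric point $a_1 = \cdots = a_{k-1} = \frac{d}{k-1}$. First I would write down the number of matchings with profile $(a_1, \dots, a_{k-1})$ (where $\sum_j a_j = d$, since all $d$ objects from the small side must be matched): it equals
\[
\binom{d}{a_1, \dots, a_{k-1}} \prod_{j=1}^{k-1} (d)(d-1)\cdots(d - a_j + 1)\,,
\]
where the multinomial coefficient chooses which of the $d$ objects go to which group and the falling factorial $(d)_{a_j} = d!/(d-a_j)!$ counts the injections of those $a_j$ objects into the $j$-th group of size $d$. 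Using $\binom{d}{a_1,\dots,a_{k-1}} = d!/\prod_j a_j!$ and $(d)_{a_j} = d!/(d-a_j)!$, the count is $d! \prod_{j=1}^{k-1} \frac{d!}{a_j!\,(d-a_j)!} = d! \prod_{j=1}^{k-1}\binom{d}{a_j}$. So up to the overall constant $d!$, I just need to maximize $\prod_{j=1}^{k-1}\binom{d}{a_j}$ subject to $\sum_j a_j = d$ and $0 \le a_j \le d$.

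Next I would establish the maximization. The clean route is a direct exchange/smoothing argument rather than real analysis: suppose a maximizing profile has $a_i > a_j + 1$ for some pair $i,j$; then I would show that replacing $(a_i, a_j)$ by $(a_i - 1, a_j + 1)$ strictly increases the product, because $\binom{d}{a_i-1}\binom{d}{a_j+1} > \binom{d}{a_i}\binom{d}{a_j}$. This last inequality reduces, after cancelling common factors, to $\frac{a_i}{d - a_i + 1} > \frac{a_j + 1}{d - a_j}$, which holds whenever $a_i > a_j + 1$ (both sides are the relevant ratios of consecutive binomial coefficients, and $x \mapsto \frac{x}{d-x+1}$ is strictly increasing on $\{1, \dots, d\}$). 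Iterating this exchange, any maximizer must have all $a_j$ within $1$ of each other; since $\frac{d}{k-1}$ is assumed to be an integer, the unique such profile (up to permuting groups, which are interchangeable here) is $a_j = \frac{d}{k-1}$ for all $j$, giving the claim.

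I do not expect a serious obstacle here — the only mild care needed is the bookkeeping in reducing the count to $\prod_j \binom{d}{a_j}$ (making sure the falling factorials and multinomial coefficient are combined correctly) and checking the boundary cases $a_j \in \{0, d\}$ don't evade the exchange argument (they don't: if some $a_i = d$ and $k - 1 \ge 2$ then some $a_j = 0$, and the gap $a_i - a_j = d > 1$ triggers the exchange). The subsequent Lemma~\ref{lem:four-matchings} then follows by applying the claim with $k = 4$ to peel off one group at a time and recursing on the maximum-entropy branch, tracking only that each peeling step contributes a falling-factorial factor and that the total accumulates to $\gtrsim 3^d (d!)^2$ up to a $\poly(d)$ loss from the $O(d^2)$ choices of profile discarded at each of the $O(1)$... wait — actually the recursion has depth growing, so the $\poly(d)$ loss per level must be controlled; I would absorb it by noting there are only $O(\log d)$ effective levels or by a cruder one-shot estimate of $|\calM(d,d,d,d)|$ directly via Stirling, whichever is cleaner.
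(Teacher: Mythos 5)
Your proof is correct and takes essentially the same route as the paper's: write the count of matchings with a given profile as an explicit product and apply a discrete exchange argument showing that moving one unit from a larger group to a smaller one increases the count, so the balanced profile $a_j = \frac{d}{k-1}$ is a maximizer. The only difference is bookkeeping: you include the multinomial factor $d!/\prod_j a_j!$ and maximize $\prod_j \binom{d}{a_j}$, whereas the paper's proof maximizes only $\prod_j (d)_{a_j}$; both factors peak at the balanced profile, so either version yields the claim exactly as it is used in \cref{lem:four-matchings}.
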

    \begin{proof}[Proof of \cref{claim:max-matching}.]
        Letting $n_1, \dots, n_{k-1}$ be the number of matched elements per group, we may directly compute this number as
        $\prod_{i=1}^{k-1}(d)_{n_i}$
        where $(d)_k = d(d-1)\cdots (d-k+1)$ is the falling factorial.
        When $n_i$ and $n_j$ are replaced by $n_i - 1$ and $n_j + 1$, the ratio of new to old values is
        \begin{align*}
            \frac{d-n_j}{d - n_i + 1}
        \end{align*}
        which is at least 1 if $n_i \geq n_j + 1$. Hence the $n_i$ are equal at the maximum.
    \end{proof}

    Using \cref{claim:max-matching}, up to a factor of $O(d^2)$,
    \begin{align*}
    |\calM(d,d,d,d)| &\gtrsim (3d)(3d-1)\cdots (2d+1)|\calM(2d/3,2d/3,2d/3)|\\
    &\asymp \left(\frac{3d}{e}\right)^{3d} \left(\frac{e}{2d}\right)^{2d} |\calM(2d/3,2d/3,2d/3)|
    \end{align*}
    where the second equality holds up to a $\poly(d)$ factor
    by Stirling's approximation:
    \begin{fact}[Stirling's approximation]
    \label{claim:stirling}
        Up to a multiplicative $\poly(d)$ factor, $d! \asymp \left(\frac{d}{e}\right)^d$.
    \end{fact}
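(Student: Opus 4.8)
The plan is to establish the two-sided bound $(d/e)^d \lesssim d! \lesssim \poly(d)\cdot (d/e)^d$ by the elementary integral comparison for $\log(d!) = \sum_{k=1}^d \log k$, which isolates the dominant term $d\log d - d = \log\big((d/e)^d\big)$ and controls the remainder by an additive $O(\log d)$ in the exponent, i.e. a multiplicative $\poly(d)$ factor.

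First I would use the monotonicity of $\log$ on $[1,\infty)$: for each integer $k\ge 2$ we have $\int_{k-1}^{k}\log x\,dx \le \log k \le \int_{k}^{k+1}\log x\,dx$. Summing the left inequality over $k=2,\dots,d$ and the right inequality over $k=1,\dots,d$, and using $\log 1 = 0$, gives
\[
    \int_1^d \log x\,dx \;\le\; \log(d!) \;\le\; \int_1^{d+1}\log x\,dx\,.
\]
Evaluating $\int_1^m \log x\,dx = m\log m - m + 1$, the lower bound reads $d\log d - d + 1 = \log\big((d/e)^d\big) + 1$. For the upper bound I would write $(d+1)\log(d+1) = (d+1)\log d + (d+1)\log(1 + 1/d) \le d\log d + \log d + 2$, using $\log(1+1/d)\le 1/d$ for $d\ge 1$, so that $\log(d!) \le (d+1)\log(d+1) - d \le d\log d - d + \log d + 2 = \log\big((d/e)^d\big) + \log d + 2$.

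Exponentiating both estimates yields $e\,(d/e)^d \le d! \le e^2\,d\,(d/e)^d$, so $d! \asymp (d/e)^d$ up to a multiplicative factor that is $O(d)$, hence $\poly(d)$, as claimed. There is essentially no obstacle here: the argument is routine calculus. If a sharper polynomial factor were ever desired one could instead invoke the full Stirling formula $d! = \sqrt{2\pi d}\,(d/e)^d\,(1 + O(1/d))$, which tightens the factor to $\Theta(\sqrt d)$, but for the application in \cref{lem:four-matchings} the crude bound above suffices.
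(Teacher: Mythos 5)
Your proof is correct. The paper states this as a standard fact with no proof at all (it is invoked inside the proof of \cref{lem:four-matchings} purely as a known estimate), so any valid derivation is "different" from the paper by default. Your integral-comparison argument is the classical elementary route: the bounds $\int_1^d \log x\,dx \le \log(d!) \le \int_1^{d+1}\log x\,dx$ are correctly derived from monotonicity of $\log$, the evaluation $\int_1^m\log x\,dx = m\log m - m + 1$ is right, and the bookkeeping $(d+1)\log(1+1/d)\le 2$ gives the clean two-sided bound $e\,(d/e)^d \le d! \le e^2 d\,(d/e)^d$, which is more than enough since the fact only requires slack up to a $\poly(d)$ factor. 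Your closing remark is also apt: the application in \cref{lem:four-matchings} only needs the exponential-scale behavior, so the crude $O(d)$ factor (or the sharper $\Theta(\sqrt d)$ from full Stirling) is immaterial there.
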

    Recursing via the same principle,
    \begin{align*}
    |\calM(2d/3,2d/3,2d/3)| &\gtrsim (4d/3)(4d/3-1)\cdots(2d/3+1)|\calM(d/3,d/3)|\\
    &= (4d/3)(4d/3-1)\cdots(2d/3+1)(d/3)!\\
    &\asymp \left(\frac{4d}{3e}\right)^{4d/3}\left(\frac{3e}{2d}\right)^{2d/3}\left(\frac{d}{3e}\right)^{d/3} & \text{(\cref{claim:stirling})}
    \end{align*}
    In total,
    \[|\calM(d,d,d,d)| \gtrsim 3^d \left(\frac{d}{e}\right)^{2d}\,.\qedhere\]
\end{proof}

\end{document}